\documentclass[12pt]{article}
\usepackage{amsfonts}
\usepackage{amsmath}
\usepackage{bbm}
\usepackage{natbib}
\usepackage{geometry}
\usepackage[onehalfspacing]{setspace}
\usepackage{tikz}
\usetikzlibrary{shapes,arrows,positioning }
\usepackage{pgfplots}
\pgfplotsset{width=10cm,compat=1.9}
\usepackage{caption}
\usepackage{subcaption}
\usepackage{hyperref}
\usepackage{enumitem}
\newtheorem{innercustomthm}{Theorem}
\newtheorem{theorem}{Theorem}

\newtheorem{innercustomassump}{Assumption}

\newtheorem{corollary}{Corollary}

\newtheorem{definition}{Definition}

\newtheorem{lemma}{Lemma}

\newtheorem{proposition}{Proposition}

\newcommand{\dd}[1]{\mathrm{d}#1}
\newenvironment{proof}[1][Proof]{\textbf{#1.} }{\ \rule{0.5em}{0.5em}}

\usepackage{url}

\newcommand{\nc}{\textnormal{NC}}

\newcommand{\OPT}{\mathrm{OPT}}
\newenvironment{customthm}[1]
  {\renewcommand\theinnercustomthm{#1}\innercustomthm}
  {\endinnercustomthm}

\newenvironment{customassump}[1]
  {\renewcommand\theinnercustomassump{#1}\innercustomassump}
  {\endinnercustomassump}
  
\newcommand{\ubar}[1]{\underline{#1}}
\begin{document}

\title{Solving Problems of Unknown Difficulty
}
\author{Nicholas Wu\thanks{Department of Economics, Yale University, New Haven, CT 06511, nick.wu@yale.edu \\ I am deeply indebted to my advisors Dirk Bergemann, Marina Halac, and Johannes H\"orner for their invaluable feedback. I also thank Yeon-Koo Che, Roberto Corrao, Mira Frick,  Tan Gan, Ying Gao, Ryota Iijima, Ravi Jagadeesan, Navin Kartik, Jan Knoepfle, Elliot Lipnowski, Larry Samuelson, Philipp Strack, Juuso V\"alim\"aki, and participants at the Yale microeconomic theory lunch and Stony Brook International Game Theory Festival for helpful comments and discussions. A previous version of this paper was titled ``Strategic Idea Management.'' All errors are my own. 
}
}
\date{This version: March 31, 2026 \\
\vspace{0.5cm}
\textcolor{purple}{\href{https://drive.google.com/file/d/1FpMdTOS8OHkkr4jwCkm9a_3hajLMq13H/view?usp=sharing}{Click here for the most recent version.}}
}
\maketitle

\begin{abstract}
This paper studies how uncertainty about problem difficulty shapes problem-solving strategies. I develop a dynamic model where an agent solves a problem by brainstorming approaches of unknown quality and allocating a fixed effort budget among them. Success arrives from spending effort pursuing good approaches, at a rate determined by the unknown problem difficulty. The agent balances costly exploration (expanding the set of approaches) with exploitation (pursuing existing approaches). Failures could signal either a bad idea or a hard problem, and this uncertainty generates novel dynamics: optimal search alternates between trying new approaches and revisiting previously abandoned ones. I then examine a principal–agent environment, where moral hazard arises on the intensive margin: how the agent explores. Dynamic commitment leads contracts to frontload incentives, which can be counteracted by the presence of learning. The framework reflects scientific discovery, product development, and other creative work, providing insights into innovation and organizational design.
\bigskip

\noindent \textbf{JEL Codes}: C73, D83, D86, O31, O32
\end{abstract}

\newpage

\section{Introduction}
\begin{quote}
   \textit{ I haven't failed. I've just found 10,000 ways that don't work.} 
        \vskip1mm
        \hspace*\fill{\small--- Thomas Edison}
\end{quote}

According to the \textit{Job Outlook 2025} survey by the National Association of Colleges and Employers, the single most sought-after attribute employers seek on a candidate's resume is ``problem-solving'' ability.\footnote{See \hyperlink{https://www.training.nih.gov/oite-careers-blog/top-skills-employers-are-looking-for-in-2025-problem-solving-teamwork-and-communication/}{https://www.training.nih.gov/oite-careers-blog/top-skills-employers-are-looking-for-in-2025-problem-solving-teamwork-and-communication/}.} This emphasis reflects the fact that successes in science, innovation, and business depend critically on how individuals and organizations tackle uncertain problems; fundamentally, solving a problem is the smallest increment of innovation and growth. Understanding this process is essential both for predicting behavior in environments such as innovation, entrepreneurship, and research, and for designing contracts and institutions that guide problem-solving effectively. 

Not all challenges are created equally. Some are \textit{exercises}: tasks where the methods to solve them are clear. Other challenges are \textit{problems}: questions that one does not know how to answer, at least initially. By definition, solving problems requires searching for new approaches and strategies.\footnote{This distinction is articulated very nicely by mathematician Paul Zeitz: \hyperlink{https://jrmf.org/blog/a-life-changing-revelation-problems-vs-exercises/}{https://jrmf.org/blog/a-life-changing-revelation-problems-vs-exercises/}.} In facing a problem, a solver does not know in advance which approaches will or will not work, and will have to be creative in inventing new solutions.

Consider an inventor experimenting with developing a new technology, who faces uncertainty not only over which prototype might succeed, but also over the tractability of the underlying idea. A computer scientist working on developing an AI model with an emergent property (e.g. artificial general intelligence) may not know necessarily which model architecture to use, nor at what scale the property might arise. A materials scientist attempting to synthesize a material with a desired property (e.g. superconductivity) similarly may not know which type of crystalline structure to synthesize, nor what fraction of possible configurations might exhibit the macroscopic property.
In each example, the uncertainty lies not just in which approach might work, but in how hard the problem is overall. How does uncertainty about problem difficulty affect the way we pursue possible solutions?

I study this question through a dynamic model of problem solving. An agent aims to achieve a one-time breakthrough as quickly as possible. At each moment, they choose how to allocate a unit of effort to pursuing existing approaches. Additionally, they decide when to undertake costly exploration, which expands the set of possible approaches. Approaches are valid or not; effort on a valid approach stochastically produces a breakthrough. Each approach's validity is drawn independently of the validity of other approaches; there could be multiple valid ways to solve the problem. However, the agent does not observe which approaches are valid and which are not. Additionally, the difficulty of the problem is unknown: if the problem is easy, the rate at which a valid approach yields success is high, while if it is hard, valid approaches yield success more slowly. As the agent experiments and observes failures, they update their beliefs about the problem’s difficulty and about approach validity, and this evolving belief shapes the balance between creating and exploring new possible solutions and exploiting existing approaches.

A key feature in solving problems of unknown difficulty is that a lack of success is ambiguous. When an attempted approach does not succeed, the solver cannot tell whether the failure reflects a poor choice of approach or whether the problem itself is inherently difficult. This ambiguity creates a tradeoff. On the one hand, if the solver attributes the lack of success to a bad approach, the solver would like to move on to a different approach. On the other hand, the lack of success may simply have been because the problem was difficult, so pursuing the approach further may still be worthwhile. I show that the optimal policy in such environments involves balancing learning about the viability of potential solutions and the difficulty of the problem. In the optimal policy, the solver alternates between exploring new approaches and revisiting old approaches as new information accumulates. This pattern contrasts with a benchmark in which problem difficulty is known, where abandoned approaches are never revisited.

The decision problem departs from standard multi-armed bandit models in a crucial way: the difficulty of the problem is a common factor across all approaches. This correlation means that effort on one approach shifts beliefs about every other approach as well. As a result, the decision maker’s problem is not separable across arms, but a correlated bandit with dynamic interdependence. This structure complicates both the characterization of optimal policies and the derivation of the solution, and standard index results do not apply. To analyze the decision problem under uncertain difficulty, I derive the optimal policy using interchange arguments and two key insights; that (1) the optimal policy ranks existing approaches relative to each other based on the historical effort spent pursuing that approach, and (2) that the amount of effort that must be spent on pursuing an existing approach before the agent is willing to explore a new approach increases over time. 

Having established in the decision problem this exploration-exploitation balance between exploring new approaches and further pursuing older approaches, I then move to consider how economic forces shape \textit{how} agents solve problems. In order to do so, I will characterize a more tractable limit model that preserves the key tradeoffs from the baseline. While the baseline decision problem captures the essential single-agent tradeoffs, the discreteness of the problem in the number of approaches means that the distribution of success times is not differentiable at the times where the agent generates new approaches, which makes the baseline model difficult to embed into strategic settings. 
To circumvent this, I develop a version of the decision problem with a continuum of approaches that still retains the core breadth-depth tradeoff, where the agent instead maintains a breadth of their search (corresponding to the number of approaches tried) and balances breadth with the depth of their search (corresponding to how much effort is spent on any given approach). This continuum model ``smooths'' out the baseline decison problem, and I microfound the model by showing that this continuum model arises as a limit of the baseline decision problem.

I apply the continuum model to study the effect of agency frictions on how agents solve problems. I focus on a principal-agent model where the agent is tasked with solving a problem but requires support from the principal; for example, consider a venture capitalist funding an entrepreneur working to develop a new technology. The contracting instrument available to the investor is a possibly time-varying share of the value of success. In contrast to standard contracting environments, I suppose moral hazard arises because the principal and agent disagree on an intensive margin of effort rather than an extensive margin. That is, the investor and entrepreneur are aligned in that the entrepreneur wants to work; however, they disagree about \textit{how} the entrepreneur should be tackling the problem. I show that the agency friction induces the entrepreneur to search less broadly than the first-best; the entrepreneur spends too long per approach and tries new approaches (i.e., pivots) less than a planner would like. The fundamental tradeoff in the principal–agent model is between providing the entrepreneur with strong enough incentives to broaden search, and avoiding excessive dilution of the principal’s payoff from success.

I then characterize the optimal dynamic time-varying share, and highlight two countervailing forces that arise in the design of the optimal contract. 
First, I illustrate how the nature of the moral hazard problem introduces an incentive for the principal to frontload incentives. To isolate this effect, I show that without learning, the optimal contract decreases the entrepreneur's equity over time. Intuitively, this occurs because the investor offers the entrepreneur a larger stake in earlier success to induce the entrepreneur to try more new approaches early, which not only accelerates success in the short-term, but also generates a persistent increase to the rate of success in the future even if the new approaches do not immediately yield success. This contrasts with classic results in dynamic moral hazard, which find that when the moral hazard appears in an extensive margin (i.e. the agent faces work/shirk decisions), the principal backloads incentives. This force is also distinct from the dynamic agency costs studied in the experimentation literature, where frontloading arises because the agent wants to prolong the project while the principal wants the project to succeed as rapidly as possible; in this paper, both the agent and the principal are aligned in their preferences over the duration of the project. 

I finally illustrate how learning produces an opposing force for the principal to want to backload contracts; here, because the agent and the principal both become more pessimistic about difficulty when time passes without success, the principal becomes willing to relinquish more of the share to encourage the agent to continue exploring. I demonstrate how these counteracting forces can yield non-monotonicities in the optimal contract, which sometimes implies the optimal dynamic contract has a tolerance (or even reward) for early failure. 

The paper is structured as follows. The next section surveys the related literature. Section \ref{sec:model} provides the formalism for the baseline model. Section \ref{sec:analysis} analyzes the baseline decision problem, contrasting a benchmark setting with known difficulty in Section \ref{sec:benchmark} with the unknown difficulty case in Section \ref{sec:learning}. Section \ref{sec:continuum} then presents the limiting continuum model, which then enables analysis of the principal-agent problem in Section \ref{sec:agency}. Section \ref{sec:robustness} discusses robustness, and Section \ref{sec:conclusion} concludes.
\subsection{Related Literature}
This paper contributes to an emerging literature connecting bandit models of experimentation to models of exploration over a complex landscape. The former focuses on studying the question of what experimentation dynamics look like over time, and typically feature an exogenously given risky option. The latter focuses on modeling what alternatives an agent chooses to explore. In this paper, I study their interaction; how the dynamics of what is learned from experimentation influence how an agent explores new approaches.

The first strand of the literature on strategic experimentation, dating back to \cite{bh1999} and \cite{krc2005}, focuses on studying dynamic behavior of strategic agents faced with a given risky endeavor. This line of the literature typically focuses on the bandit problem where agents face a choice between a risky arm and a safe arm, characterizing how this choice evolves over time and in response to strategic environments. Most of these papers focus on an exogenously given choice set of arms, with a few notable exceptions. \cite{cs2023} allow (at most) one new arm to be generated, but instead focus on a time-constrained agent who manages the risk of not finding a new approach with the time pressure induced by a deadline. They provide a rationale for ``false starts'' (where the agent revisits the initial arm), but this stems from the exogenously given time pressure, and not as a result of dynamic learning as is the case in my model. \cite{ko2003}, \cite{s2021}, \cite{d2024} and \cite{fp2024} allow arbitrarily many new arms to be generated; however, the models in \cite{s2021}, \cite{d2024}, and \cite{fp2024} rely on independence assumptions between the arms that do not apply to my model. \cite{ko2003} derives a condition for a Gittins index to hold even when there is some correlation in branching arms, but this condition does not apply either. In my model, uncertainty about problem difficulty results in information spillovers between arms, where learning about one arm \textit{does} influence the beliefs and payoff assessments about other arms. In particular, as a result of the uncertainty about difficulty, the belief an agent has about the state of an arm (validity of an approach) ceases to be independent of the beliefs over the states of other arms after the arm is pulled. This distinction results in qualitative differences in the optimal strategy, including featuring ``false starts'' and revisiting old approaches, which is not present in the benchmark or the related model of \cite{s2021}. 

The second strand of relevant literature incorporates correlation among outcomes. This line started from the seminal \cite{c2011} model, which models exploration as observing points on a Brownian motion, so nearby observations are spatially correlated. These models typically answer questions related to where agents explore; notably, \cite{cs2025} applies this framework to think about what kinds of questions researchers investigate. The closest related experimentation model with a Brownian form of correlation is \cite{uy2025}. In their environment, the exploration technology is exogenously given, but features \textit{temporal} correlation; the payoff of a new alternative is observed and correlated with recent alternatives. This literature either supposes the agents observe outcomes of exploration deterministically, as in \cite{uy2025} which supposes the agent sees the Brownian motion, or supposes agents are short-lived, as in \cite{cs2025}. This paper contributes to this literature by offering a model of exploration with imperfect observation and learning dynamics; in my model, the outcomes from exerting effort on the agent's approaches are correlated, but at each instant, a lack of success only generates \textit{imperfect} information about the value of the approach, which influences the long-lived agent's future exploration patterns.

Since the paper considers a dynamic moral hazard problem that arises when a principal contracts with an agent who faces the baseline decision problem modeled, it additionally contributes to the literature on dynamic moral hazard. The canonical dynamic moral hazard models \citep{r85,ss87,s08} suppose the agent faces a work/shirk choice every period, so the moral hazard arises in the extensive margin of how much effort the agent exerts. This paper offers a different perspective by studying an intensive margin of \textit{what} the agent works on (namely, how the agent explores new approaches), which dates back to a notion of moral hazard from \cite{hm1991}. This paper shows that the distinction in the type of moral hazard yields different implications for optimal contracts; while optimal contracts that focus on the extensive margin of effort are backloaded \citep{r85, ss87, s08}, I show that contracts focusing on this intensive margin of effort are frontloaded. This paper highlights an economic mechanism for this feature distinct from that of other papers on contracting over experimentation \citep{bh2005, hs2013}. In those papers, frontloading arises because of the agent has an incentive to prolong the project; by contrast, in this paper, the agent and principal both agree that they would like success to arrive as soon as possible. Complementary to \cite{hkl2016}, which is focused on the interaction of adverse selection and moral hazard, this paper instead focuses on a distinct notion of moral hazard, not its interaction with adverse selection. Finally, \cite{m2011} examines the design of incentives in a two period model to induce an agent to explore a single risky bandit arm, finding that inducing exploration may involve paying the agent for failure. While \cite{m2011} focuses on the tension between encouraging exploration and sustaining the agent's incentives to work, I instead focus not on the tension with sustaining incentives to not shirk, but on the dynamics relating exploration to commitment and learning.

Lastly, there are two other related papers featuring sufficiently distinct models that complement the analysis in this paper. \cite{lsy2024} focus on separating the information arrival process from the payoff process; their agent is allowed to allocate attention to learning about one arm despite getting payoffs from pulling another arm. In this paper, I instead maintain the standard coupling of payoffs to information, where the agents learn by observing their payoffs; instead, the information spillover in my paper stems not from the ability to allocate attention to other arms, but because of the correlation induced by unknown difficulty. 
\cite{lw2024} features a continuum of arms for the agents to choose from, and study a competition model where agents search over a fixed, but shared, continuum of arms. In complement to their work, this paper instead focuses on disjoint, but \textit{expanding} search spaces, where not all potential alternatives are available at the outset but instead agents must forgo costs to expand the set of arms available to them.
\section{Baseline Model}
\label{sec:model}
\paragraph{Exposition}  
An agent dynamically allocates a unit of effort with the end goal of solving a problem. The agent initially has no idea how to find a solution, but can spend a cost to brainstorm a new approach at any time. The new approaches are either valid or flawed, but the agent a priori does not know whether any given approach is valid or not. The agent succeeds by exerting on a valid approach; on a valid approach, a breakthrough (success) arrives at a Poisson rate, which ends the game and yields a payoff (normalized to 1). Pursuing a flawed approach never yields success. Additionally, the agent is uncertain whether the problem they face is easy or hard; if the problem is easy, they generate breakthroughs faster on \textit{all} approaches than if the problem is hard.

More precisely, the agent faces a continuous-time effort allocation problem over a multi-armed bandit, with a special action that creates a new approach. Time is continuous, $t \in [0, \infty)$. At time 0, the agent has no approaches (arms). Whenever the agent chooses, they can pay a cost $c$ to undertake the special action (``brainstorm''), which instantaneously creates a new approach.
When an approach is brainstormed, it is independently drawn to be valid with probability $\nu_0$ for (v)alidity, and flawed with complementary probability; this is unobserved by the agent, so the agent does not know whether approaches are valid. The unknown state of the approaches can be summarized by a sequence of independent Bernoulli random variables with parameter $\nu_0$; denote $\{ \omega_i \}_{i \in \mathbb{N}_+}$ as the realization of such a sequence. The agent is additionally uncertain about whether the problem is easy or hard: there is an unobserved state $\theta \in \{E, H\}$, drawn $H$ with probability $\delta_0$ for (d)ifficulty. Difficulty influences the Poisson rate at which breakthroughs arrive $\lambda_\theta$; assume $\lambda_E \ge \lambda_H \ge 0$.

At some $t$, if the agent has $n$ approaches, the agent's action choice is a vector $k = (k_1, k_2, ..., k_n)$ such that $\sum_{i=1}^n k_i = 1$, where I interpret the action vector as allocating a fraction $k_i$ of their effort to pursuing approach $i$. By pursuing approach $i$ with effort $k_i$, the agent solves the problem at Poisson rate $\lambda_\theta \omega_i k_i$; note that the agent can only solve the problem using approach $i$ iff approach $i$ is valid ($\omega_i = 1$). 
As shorthand, I will use the notation $[N]$ to denote the set $\{1, 2, ..., N\}$.

\paragraph{Payoffs} The agent discounts the future at rate $r > 0$, and only earns a payoff of size 1 when success arrives; if success occurs at time $\tau$, and the agent brainstormed new approaches at times $\{ \tau_1, \tau_2, ..., \tau_N \}$, then the agent's payoff is 
\[e^{-r\tau} - \sum_{i=1}^N e^{-r\tau_i} c. \] Note that normalizing payoff size to 1 is without loss of generality.

\paragraph{States and Strategies} The state of the problem can be summarized by the number of approaches brainstormed and the total effort exerted on each approach.\footnote{Unlike the precedent in \cite{bh1999} and \cite{krc2005}, I am not using beliefs as the state object. This is partly because the correlation structure induced by a given history of effort makes the laws of motion for beliefs quite difficult to handle. However, the historical effort vector is a sufficient statistic to recover all beliefs; given a vector of historical efforts, it is straightforward to compute the beliefs at that history from Bayes' rule. } That is, with a slight abuse of notation, I refer to the initial state with no approaches as state $\emptyset$, and let the vector $s = (K_1, K_2, ..., K_N)$ denote the state where $N$ approaches have been brainstormed, and approach $i$ has had total effort $K_i \ge 0$ exerted on it in the past. The state space is the set of all such possible vectors, of any possible dimension $N \in \mathbb{N}$: 
\[ \Sigma := \{ \emptyset \} \cup \left( \bigcup_{i=1}^\infty \mathbb{R}^i_+ \right) . \]
In words, the state space consists of all effort vectors of arbitrary length, and the initial state.\footnote{The state space characterization implicitly assumes the vector of efforts is ordered according to the times when the corresponding approaches were brainstormed. The fact that the state space is ordered does not matter for the results.}
At some state $s = (K_1, K_2, ..., K_N) \in \Sigma$, the allowable actions are $A(s) := \{ k \in \mathbb{R}^{N}_+ \mid \sum_{m=1}^{N} k_m = 1 \} \cup \{ \textnormal{brainstorm} \}$, so the only initially feasible action is  just $A(\emptyset) :=\{ \textnormal{brainstorm} \}$. Denote the set of all actions as $A = \cup_{s \in \Sigma}A(s)$. 

The state evolution is as follows: if the agent takes $a = (k_1, k_2, ..., k_N)$ at state $s = (K_1, K_2, ..., K_N)$ for a time interval of length $\Delta$, the subsequent state is $(K_1 + k_1\Delta, K_2 + k_2\Delta, ..., K_N + k_n\Delta)$. If the agent chooses to brainstorm at state $s = (K_1, K_2, ..., K_N)$, then the state immediately transitions to the state with a new approach brainstormed with zero previous effort: $(K_1, K_2, ..., K_N, 0)$.

A Markov strategy for the agent is then a function from states to allowable action vectors; that is, a strategy $\sigma: \Sigma \to A$  such that each state $s \in \Sigma$ maps to a feasible action: $\sigma(s) \in A(s)$.\footnote{Focusing on Markovian strategies is without loss here, since the agent faces a Markov decision problem (MDP). I will focus on strategies that induce  action paths which are c\`adl\`ag in time.} 
Borrowing terminology from the bandit literature, I will use the phrase ``pulling arm $i$'' to mean exerting positive effort on approach $i$.

\paragraph{Discussion} 
In the model, I modeled the cost as a fixed cost for obtaining a new idea; this applies to mental costs that might arise with creativity or to startup costs that might arise from starting a new project. The effort allocation was modeled under a unit constraint; this is to highlight that I focus on environments where there are many possible approaches which the agent can discover, so the scarcity arises in the agent's effort and not in the approaches available. The conclusions of the main section of the analysis would still hold under a time cost for new approaches, if one were to interpret the cost as arising from the time required to get ``up to speed'' on a new approach. 

I have also modeled the approaches as being ex-ante identical. I defer a discussion of heterogeneity to Section \ref{sec:heterogeneity}, but remark that there are many research endeavors where the approaches are not ex-ante differentiated; for example, high-throughput screening in drug development tests thousands of compounds against biological targets in a ``spray-and-pray'' search for drug candidates, and AI model architecture design involves picking layer arrangements, neural network configurations, and hyperparameters without much ex-ante idea of which might work best. 

Finally, the assumption that the validity of approaches is independent can be relaxed to allow some correlation; a discussion of model generalizability is deferred to Section \ref{sec:generalization}.

\section{Decision Problem Analysis}\label{sec:analysis}
In this section, I characterize the solution to the single-agent decision problem described by the baseline model. I will first provide a benchmark case where there is no learning about difficulty before analyzing the learning case, and show that there are qualitative differences in the structure of the optimal strategy that arise as a result of learning about problem difficulty.

\subsection{Benchmark: Known Difficulty}\label{sec:benchmark}
First, I analyze the case with no aggregate uncertainty: $\lambda_E = \lambda_H = \lambda > 0$. Assume the cost of brainstorming approaches is sufficiently low: 
\[ c < \frac{\nu_0 \lambda}{r + \lambda}. \]
Intuitively, the assumption supposes that the expected payoff from brainstorming a single approach (and working on it forever) exceeds the cost of brainstorming the approach. If this condition is violated, the agent does not attempt the problem (never decides to start the first approach), so I focus on the interesting case where the agent is at least willing to start.

In this benchmark, the agent's only source of uncertainty is whether the approaches are valid or not. In particular, since the validity of an approach is drawn independently, the agent's belief about whether an approach is valid or not is only a function of their own historical effort on that approach. That is, define
\begin{equation} \label{eqn:interim_belief} \nu(K) := \frac{\nu_0 e^{-\lambda K} }{ \nu_0e^{-\lambda K} + 1 - \nu_0}, \end{equation} as the posterior belief that an approach is good when $K$ historical effort has been exerted on the approach and no breakthrough has arrived. The belief $\nu(K)$ is decreasing in $K$; intuitively, the more historical effort exerted without success, the lower the agent's belief on the approach's validity.

Let $B(s) = \arg \min_{i \in [N]} K_i$ be the approaches with least total effort when the state is $s = (K_1, K_2, ..., K_N)$; i.e., the (b)est approaches available.
\begin{proposition}\label{prop:benchmark}
    Suppose at some state, $N$ approaches have been discovered. Let $K^*$ denote the unique solution to 
    \begin{equation} \sum_{n=0}^\infty \left( e^{-rK^*} (1-\nu_0 + \nu_0e^{-\lambda K^*})\right)^n \left[ \nu_0 \int_0^{K^*} \lambda e^{-\lambda t} [e^{-rt}] \dd{t}  - c \right]  = \frac{\lambda \nu (K^*)}{\lambda \nu (K^*) + r} . \label{defn:taustar} \end{equation}
    Then the optimal strategy brainstorms a new approach if $\min_{i \in [N]} K_i \ge K^*$, else exerts effort on all the approaches in $B(s)$ with effort $1/|B(s)|$.
\end{proposition}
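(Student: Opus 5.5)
This is a classic branching bandit with independent arms, since with known difficulty the belief on each approach depends only on its own history. The plan is to use a Gittins-index style argument, or equivalently a direct interchange argument, in three steps.

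\textbf{Step 1: ranking existing approaches.} Since $\nu(K)$ is strictly decreasing in $K$, an approach with less accumulated effort has a higher instantaneous success hazard $\lambda\nu(K)$. Its continuation also stochastically dominates that of any approach with more effort: the future of an approach at $K$ is identical to the future of a fresh approach conditioned on having reached $K$. An interchange argument then shows that effort should never go to an approach outside $B(s)$ while some approach in $B(s)$ remains:
\begin{itemize}
\item Take a policy that pulls $j\notin B(s)$ for a short interval while $i\in B(s)$ is available.
\item Swapping the two intervals moves a higher-hazard block earlier and leaves all other timing unchanged.
\item Success arrives in first-order stochastically earlier time, so discounting raises the payoff.
\end{itemize}
Applying this at every state forces the lowest-$K$ approaches to be pulled first. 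When several approaches are tied, splitting effort equally is the limit of rapid alternation, which keeps them tied. So the within-existing-arms policy is ``equalize up from the bottom.''

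\textbf{Step 2: reduction to a one-dimensional stopping problem.} Fresh approaches are i.i.d., and each is a copy of the same arm started at $K=0$. The problem is therefore stationary: the value just after brainstorming is some constant $V$ that does not depend on history. A policy is then summarized by a threshold $K$. The agent pulls the current best approach until its effort reaches $K$, then brainstorms, and by Step 1 never returns to old approaches, since a fresh approach dominates any abandoned one.

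Under threshold $K$, each cycle costs $c$ and yields discounted success value $\nu_0\int_0^K\lambda e^{-\lambda t}e^{-rt}\,dt$. The cycle fails with probability $1-\nu_0+\nu_0e^{-\lambda K}$ and takes time $K$. Summing the geometric series gives
\[
V(K)=\frac{\nu_0\int_0^K\lambda e^{-(\lambda+r)t}\,dt-c}{1-e^{-rK}(1-\nu_0+\nu_0e^{-\lambda K})},
\]
which is the left side of \eqref{defn:taustar}.

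\textbf{Step 3: optimality of the threshold.} Once an arm reaches effort $K$, the agent can either switch to a fresh arm, worth $V(K^*)$, or spend an instant $dt$ on the current arm. The current arm is worth its hazard $\lambda\nu(K)$ times the payoff $1$, against discounting at rate $r$, plus continuation. Indifference at the margin, via the standard first-order condition $dV/dK=0$ rearranged, gives
\[
V=\frac{\lambda\nu(K)}{\lambda\nu(K)+r}.
\]
The right side is the value of pursuing an arm with constant hazard $\lambda\nu(K)$ forever; it matches locally because the hazard is only compared at the margin. This is the stated equation.

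For uniqueness, the right side is strictly decreasing in $K$, from $\nu_0\lambda/(\nu_0\lambda+r)$ to $0$. The left side, evaluated at $V(K)$, is increasing wherever it lies below the right side. The assumption $c<\nu_0\lambda/(r+\lambda)$ ensures $V>0$ is attainable, so there is a single crossing.

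Optimality follows from a verification argument:
\begin{itemize}
\item For $\min K_i<K^*$, the hazard $\lambda\nu(\min K_i)$ exceeds the flow return $r V/(1-V)$ of brainstorming, so continuing beats switching.
\item For $\min K_i\ge K^*$, the inequality reverses and brainstorming is optimal.
\end{itemize}

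\textbf{Main obstacle.} I expect the hardest step to be the single-crossing and verification in Step 3. It requires showing that the value function of the full MDP equals $\max_K V(K)$, ruling out nonstationary policies such as revisiting. I would handle this by noting that Step 1's interchange already rules out revisiting. Stationarity after each brainstorm then reduces the problem to optimizing a scalar $K$ across renewal cycles, with each cycle's choice being the same by the principle of optimality.
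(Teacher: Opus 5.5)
The parts you compute are correct. $V(K)$ is the left side of \eqref{defn:taustar}. $V'(K)$ has the sign of $\lambda\nu(K)(1-V(K))-rV(K)$, so the maximizer of $V$ is the unique crossing with $\lambda\nu(K)/(\lambda\nu(K)+r)$. Your closing comparison of $\lambda\nu(\min_i K_i)$ with $rV^*/(1-V^*)$ is exactly the paper's comparison of the Gittins indices $g(D,K)=r\lambda\nu(K)/(r+\lambda\nu(K))$ at $K=\min_i K_i$ and $g(U)=r\max_K V(K)$.

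\textbf{The gap: nothing justifies that comparison.} Step 2 assumes the problem renews at each brainstorm, with a history-independent continuation value, and that abandoned approaches are never revisited. You propose to obtain this from Step 1, but Step 1 cannot deliver it. Lowest-effort-first only says which existing approach gets effort when the agent is not brainstorming. It says nothing about whether successive brainstorming thresholds $K_n$ coincide. If $K_{n+1}>K_n$, the agent runs the new approach up to $K_n$ and then returns to the old ones, in full compliance with Step 1. Indeed, the same ranking result (Lemma \ref{lem:lowest_history_of_effort}) holds under unknown difficulty, where Theorem \ref{thm:learning} shows recall does occur. So your reduction to a scalar renewal problem presupposes the constant-threshold, no-recall structure it is meant to justify. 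What actually rules out recall in the benchmark is independence across arms, and your argument never uses that beyond computing $V(K)$.

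\textbf{A related problem: arbitrary states.} The proposition is a claim about an arbitrary state $s=(K_1,\dots,K_N)$, possibly with several approaches below $K^*$ at different effort levels. There the payoff from brainstorming is the value of $(K_1,\dots,K_N,0)$ net of $c$. This is generally larger than $V^*$, because the low-effort approaches keep their option value. So your verification bullets compare continuing against the wrong alternative.

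\textbf{How the paper closes both holes.} It applies the Gittins index theorem to the branching representation $U\to D\to S$ of each arm, which is valid precisely because arms are independent when $\lambda$ is known. Then $g(D,K)$ is strictly decreasing in $K$, and $g(U)$ is attained at $K^*$ (Lemma \ref{lem:gittins}). The two indices coincide exactly at effort $K^*$ (Lemma \ref{lem:gittins_2}). The index rule then prescribes the stated action at every state, including no recall and equal splitting among ties.

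\textbf{Two ways to repair your route.} You can invoke that theorem explicitly; your Steps 2--3 then become the computation of $g(U)$ and of the crossing point. Alternatively, carry out a genuine verification: let $W(s)$ be the value of the threshold policy from every state, and check the HJB inequalities on the whole state space. The nontrivial inequality is $W(s)\ge W(s,0)-c$ whenever $\min_i K_i<K^*$.
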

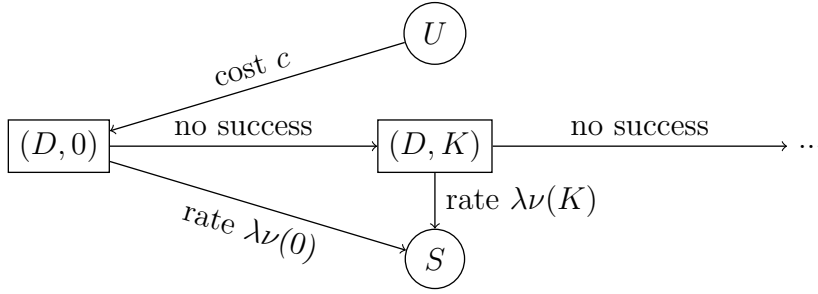
\begin{figure}
\centering
    \begin{tikzpicture}[->, line width=0.5 pt,level distance=1.5cm, sibling distance=5cm, edge from parent/.style={draw,->}]
\node [circle, draw ] (U) {$U$}
    child {node[rectangle,draw] (D1) {$(D,0)$}  edge from parent[draw=none]}
    child {node[rectangle,draw] (D2) {$(D,K)$} edge from parent[draw=none]
        child {node[circle,draw] (S) {$S$}  edge from parent[draw=none]}
    }
    child {node (D3) {...} edge from parent[draw=none]};
\draw[->] (D1) --  node[above] {no success} (D2) ;
\draw[->] (D2) --  node[above] {no success} (D3) ;
\draw[->] (U) --  node[sloped, above] {cost $c$} (D1) ;
\draw[->] (D1) -- node[sloped,below] {rate $\lambda \nu(0)$} (S);
\draw[->] (D2) -- node[right] {rate $\lambda \nu(K)$} (S);
\end{tikzpicture}
\caption{ Markov chain representation of the individual state of an arm. Each arm starts in an ``undiscovered'' state $U$, which transitions to a discovered state $D$ immediately with cost $c$; effort on a discovered state increases the historical effort associated with the arm, and an arm with historical effort $K$ transitions to a solved state at rate $\lambda \nu(K)$, where $\nu(K)$ is the posterior belief that the arm is valid given $K$ historical effort defined in \eqref{eqn:interim_belief}. }
\label{fig:gittins_rep}
\end{figure}
The proof is a straightforward application of the Gittins' index theorem on the state representation of an arm given by Figure \ref{fig:gittins_rep}. The formal details are left to the appendix.

To understand the threshold condition \eqref{defn:taustar}, observe that the right-hand side is the marginal time-weighted return to effort on an arm after exerting $K$ effort on it. The left-hand side is the value of sequentially pulling each arm for exactly $K^*$, which is also the initial value of the problem. The insight from this interpretation is that when the agent is ready to brainstorm a new approach, (i.e., the value is equal to the left-hand-side), the agent is exactly indifferent between brainstorming a new approach and pursuing the previous approach an instant longer.

The benchmark result should not be surprising; Proposition \ref{prop:benchmark} indicates that the path of research effort sequentially tackles approaches. Additionally, the optimal policy never splits effort between multiple approaches (``task-juggling'').

\begin{corollary}\label{corr:seq}
    The optimal effort path induced by the optimal strategy is to sequentially brainstorm a new approach, work only on the new approach for total effort $K^*$, and then move on. The agent never simultaneously pursues approaches on-path, and once the agent stops working on an approach, it is never revisited. 
\end{corollary}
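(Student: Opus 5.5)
The plan is to trace the state path generated from $\emptyset$ by the strategy in Proposition \ref{prop:benchmark}, and to show by induction on the number of brainstormed approaches that it has a simple invariant. Call a state $s=(K_1,\dots,K_N)$ \emph{canonical} if $K_1=\dots=K_{N-1}=K^*$ and $K_N\in[0,K^*]$. At $\emptyset$ the only feasible action is to brainstorm, which produces $(0)$, and $(0)$ is canonical with $N=1$. I will show that from any canonical state the on-path trajectory stays canonical, and that the only changes are to raise $K_N$ or, once $K_N=K^*$, to brainstorm.

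For the inductive step, take a canonical state. Suppose first that $K_N<K^*$. Then $\min_i K_i=K_N<K^*$, so the strategy does not brainstorm. Moreover $B(s)=\{N\}$, because every other approach has effort $K^*>K_N$. Hence the prescribed action puts all effort on approach $N$, with $k_N=1$. Under this action $K_N$ increases at unit rate, and for as long as $K_N<K^*$ the set $B(s)$ stays $\{N\}$ and the action does not change. So the state moves continuously along canonical states until $K_N=K^*$, which takes exactly $K^*-K_N$ units of time; this is consistent with the c\`adl\`ag convention. Now suppose $K_N=K^*$. Then $\min_i K_i=K^*$, so the strategy brainstorms, and the state jumps to $(K^*,\dots,K^*,0)$, which is canonical with $N+1$ approaches. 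This completes the induction.

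It follows that on path each approach is created, receives the full unit of effort until its cumulative effort reaches $K^*$, and then is superseded by a newly brainstormed approach. Two facts give the claims about juggling and revisiting. First, at every on-path state $|B(s)|=1$, except at the instants where $N\ge 2$ and $K_N=K^*$. At those instants the action is to brainstorm rather than to split effort, so effort is never divided among approaches. Second, an approach $i<N$ has $K_i=K^*$, which is strictly larger than the minimum whenever the minimum is below $K^*$. It is therefore never in $B(s)$ when effort is being allocated, and so it is never revisited.

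The main point needing care is that $K^*>0$, together with the boundary case $K_N=K^*$. If $K^*=0$, the strategy would brainstorm forever without working. Under the assumption $c<\nu_0\lambda/(r+\lambda)$, evaluating \eqref{defn:taustar} at $K=0$ shows that its left side, which is $-c$ there, lies below its right side, $\lambda\nu_0/(\lambda\nu_0+r)$. Combined with the uniqueness asserted in Proposition \ref{prop:benchmark}, this gives $K^*>0$. I expect this to be the only step that is not pure bookkeeping, and I would handle it by the sign check just described.
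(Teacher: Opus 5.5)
Your argument is correct and is essentially the paper's: the paper reads the corollary directly off the index policy in Proposition~\ref{prop:benchmark}, and your induction on ``canonical'' states makes that path-tracing explicit. One small slip in the $K^*>0$ check: at $K=0$ the geometric ratio in \eqref{defn:taustar} equals $1$, so the left side is $\sum_{n\ge 0}(-c)=-\infty$ rather than $-c$. The conclusion that $0$ is not a root still holds, but what it uses is $c>0$ (the upper bound on $c$ guarantees that $K^*$ exists, and uniqueness is not needed for this step).
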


These observations echo the findings in the related model of \cite{s2021}, which solves a model where instead of a cost of approaches $c$, the approaches are implicitly costly through time delay. In the next section, I will show that with learning about difficulty, the features described in Corollary \ref{corr:seq} are no longer true of the optimal path; that is, the agent will both work on approaches simultaneously and revisit approaches.

\paragraph{Comparative Statics}
To illustrate how the effort threshold varies in the primitives of the model, I prove the following result:
\begin{proposition}\label{prop:baseline_comp}
    The threshold $K^*$ is increasing in $c$ and decreasing in $\lambda$.
\end{proposition}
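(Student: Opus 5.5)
The plan is to rewrite \eqref{defn:taustar} as a first-order condition, so that $K^*$ becomes the maximizer of a one-dimensional value function, and then run single-crossing comparative statics on that condition. Write
\[ A(K) := \nu_0\int_0^K \lambda e^{-(\lambda+r)t}\,\dd{t} - c, \qquad B(K) := 1 - e^{-rK}\bigl(1-\nu_0+\nu_0 e^{-\lambda K}\bigr). \]
Summing the geometric series, the left-hand side of \eqref{defn:taustar} is $V(K) := A(K)/B(K)$. This is the value of the stationary policy ``brainstorm, work on the new approach for exactly $K$, repeat'' from Corollary \ref{corr:seq}. Denote the right-hand side by $g(K) := \lambda\nu(K)/(\lambda\nu(K)+r)$.

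The first step is to differentiate: $A'(K) = \nu_0\lambda e^{-(\lambda+r)K}$ and $B'(K) = e^{-rK}\bigl[r(1-\nu_0+\nu_0e^{-\lambda K}) + \lambda\nu_0 e^{-\lambda K}\bigr]$. Dividing numerator and denominator of $A'/B'$ by $1-\nu_0+\nu_0e^{-\lambda K}$ and using \eqref{eqn:interim_belief} gives exactly $A'(K)/B'(K) = g(K)$. Since $B'>0$ and $V' = B'(A'/B' - V)/B$, we get
\[ \operatorname{sign} V'(K) = \operatorname{sign}\bigl(g(K)-V(K)\bigr). \]
So \eqref{defn:taustar} is the first-order condition $V'(K^*)=0$.

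The second step uses this to show that $H(K) := V(K)-g(K)$ is single-crossing from below. The function $g$ is strictly decreasing because $\nu$ is. Wherever $H<0$, $V$ is increasing; wherever $H>0$, $V$ is decreasing. Hence at any zero of $H$ we have $V'=0$ and $g'<0$, so $H$ crosses zero strictly upward. There is therefore a unique root, $K^*$ is the global maximizer of $V$, and $H<0$ on $[0,K^*)$ and $H>0$ beyond it. Existence uses the assumption $c<\nu_0\lambda/(r+\lambda)$: it gives $V(0^+)<g(0)$, while $V(\infty)>g(\infty)=0$.

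Comparative statics in $c$ then follow immediately. Since $\partial H/\partial c = -1/B(K)<0$ and $\partial H/\partial K>0$ at the root, the implicit function theorem gives $\dd{K^*}/\dd{c} = -(\partial H/\partial c)/(\partial H/\partial K) > 0$.

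For $\lambda$ I will use the same implicit-function argument, and I need $\partial H/\partial\lambda>0$ at $K=K^*$. This is the main obstacle, because at fixed $K$ the two pieces of $H$ respond to $\lambda$ in ways that do not obviously align. On one side, $\lambda\nu(K) = \lambda\nu_0 e^{-\lambda K}/(1-\nu_0+\nu_0e^{-\lambda K})$ is not monotone in $\lambda$. On the other side, $V$ gains from faster success but also from learning faster that an approach is bad. My first attempt is to differentiate $\log V - \log g$ in $\lambda$ at fixed $K$, then impose $V=g$ at $K^*$ to eliminate the value and reduce the claim to a sign condition on elementary expressions in $x=\lambda K^*$ and $\rho = r/\lambda$.

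If the direct sign computation is messy, the fallback is an envelope/index argument. I would view $V(K^*)=\max_K V$ as the Gittins index of a fresh arm. Raising $\lambda$ raises this maximized value, since any stopping rule does at least as well when success arrives faster on valid arms. Meanwhile, at the old threshold, the marginal return $g(K^*)$ falls once the belief effect dominates. That combination pushes the crossing to the left. Verifying the ``belief effect dominates'' step at $K^*$, again by substituting the identity $V(K^*)=g(K^*)$, is where I expect the real work to lie.
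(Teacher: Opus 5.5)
Your treatment of $c$ is correct. The identity $A'/B' = g$ gives $V' = B'(g-V)/B$, so $H = V - g$ can only cross zero upward, and $\partial H/\partial c = -1/B < 0$ yields $\mathrm{d}K^*/\mathrm{d}c > 0$. One small misattribution: for $c>0$ you get $V(0^+) = -\infty$ automatically. The cost assumption is what delivers $V(\infty) = \nu_0\lambda/(r+\lambda) - c > 0 = g(\infty)$.

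The gap is the $\lambda$ half, which is the real content of the proposition and which you leave as a plan. Your fallback cannot work as stated. At fixed $K$, $\partial_\lambda[\lambda\nu(K)]$ has the sign of $(1-\lambda K)(1-\nu_0) + \nu_0 e^{-\lambda K}$, which is positive whenever $\lambda K \le 1$. Moreover $K^* \to 0$ as $c \downarrow 0$: at $c=0$ the threshold equation has only the root $K=0$, and in fact $K^* = O(\sqrt{c})$. So for small costs the ``belief effect'' has the wrong sign. Both $V(K^*)$ and $g(K^*)$ rise with $\lambda$, and no qualitative envelope argument tells you which rises more. You need the quantitative sign computation from your first attempt, and it is not supplied.

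That computation becomes tractable if you clear denominators before differentiating. Expanding $A'B - AB'$ gives
\[ B^2 V'(K) = \nu_0\lambda(1-\nu_0)e^{-(r+\lambda)K}\,\varphi(K), \]
where
\[ \varphi(K) := 1 + \frac{c(r+\lambda)}{\lambda(1-\nu_0)} - \frac{\lambda}{r+\lambda}e^{-rK} - \Bigl(\frac{r}{r+\lambda} - \frac{cr}{\nu_0\lambda}\Bigr)e^{\lambda K}. \]
Thus $\varphi = -\Psi H$ with $\Psi := BB'/\bigl(\nu_0\lambda(1-\nu_0)e^{-(r+\lambda)K}\bigr) > 0$. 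At the root, $\partial_\lambda H > 0$ if and only if $\partial_\lambda\varphi < 0$; this $\varphi$ is exactly the function the paper's proof differentiates. Regrouping the derivative,
\[ \partial_\lambda\varphi = -\frac{cr}{\lambda^2(1-\nu_0)} - \frac{r}{(r+\lambda)^2}e^{-rK} - \frac{r^2}{\lambda(r+\lambda)^2}e^{\lambda K} + \Bigl(\frac{r}{\lambda(r+\lambda)} - \frac{cr}{\nu_0\lambda^2}\Bigr)(1-\lambda K)e^{\lambda K}. \]
The last coefficient is positive precisely because $c < \nu_0\lambda/(r+\lambda)$. So the inequality $(1-z)e^z < 1$ for $z>0$ bounds $\partial_\lambda\varphi$ above by
\[ -\frac{cr}{\lambda^2}\Bigl(\frac{1}{1-\nu_0} + \frac{1}{\nu_0}\Bigr) + \frac{r}{\lambda(r+\lambda)}\Bigl(1 - \frac{\lambda}{r+\lambda}e^{-rK} - \frac{r}{r+\lambda}e^{\lambda K}\Bigr). \]
At $K = K^*$, the identity $\varphi(K^*) = 0$ shows the last bracket equals $-\frac{c(r+\lambda)}{\lambda(1-\nu_0)} - \frac{cr}{\nu_0\lambda}e^{\lambda K^*} < 0$. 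Hence $\partial_\lambda H > 0$ at $K^*$, and your implicit-function step then gives $K^*$ decreasing in $\lambda$.
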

The proposition follows from \eqref{defn:taustar} and the implicit function theorem, and the proof is left to the appendix.

Intuitively, if the agent experiences a higher brainstorming cost $c$, the agent values already discovered approaches relatively more than brainstorming new approaches. Conversely, if approaches should yield results quickly (higher $\lambda$), the agent becomes more willing to move on to new approaches. 

\paragraph{Remark on Time Pressure} Because I model the cost as a fixed mental cost of brainstorming, the threshold $K^*$ is non-monotone in the discount factor $r$. See Figure \ref{fig:r_comp_static} for an example. When the agent is very patient, increasing the time pressure on the agent can actually induce the agent to brainstorm approaches faster (i.e., be \textit{more} creative). Intuitively, when the agent is extremely patient ($r$ very close to zero), the agent spends a lot of time working on every approach because the benefit of a faster success from a new approach is small; thus, by increasing the time pressure slightly, the agent moves on to new approaches faster. However, if $r$ is large, increasing the degree of impatience makes the agent more willing to procrastinate on brainstorming new approaches because the benefit of delaying the cost of brainstorming becomes larger.

This is in contrast to Proposition 2 in \cite{s2021}, who shows that when the cost of a new approach is given by a delay, the time spent on an approach is \textit{always} increasing in $r$. This non-monotonicity is consistent with findings in the empirical psychology literature, which finds that creativity is highest under intermediate time pressure  \citep{bo2006}.

\begin{figure}
    \centering
    \includegraphics[width=0.55\linewidth]{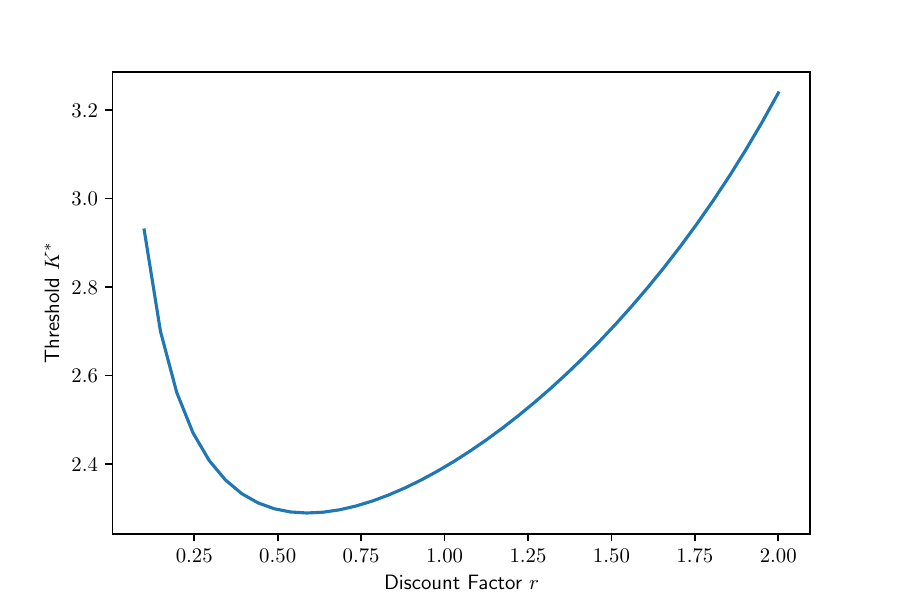}
    \caption{Non-monotone effect of time pressure on the threshold $K^*$. Example plotted for $\lambda = 1$, $c = 0.2$, $\nu_0 = 0.75$.}
    \label{fig:r_comp_static}
\end{figure}

\subsection{Learning About Difficulty}\label{sec:learning}
Now, consider the learning case, where $\lambda_E > \lambda_H > 0$. In this case, assume similarly that the cost is sufficiently low that the agent is willing to brainstorm at least a single approach:
\begin{equation}
    c < \nu_0 (1 - \delta_0)\frac{\lambda_E }{r + \lambda_E } + \nu_0 \delta_0 \frac{\lambda_H}{r + \lambda_H}. \label{eqn:learning_c_assum}
\end{equation} 
The restriction above ensures that the problem is interesting; if \eqref{eqn:learning_c_assum} is violated, then the agent would never brainstorm even at time zero, and the solution is trivial. The main result in this section characterizes the path of research effort under the presence of learning both about problem difficulty and about individual approaches.

Critically, learning about problem difficulty breaks the independence assumption necessary to characterize the optimal strategy using the Gittins index; effort on one approach produces information about the difficulty of the problem, which influences the agent's evaluation of other approaches as well. To illustrate this and understand the nature of the correlation, consider state $s = (K_1, K_2)$ with two approaches. Let the survival probability $S_\theta(K) = 1- \nu_0 + \nu_0e^{-\lambda_\theta K}$ be the probability that $K$ effort is exerted on an arm without success when the state is $\theta$. Consider the posterior belief that approach $1$ is valid ($\omega_1=1$); by Bayes' rule, that belief is 
\[ \frac{\delta_0 \nu_0 e^{-\lambda_H K_1} S_H(K_2) + (1-\delta_0)\nu_0 e^{-\lambda_E K_1} S_E(K_2) }{\delta_0 S_H(K_1)S_H(K_2) + (1-\delta_0)S_E(K_1)S_E(K_2) } = \frac{\delta_0 \nu_0 e^{-\lambda_H K_1} \frac{S_H(K_2)}{S_E(K_2)} + (1-\delta_0)\nu_0 e^{-\lambda_E K_1}  }{\delta_0 S_H(K_1)\frac{S_H(K_2)}{S_E(K_2)} + (1-\delta_0)S_E(K_1) }.\]
In particular, the belief over whether approach 1 is valid depends on the effort exerted on approach 2 as well; further, note that the ratio $S_H(K_2)/S_E(K_2)$ is initially increasing in $K_2$ when $K_2$ is close to zero, since a hard problem is less likely to yield a solution on any given approach than an easy problem. Since approach 1 with $K_1$ historical effort is more likely to be valid if the problem is hard than if the problem is easy, then as $K_2$ initially increases from zero, the posterior belief over whether approach 1 is valid \textit{increases}. Figure \ref{fig:belief_illustration} plots this belief movement. \footnote{The illustration highlights that the bandit problem features \textit{restless} arms, meaning the beliefs about one arm move while pulling another arm. The ``restlessness'' introduces permanent \textit{endogenous} correlation between the arms; one can easily check that the beliefs over the validity of different arms becomes correlated, with covariance depending on the history of effort over the arms.}

\begin{figure}
    \centering
    \begin{tikzpicture}[
    declare function={
    func1(\x)= (0.5*0.5*exp(-1)*(1-0.5+0.5*exp(-\x))+0.5*0.5*exp(-2)*(1-0.5+0.5*exp(-2*\x)))/(0.5*(1-0.5+0.5*exp(-1))*(1-0.5+0.5*exp(-\x))+0.5*(1-0.5+0.5*exp(-2))*(1-0.5+0.5*exp(-2*\x)));
  }, declare function={
    func2(\x)= (\x<=1.05995) * 0.75 + and(\x>1.05995, \x<=2.1199)  * (0.5*(0.25+0.75*exp(-1.05995))*0.75*exp(-(\x-1.05995)) + 0.5*(0.25+0.75*exp(-2*1.05995))*0.75*exp(-2*(\x-1.05995)))/(0.5*(0.25+0.75*exp(-(\x-1.05995)))*(0.25+0.75*exp(-1.05995)) + 0.5*(0.25+0.75*exp(-2*(\x-1.05995)))*(0.25+0.75*exp(-2*1.05995)))   +
     and(\x>2.1199,  \x<=2.249062) * (0.5*0.75*exp(-(1.05995+0.5*(\x-2.1199)) )*(0.25+0.75*exp(-(1.05995+0.5*(\x-2.1199)))) + 0.5*0.75*exp(-2*(1.05995+0.5*(\x-2.1199)))*(0.25+0.75*exp(-2*(1.05995+0.5*(\x-2.1199)))))/( 0.5*(0.25+0.75*exp(-(1.05995+0.5*(\x-2.1199))))*(0.25+0.75*exp(-(1.05995+0.5*(\x-2.1199)))) + 0.5*(0.25+0.75*exp(-2*(1.05995+0.5*(\x-2.1199))))*(0.25+0.75*exp(-2*(1.05995+0.5*(\x-2.1199))))) +
                (\x>2.249062) * (0.5*0.75*exp(-1.124531 )*(0.25+0.75*exp(-1.124531))*(0.25+0.75*exp(-(\x-2.249062))) + 0.5*0.75*exp(-2*1.124531)*(0.25+0.75*exp(-2*1.124531))*(0.25+0.75*exp(-2*(\x-2.249062))))/( 0.5*(0.25+0.75*exp(-1.124531))*(0.25+0.75*exp(-1.124531))*(0.25+0.75*exp(-(\x-2.249062))) + 0.5*(0.25+0.75*exp(-2*1.124531))*(0.25+0.75*exp(-2*1.124531))*(0.25+0.75*exp(-2*(\x-2.249062))));
  }, declare function={
    func3(\x)= (\x<=1.05995) * (0.5*(0.25+0.75*exp(-2*\x)))/(0.5*(0.25+0.75*exp(-\x)) + 0.5*(0.25+0.75*exp(-2*\x))) + and(\x>1.05995, \x<=2.1199)  * (0.5*(0.25+0.75*exp(-2*1.05995))*(0.25+0.75*exp(-2*(\x-1.05995))))/(0.5*(0.25+0.75*exp(-(\x-1.05995)))*(0.25+0.75*exp(-1.05995)) + 0.5*(0.25+0.75*exp(-2*(\x-1.05995)))*(0.25+0.75*exp(-2*1.05995)))   +
     and(\x>2.1199,  \x<=2.249062) * ( 0.5*(0.25+0.75*exp(-2*(1.05995+0.5*(\x-2.1199))))*(0.25+0.75*exp(-2*(1.05995+0.5*(\x-2.1199)))))/( 0.5*(0.25+0.75*exp(-(1.05995+0.5*(\x-2.1199))))*(0.25+0.75*exp(-(1.05995+0.5*(\x-2.1199)))) + 0.5*(0.25+0.75*exp(-2*(1.05995+0.5*(\x-2.1199))))*(0.25+0.75*exp(-2*(1.05995+0.5*(\x-2.1199))))) +
                (\x>2.249062) * (0.5*(0.25+0.75*exp(-2*1.124531))*(0.25+0.75*exp(-2*1.124531))*(0.25+0.75*exp(-2*(\x-2.249062))))/( 0.5*(0.25+0.75*exp(-1.124531))*(0.25+0.75*exp(-1.124531))*(0.25+0.75*exp(-(\x-2.249062))) + 0.5*(0.25+0.75*exp(-2*1.124531))*(0.25+0.75*exp(-2*1.124531))*(0.25+0.75*exp(-2*(\x-2.249062))));
  }
    ]
\begin{axis}[ymin=0.18,ymax=0.24,ytick={0.18,0.2,0.22,0.24},
    ymajorgrids=true, xmin=0, xmax=2.5,
    grid style=dashed, xlabel=\(t\)]
\addplot[color=red,thick, domain=0:2.5,samples=30]{func1(x)};
\addlegendentry{\(\mathbb{P}[\omega_1 = 1| t]\)}
\end{axis}
\end{tikzpicture}
    \caption{The belief over whether approach 1 is valid at state $(K_1, K_2)$ as $K_2$ varies. Example plotted for $\lambda_H = 1$, $\lambda_E = 2$, $\delta_0 = 0.5$, $\nu_0 =0.5$, and $K_1$ fixed to 1. }
    \label{fig:belief_illustration}
\end{figure}
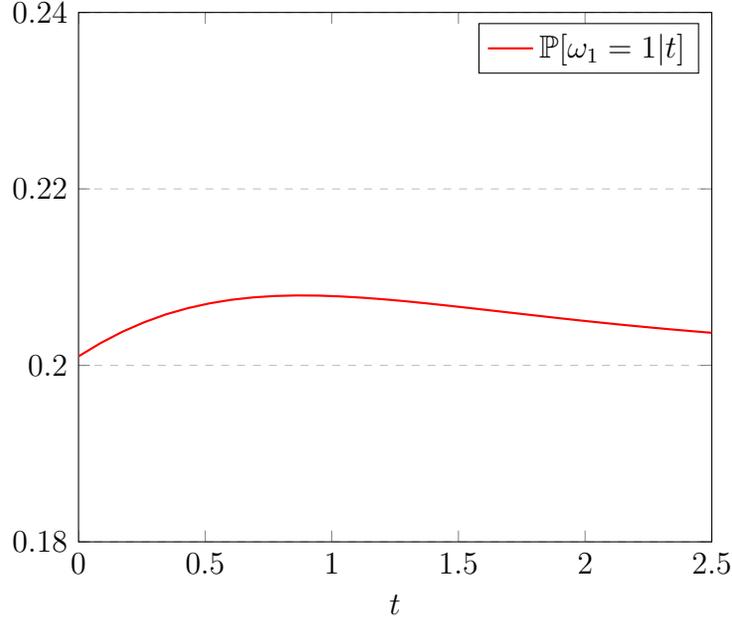

To start building intuition for the coming characterization result, I first highlight a form of the first-order condition derived from \eqref{defn:taustar} that will play a role in the characterization of the optimal policy. Denote the posterior belief over validity conditional on $\theta$ as $\nu_\theta(K) = \nu_0 e^{-\lambda_\theta K}/S_\theta(K)$. Consider a particular first-order effect: 
\begin{equation} \phi_\theta(K) := \lambda_\theta \nu_\theta(K)  - ( r + \lambda_\theta \nu_\theta(K))  \left[ -c + \frac{\nu_0\lambda_\theta}{\lambda_\theta + r}(1 - e^{-(r+\lambda_\theta) K}) \right] - e^{-rK}S_\theta(K)\lambda_\theta \nu_\theta(K). 
\label{eqn:phi}
\end{equation}
The first term is the instantaneous hazard; the marginal value of working on an approach with $K$ effort for an instant longer. The second term is a delay cost; the present-value effect of delaying exploring a new approach. The third term adjusts for continued survival. Intuitively, this marginal condition will need to be zero, averaged across the belief over $\theta$, when the optimal policy moves on to new approaches. Mathematically, $\phi_\theta(K)$ is the Gateaux derivative of the agent's value along a particular policy perturbation corresponding to an interchange.\footnote{See Figure \ref{fig:int_appendix} for a depiction of the interchanges that generate this Gateaux derivative.}
Note that $\phi_\theta(K) = 0$ is exactly a rearrangement of the first-order condition in the benchmark \eqref{defn:taustar}; therefore, the static threshold solution to \eqref{defn:taustar} if the difficulty were known, $K^*_\theta$, is a root of $\phi_\theta$.

I now present the characterization result. Let $B(s) = \arg \min_{i \in [N]} K_i$ be the approaches with least total effort when $N$ approaches have been discovered and the state is $s = (K_1, K_2, ... K_N)$.

\begin{theorem}\label{thm:learning}
    Suppose $\lambda_H > 0$. Define the sequence of thresholds $\{ K^*_n \}_{n\in \mathbb{N}_+}$ implicitly by the equation
    \begin{equation}\label{eqn:def_tn}
        (1-\delta_0) S_E(K^*_n) ^n\phi_E(K^*_n)  + \delta_0 S_H(K^*_n)^n \phi_H(K^*_n) = 0.
    \end{equation}
    There is a unique solution $K^*_n$ to \eqref{eqn:def_tn} for all $n \in \mathbb{N}_+$. The sequence $K^*_n$ is increasing in $n$, and each threshold $K^*_n \in (K^*_E, K^*_H)$.  The optimal research path of the agent generates a new approach if $\min_{i \in [N]} K_i \ge K^*_N$, else exerts effort on all approaches in $B$ with effort $1/|B|$.
\end{theorem}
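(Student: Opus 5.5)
The plan is to exploit the fact that, \emph{conditional} on the difficulty $\theta$, the problem is exactly the benchmark of Section \ref{sec:benchmark} with rate $\lambda_\theta$. Arms are then independent, and the agent's value under any policy is the $\delta_0$-mixture of the two conditional values. Since the mixture is linear, any policy perturbation (an interchange of effort) that weakly improves the conditional value in \emph{both} states improves the true value. Perturbations that help in one state and hurt in the other must be signed by weighting the two conditional Gateaux derivatives by the posterior. The proof proceeds in three steps.

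\textbf{Step 1: ranking existing approaches.} I would first show that, among discovered approaches, it is optimal to work on those in $B(s)$. Take any policy that at some interval works on an arm $i$ with $K_i > K_j$. Consider the interchange that swaps a small block of effort on $i$ with an equal block on $j$, i.e. runs the ``fresher'' arm first. Conditional on $\theta$, the hazard $\lambda_\theta\nu_\theta(K)$ is decreasing in $K$, and the arms are independent. The standard pairwise interchange (as in the proof of the Gittins result for Proposition \ref{prop:benchmark}) therefore weakly raises the conditional discounted success value for each $\theta$ separately, and leaves brainstorming costs unchanged. Hence it raises the mixture. Iterating this argument shows that effort always goes to the least-worked arms, split equally once they tie. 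Consequently, on path all discovered arms share a common effort level at the moment a new arm is brainstormed, and the new arm is worked alone until it catches up.

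\textbf{Step 2: the brainstorming first-order condition.} Next I would fix a state in which $n$ arms all have effort $K$. I would compare brainstorming now against pursuing the existing arms an instant longer, holding the continuation policy (sequential use of fresh arms) fixed. Conditional on $\theta$, this perturbation has Gateaux derivative proportional to $\phi_\theta(K)$; this is the rearranged first-order condition \eqref{defn:taustar} together with the interchange of Figure \ref{fig:int_appendix}. Reaching this state without success has probability $S_\theta(K)^n$ in state $\theta$. The unconditional derivative is therefore proportional to
\[
(1-\delta_0) S_E(K)^n\phi_E(K) + \delta_0 S_H(K)^n\phi_H(K),
\]
which gives \eqref{eqn:def_tn}.

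For existence and uniqueness I would use the benchmark analysis. Each $\phi_\theta$ is positive below its unique root $K^*_\theta$ and negative above it, and Proposition \ref{prop:baseline_comp} gives $K^*_E<K^*_H$. Hence the weighted sum is positive on $[0,K^*_E]$ and negative on $[K^*_H,\infty)$, so any root lies in $(K^*_E,K^*_H)$. On that interval the equation reads
\[
\frac{\phi_H(K)}{-\phi_E(K)} = \frac{1-\delta_0}{\delta_0}\left(\frac{S_E(K)}{S_H(K)}\right)^n .
\]
I would show that the left side is strictly decreasing, going from $+\infty$ to $0$, and that the right side is decreasing at a slower rate. A cleaner alternative is to show directly that the weighted sum is single-crossing, by differentiating and using that $S_E/S_H$ is decreasing. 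Monotonicity in $n$ follows because $S_E/S_H<1$ on this range: the right side shrinks as $n$ grows, so the crossing moves right and $K^*_{n+1}>K^*_n$.

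\textbf{Step 3: global optimality (main obstacle).} The hardest step is upgrading the local condition to optimality of the threshold policy, because beliefs are restless and the problem is not concave in the timing of brainstorms. After Step 1, a policy is summarized by the effort levels $\kappa_1\le\kappa_2\le\dots$ at which the $n$-th brainstorm occurs. I would first try a one-deviation argument. Holding all later thresholds at their claimed values, the payoff from varying the single threshold $\kappa_n$ has derivative equal to the weighted sum above, with later terms cancelling by the envelope argument. Single crossing then makes $K^*_n$ the unique maximizer of each coordinate. I would then argue by induction from the tail (approximating by problems with finitely many brainstorms and passing to the limit) that coordinatewise optimality yields a global optimum. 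If that fails, the fallback is a verification argument: show that the value function generated by the threshold policy satisfies the HJB inequalities at every state in $\Sigma$, not just on path, using Step 1 to reduce off-path states to balanced ones.
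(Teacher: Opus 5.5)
Your Step 1 is the paper's Lemma \ref{lem:lowest_history_of_effort}, and your Step 2 perturbation is the interchange of Figure \ref{fig:int_appendix}. Your direct arguments that any root of \eqref{eqn:def_tn} lies in $(K^*_E,K^*_H)$, and that roots move right with $n$ (modulo uniqueness), are fine.

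The gap is the sentence ``Consequently, on path all discovered arms share a common effort level at the moment a new arm is brainstormed.'' This does not follow from Step 1. Step 1 only says which arms get effort \emph{between} brainstorms; it says nothing about \emph{when} brainstorms occur. A policy satisfying Step 1 can brainstorm arm $n+2$ while arm $n+1$ is still below the level of the older arms, i.e.\ with thresholds $\kappa_{n+1}<\kappa_n$. At that brainstorm the state is unbalanced and the posterior is $\propto\delta_0(\theta)S_\theta(\kappa_n)^nS_\theta(\kappa_{n+1})$, not $\propto\delta_0(\theta)S_\theta(K)^{n+1}$. So \eqref{eqn:def_tn} is not the relevant first-order condition there.

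Hence \eqref{eqn:def_tn} characterizes the optimum only after you prove that optimal thresholds are non-decreasing (Lemma \ref{lem:increasing_thresholds}), which is the hardest part of the theorem. Your monotonicity-of-roots argument cannot stand in for it: it is circular, because the roots are the optimal thresholds only under that monotone structure. Step 3 likewise simply assumes $\kappa_1\le\kappa_2\le\cdots$. The HJB fallback does not rescue this either, because Step 1 does not ``reduce off-path states to balanced ones.'' Showing that brainstorming at an unbalanced state is suboptimal is precisely what is missing.

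The paper supplies this in two steps.
\begin{itemize}
\item Lemma \ref{lem:threshold_decrease_after}. Suppose $K^*_i\ge K^*_{i+1}<K^*_{i+2}$. The brainstorm-timing interchanges at $t_{i+2}$ give $\mathbb{E}[\phi_\theta(K^*_{i+1})\mid t_{i+2}]=0$, hence $\phi_E(K^*_{i+1})<0<\phi_H(K^*_{i+1})$. The posterior on $H$ is lower at $t_{i+1}$, so the same expectation is negative there. Using $\nu_\theta(K^*_i)\le\nu_\theta(K^*_{i+1})$, a third interchange is then strictly profitable: brainstorm arm $i+1$ earlier and defer the residual effort on arm $i$ to just before $t_{i+2}$. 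So once thresholds fall they fall forever, no arm is ever revisited, and the continuation is a no-recall sequence of stopping problems.
\item Lemma \ref{lem:supermodular}. The no-recall objective is strictly supermodular in the belief on $H$ and the next threshold, via the known-difficulty Gittins bounds. The belief on $H$ strictly rises across brainstorms, so monotone comparative statics forces strictly increasing thresholds, a contradiction.
\end{itemize}

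Once monotonicity is established, your Step 3 is unnecessary. The paper treats \eqref{eqn:def_tn} as a necessary condition with a unique root, which pins down the optimum. Your proposed route, that coordinatewise optimality implies global optimality, does not hold without extra structure.

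Separately, your uniqueness argument invokes ``$S_E/S_H$ is decreasing,'' which is false. The ratio equals $1$ at $K=0$ and tends to $1$ as $K\to\infty$, and $\tfrac{d}{dK}\log(S_E/S_H)=\lambda_H\nu_H(K)-\lambda_E\nu_E(K)$ changes sign. Single crossing on $(K^*_E,K^*_H)$ therefore needs an actual argument; the paper's own treatment of this step is also brief.
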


Observe that \eqref{eqn:def_tn} provides an implicit characterization of $K^*_n$ for each $n$; notably, the characterization is \textit{not} recursive. This is because the optimal path equalizes effort among approaches, and since $K^*_n$ increases in $n$, the state where the $n+1$st approach is brainstormed is the $n$-vector $(K^*_n, K^*_n, ... K^*_n)$, which has no more dependence on the prior thresholds. 

Theorem \ref{thm:learning} focuses on the case where $\lambda_H > 0$ because the solution $K^*_n$ exists for all $n$; in the case where $\lambda_H = 0$, the result is nearly identical, except that \eqref{eqn:def_tn} no longer admits a solution for all $n$.
\begin{customthm}{1b}\label{thm:learning_impossible}
    If $\lambda_H =0$, there exists a maximum $\bar{N}$ such that $K^*_n$ solving \eqref{eqn:def_tn} exists for all $n < \bar{N}$. The finite sequence $\{ K^*_n \}_{n \in [\bar{N}-1]}$ is increasing, and the optimal research path of the agent generates a new approach if $\min_{i \in [N]} K_i \ge K^*_N$, else exerts effort on all approaches in $B$ with effort $1/|B|$. The agent generates at most $\bar{N}$ approaches (possibly fewer if success arrives sooner).
\end{customthm}
Intuitively, the distinction between Theorems \ref{thm:learning} and \ref{thm:learning_impossible} is that the agent eventually stops brainstorming new approaches when $\lambda_H = 0$ but does not when $\lambda_H > 0$. I will discuss the intuition for this persistence in the next section, and first provide a sketch of the proofs of these results.

While the formal proof of Theorems \ref{thm:learning} and \ref{thm:learning_impossible} is left to the appendix, I provide the key intuitions that lead to the result. The proofs rely on two key steps; first, I shows that if the strategy is not taking the ``brainstorm'' action, the optimal strategy must be pursuing the ``best'' approaches, or the approaches with the lowest history of effort:
\begin{lemma}\label{lem:lowest_history_of_effort}
Suppose that in state $s = (K_1, K_2, ... K_N)$, an optimal strategy does not generate approaches. Then if $\sigma^*(s)_i > 0$, $i$ must be in $B(s)$.
\end{lemma}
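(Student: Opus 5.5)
The plan is an interchange argument that exploits the symmetry between arms. Suppose at state $s=(K_1,\dots,K_N)$ an optimal strategy $\sigma^*$ puts positive effort on some $j\notin B(s)$, so $K_j>K_i$ for some $i\in B(s)$. Conditional on the difficulty $\theta$, arms are independent. An arm with history $K$ and no success has conditional hazard $\lambda_\theta\nu_\theta(K)$, and this hazard is strictly decreasing in $K$ whenever $\lambda_\theta>0$. The key structural fact I would use is that the joint law of future success times depends on the arms only through their historical efforts, since all arms are ex ante identical. So the continuation problem is symmetric under relabelling arms with the same effort.

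First, I will write the conditional success probability from a block of effort explicitly. Fix $\theta$. Exerting an additional amount $x$ of effort on an arm with history $K$ yields success, before the extra effort ends, with conditional probability
\[1-\frac{S_\theta(K+x)}{S_\theta(K)}.\]
This is strictly larger for smaller $K$, by log-concavity of $S_\theta$. Equivalently, $-\log S_\theta$ is concave because $\nu_\theta(K)$ is decreasing.

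Second comes the interchange itself. Over a short window $[t,t+\epsilon]$, take the effort $\sigma^*$ allocates to arm $j$ and redirect it to arm $i$. After the window, follow $\sigma^*$ with the roles of arms $i$ and $j$ swapped, which I will call the ``swap'' continuation. Rather than swapping labels at a fixed moment, I would make the coupling pathwise in cumulative effort. The cleanest version is to relabel so that arm $i$ plays $j$'s role as long as its effort stays below $K_j$, treating the arm histories as ordered pools.

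The comparison then goes as follows. Conditional on $\theta$, for any fixed deterministic continuation plan, the total effort profile at each future time $t'$ is a permutation of the original one, except on the pair $\{i,j\}$. On that pair, the original profile $(K_i,K_j+x)$ is replaced by $(K_i+x,K_j)$, which is majorized by the original. The survival probability is
\[\prod_m S_\theta(K_m),\]
and by log-concavity this product is Schur-concave in the effort vector. So the survival probability at every future time weakly decreases, and strictly decreases during the window. Brainstorming times and costs are unchanged under the coupling.

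Averaging over $\theta$ with prior $\delta_0$ preserves this inequality, since it holds for each $\theta$. Hence the success-time distribution first-order stochastically improves and the discounted payoff strictly increases. This contradicts the optimality of $\sigma^*$. The same argument also shows that effort is split equally within $B(s)$ once arms tie.

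The main obstacle is that $\sigma^*$ is adaptive. Its future actions depend on the history, and the swapped policy has to be defined on the same histories. I expect to handle this by noting that $\sigma^*$ is Markov in the state vector. The modified policy can apply $\sigma^*$ to the state with coordinates $i,j$ ``virtually'' exchanged while $K_i+x\le K_j$.

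For payoffs, it suffices to compare
\[\int e^{-rt}\,\dd{\Pr(\tau\le t)} - \mathbb{E}\Bigl[\sum_i e^{-r\tau_i}\Bigr]c.\]
This is a valid comparison because on the no-success event the realized state paths coincide up to the permutation, so the brainstorming times are identical.

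Some care is needed when $\lambda_H=0$. In that case only the $\theta=E$ term gives strictness, which still suffices whenever $\delta_0<1$.
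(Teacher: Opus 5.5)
There is a genuine gap in the interchange. Your comparison step says that at every later time the pair $\{i,j\}$ carries $(K_i+x,K_j)$ where the original carries $(K_i,K_j+x)$. That only describes the state at the end of the window. Afterwards the pair also receives whatever the continuation adds, and neither continuation you describe keeps the modified pair majorized by the original.

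Suppose the original later adds $a$ to arm $i$ and $b$ to arm $j$. The unswapped continuation gives $(K_i+x+a,K_j+b)$ against $(K_i+a,K_j+x+b)$, which is majorized only while $K_i+a\le K_j+b$. The swapped continuation gives $(K_i+x+b,K_j+a)$, which is majorized only while $a\le x+b$. For instance, take $K_i=0$, $K_j=1$, $x=0.1$, and let the original then put two units on arm $i$. The original pair is $(2,1.1)$, while your candidates are $(2.1,1)$ and $(0.1,3)$. Both are more spread, so survival is strictly higher at those times and first-order dominance fails.

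The ``ordered pools'' and ``virtual exchange'' descriptions do not repair this, because the multisets $\{K_i+x,K_j\}$ and $\{K_i,K_j+x\}$ never coincide. Applying $\sigma^*$ to a virtually exchanged state therefore does not reproduce the original's actions, and not necessarily its brainstorming times either.

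The missing ingredient is a step that re-merges the two paths, which is how the paper argues. First, the adaptivity you worry about is not a real obstacle. Until success the state path is deterministic, so a Markov strategy is just a deterministic action path, and you can compare action paths with brainstorming times held fixed in calendar time. Then split into two cases.

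\begin{enumerate}
\item If the original never works on arm $i$ again, redirect all of its subsequent effort on $j$ to $i$. This is your swap, and it is valid precisely because $a\equiv 0$.
\item Otherwise, pick a later time $\tilde t$ at which the original works on $i$ while $i$'s level is still below $K_j$. Divert an amount $x$ of that effort back to $j$, with $x$ small enough. From then on the two paths coincide exactly. Before that, both modified levels on the pair lie between the original pair's two levels, so convexity of $e^{-\lambda_\theta K}$ gives weakly lower survival at every time.
\end{enumerate}

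Alternatively, you could allocate the pair's cumulative effort by water-filling, which is majorized by every other split at every time.

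Finally, correct the sign in your key fact. $S_\theta$ is log-\emph{convex}, since $(\log S_\theta)'=-\lambda_\theta\nu_\theta$ is increasing, so $\prod_m S_\theta(K_m)$ is Schur-convex. That is what makes more equal profiles lower survival; Schur-concavity would give the opposite inequality.
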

Intuitively, if a strategy is exerting effort on some approach, it must be exerting effort on the approaches which are most likely to be good, regardless of what the realization of the aggregate state $\theta$ is. Since historical effort is a negative signal about approach quality, the approaches with least historical effort are most likely to be good irrespective of the agent's belief over $\theta$.

Given a set of times when approaches are brainstormed, Lemma \ref{lem:lowest_history_of_effort} determines a unique on-path strategy, and so it is sufficient to optimize over the sequence of  brainstorming times. Recall from the definition in Theorem \ref{thm:learning} that the effort on approach $i$ when approach $i+1$ is brainstormed is given by $K^*_i$. The following lemma establishes that the optimal strategy must work longer on approaches the more approaches have been brainstormed:
\begin{lemma}\label{lem:increasing_thresholds}
    The effort threshold sequence $\{ K^*_n \}_{n \in \mathbb{N}}$ is weakly increasing.  
\end{lemma}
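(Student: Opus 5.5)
The plan is a proof by contradiction using a local interchange argument, backed by the first-order condition that the Gateaux derivatives $\phi_\theta$ in \eqref{eqn:phi} encode. Fix an optimal strategy. By Lemma \ref{lem:lowest_history_of_effort}, its on-path behavior is determined entirely by the brainstorming times. Let $K^*_n$ denote the effort on approach $n$ when approach $n+1$ is brainstormed. Suppose, for contradiction, that $K^*_{n+1} < K^*_n$ for some $n$.

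\emph{First step: locate the two brainstorming states.} Lemma \ref{lem:lowest_history_of_effort} says effort always goes to the least-worked approaches. So, as long as the thresholds up to $n$ are increasing (I will argue inductively on the first violation), the $(n+1)$st approach is created at the equalized state $s_n=(K^*_n,\dots,K^*_n)$. Under the supposed violation, the $(n+2)$nd approach is created at $s'=(K^*_n,\dots,K^*_n,K^*_{n+1})$ with $K^*_{n+1}<K^*_n$, so only approach $n+1$ was worked in between.

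\emph{Second step: perturbation at the first brainstorm.} At $s_n$, consider delaying the brainstorm by $\varepsilon$ and spreading the effort on the existing $n$ approaches, keeping the continuation fixed up to relabeling. Optimality requires this perturbation to be weakly unprofitable. Writing the conditional-on-$\theta$ survival weights, the Gateaux derivative is proportional to
\[(1-\delta_0)S_E(K^*_n)^n\phi_E(K^*_n)+\delta_0 S_H(K^*_n)^n\phi_H(K^*_n).\]
So this quantity must be $\le 0$, and by the reverse perturbation (brainstorming slightly earlier) it must equal $0$.

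\emph{Third step: perturbation at the second brainstorm.} At $s'$, consider delaying the $(n+2)$nd brainstorm by $\varepsilon$ and putting that effort on approach $n+1$. The corresponding derivative has the same form, with weights $S_\theta(K^*_n)^nS_\theta(K^*_{n+1})$ and with the $\phi_\theta$ terms evaluated at $K^*_{n+1}$. Optimality forces it to be $\le 0$.

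\emph{Fourth step: derive the contradiction.} Two monotonicity facts carry the argument.
\begin{enumerate}
\item The ratio $S_H/S_E$ is increasing in $K$ on the relevant range. Hence extra survival factors tilt the weights toward $H$.
\item $\phi_E<0<\phi_H$ on $(K^*_E,K^*_H)$. Moreover, $\phi_\theta$ is decreasing in $K$ near its root, because $\nu_\theta(K)$ is decreasing and the delay term is increasing.
\end{enumerate}
Evaluating at the smaller effort $K^*_{n+1}$ raises each $\phi_\theta$ relative to $K^*_n$. The additional survival factor tilts the weight toward the state $H$, where $\phi_H>0$. Together these should make the second derivative strictly positive, contradicting the third step. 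The intuition is that a fresh approach, evaluated after more accumulated failures (hence more pessimism about difficulty), is worth pursuing longer, not shorter.

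The same comparison also shows that the roots of \eqref{eqn:def_tn} increase in $n$. The weights $S_\theta^n$ shift monotonically toward $H$, so at the old root $K^*_n$ the $(n+1)$ expression is positive, and single crossing then pushes the new root to the right.

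I expect the main obstacle to be the third and fourth steps. The perturbation at the unequal state $s'$ is not the clean ``delay and split equally'' interchange: the relevant Gateaux derivative mixes effort levels $K^*_n$ and $K^*_{n+1}$. I must also verify that the continuation after the perturbation can be held fixed, which requires the relabeling to remain consistent with Lemma \ref{lem:lowest_history_of_effort}.

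If the direct sign comparison is messy, I would first try a global interchange instead. Swap the effort block spent on approach $n+1$ with the equal-split block on approaches $1,\dots,n$ preceding the $(n+1)$st brainstorm. Then compare the two paths' discounted success probabilities state by state in $\theta$. Conditional on $\theta$ the arms are independent, so the known-difficulty (Gittins) ordering applies within each state, and one only has to check that the mixture over $\theta$ preserves the sign.
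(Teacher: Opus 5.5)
\textbf{There is a genuine gap.} Your Step 4 comparison is correct \emph{given} the two first-order conditions, but those conditions are not available under your contradiction hypothesis.

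The Gateaux derivative $\phi_\theta(K)$ in \eqref{eqn:phi} comes from a specific interchange. The $\varepsilon$ of effort moved onto the old approaches before a brainstorm is repaid after the new approach has caught up to $K$ and effort is split again. The last term $e^{-rK}S_\theta(K)\lambda_\theta\nu_\theta(K)$ is exactly that repayment. This interchange is valid only if the \emph{next} threshold is at least the current one.

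In your Step 2, the brainstorm of approach $n+1$ is followed by $K^*_{n+1}<K^*_n$. So approach $n+1$ never reaches $K^*_n$ before approach $n+2$ appears. Your delay-and-split perturbation then has derivative $\lambda_\theta\nu_\theta(K^*_n)-(r+\lambda_\theta\nu_\theta(K^*_n))W_\theta$, plus whatever effect the extra old-approach effort has on a possible later revisit. Here $W_\theta$ is the entire continuation value, not the one-block quantity inside $\phi_\theta(K^*_n)$.

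Step 3 fails the same way, since it needs $K^*_{n+2}\ge K^*_{n+1}$, which you do not have. If the thresholds keep falling, no old approach is ever revisited. Then no $\phi$-type condition holds at any brainstorm, and a local comparison of $\phi$'s has nothing to bite on.

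The other two parts of your plan do not close the gap either:
\begin{itemize}
\item The closing remark about the roots of \eqref{eqn:def_tn} is circular. That equation is the first-order condition derived under increasing thresholds.
\item The fallback global swap does not escape the problem. Conditional on $\theta$, the timing comparisons generally have opposite signs; that is what $\phi_E<0<\phi_H$ on $(K^*_E,K^*_H)$ expresses. So ``checking that the mixture preserves the sign'' is the whole difficulty, not a detail.
\end{itemize}

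\textbf{How the paper avoids this.} It proceeds in two steps, and each step supplies something your proposal lacks.

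\emph{Step one} (Lemma \ref{lem:threshold_decrease_after}) rules out a valley $K^*_i\ge K^*_{i+1}<K^*_{i+2}$.
\begin{itemize}
\item At the \emph{later} brainstorm $t_{i+2}$, the $\phi$-condition is legitimate because $K^*_{i+2}>K^*_{i+1}$. It gives $\mathbb{E}[\phi_\theta(K^*_{i+1})\mid t_{i+2}]=0$.
\item At the \emph{earlier} brainstorm, the paper uses a different interchange that needs no assumption on the continuation. It brainstorms $i+1$ an instant sooner and pushes the sliver of effort on approach $i$ to just before $t_{i+2}$.
\item That derivative has $\nu_\theta(K^*_i)$ in place of $\nu_\theta(K^*_{i+1})$. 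The belief at $t_{i+1}$ is tilted toward $E$, and $\nu_\theta(K^*_i)\le\nu_\theta(K^*_{i+1})$. Together these make the derivative negative, which contradicts optimality.
\end{itemize}
Your argument effectively has the roles of the two brainstorms reversed: you place the $\phi$-condition at the earlier brainstorm, where it is not valid.

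\emph{Step two} handles the case where a decrease occurs. By step one, thresholds then decrease forever, so there is no recall. Every continuation therefore solves the same stationary stopping problem $V(\delta)$, differing only in the belief. Its objective is strictly supermodular in $(\delta,K_1)$ (Lemma \ref{lem:supermodular}). Since $\delta_j(H)$ strictly increases, monotone comparative statics forces strictly increasing thresholds, a contradiction. This stationarity-plus-supermodularity argument is the global ingredient your proposal is missing for the all-decreasing tail.
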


Given Lemmas \ref{lem:lowest_history_of_effort} and \ref{lem:increasing_thresholds}, the remaining part of the problem is the optimization over the sequence $\{ K_n \}$. The thresholds are pinned down by the optimality of the \textit{calendar} time that the $(n+1)$st approach is brainstormed, when the rest of the continuation policy adjusts correspondingly. So the condition for $K^*_n$ turns into a one-dimensional optimization with respect to the Gateaux derivative given by \eqref{eqn:phi}, and the first-order condition weights the derivative across states according to the belief. Figure \ref{fig:int_appendix} intuitively describes the perturbation associated with this first-order condition. 
I will defer the details of the proof of Theorem \ref{thm:learning} to the appendix, and discuss the implications and insights of the theorem.

    \begin{figure}
     \centering
     \begin{subfigure}[b]{0.3\textwidth}
         \centering
         \begin{tikzpicture}
             \draw[->] (0,0) -- (5,0);
             \filldraw (1,0) circle[radius=2pt] node[anchor=north]{$t_{n+1}$};
             \filldraw (4,0) circle[radius=2pt] node[anchor=north]{$t_{n+1} + K^*_{n}$};
             \fill [orange,opacity=0.5] (0,0) rectangle (1,0.33333);
             \fill [green,opacity=0.5] (0,0.33333) rectangle (1, 0.66667);
             \fill [red,opacity=0.5] (0,0.66667) rectangle (1, 1);
             \node at (0.5, 1.4) {\vdots};
             \fill [olive,opacity=0.5] (0,1.66667) rectangle (1, 2);
             \fill [blue,opacity=0.5] (1,0) rectangle (4,2);
             \fill [blue,opacity=0.5] (4,0) rectangle (5,0.285714);
             \fill [orange,opacity=0.5] (4,0.285714) rectangle (5,0.571428);
             \fill [green,opacity=0.5] (4,0.571428) rectangle (5, 0.857142);
             \fill [red,opacity=0.5] (4,0.857142) rectangle (5, 1.142857);
             \node at (4.5, 1.55) {\vdots};
             \fill [olive,opacity=0.5] (4,1.714285) rectangle (5, 2);
             \draw (1,0) -- (1,2);
             \draw (4,0) -- (4,2);
         \end{tikzpicture}
         \caption{Original action path $k$}
         \label{fig:int_appendix_0}
     \end{subfigure}
     \hfill
     \begin{subfigure}[b]{0.3\textwidth}
         \centering         
         \begin{tikzpicture}
             \draw[->] (0,0) -- (5,0);
             \filldraw (1,0) circle[radius=2pt] node[anchor=north]{$t_{n+1}$};
             \filldraw (4,0) circle[radius=2pt] node[anchor=north]{$t_{n+1} + K^*_{n}$};
             \fill [orange,opacity=0.5] (0,0) rectangle (0.75,0.33333);
             \fill [green,opacity=0.5] (0,0.33333) rectangle (0.75, 0.66667);
             \fill [red,opacity=0.5] (0,0.66667) rectangle (0.75, 1);
             \node at (0.375, 1.4) {\vdots};
             \fill [olive,opacity=0.5] (0,1.66667) rectangle (0.75, 2);
             \fill [blue,opacity=0.5] (0.75,0) rectangle (3.708333,2);
             \fill [blue,opacity=0.5] (3.708333,0) rectangle (5,0.285714);
             \fill [orange,opacity=0.5] (3.708333,0.285714) rectangle (5,0.571428);
             \fill [green,opacity=0.5] (3.708333,0.571428) rectangle (5, 0.857142);
             \fill [red,opacity=0.5] (3.708333,0.857142) rectangle (5, 1.142857);
             \node at (4.25, 1.55) {\vdots};
             \fill [olive,opacity=0.5] (3.708333,1.714285) rectangle (5, 2);
             \draw[thick] (0.75,0) rectangle (1,2);
             \draw[thick] (3.708333,0.285714) rectangle (4,2);
             \draw[<->, very thick] (1.1,1) -- (3.4,1.5);
         \end{tikzpicture}
         \caption{Interchange 1}
         \label{fig:int_appendix_1}
     \end{subfigure}
     \hfill
     \begin{subfigure}[b]{0.3\textwidth}
         \centering         
         \begin{tikzpicture}
             \draw[->] (0,0) -- (5,0);
             \filldraw (1,0) circle[radius=2pt] node[anchor=north]{$t_{n+1}$};
             \filldraw (4,0) circle[radius=2pt] node[anchor=north]{$t_{n+1} + K^*_{n}$};
             \fill [orange,opacity=0.5] (0,0) rectangle (1.25,0.33333);
             \fill [green,opacity=0.5] (0,0.33333) rectangle (1.25, 0.66667);
             \fill [red,opacity=0.5] (0,0.66667) rectangle (1.25, 1);
             \node at (0.5, 1.4) {\vdots};
             \fill [olive,opacity=0.5] (0,1.66667) rectangle (1.25, 2);
             \fill [blue,opacity=0.5] (1.25,0) rectangle (4.291667,2);
             \fill [blue,opacity=0.5] (4.291667,0) rectangle (5,0.285714);
             \fill [orange,opacity=0.5] (4.291667,0.285714) rectangle (5,0.571428);
             \fill [green,opacity=0.5] (4.291667,0.571428) rectangle (5, 0.857142);
             \fill [red,opacity=0.5] (4.291667,0.857142) rectangle (5, 1.142857);
             \node at (4.61667, 1.55) {\vdots};
             \fill [olive,opacity=0.5] (4.291667,1.714285) rectangle (5, 2);
             \draw[thick] (1,0) rectangle (1.25,2);
             \draw[thick] (4,0.285714) rectangle (4.291667,2);
             \draw[<->, very thick] (1.35,1) -- (3.9,1.5);
         \end{tikzpicture}
         \caption{Interchange 2}
         \label{fig:int_appendix_2}
     \end{subfigure}
        \caption{The original action path, and the interchanges which generate the first-order condition that determines $K^*_n$. Each color denotes effort on a distinct approach. Blue denotes effort on the $(n+1)$st approach, and $t_{n+1}$ denotes the time that the $(n+1)$st approach is brainstormed. }
        \label{fig:int_appendix}
\end{figure}
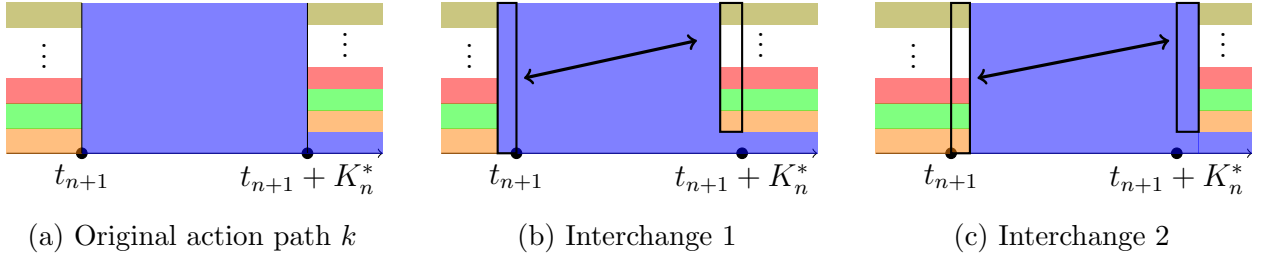

\subsubsection{Discussion and Intuition}

\paragraph{Task Juggling and Recall} The optimal research pattern proceeds as follows: the agent works on the first approach up until effort $K^*_1$, after which the agent brainstorms the second approach. The agent works on the second approach only until $K^*_1$ has been exerted on approach 2, after which the agent \textit{jointly} works on both the first and second approach until both approaches reach effort $K^*_2$, before brainstorming the next approach. In contrast to the benchmark result in Proposition \ref{prop:benchmark}, task juggling does occur on the optimal path, and the agent optimally revisits previous approaches.

\begin{corollary}\label{corr:juggling_recall}
    The optimal policy alternates between exclusively exerting effort on a new approach, and splitting effort between all approaches (including previous approaches).
\end{corollary}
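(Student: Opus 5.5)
The plan is to read Corollary \ref{corr:juggling_recall} off the policy in Theorem \ref{thm:learning} by tracing the on-path state by induction on the number of brainstormed approaches. The inductive claim is this: immediately after the $(n+1)$st approach is brainstormed, the state is the $(n+1)$-vector $(K^*_n,\dots,K^*_n,0)$. For $n=1$, the only action available at $\emptyset$ is to brainstorm, which gives $(0)$. The agent then works on this approach alone until $K_1 = K^*_1$, at which point $\min_i K_i \ge K^*_1$ and Theorem \ref{thm:learning} prescribes brainstorming. This yields $(K^*_1,0)$.

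For the inductive step, start from $(K^*_n,\dots,K^*_n,0)$. The set $B(s)$ is the singleton consisting of the new approach. Theorem \ref{thm:learning} therefore puts all effort on it while $\min_i K_i < K^*_{n+1}$. Because the thresholds are increasing ($K^*_n \le K^*_{n+1}$), the minimum stays below $K^*_{n+1}$ throughout this phase, so no brainstorm is triggered. Once the new approach reaches $K^*_n$, all $n+1$ approaches tie for the minimum. $B(s)$ becomes the full index set, and effort $1/(n+1)$ goes to each approach. Equal splitting keeps them tied, so $B(s)$ remains the full set and the minimum rises continuously. When it reaches $K^*_{n+1}$, the state is $(K^*_{n+1},\dots,K^*_{n+1})$ and the agent brainstorms, which completes the induction.

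The trajectory therefore alternates between two phases. In the first, effort is exclusive to the newest approach, for a duration of $K^*_n$. In the second, effort is split equally across all approaches, for a duration of $(n+1)(K^*_{n+1}-K^*_n)$. I take ``alternates'' to mean that these phases interleave. For the corollary to have content, I also want the split phase to be nondegenerate, meaning the paper's strict monotonicity $K^*_n < K^*_{n+1}$. That property is what makes previously abandoned approaches receive positive effort again, in contrast with Corollary \ref{corr:seq}. The tie-breaking in $B$ under ties is handled by the prescription of Theorem \ref{thm:learning} itself.

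The main obstacle is the continuous-time bookkeeping at the moment of the tie. One has to check that the prescribed Markov policy generates a well-defined c\`adl\`ag path. I would verify this by writing the path explicitly, as piecewise constant actions on the two phases, and checking that it is consistent with $\sigma^*$ at every state along it. For $\lambda_H = 0$, the same argument runs up to $\bar N$, using Theorem \ref{thm:learning_impossible} in place of Theorem \ref{thm:learning}. After that point the agent splits effort forever.
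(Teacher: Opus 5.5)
Your proposal is correct and follows essentially the same route as the paper, which justifies the corollary by tracing the on-path state implied by Theorem~\ref{thm:learning} (equal splitting over $B(s)$ plus increasing thresholds), exactly as your induction on the post-brainstorm state $(K^*_n,\dots,K^*_n,0)$ does. Your requirement that the split phase be nondegenerate relies on the strict inequality $K^*_n<K^*_{n+1}$, and the paper does prove strictness in the proof of Lemma~\ref{lem:increasing_thresholds}, so that step is covered.
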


To understand the difference in optimal research behavior, consider the beliefs of the agent over $\lambda_\theta$. Suppose the agent first stops working on approach $1$ at time $K^*_1$; let the agent have belief $q_1$ that the problem is easy, $\lambda = \lambda_E$, and belief $p_1$ that approach 1 is good. Later, after the agent has exerted effort $K^*_1$ on the \textit{second} approach without success, the agent's belief over $\theta$ has fallen to some $q < q_1$. In consequence, the agent's perceived continuation value of a new approach is now \textit{lower} than it was at $K^*_1$. Further, over the interval of time where the agent worked on the second approach, the agent believes it is more likely that the problem is difficult, and hence their belief that approach 1 is valid increases to some $p > p_1$; that is, the agent attributes the lack of success on approach 1 less to the approach being flawed and more to the problem being difficult.\footnote{Note that even though the agent's perceived probability that approach 1 is good ($\mathbb{E}[\omega_1]$) increases as the agent exerts effort on approach 2 without success, the expected arrival rate of a breakthrough on approach 1 ($\mathbb{E}[\omega_1 \lambda_\theta]$) always decreases in effort on approach 2.} As a result, the agent becomes willing to revisit approach $1$ and continue working on both 1 and 2 before moving on. Figure \ref{fig:learning_beliefs} plots the relevant beliefs on this time interval.

\paragraph{Perseverance} When hard endeavors are never impossible $(\lambda_H > 0)$, one consequence is that the sequence given by \eqref{eqn:def_tn} is infinite; in particular, the agent never stops brainstorming approaches. This feature arises even if $\lambda_H$ is so low that the agent would not be willing to start brainstorming approaches if it were known that $\theta = H$. 
To see this, suppose that the agent has brainstormed $N$ approaches. Then as the agent continues to work on the $N$ approaches without breakthrough, the agent becomes more convinced that all $N$ approaches must be flawed. However, if the approaches were all flawed, then the agent's belief in the problem difficulty drifts back towards the prior (because effort on a flawed approach is completely uninformative). More precisely, the belief that an agent has over the aggregate state is non-monotone in the effort on their approaches; if $\delta(K,N)$ is the belief that $\theta = H$ when there has been $K$ effort on $N$ approaches, then 
\[ \delta(K,N) = \frac{\delta_0 S_H(K)^N}{\delta_0 S_H(K)^N + (1-\delta_0) S_E(K)^N}. \]
Note that the ratio of $S_E(K) / S_H(K) \to 1$ as $K \to \infty$; so $\delta(K,N) \to \delta_0$ as $K \to \infty$. 
Since the agent was willing to start working on an approach at the prior belief, it follows that the agent will eventually become sufficiently optimistic about $\theta$ again and brainstorm a new approach.

\begin{figure}
    \centering
    \begin{tikzpicture}[
    declare function={
    func1(\x)= (\x<=1.05995) * (0.5*0.75*exp(-\x) + 0.5*0.75*exp(-2*\x))/(0.25+0.5*0.75*exp(-\x) + 0.5*0.75*exp(-2*\x))   +
     and(\x>1.05995, \x<=2.1199) * (0.5*0.75*exp(-1.05995)*(0.25+0.75*exp(-(\x-1.05995))) + 0.5*0.75*exp(-2*1.05995)*(0.25+0.75*exp(-2*(\x-1.05995))))/( 0.5*(0.25+0.75*exp(-1.05995))*(0.25+0.75*exp(-(\x-1.05995))) + 0.5*(0.25+0.75*exp(-2*(1.05995)))*(0.25+0.75*exp(-2*(\x-1.05995))))    +
     and(\x>2.1199,  \x<=2.249062) * (0.5*0.75*exp(-(1.05995+0.5*(\x-2.1199)) )*(0.25+0.75*exp(-(1.05995+0.5*(\x-2.1199)))) + 0.5*0.75*exp(-2*(1.05995+0.5*(\x-2.1199)))*(0.25+0.75*exp(-2*(1.05995+0.5*(\x-2.1199)))))/( 0.5*(0.25+0.75*exp(-(1.05995+0.5*(\x-2.1199))))*(0.25+0.75*exp(-(1.05995+0.5*(\x-2.1199)))) + 0.5*(0.25+0.75*exp(-2*(1.05995+0.5*(\x-2.1199))))*(0.25+0.75*exp(-2*(1.05995+0.5*(\x-2.1199))))) +
                (\x>2.249062) * (0.5*0.75*exp(-1.124531 )*(0.25+0.75*exp(-1.124531))*(0.25+0.75*exp(-(\x-2.249062))) + 0.5*0.75*exp(-2*1.124531)*(0.25+0.75*exp(-2*1.124531))*(0.25+0.75*exp(-2*(\x-2.249062))))/( 0.5*(0.25+0.75*exp(-1.124531))*(0.25+0.75*exp(-1.124531))*(0.25+0.75*exp(-(\x-2.249062))) + 0.5*(0.25+0.75*exp(-2*1.124531))*(0.25+0.75*exp(-2*1.124531))*(0.25+0.75*exp(-2*(\x-2.249062))));
  }, declare function={
    func2(\x)= (\x<=1.05995) * 0.75 + and(\x>1.05995, \x<=2.1199)  * (0.5*(0.25+0.75*exp(-1.05995))*0.75*exp(-(\x-1.05995)) + 0.5*(0.25+0.75*exp(-2*1.05995))*0.75*exp(-2*(\x-1.05995)))/(0.5*(0.25+0.75*exp(-(\x-1.05995)))*(0.25+0.75*exp(-1.05995)) + 0.5*(0.25+0.75*exp(-2*(\x-1.05995)))*(0.25+0.75*exp(-2*1.05995)))   +
     and(\x>2.1199,  \x<=2.249062) * (0.5*0.75*exp(-(1.05995+0.5*(\x-2.1199)) )*(0.25+0.75*exp(-(1.05995+0.5*(\x-2.1199)))) + 0.5*0.75*exp(-2*(1.05995+0.5*(\x-2.1199)))*(0.25+0.75*exp(-2*(1.05995+0.5*(\x-2.1199)))))/( 0.5*(0.25+0.75*exp(-(1.05995+0.5*(\x-2.1199))))*(0.25+0.75*exp(-(1.05995+0.5*(\x-2.1199)))) + 0.5*(0.25+0.75*exp(-2*(1.05995+0.5*(\x-2.1199))))*(0.25+0.75*exp(-2*(1.05995+0.5*(\x-2.1199))))) +
                (\x>2.249062) * (0.5*0.75*exp(-1.124531 )*(0.25+0.75*exp(-1.124531))*(0.25+0.75*exp(-(\x-2.249062))) + 0.5*0.75*exp(-2*1.124531)*(0.25+0.75*exp(-2*1.124531))*(0.25+0.75*exp(-2*(\x-2.249062))))/( 0.5*(0.25+0.75*exp(-1.124531))*(0.25+0.75*exp(-1.124531))*(0.25+0.75*exp(-(\x-2.249062))) + 0.5*(0.25+0.75*exp(-2*1.124531))*(0.25+0.75*exp(-2*1.124531))*(0.25+0.75*exp(-2*(\x-2.249062))));
  }, declare function={
    func3(\x)= (\x<=1.05995) * (0.5*(0.25+0.75*exp(-2*\x)))/(0.5*(0.25+0.75*exp(-\x)) + 0.5*(0.25+0.75*exp(-2*\x))) + and(\x>1.05995, \x<=2.1199)  * (0.5*(0.25+0.75*exp(-2*1.05995))*(0.25+0.75*exp(-2*(\x-1.05995))))/(0.5*(0.25+0.75*exp(-(\x-1.05995)))*(0.25+0.75*exp(-1.05995)) + 0.5*(0.25+0.75*exp(-2*(\x-1.05995)))*(0.25+0.75*exp(-2*1.05995)))   +
     and(\x>2.1199,  \x<=2.249062) * ( 0.5*(0.25+0.75*exp(-2*(1.05995+0.5*(\x-2.1199))))*(0.25+0.75*exp(-2*(1.05995+0.5*(\x-2.1199)))))/( 0.5*(0.25+0.75*exp(-(1.05995+0.5*(\x-2.1199))))*(0.25+0.75*exp(-(1.05995+0.5*(\x-2.1199)))) + 0.5*(0.25+0.75*exp(-2*(1.05995+0.5*(\x-2.1199))))*(0.25+0.75*exp(-2*(1.05995+0.5*(\x-2.1199))))) +
                (\x>2.249062) * (0.5*(0.25+0.75*exp(-2*1.124531))*(0.25+0.75*exp(-2*1.124531))*(0.25+0.75*exp(-2*(\x-2.249062))))/( 0.5*(0.25+0.75*exp(-1.124531))*(0.25+0.75*exp(-1.124531))*(0.25+0.75*exp(-(\x-2.249062))) + 0.5*(0.25+0.75*exp(-2*1.124531))*(0.25+0.75*exp(-2*1.124531))*(0.25+0.75*exp(-2*(\x-2.249062))));
  }
    ]
\begin{axis}[ymin=0,ymax=1,ytick={0.2,0.4,0.6,0.8},
    ymajorgrids=true, xmin=0, xmax=2.5,
    grid style=dashed, xlabel=\(t\)]
\addplot[color=red, thick, domain=0:2.5,samples=30]{func1(x)};
\addlegendentry{\(\mathbb{P}[\omega_1 = 1| t]\)}
\addplot[color=blue, thick, domain=0:2.5,samples=30]{func2(x)};
\addlegendentry{\(\mathbb{P}[\omega_2 =1 | t]\)}
\addplot[color=olive, thick, domain=0:2.5,samples=30]{func3(x)};
\addlegendentry{\(\mathbb{P}[\theta= E | t]\)}
\pgfplotsinvokeforeach{1.05995,2.1199,2.249062}{
  \draw[dashed] ({rel axis cs: 0,0} -| {axis cs: #1, 0}) -- ({rel axis cs: 0,1} -| {axis cs: #1, 0});}
\end{axis}
\end{tikzpicture}
    \caption{The beliefs over $\omega_1$, $\theta$, and $\omega_2$ on the optimal path, for an example where $\nu_0 = 0.75$, $\delta_0 = 0.5$, $\lambda_E = 2$, $\lambda_H = 1$, $r=1$, and $c = 0.1$. Up until $t = 1.06$, the optimal policy only works on approach 1; for $t \in [1.06,2.12)$, the policy works on approach 2; for $t \in [2.12, 2.25)$, the policy splits effort on 1 and 2, and after $t \ge 2.25$ the policy works on approach 3. Note that in the intermediate region of time from $[1.06, 2.12)$, the policy works only on approach 2, but the belief over whether approach 1 is valid drifts upwards.}
    \label{fig:learning_beliefs}
\end{figure}

\begin{figure}
     \centering
     \begin{subfigure}[b]{0.48\textwidth}
         \centering
         \includegraphics[width=\textwidth]{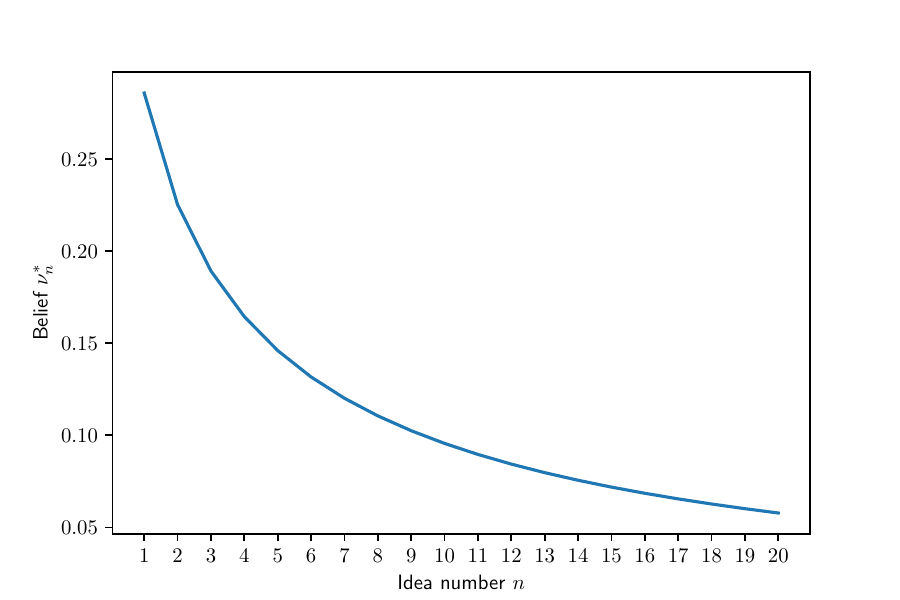}
         \caption{Belief over quality of old approaches when brainstorming new approaches.}
         \label{fig:pstar}
     \end{subfigure}
     \hfill
     \begin{subfigure}[b]{0.48\textwidth}
         \centering
         \includegraphics[width=\textwidth]{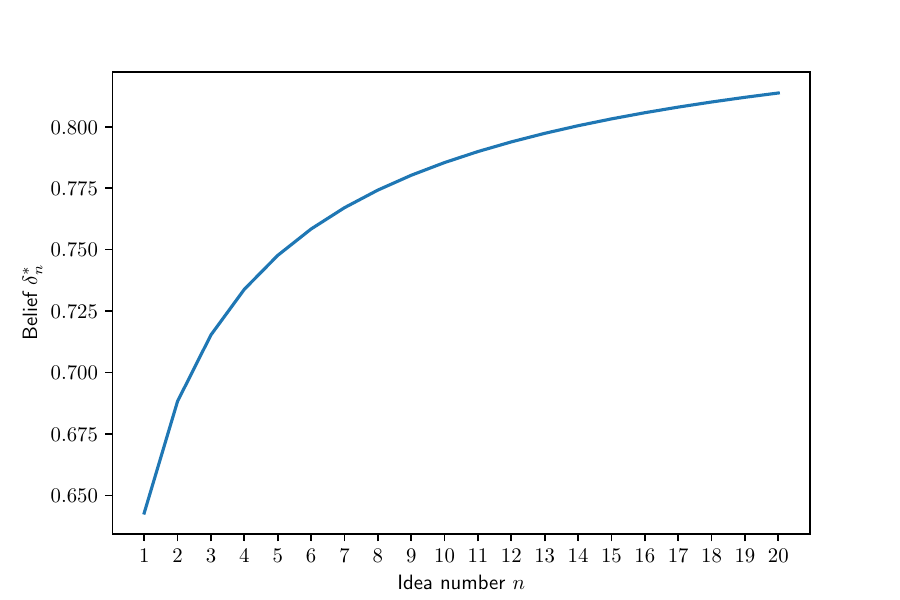}
         \caption{Belief over the aggregate state $\theta$ when brainstorming new approaches.}
         \label{fig:qstar}
     \end{subfigure}
        \caption{Beliefs at the times when new approaches are brainstormed. Example simulated for $p = 0.75$, $r=1$, $\delta_0 = 0.5$, $c = 0.2$, $\lambda_E = 2$ and $\lambda_H = 0.25$. Note that the belief thresholds decline as more approaches have been worked on, so beliefs alone are insufficient to characterize the timing decisions on when to brainstorm.}
        \label{fig:beliefs_over_time}
\end{figure}
\paragraph{Beliefs Over Time} Given Lemma \ref{lem:lowest_history_of_effort}, it is tempting to think that an index result might hold in this environment with the appropriate notion on beliefs. However, a standard index result does not apply because the value of a brainstorming a new approach evolves according to the full state of effort exerted on the other approaches, not just the belief over the aggregate state $\theta$. Let $\nu^*_n$ denote the belief that any individual approach is good when $n$ approaches have effort $K^*_n$, and let $\delta^*_n$ denote the belief that $\theta = H$ at the same state. Figure \ref{fig:beliefs_over_time} plots the the evolution of beliefs when new approaches are brainstormed. Note that $\nu^*_n$ declines in $n$, and $\delta^*_n$ increases in $n$, so the value of a new approach evolves with the history of effort on the other approaches.
Thus, unlike much of the literature in strategic experimentation, the optimal behavior cannot be characterized easily in belief space.

\section{Limit Model}\label{sec:continuum}

The discrete-arm model highlights the fundamental tradeoff between exploring new approaches and exploiting existing ones.
Unfortunately, the discrete structure of the baseline model makes it difficult to analyze richer strategic environments. In particular, the distribution of success arrival exhibits kinks at the moments when new approaches are introduced, which complicates the analysis of dynamic contracts and may preclude the existence of pure-strategy equilibria in multi-agent settings.

To overcome these tractability issues in this section, I develop a continuum-arm formulation in which the agent chooses, at each instant, the breadth (number of approaches started) and the depth (average effort allocated to each approach). This limit model preserves the qualitative insights of the baseline model while providing a continuous state space that is amenable to strategic analysis. 
After presenting the continuum model, the rest of the section addresses two applications: how an investor designs dynamic incentives when experimentation is hidden, and how multiple agents collaborating or competing shape each other's search.

Consider an agent searching for a breakthrough. At some time $t$, instead of maintaining a discrete set of approaches, the agent has a \textit{measure} $b_t \in \mathbb{R}_+$ of approaches; I will call this the ``breadth'' of their search for a breakthrough. The agent also maintains a ``depth'' of their search, $d_t \in \mathbb{R}_+$; this corresponds to how much ``effort'' they have exerted per approach they have. The agent experiences a flow cost $c$ of expanding the \textit{breadth} of their search (i.e., new approaches require overhead cost). The agent only has a single unit of effort per time, so must decide between how broad versus how deep to search:
\[ \frac{\dd{}}{\dd{t}}[b_t d_t] = b_t \dot{d}_t + \dot{b}_t d_t = 1 .\footnote{The product rule expansion of the area law admits a singularity when $b, d$ are zero. However, in this case, the area of the rectangle is still well-defined, and the requirement that the area expands are rate $1$ determines the time path along any trajectory with $(b_0, d_0) = (0,0)$.} \]

The agent gets a payoff (normalized to 1) when breakthrough arrives. Breakthrough is stochastic, and its distribution depends on the state of search; that is, given a state trajectory $(b_t, d_t)$, the distribution function of breakthrough is defined by 
\[ \hat{F}(b_t, d_t) := 1 - (1-\delta_0) \exp\left(- \nu_0 b_t (1 - e^{-\lambda_E d_t}) \right) - \delta_0 \exp\left(- \nu_0 b_t (1 - e^{-\lambda_H d_t}) \right).  \]
The form of the breakthrough distribution is chosen as an appropriate limit of the discrete-arm model, which I will state more precisely in the coming Theorem \ref{thm:microfound}. 

To simplify from two paths to one, note that the effort constraint implies that $d_t = t / b_t$, so it suffices to reduce the state to the one-dimensional breadth (``number of ideas'').  That is, let $x$ denote the state variable (breadth), and re-parametrize the distribution function for breakthrough as:\footnote{As a technical formality, \eqref{def:F} is not necessarily well-defined for $F(0,0)$; so set $F(0,0) = F(\cdot,0) = F(0,\cdot) = 0$.}
\begin{equation} F(x,t) = 1 - (1-\delta_0) \exp\left(- \nu_0 x (1 - e^{-\lambda_Et/x}) \right) - \delta_0 \exp\left( - \nu_0 x( 1 - e^{-\lambda_Ht/x}) \right).\label{def:F}
\end{equation}
The parameter $\delta_0$ is relatively straightforward; it is the probability of a hard problem. To understand the double exponential form, one can interpret the $\nu_0$ parameter as a rate at which expanding breadth yields a good idea; so if the agent has searched $x$ wide, the number of good ideas is distributed as a Poisson distribution with parameter $\nu_0 x$. Fixing a number of good ideas $N$, the probability of no success is then the probability that no good approach has succeeded (after having explored each idea for $t/x$), which gives rise to the inner exponential:
$\mathbb{E}[(e^{-\lambda_\theta t/x})^N]$. The outer exponential arises from taking the expectation over $N$, which is distributed as $N \sim $ Pois$(\nu_0x)$. Intuitively, this helps make sense of the $t \to \infty$ asymptotic; when $t \to \infty$ fixing $x$, the CDF becomes 
\[ F(x,\infty) = 1 - \exp(-\nu_0 x), \]
which is precisely the Poisson probability that $N > 0$ for $N \sim $ Pois$(\nu_0x)$, or the probability that by exploring breadth $x$ that a good idea has been discovered.
As shorthand, I will use $F_x, F_t$ to denote the partial derivatives of $F$ in $x, t$ respectively, and similarly use $F_{xx}, F_{xt}, F_{tt}$ to denote the partial second derivatives.

The agent's problem is to choose a search strategy over breadth and depth, in order to maximize their discounted payoff net of costs:
\begin{gather}
    \max_{x} \int_0^\infty  \left[ e^{-rt} - \left(\int_0^t e^{-rs} c \dot{x}  \dd{s}\right) \right]\frac{\dd F(x,t)}{\dd{t}}\dd{t}, \label{eqn:continuum_single} \\
    \textnormal{ subject to: } \dot{x} \in [0, x/t] \textnormal{ a.e.}, \quad x(0) = 0, \quad \dot{x}(0) < \infty. \notag
\end{gather}
Say a strategy $x$ is \textit{admissible} if it is continuous, piecewise differentiable, and satisfies the above constraints where applicable.
As before, assume that costs are sufficiently small: $c < \nu_0$ (otherwise, the agent never starts working). 

Note that Theorem \ref{thm:learning} plays a key role in simplifying the formulation of the continuum-arm version; in particular, the observation that all approaches must have equalized effort when new approaches are brainstormed implies that in the continuum, the optimal ``search'' area is a rectangle, and thus the state can be summarized by a two-dimensional sufficient statistic (i.e., breadth and depth of search). 

\paragraph{Interpretation} 

One way to interpret the ``continuum of arms'' is as an aggregation of human capital. In this interpretation, the cost $c$ corresponds to a flow mental effort cost of learning new things, and searching ``deep'' by increasing $d$ corresponds to thinking about applying learned concepts towards solving the problem.

More concretely, the continuum-arm framework naturally maps onto several empirical research domains where investigators face breadth-depth tradeoffs under uncertainty. In drug discovery, pharmaceutical researchers must decide how many chemical scaffolds to screen (breadth) versus how deeply to optimize each lead compound (depth), while not knowing ex ante whether a particular disease target is ``druggable.'' Similarly, a materials science laboratory balances the number of compositional variations to explore (breadth) against the thoroughness of characterization for each material (depth), without knowing whether their target properties are achievable within the chosen material class. In machine learning research, practitioners allocate compute resources between trying many neural architectures (breadth) versus extensive hyperparameter tuning of fewer candidates (depth). 
In these interpretations, the cost of expanding breadth should be interpreted as a creativity cost that the agent bears.

\subsection{Decision Problem Solution} 
First, I characterize the decision problem solution to \eqref{eqn:continuum_single}. 
\begin{proposition}
    There exists a unique optimal trajectory $x^*$ that solves \eqref{eqn:continuum_single}. The optimal solution  $x^*$ is implicitly characterized by 
    \begin{equation}
        \frac{F_x(x^*,t)}{1 - F(x^*,t)} - c = \frac{c}{r}\left( \frac{F_t(x^*,t)}{1-F(x^*,t)}\right),\label{eqn:single_EL}
    \end{equation}
    where $x^*(t)$ solves the above pointwise in $t$. 
    \label{prop:continuum_single}
\end{proposition}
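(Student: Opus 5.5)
The plan is to rewrite the objective so that the control $\dot x$ enters linearly. Then the problem reduces to a pointwise maximization in $x$ for each $t$, and we check afterwards that the pointwise maximizer is admissible.

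\textbf{Step 1: rewrite the objective.} Swap the order of integration in the cost term by Fubini:
\[
\int_0^\infty \Big(\int_0^t e^{-rs} c\dot x(s)\,\dd{s}\Big) F_t\,\dd{t} = \int_0^\infty e^{-rs} c\,\dot x(s)\,\bigl(1-F(x(s),s)\bigr)\,\dd{s},
\]
using $F(x,\infty)\le 1$. Here I read $\dd F/\dd t$ as the total derivative along the path, which is what the paper's notation indicates; the survival factor $1-F$ is evaluated along the path. The benefit term is $\int_0^\infty e^{-rt}\,\frac{\dd}{\dd t}F(x(t),t)\,\dd t$. Integrating it by parts gives $r\int_0^\infty e^{-rt}F(x(t),t)\,\dd t$, with vanishing boundary terms because $F(0,0)=0$ and $F\le 1$. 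The objective is therefore
\[
J(x)=\int_0^\infty e^{-rt}\Big[rF(x,t) - c\,\dot x\,(1-F(x,t))\Big]\dd t .
\]

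\textbf{Step 2: eliminate $\dot x$.} Define $G(x,t)=\log(1-F(x,t))$, so that $\dot x(1-F) = -(1-F)\,\dot x\,\partial_x G$. This term is not an exact derivative, because $1-F$ also depends on $t$. I would therefore treat the problem as an Euler--Lagrange problem for a Lagrangian that is linear in $\dot x$:
\[
L(x,\dot x,t)=e^{-rt}\bigl[rF - c\dot x(1-F)\bigr].
\]
For such a Lagrangian the Euler--Lagrange equation contains no $\dot x$ and collapses to an algebraic condition, $L_x-\frac{\dd}{\dd t}L_{\dot x}=0$. Computing each side:
\[
L_x=e^{-rt}\bigl[rF_x + c\dot x F_x\bigr], \qquad \frac{\dd}{\dd t}L_{\dot x}=\frac{\dd}{\dd t}\bigl[-ce^{-rt}(1-F)\bigr] = e^{-rt}\bigl[rc(1-F)+c(F_x\dot x+F_t)\bigr].
\]
The $\dot x$ terms cancel and the condition becomes $rF_x = rc(1-F)+cF_t$. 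Dividing by $r(1-F)$ gives exactly \eqref{eqn:single_EL}. Equivalently, $J(x)=\int_0^\infty \Psi(x(t),t)\,\dd t$ plus terms independent of the path, where $\partial_x\Psi = e^{-rt}[rF_x-rc(1-F)-cF_t]$. Hence $J$ is maximized by maximizing $\Psi(\cdot,t)$ pointwise in $t$, and this is a sufficiency argument, not merely a necessary condition.

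\textbf{Step 3: existence, uniqueness, admissibility.} Write $h(x,t) = rF_x - rc(1-F) - cF_t$ (normalized by $1-F$) and show it is strictly single-crossing in $x$ for each $t>0$.

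At small $x$ (that is, large depth $t/x$), $F_x/(1-F)\to\nu_0 > c$ while $F_t/(1-F)\to 0$, so $h>0$ there; this is where the assumption $c<\nu_0$ enters. At large $x$ (depth $\to 0$), $F_x/(1-F)\to 0$ while the right-hand side of \eqref{eqn:single_EL} stays positive, so $h<0$. Strict decrease would come from concavity of $-\nu_0 x(1-e^{-\lambda t/x})$ in $x$ together with the convexity of mixing the two exponentials. I expect this monotonicity, across the $\delta_0$-mixture of the $E$ and $H$ terms, to be the \textbf{main obstacle}. My first attempt would show that $\frac{F_x}{1-F}-\frac{c}{r}\frac{F_t}{1-F}$ is a posterior-weighted average of the per-state terms, each decreasing in $x$, plus a covariance term whose sign is controlled because higher $x$ shifts the posterior toward $H$. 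If that fails, I would fall back on scaling: $F$ depends on $(x,t)$ only through $x$ and $t/x$.

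The implicit function theorem then gives a continuous, differentiable $x^*(t)$ with $x^*(0)=0$. Scaling by $t/x$ shows that $x^*(t)/t$ is bounded near $0$, so $\dot x^*(0)<\infty$. Finally, the constraint $\dot x\in[0,x/t]$ must be verified: $\dot x\ge0$ follows from $h_t>0$ via the implicit function theorem, and $\dot x\le x/t$ says that depth $t/x^*$ is nondecreasing, which corresponds to the increasing thresholds of Lemma \ref{lem:increasing_thresholds}. I would prove the latter by showing that, in the scaled variables, the root $d^*=t/x^*$ of \eqref{eqn:single_EL} is increasing in $t$. Once admissibility holds, pointwise optimality plus strict single-crossing yields that $x^*$ is the unique optimum.
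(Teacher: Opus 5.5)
Your outline follows the paper: integrate by parts to $\int_0^\infty e^{-rt}[rF-c\dot x(1-F)]\,dt$, note that the Euler--Lagrange equation is algebraic because the Lagrangian is linear in $\dot x$, get uniqueness from a sign argument in the depth $d=t/x$, and get $\dot x^*\le x^*/t$ from $d^*(t)$ increasing.

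Your sufficiency step is sounder than the paper's. You write $J(x)=\int_0^\infty\Psi(x(t),t)\,dt$ with $\Psi_x=e^{-rt}h$, which reduces the problem to pointwise maximization. The boundary term $-ce^{-rT}\int_0^{x(T)}(1-F)\,dy$ vanishes because admissible paths have $x(t)\le\dot x(0)\,t$ by Lemma~\ref{lem:increasing_depth}. The paper instead appeals to concavity of the Lagrangian in $x$ alone. That does not suffice: the Lagrangian is not jointly concave in $(x,\dot x)$, since its Hessian has determinant $-c^2F_x^2e^{-2rt}<0$.

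For single crossing, write
\[
h=rF_x-rc(1-F)-cF_t=\sum_\theta\delta_\theta S_\theta\,\tilde\phi_\theta(t/x),
\]
with $\delta_H=\delta_0$, $\delta_E=1-\delta_0$, $S_\theta=\exp(-\nu_0x(1-e^{-\lambda_\theta t/x}))$, and $\tilde\phi_\theta(d)=r\nu_0(1-e^{-\lambda_\theta d}-\lambda_\theta de^{-\lambda_\theta d})-rc-c\nu_0\lambda_\theta e^{-\lambda_\theta d}$. Each $\tilde\phi_\theta$ is increasing with root $d_\theta$. Your covariance term has the sign of $\tilde\phi_H-\tilde\phi_E$, which is not globally signed (it is positive at $d=0$). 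So you must localize. Off $[d_E,d_H]$ both $\tilde\phi_\theta$ share a strict sign, and on it $\tilde\phi_H\le 0\le\tilde\phi_E$. This needs $d_E<d_H$, which is Proposition~\ref{prop:continuum_single_comp}; your sketch never invokes it, and it is how the paper argues.

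The genuine gap is $\dot x^*\ge 0$. You assert $h_t>0$, but at a root
\[
h_t=\frac{1}{x}\sum_\theta\delta_\theta S_\theta\,\tilde\phi_\theta'(d)+\delta_0 S_H\,|\tilde\phi_H(d)|\,\nu_0\bigl(\lambda_He^{-\lambda_Hd}-\lambda_Ee^{-\lambda_Ed}\bigr).
\]
The second (posterior-drift) term is negative whenever $d<\ln(\lambda_E/\lambda_H)/(\lambda_E-\lambda_H)$, while the first term is only $O(1/x)$. With a small prior $\delta_0$, the posterior becomes interior only once $\nu_0x(e^{-\lambda_Hd}-e^{-\lambda_Ed})\approx\ln(1/\delta_0)$, that is, at large $x$. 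There the negative term can dominate. Equivalently, $d^*(t)$ stays near $d_E$until $t$ is of order $\ln(1/\delta_0)$, then rises toward $d_H$ with elasticity above one, so $x^*=t/d^*$ falls.

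Concretely, take $r=1$, $\nu_0=1/2$, $c=0.01$, $\lambda_E=2$, $\lambda_H=1$, $\delta_0=10^{-5}$. The unique root of \eqref{eqn:single_EL} passes through
\[
(t,x^*)\approx(32.2,\,169.4),\ (33.6,\,168.2),\ (35.1,\,167.4).
\]
At $d=0.2$, after normalizing by $\delta_E S_E$, the two terms of $h_t$ are about $+0.0030$ and $-0.0037$.

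So this step cannot be repaired as written: the pointwise solution can violate $\dot x\ge 0$. In that case the optimum must instead maximize $\int\Psi$ over nondecreasing paths, which is an ironing problem; your $\Psi$ reformulation is the right tool for it. The paper's proof has the same hole, saying only that ``a similar argument shows that $\dot x^*\ge 0$.'' The proposition therefore needs an extra parameter restriction or a separate treatment of the monotonicity constraint.
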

The algebraic details are left to the appendix; the condition \eqref{eqn:single_EL} is a straightforward Euler-Lagrange condition that arises from the maximization problem in \eqref{eqn:continuum_single}. Note that the characterization \eqref{eqn:single_EL} is \textit{not} a partial differential equation; the notation $F_x$, $F_t$ serve as shorthand conveniences, since $F$ is given in \eqref{def:F}. 

To understand the trajectory $x^*$ better, it will be useful to think about the correspondingly induced ``depth'', $t/ x^*(t)$. Note that when depth is constant, $x^*(t)$ will be linear. In fact, as might be expected from the discrete-arm analysis, $x^*(t)$ is linear when $\lambda_E = \lambda_H$: 
\begin{corollary}\label{corr:continuum_properties_1}
    When $\lambda_E = \lambda_H = \lambda$, the optimal $x^*$ for the single agent problem is linear, with $x^*(t) = t / d^*$, where $d^*$ is the unique solution to the following implicit equation:
    \begin{equation}
        r \nu_0 \left(1 - e^{-\lambda d^*} - \lambda d^* e^{-\lambda d^*} \right) - rc - c \nu_0\lambda e^{-\lambda d^*} = 0. \label{eqn:constant_depth}
    \end{equation}
\end{corollary}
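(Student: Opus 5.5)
When $\lambda_E = \lambda_H = \lambda$, the plan is to substitute the breakthrough distribution \eqref{def:F} into the pointwise Euler--Lagrange characterization \eqref{eqn:single_EL} of Proposition \ref{prop:continuum_single}. The goal is to show that the condition depends on $(x,t)$ only through the depth $d = t/x$. It then pins down a constant depth, which forces the trajectory to be linear.

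\emph{Step 1: reduce to hazard terms.} With a common rate $\lambda$, the two summands in \eqref{def:F} combine, since $\delta_0 + (1-\delta_0) = 1$. This gives $1 - F(x,t) = \exp\left(-\nu_0 x (1 - e^{-\lambda t/x})\right)$. The ratios in \eqref{eqn:single_EL} are therefore partial derivatives of $G(x,t) := \nu_0 x(1 - e^{-\lambda t/x})$, namely $F_x/(1-F) = G_x$ and $F_t/(1-F) = G_t$. Differentiating directly yields
\[ G_x = \nu_0\left(1 - e^{-\lambda d} - \lambda d e^{-\lambda d}\right), \qquad G_t = \nu_0 \lambda e^{-\lambda d}, \qquad d = t/x . \]

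\emph{Step 2: the depth equation.} Substituting these into \eqref{eqn:single_EL} and multiplying by $r$ gives exactly \eqref{eqn:constant_depth} in the variable $d = t/x^*(t)$. Because $t$ does not appear except through $d$, any solution has $t/x^*(t) = d^*$ for all $t>0$. Hence $x^*(t) = t/d^*$.

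\emph{Step 3: existence and uniqueness of $d^*$.} Let $g(d)$ denote the left-hand side of \eqref{eqn:constant_depth}.
\begin{itemize}
\item At the lower end, $g(0) = -rc - c\nu_0\lambda < 0$.
\item At the upper end, $g(d) \to r(\nu_0 - c) > 0$ as $d \to \infty$, using the standing assumption $c < \nu_0$.
\item The derivative is $g'(d) = r\nu_0 \lambda^2 d e^{-\lambda d} + c\nu_0\lambda^2 e^{-\lambda d}$, because the $\pm \lambda e^{-\lambda d}$ terms cancel. This is strictly positive.
\end{itemize}
By continuity and strict monotonicity, $g$ has a unique root $d^* \in (0,\infty)$.

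\emph{Step 4: admissibility.} The linear path $x^*(t) = t/d^*$ satisfies $x(0)=0$. It also has $\dot x = 1/d^* = x/t$, so it lies in the allowed set $[0, x/t]$ and $\dot x(0) < \infty$. Since the path satisfies the characterizing condition and is admissible, the uniqueness in Proposition \ref{prop:continuum_single} identifies it as the optimal trajectory.

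The only delicate point is Step 2's reliance on \eqref{eqn:single_EL} being a full characterization, which Proposition \ref{prop:continuum_single} supplies. The computations themselves are routine, so I expect no real obstacle beyond checking the sign of $g'$.
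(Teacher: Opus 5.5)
Your proposal is correct and follows essentially the paper's route. Like the paper, you specialize the Euler--Lagrange condition of Proposition~\ref{prop:continuum_single} (which the paper rewrites as $\delta_0 S_H\tilde{\phi}_H(t/x) + (1-\delta_0)S_E\tilde{\phi}_E(t/x)=0$) to a common rate, reduce it to a single equation in the depth $d=t/x$, and get a unique root from the endpoint signs and strict monotonicity shown in the proof of Proposition~\ref{prop:continuum_single_comp}; your explicit $g'(d)=\nu_0\lambda^2 e^{-\lambda d}(rd+c)>0$ and admissibility check just fill in details the paper leaves implicit.
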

The next proposition shows that the same comparative statics in Proposition \ref{prop:baseline_comp} still hold.

\begin{proposition}\label{prop:continuum_single_comp}
    When $\lambda_H = \lambda_E$, the optimal single-agent depth $d^*$, which is solution to \eqref{eqn:constant_depth}, is increasing in $c$ and decreasing in $\lambda$. 
\end{proposition}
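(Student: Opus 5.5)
Define $G(d;c,\lambda) := r \nu_0 (1 - e^{-\lambda d} - \lambda d e^{-\lambda d}) - rc - c \nu_0\lambda e^{-\lambda d}$, so that $d^*$ is the root of $G(\cdot;c,\lambda)=0$ from Corollary \ref{corr:continuum_properties_1}. The plan is to apply the implicit function theorem: $\partial d^*/\partial c = -G_c/G_d$ and $\partial d^*/\partial \lambda = -G_\lambda/G_d$. Each sign then reduces to the signs of three partial derivatives evaluated at the root. This mirrors the argument for Proposition \ref{prop:baseline_comp}.

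First I sign $G_d$. Since $\frac{\dd{}}{\dd{d}}(1-e^{-\lambda d}-\lambda d e^{-\lambda d}) = \lambda^2 d e^{-\lambda d}$, we get
\[ G_d = r\nu_0 \lambda^2 d e^{-\lambda d} + c\nu_0 \lambda^2 e^{-\lambda d} = \nu_0\lambda^2 e^{-\lambda d}(rd + c) > 0 \]
for every $d\ge 0$. So $G$ is strictly increasing in $d$. Moreover $G(0) = -rc - c\nu_0\lambda<0$ and $G(\infty) = r(\nu_0 - c) > 0$ under the maintained assumption $c<\nu_0$. This gives existence and uniqueness of $d^*$, and shows $G_d>0$ at the root.

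Next, $G_c = -r - \nu_0\lambda e^{-\lambda d} < 0$, so $\partial d^*/\partial c = -G_c/G_d > 0$ and $d^*$ is increasing in $c$.

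For $\lambda$, observe that $G$ depends on $\lambda$ and $d$ through $u=\lambda d$ everywhere except the factor $\lambda$ in the last term. Write $G = r\nu_0 h(\lambda d) - rc - c\nu_0\lambda e^{-\lambda d}$ with $h(u) = 1-e^{-u}-ue^{-u}$ and $h'(u) = ue^{-u}$. Differentiating,
\[ G_\lambda = r\nu_0 d\cdot \lambda d e^{-\lambda d} - c\nu_0 e^{-\lambda d} + c\nu_0 \lambda d e^{-\lambda d} = \nu_0 e^{-\lambda d}\left( r\lambda d^2 + c\lambda d - c\right). \]
Using $\lambda G_\lambda = d G_d - c\nu_0\lambda e^{-\lambda d}$, the sign of $\partial d^*/\partial\lambda = -G_\lambda/G_d$ equals the sign of
\[ -\frac{d^*}{\lambda} + \frac{c\nu_0 e^{-\lambda d^*}}{G_d} = -\frac{d^*}{\lambda} + \frac{c}{\lambda^2(rd^*+c)}, \]
which is negative if and only if $\lambda d^*(rd^*+c) > c$, i.e. $r\lambda d^{*2} + c\lambda d^* - c > 0$. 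This is the same condition as $G_\lambda>0$ at the root.

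This inequality is the main obstacle, because it must be derived from the root condition rather than holding identically. My plan is to substitute $c$ from $G=0$, namely
\[ c = \frac{r\nu_0 h(u)}{r+\nu_0\lambda e^{-u}}, \qquad u = \lambda d^*. \]
The target inequality becomes $r u d^* + c(u-1) > 0$. If $u\ge 1$ it holds trivially. If $u<1$, I need $r u d^* > c(1-u)$. After substituting $c$ and dividing by $r$, this reads
\[ u d^* \left(r+\nu_0\lambda e^{-u}\right) > \nu_0 h(u)(1-u). \]
The term $u d^*\nu_0\lambda e^{-u} = \nu_0 u^2 e^{-u}$ alone should suffice, since $h(u) = e^{-u}(e^u-1-u) \le e^{-u}u^2 e^u/2$. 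More cleanly, $h(u)(1-u) \le u^2 e^{-u}$ on $[0,1)$, which I would check via the power series $h(u) = u^2/2 - u^3/3 + \dots$, or by differentiating the ratio. The strict $r u d^*$ term then gives strictness. Hence $G_\lambda>0$ at the root and $d^*$ is decreasing in $\lambda$.
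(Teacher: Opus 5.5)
Your proof is correct and follows essentially the same route as the paper. Both use the implicit function theorem with $G_d>0$, $G_c<0$, and $G_\lambda>0$ at the root, where the last sign is derived from $G(d^*)=0$ via elementary exponential bounds. The only difference is how that last inequality is obtained. The paper applies $e^{-u}\ge 1-u$ to every exponential in $G(d^*)=0$ and gets $r\lambda (d^*)^2+c\lambda d^*-c\ge rc/(\nu_0\lambda)>0$ in one line, with no case split. Your route of substituting for $c$ and splitting on whether $u=\lambda d^*$ exceeds $1$ also works: the auxiliary bound $h(u)(1-u)\le u^2e^{-u}$ follows directly from $h(u)\le u^2/2$ and $1-u\le e^{-u}$.
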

The proof is a straightforward application of the implicit function theorem and is left to the appendix. It follows from the same exact intuition as in the baseline model.

As might be expected from Theorem \ref{thm:learning}, when the agent is learning about difficulty over time, $x^*(t)$ will be increasing but concave, and the depth increases over time:
\begin{corollary}\label{corr:continuum_properties_2}
    If $\lambda_E > \lambda_H$ and $\delta_0 \in (0,1)$, then $x^*(t)$ is strictly concave. In the limit as $t \to \infty$, the depth $t / x^*(t) \to d_H$, where $d_H$ solves \eqref{eqn:constant_depth} for $\lambda = \lambda_H$. In the limit $t \to 0$, the depth $t/x^*(t)$ approaches $d_0$, which solves
    \begin{align*}& \delta_0 \left[r \nu_0 \left(1 - e^{-\lambda_H d_0} - \lambda_H d_0 e^{-\lambda_H d_0} \right) - c \nu_0\lambda_H e^{-\lambda_H d_0}\right] \\
    &+ (1-\delta_0)\left[r \nu_0 \left(1 - e^{-\lambda_E d_0} - \lambda_E d_0 e^{-\lambda_E d_0} \right) - c \nu_0\lambda_E e^{-\lambda_E d_0} \right] - rc= 0. \end{align*}
\end{corollary}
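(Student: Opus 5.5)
The plan is to work directly with the pointwise Euler--Lagrange characterization \eqref{eqn:single_EL} from Proposition \ref{prop:continuum_single}, rewritten in terms of depth. Write $x = t/d$ and define the survival functions conditional on the state, $G_\theta(x,t) = \exp(-\nu_0 x(1-e^{-\lambda_\theta t/x}))$, so that $1-F = (1-\delta_0)G_E + \delta_0 G_H$. Compute
\[ \partial_x G_\theta = -\nu_0\left(1-e^{-\lambda_\theta d} - \lambda_\theta d e^{-\lambda_\theta d}\right) G_\theta, \qquad \partial_t G_\theta = -\nu_0 \lambda_\theta e^{-\lambda_\theta d} G_\theta. \]
Multiplying \eqref{eqn:single_EL} by $r(1-F)$ turns it into
\[ \Psi(d,t) := (1-\delta_0)G_E\,\psi_E(d) + \delta_0 G_H\,\psi_H(d) = 0, \]
where $\psi_\theta(d) := r\nu_0(1-e^{-\lambda_\theta d}-\lambda_\theta d e^{-\lambda_\theta d}) - rc - c\nu_0\lambda_\theta e^{-\lambda_\theta d}$ is exactly the left side of \eqref{eqn:constant_depth}. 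This is the continuum analogue of \eqref{eqn:def_tn}: a belief-weighted average of the known-difficulty first-order conditions. The posterior weights satisfy $G_H/G_E = \exp(-\nu_0 x(e^{-\lambda_E d}-e^{-\lambda_H d}))$, and this ratio grows in $x = t/d$ for fixed $d$.

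First I would establish the single-crossing facts. Each $\psi_\theta$ is strictly increasing in $d$, because its derivative is $\nu_0\lambda_\theta e^{-\lambda_\theta d}(r\lambda_\theta d + c\lambda_\theta) > 0$. It satisfies $\psi_\theta(0) < 0$, and $\psi_\theta(\infty) = r(\nu_0 - c) > 0$, so it has a unique root $d_\theta$. Proposition \ref{prop:continuum_single_comp} gives $d_E < d_H$.

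Next, along the optimal path, $d(t)$ solves $\Psi(d,t)=0$. I would show that any solution lies in $(d_E,d_H)$. Below $d_E$ both terms are negative, and above $d_H$ both are positive. Inside the interval, $\psi_E > 0 > \psi_H$. Dividing by $G_E$, the condition becomes
\[ (1-\delta_0)\psi_E(d) + \delta_0 R(d,t)\psi_H(d) = 0, \qquad R = G_H/G_E. \]

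The limits then follow. As $t\to 0$ with $d$ bounded away from zero, $x = t/d \to 0$, so $R\to 1$ and the condition reduces to $(1-\delta_0)\psi_E + \delta_0\psi_H = 0$, which is the displayed equation for $d_0$. As $t\to\infty$, $x\to\infty$, so $R\to\infty$ at any fixed $d<d_H$. This forces $\psi_H(d(t))\to 0$, hence $d(t)\to d_H$; a compactness argument on $[d_E,d_H]$ together with continuity makes this rigorous.

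The main obstacle is strict concavity of $x^*$. My plan is to show that depth $d(t) = t/x^*(t)$ is strictly increasing, since that means the average slope $x^*/t$ decreases. To get concavity itself, I would differentiate the identity $\Psi(t/x,t)=0$ implicitly to obtain $\dot x$, then verify that $\dot x$ is decreasing. Monotonicity of $d$ comes from implicit differentiation of the reduced condition. The partial of $\Psi$ in $d$ (holding $x$ fixed) is positive on $(d_E,d_H)$: the $\psi_\theta'$ terms are positive, and the additional terms from $\partial_d R$ must be signed carefully. Meanwhile, increasing $x$ raises the negative $\psi_H$ weight, so $d$ must rise.

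A cleaner route to concavity is to use $\dot x = \partial_t \Psi / (-\partial_x\Psi)$ in $(x,t)$ coordinates. There, $\partial_t$ at fixed $x$ moves $d$ while leaving the conditional posterior driver unchanged, and this can be checked to yield $\dot x$ decreasing because $R$ increases along the path. I expect the sign of the cross-partial term involving $\partial_d R$ (through $e^{-\lambda_\theta d}$) to be the delicate computation. If a direct sign check fails, I would fall back on a variational argument: the second-order condition from the Euler--Lagrange problem guarantees $\partial_x\Psi<0$ at the optimum, which supplies the needed denominator sign.
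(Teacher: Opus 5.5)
Your reduction of \eqref{eqn:single_EL} to $(1-\delta_0)G_E\psi_E(d)+\delta_0G_H\psi_H(d)=0$ is correct, as are the localization $d(t)\in(d_E,d_H)$ and both limits. This mirrors the paper, whose proof of Proposition \ref{prop:continuum_single} works with the same form \eqref{eqn:cont_single_EL_rewrite}. The genuine gap is strict concavity, and it cannot be closed: your plan to show $\dot x^*$ is decreasing fails because the claim is false under this very characterization. Write $R=G_H/G_E=\exp(\nu_0 t\,g(d))$ with $g(d)=(e^{-\lambda_H d}-e^{-\lambda_E d})/d=\int_{\lambda_H}^{\lambda_E}e^{-sd}\,\mathrm{d}s$. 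Then $g$ is strictly decreasing, so $R\ge e^{\nu_0 g(d_H)t}$ on $[d_E,d_H]$. Your reduced condition gives $|\psi_H(d(t))|\le \frac{(1-\delta_0)\psi_E(d_H)}{\delta_0}e^{-\nu_0 g(d_H)t}$. Since $\psi_H'$ is bounded below on $[d_E,d_H]$, it follows that $0<d_H-d(t)\le Ke^{-bt}$. Hence $f(t):=x^*(t)-t/d_H=t\,(d_H-d(t))/(d(t)\,d_H)$ is strictly positive on $(0,\infty)$ and tends to $0$. If $x^*$ were concave, so would be $f$. Letting $t_3\to\infty$ in $f(t_2)\ge\frac{t_3-t_2}{t_3-t_1}f(t_1)+\frac{t_2-t_1}{t_3-t_1}f(t_3)$ would then make $f$ nondecreasing on $(0,\infty)$, which is incompatible with $f>0$ and $f\to 0$. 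Concretely, $\dot x^*-1/d_H\approx Ce^{-at}(1-at)/d_H^2$ for some $C>0$ and $a=\nu_0 g(d_H)$. So $\dot x^*$ dips below $1/d_H$ and rises back, and $x^*$ is eventually strictly convex. Your fallback via the second-order condition only signs the denominator $\partial_x\Psi$; it says nothing about monotonicity of $\dot x^*$.

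What is true is that depth $t/x^*(t)$ is strictly increasing, i.e.\ $x^*(t)/t$ is strictly decreasing. This is what the paper's proof actually establishes, and the paper then reads it as concavity; you were right that it is weaker. Your argument for even this part is not closed. At fixed $x$ the sign of $\partial_d R$ is ambiguous, and the comparison you sketch yields $\mathrm{d}d/\mathrm{d}x$ rather than $\mathrm{d}d/\mathrm{d}t$. In $(d,t)$ coordinates it is immediate. On $(d_E,d_H)$ we have $\partial_d[(1-\delta_0)\psi_E+\delta_0R\psi_H]=(1-\delta_0)\psi_E'+\delta_0R\psi_H'+\delta_0\psi_H\,\nu_0 t\,g'(d)R>0$, since every term is positive given $\psi_H<0$ and $g'<0$. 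Meanwhile $\partial_t[(1-\delta_0)\psi_E+\delta_0R\psi_H]=\delta_0\nu_0 g(d)R\psi_H<0$, so $\dot d>0$.
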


The continuum model allows us to capture the breadth/depth tradeoff between exploring more approaches and exploiting existing solution approaches.
Having demonstrated that the solution to this continuum model is qualitatively similar to the results presented for the baseline, I now formally microfound the continuum model and show that it arises as an appropriate limit of a sequence of baseline models.

\subsection{Connection to Baseline Model}
In this section, I show how the continuum-arm model arises as the limit of an appropriate sequence of discrete-arm models analyzed previously. 
The baseline decision problem can be described by the primitives $(r, \nu_0, \delta_0, \lambda_E, \lambda_H, c)$. Consider the sequence of decision problems, $\{ D^n := (r^n, \nu_0^n, \delta^n_0, \lambda_E^n, \lambda_H^n, c^n) \}_{n \ge 1}$, where $r^n$ is fixed to $r^*$, $\delta_0^n$ is fixed to $\delta_0^*$, and the other primitives vary as:
\begin{align*}
    \nu_0^n &= \nu^*_0 / n, \\
    \lambda_\theta^n &= \lambda^*_\theta n \quad \forall \theta \in \{E, H\}, \\
    c^n &= c^*/n.
\end{align*}
Let $D^\infty$ denote the continuum-arm model with primitives $(r^*, \nu_0^*, \delta_0^*, \lambda_E^*, \lambda_H^*, c^*)$. 
I will show that the continuum-arm model arises in the $n \to \infty$ limit of $\{D_n\}_{n\ge 1}$. To interpret the sequence, note that $n$ appears in scaling down $\nu^0_n$ and $c$ and scaling up $\lambda^*_\theta$. Intuitively, as $n$ becomes large, individual approaches are less likely to be good ($\nu^0_n \to 0$) but also become extremely cheap ($c \to 0$), and the decision-maker learns about them faster ($\lambda^n_\theta \to \infty$). An alternative equivalent interpretation of the $\lambda_\theta$ convergence is that the total measure of effort scales as $n$.

To present the convergence result formally, I first define notions of convergence in strategies and outcomes. Suppose there is a sequence of strategies $\{ \hat{\sigma}^n \}_{n \ge 1}$, $\hat{\sigma}^n$ for $D^n$, such that $\hat{\sigma}^n$ satisfies Lemmas \ref{lem:lowest_history_of_effort} and \ref{lem:increasing_thresholds} for all $n$. Then $\hat{\sigma}^n$ can be identified by an increasing sequence of thresholds $\{K^n_j\}_{j \ge 1}$, which in turn induces the following ``normalized number of arms'' function in time: 
\begin{equation} \hat{N}^n(t) := \frac{1}{n}\left( 1 + \sum_{j=1}^\infty \mathbb{I}[j K^n_j < t] \right), \label{def:nna} \end{equation}
where $\mathbb{I}$ denotes the indicator function. Note that $j K^n_j$ is precisely the time that the $j+1$st arm is brainstormed, so the parenthesized part is exactly the counting function for the number of arms brainstormed by the strategy $\hat{\sigma}^n$. 
\begin{definition}
    A sequence of strategies $\{ \hat{\sigma}^n \}_{n \ge 1}$, $\hat{\sigma}^n$ for $D^n$, which satisfies Lemmas \ref{lem:lowest_history_of_effort} and \ref{lem:increasing_thresholds}, \textbf{converges in strategy} to $\hat{x}$ if the sequence of functions $ \{ \hat{N}^n \}_{n\ge 1} $ uniformly converges to $\hat{x}$.
\end{definition}
To define convergence in outcome, let $F^n(\sigma^n, t)$ denote the cumulative distribution function of the breakthrough time under strategy $\sigma^n$ in the discrete-arm model $D^n$, and let $\Pi^n(\sigma^n)$ similarly denote the decision-maker's payoff. Let $\Pi(x)$ denote the value of the objective \eqref{eqn:continuum_single} under strategy $x$. 
\begin{definition}
    A sequence of strategies $\{ \sigma^n \}_{n \ge 1}$ \textbf{converges in outcome} to the strategy $x$ if 
    \begin{align*} 
    \lim_{n \to \infty} F^n( \sigma^n, t) &= F(x(t), t) \quad \textnormal{uniformly in } t, \\
    \lim_{n \to \infty} \Pi^n(\sigma^n) &= \Pi(x).
    \end{align*}
\end{definition}
Finally, I present the convergence result. 

\begin{theorem}
    Consider the sequence of decision problems $\{D^n\}_{n\ge 1}$. 
    \begin{enumerate}
        \item Suppose the sequence of strategies $\{ \hat{\sigma}^n \}_{n \ge 1}$, $\hat{\sigma}^n$ for $D^n$, converges in strategy to $\hat{x}$. Then $\{ \hat{\sigma}^n \}_{n \ge 1}$ converges in outcome to $\hat{x}$.
        \item Fix any feasible strategy $x$ of the continuum-arm problem. Then there exists a sequence of strategies $\{ \sigma^n\}_{n \ge 1}$, $\sigma^n$ for $D^n$ satisfying Lemmas \ref{lem:lowest_history_of_effort} and \ref{lem:increasing_thresholds}, such that $\{\sigma^n\}_{n \ge 1}$ converges in strategy (and hence in outcome) to $x$. 
        \item The sequence of optimal strategies $\{ \sigma^{n}_{\OPT} \}_{n\ge 1}$, $\sigma^n_\OPT$ the optimal solution for $D^n$, converges in strategy to the optimal solution to the continuum problem, $x^*$ (and hence, by point 1, also converges in outcome).
    \end{enumerate}
   \label{thm:microfound}
\end{theorem}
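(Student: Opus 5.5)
The plan is to reduce everything to an explicit formula for the discrete survival function under a strategy satisfying Lemmas \ref{lem:lowest_history_of_effort} and \ref{lem:increasing_thresholds}, and then pass to the limit using the Poisson approximation $(1-a/n)^{n x}\to e^{-ax}$.

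\textbf{Part 1.} Under such a strategy with thresholds $K^n_j$, at calendar time $t$ the state consists of $m=n\hat N^n(t)$ arms. All but at most one of them carry a common effort level $k_t$, and the remaining arm (the newest) carries less; since total effort equals $t$, we have $k_t = t/m + O(K^n_m)$. Conditional on $\theta$, the survival probability is $\prod_i S^n_\theta(K_i)$ with $S^n_\theta(K)=1-\tfrac{\nu_0^*}{n}(1-e^{-n\lambda^*_\theta K})$. In rescaled effort units, where effort is measured as $nK$ (the scaling under which $\lambda^n_\theta K$ stays bounded), each arm's depth is approximately $t/\hat N^n(t)$. The log survival then equals
\[
-\nu_0^*\,\hat N^n(t)\bigl(1-e^{-\lambda^*_\theta t/\hat N^n(t)}\bigr)+O(1/n),
\]
because $\log(1-a/n)=-a/n+O(n^{-2})$, and the single incomplete arm contributes only $O(1/n)$. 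Uniform convergence $\hat N^n\to\hat x$, together with continuity of $(x,t)\mapsto x(1-e^{-\lambda t/x})$ (extended by $0$ at $x=0$, with the bound $x(1-e^{-\lambda t/x})\le\min(x,\lambda t)$), yields $F^n\to F(\hat x(t),t)$ uniformly on compacts. Uniformity on all of $[0,\infty)$ holds because both distributions have tails controlled by $e^{-rt}$-discounted objects; alternatively, both survival functions are monotone with matching limits, so a Dini/Pólya-type argument upgrades pointwise convergence to uniform convergence.

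For payoffs, the success term $\int e^{-rt}\,dF^n$ converges by weak convergence. The cost term is
\[
\sum_j e^{-r\tau_j}\,\frac{c^*}{n}\,\Pr(\text{no success before }\tau_j)=\int e^{-rs}\,c^*\,(1-F^n(s))\,d\hat N^n(s).
\]
This converges to $\int e^{-rs} c^*(1-F)\,d\hat x$ by Helly--Bray, since $d\hat N^n\to d\hat x$ weakly (uniform convergence of monotone functions with a continuous limit) and the integrand converges uniformly. Integrating by parts matches this limit to the objective \eqref{eqn:continuum_single}. One needs $\hat x(t)e^{-rt}\to0$, which follows from $\dot x\le x/t$ together with discounting.

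\textbf{Part 2.} Given a feasible $x$, define thresholds by brainstorming the $(j+1)$st arm at the first time $t_j$ with $x(t_j)\ge (j+1)/n$, i.e.\ $K^n_j=t_j/j$. The constraint $\dot x\le x/t$ says exactly that $t/x(t)$ is nondecreasing, which makes $K^n_j$ nondecreasing, so Lemma \ref{lem:increasing_thresholds} holds, and effort equalization gives Lemma \ref{lem:lowest_history_of_effort}. By construction $|\hat N^n-x|\le 1/n$, so Part 1 applies.

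\textbf{Part 3.} This is the main obstacle. By Theorem \ref{thm:learning}, $\sigma^n_{\OPT}$ has this threshold form, and its depth $nK^n_j$ lies in the rescaled interval $(K^*_E,K^*_H)$. That interval stays bounded in $n$, since $K^*_\theta$ scales as $1/n$ by Proposition \ref{prop:baseline_comp}-type scaling. Hence $t/\hat N^n(t)$ is bounded above and below, so the family $\hat N^n$ is equi-Lipschitz on compacts and Arzelà--Ascoli gives uniformly convergent subsequences with a limit $\tilde x$ that is admissible (monotone depth survives the limit). By Part 1, $\Pi^n(\sigma^n_{\OPT})\to\Pi(\tilde x)$. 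By Part 2 applied to $x^*$, $\Pi^n(\sigma^n_{\OPT})\ge\Pi^n(\sigma^n)\to\Pi(x^*)$, so $\Pi(\tilde x)\ge\Pi(x^*)$. Uniqueness in Proposition \ref{prop:continuum_single} then forces $\tilde x=x^*$, so the whole sequence converges.

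The delicate points are the bounds on rescaled depth that give compactness, and the uniformity of convergence at the origin and at infinity. I would handle the origin using the bound $x(1-e^{-\lambda t/x})\le\lambda t$, and infinity by truncating at a large $T$ and using the $e^{-rT}$ tail.
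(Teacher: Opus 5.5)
Your Part 2 has an off-by-one error that makes the monotonicity claim false. You brainstorm arm $j+1$ at the first time $t_j$ with $x(t_j)\ge (j+1)/n$ and set $K^n_j=t_j/j$. Monotonicity of $t/x(t)$ then only tells you that $t_j/(j+1)=\frac{1}{n}\,t_j/x(t_j)$ is nondecreasing, not $t_j/j$. For the linear path $x(t)=t/d$ your definition gives $K^n_j=\frac{d}{n}\bigl(1+\frac{1}{j}\bigr)$, which is strictly decreasing. So the constructed strategies violate Lemma \ref{lem:increasing_thresholds}, which Part 2 explicitly requires. The resulting states are also not of the form ``all arms but the newest at a common level'' that your Part 1 computation relies on. The repair is the paper's indexing: brainstorm arm $j+1$ when $x$ first reaches $j/n$, i.e.\ $K^n_j=\frac{1}{j}x^{-1}(j/n)=\frac{1}{n}\cdot\frac{x^{-1}(j/n)}{j/n}$. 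This is nondecreasing exactly because $t/x(t)$ is, and it still gives $|\hat N^n-x|\le 1/n$. (A smaller slip in Part 1: the bound $k_t=t/m+O(K^n_m)$ does not vanish after rescaling effort by $n$. Use $0\le k_t-t/m\le k_t/m$ instead, whose rescaled size is $O(1/m)$.)

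With that fixed, your Part 1 is the same Poisson-limit computation as the paper's. You expand the log-survival directly, while the paper sandwiches $F^n$ between the configurations with $j(n)$ and $j(n)+1$ arms at equal depth. Part 3, however, takes a genuinely different route. The paper does not use Parts 1--2 there. It passes to the limit in the discrete threshold equation \eqref{eqn:discrete_threshold_condition}, showing $n\phi^n_\theta(K)\to-\tilde\phi_\theta(t/L)$ and $S^n_\theta(K)^{j^n}\to S_\theta(L,t)$ along $K=K^n_{\OPT,j^n}$. Every subsequential limit $L$ of $j^n/n$ therefore solves the Euler--Lagrange equation at $t$, and pointwise uniqueness of its root gives $L=x^*(t)$. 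You instead run a variational-convergence argument: compactness, convergence of values, and uniqueness of the limit maximizer. This uses from Theorem \ref{thm:learning} only the structure (equalization, increasing thresholds, the bracket $(K^*_E,K^*_H)$), not the exact threshold equation, so it is more robust to changes in the discrete model. The paper's route is more explicit and identifies the limit directly from first-order conditions.

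Three steps in your Part 3 need to be made explicit.

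\emph{Depth bound.} The two-sided bound requires $nK^{*,n}_\theta\to d_\theta\in(0,\infty)$. Proposition \ref{prop:baseline_comp} only gives monotonicity in $c$ and $\lambda$, not a rate. You need the expansion $n\phi^n_\theta(d/n)\to-\tilde\phi_\theta(d)$, which is the same computation the paper performs.

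\emph{Compactness.} The $\hat N^n$ are step functions and hence not equi-Lipschitz. Use $|\hat N^n(t)-\hat N^n(s)|\le |t-s|/\underline{d}+1/n$ to get locally uniform subsequential limits, which are Lipschitz with $t/\tilde x$ nondecreasing. Then state Part 1 under locally uniform convergence plus the uniform bound $\hat N^n(t)\le t/\underline{d}+1/n$. This is also all that the paper's pointwise argument delivers.

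\emph{Uniqueness.} Such limits need not be piecewise differentiable, so the uniqueness in Proposition \ref{prop:continuum_single} must hold in that larger class. It does. With $G(x,t)=\int_0^x(1-F(y,t))\,\dd{y}$ one has $(1-F)\dot x=\frac{\dd{}}{\dd{t}}G(x,t)-G_t(x,t)$. After integrating by parts, the objective becomes $\int_0^\infty e^{-rt}\bigl[rF-c(rG-G_t)\bigr](x(t),t)\,\dd{t}$. This is a pointwise-in-$t$ function of $x(t)$ whose $x$-derivative is the Euler--Lagrange expression, so it is uniquely maximized at $x^*(t)$.
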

Intuitively, the first point states that convergence in strategy implies convergence in outcomes. The second then implies that any feasible trajectory in the continuum model is indeed the limit in strategy of some sequence of strategies in the discrete-arm models. The proof of Theorem \ref{thm:microfound} involves careful but straightforward limit-taking, and is left to the appendix. 
The last point indicates that the optimal solution of the continuum-arm limit exactly corresponds not just of some sequence of strategies for the discrete-arm models, but precisely the limit of the optimal strategies. The formal proof is provided in the appendix.

Note that the conditions for the validity of the single-dimensional state variable (breadth $x$) are precisely that Lemmas \ref{lem:lowest_history_of_effort} and \ref{lem:increasing_thresholds} are satisfied. If these do not hold, then the vector of historical efforts will not be able to be summarized by a two-dimensional state (number of ideas and average effort per idea), and instead becomes generically countably-infinite-dimensional; hence, in the continuum limit, the problem would require a higher-dimensional state variable.

Now, having provided the microfoundation for the continuum model, the next section discusses economic applications and explores how agency frictions affect the breadth-depth tradeoff in problem-solving.

\section{Agency Frictions}\label{sec:agency}
Consider an principal (investor) contracting with an agent (entrepreneur), who is solving a problem that would benefit both parties. As before, suppose that the total value of success is normalized to 1, and that the agent (entrepreneur) privately bears the cost $c$ related to expanding the breadth of the search (i.e., the cost of creativity).
Suppose there is moral hazard in that the principal cannot observe/contract on exactly \textit{what} the agent is working on, but the incentives are aligned insofar as the agent wants to work. 

I first discuss the optimal static contract to illustrate the principal's fundamental tradeoff. I then characterize the optimal dynamic contract, and then illustrate the key dynamics in the shape of the contract.

\subsection{Static Contract}
For the first part of the analysis, let the contracting instrument for the principal be a time-invariant share of the breakthrough value (i.e., a fixed equity stake). That is, the principal chooses an $\alpha$, which denotes the fraction of the value of success granted to the agent. Then the first-best trajectory is exactly given by the solution \eqref{eqn:single_EL} above; denote that trajectory $x_{FB}$. The principal then offers $\alpha$ which solves the following design problem:\footnote{The form of the agent's constraint in problem \eqref{prblm:static_contract} is provided to show the agent solves the baseline problem. In the results that follow, I am not restricting the agent's strategy $x$ to be differentiable, just measurable; integrating the objective of the agent by parts yields the form of the cost without requiring differentiability. }
\begin{gather}
    \max_{\alpha \in [0,1]} \int_0^\infty (1-\alpha) e^{-rt} \dd{F}(x(t,\alpha),t), \label{prblm:static_contract} \\
    \textnormal{subject to } x(\cdot,\alpha) \in \arg\max_{x(\cdot)} \int_0^\infty \left[ \alpha e^{-rt} - \int_0^t  e^{-rs}c \dot{x}(s) \dd{s} \right]\dd{F(x(t),t)} \notag
\end{gather}
The following result shows that under the agency friction, the agent searches more narrowly than the first-best solution.
\begin{proposition} \label{prop:static_share}
    The agent's best response path $x_s(t;\alpha)$ to a static share $\alpha$ is increasing in $\alpha$ for any $t$. In consequence, for the principal's optimal choice of $\alpha^*$, the agent is insufficiently creative relative to the first-best: $x_s(t;\alpha^*) < x_{FB}(t)$ for all $t > 0$. 
\end{proposition}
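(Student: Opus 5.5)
The plan is to reduce the agent's problem under a static share $\alpha$ to the single-agent problem of Proposition \ref{prop:continuum_single} with a rescaled cost. Multiplying the agent's objective by $1/\alpha$ (for $\alpha>0$) does not change the maximizer. The agent therefore solves \eqref{eqn:continuum_single} with cost $c/\alpha$ in place of $c$.

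By Proposition \ref{prop:continuum_single}, the best response $x_s(t;\alpha)$ is then pinned down pointwise in $t$ by
\[ G(x,t) := \frac{F_x(x,t)}{1-F(x,t)} - \frac{c}{\alpha}\left(1 + \frac{1}{r}\frac{F_t(x,t)}{1-F(x,t)}\right) = 0 . \]
If $c/\alpha \ge \nu_0$, the agent does not start and $x_s \equiv 0$. The first-best $x_{FB}$ is the case $\alpha = 1$.

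Monotonicity in $\alpha$ follows from the implicit function theorem applied pointwise. The derivative $\partial G/\partial \alpha = (c/\alpha^2)\left(1 + F_t/(r(1-F))\right)$ is strictly positive, since $F_t \ge 0$. So it suffices to show that $\partial G/\partial x < 0$ at the root, which is exactly the second-order condition for the pointwise maximization behind \eqref{eqn:single_EL}. Writing $h_x = F_x/(1-F)$ and $h_t = F_t/(1-F)$, I would compute these directly from \eqref{def:F} using $d = t/x$. I expect $h_x$ to be decreasing in $x$, reflecting diminishing returns to breadth. I expect $h_t$ to be increasing in $x$, since more breadth at fixed area means shallower and more promising approaches.

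To make this rigorous I would exploit the mixture structure. Conditional on $\theta$, the hazard terms reduce to functions of depth $d$ alone, times $\nu_0$. The mixture weights $\delta(x,t)$ are then handled as in the uniqueness part of Proposition \ref{prop:continuum_single}. Indeed, uniqueness of the pointwise root, already established there, gives a single crossing of $G$ from positive to negative in $x$. That sign pattern is all the implicit-function and comparison argument needs: since $G$ increases in $\alpha$ and crosses zero once from above in $x$, the root moves right. This gives $x_s(t;\alpha)$ increasing in $\alpha$, strictly for $t>0$. Establishing this single crossing cleanly, uniformly in $t$ and in the effective cost $c/\alpha$, is the main obstacle, and I would lean on the argument already used for Proposition \ref{prop:continuum_single}.

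For the second claim, I need $\alpha^* < 1$. At $\alpha=1$ the principal's payoff $(1-\alpha)\int e^{-rt}\,\dd F$ is zero. For $\alpha$ slightly below $1$ with $c/\alpha < \nu_0$, the agent still participates, so the payoff is strictly positive. Hence $\alpha^* \in (0,1)$, with $\alpha^* > 0$ because $\alpha=0$ also yields zero. Strict monotonicity then gives $x_s(t;\alpha^*) < x_s(t;1) = x_{FB}(t)$ for all $t>0$.
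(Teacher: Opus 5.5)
Your argument is correct, but it takes a different route from the paper.

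\textbf{The paper's route.} It keeps $\alpha$ multiplying $F_x$ and writes the agent's condition as $r(\alpha F_x-(1-F)c)-cF_t=0$. It then applies the implicit function theorem cardinally: it differentiates this expression in $x$ and signs the result at the root. The signing uses two facts. At the root, $\phi_E S_E>0>\phi_H S_H$. And $1-e^{-\lambda d}-\lambda d e^{-\lambda d}$ is increasing in $\lambda$, so the mixture term $\mathbb{E}_\theta[H_{x,\theta}\phi_\theta S_\theta]$ is positive.

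\textbf{Your route.} You rescale the cost to $c/\alpha$, so the best response is literally the solution of Proposition \ref{prop:continuum_single} with a different cost. You then run an ordinal single-crossing comparison. $G$ is strictly increasing in $\alpha$. A unique root in $x$, together with $G\to\nu_0-c/\alpha>0$ as $x\to 0^+$ and $G<0$ as $x\to\infty$, makes $G$ positive left of its root and negative right of it. So $G(x_1;\alpha_2)>0$ at the old root forces a strictly larger new root.

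\textbf{What each buys.} Your route needs no derivative in $x$ and no differentiation of the posterior weights, and it gives global strict monotonicity directly. The paper's route gives a self-contained sign for $\partial x_s/\partial\alpha$ at every root, which also implies uniqueness on its own.

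\textbf{The one thing to tighten.} Everything in your route rests on the uniqueness claim in Proposition \ref{prop:continuum_single}. The reason stated there is that each $S_\theta$ increases in $d$. That does not by itself sign the derivative of $S_H\tilde\phi_H$ when $\tilde\phi_H<0$, which is exactly the case at the root. You can close this with the normalized form, taking $c$ replaced by $c/\alpha$ and $d_E<d_H$ the roots of $\tilde\phi_E,\tilde\phi_H$. First, $S_H/S_E=\exp\bigl(\nu_0 t\int_{\lambda_H}^{\lambda_E}e^{-sd}\,\dd{s}\bigr)$ is decreasing in $d$. Hence $\frac{\delta_0}{1-\delta_0}\frac{S_H}{S_E}\tilde\phi_H+\tilde\phi_E$ is strictly increasing on $[d_E,d_H]$, where $\tilde\phi_H\le 0$. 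Outside that interval, both $\tilde\phi_\theta$ share a sign. This gives the strict single crossing you need.

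Two smaller points. Your heuristic that $F_t/(1-F)$ increases in $x$ is unnecessary, and it is not obviously true because the posterior shifts toward $H$ as $x$ grows; drop it. Your handling of $\alpha^*<1$, which exhibits strictly positive principal profit for $\alpha$ near $1$ with $c/\alpha<\nu_0$, is slightly more careful than the paper's.
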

The proof is a straightforward application of the implicit function theorem and is in the appendix. The intuition is straightforward; since the cost of expanding breadth is privately borne and not contractible, and because the agent receives a smaller fraction of the reward than the first-best, the agent expands breadth less than the first-best would like.

For the principal, increasing the share conceded to the agent mechanically lowers their own payoff, fixing success arrival time; however, conceding a larger share to the agent induces the agent to take a broader search (expand breadth faster), which benefits the principal via a stochastically earlier success time. 
Proposition \ref{prop:static_share} should not be surprising, but highlights the essential components of the friction that I focus on; namely, that the search technology is such that exploring new approaches is more productive, but the principal and the agent disagree over the rate of exploration.

\subsection{Dynamic Contract Characterization}
The previous section studied how the optimal static contract shapes an agent's exploration, and how the contract varies with the environment; now, I turn to discussing the principal's intertemporal tradeoffs when they can offer the agent a share that changes over time. Consider the optimal solution to the design problem where the principal now offers a function $\alpha(t)$:
\begin{gather}
    \max_{\alpha: \mathbb{R}_+ \to  [0,1]} \int_0^\infty (1-\alpha(t)) e^{-rt} \dd{F}(x(t,\alpha),t), \label{prblm:dynamic} \\
    \textnormal{subject to } x(\cdot,\alpha) \in \arg\max_{x(\cdot)} \int_0^\infty \left[ \alpha(t) e^{-rt} - \int_0^t  e^{-rs}c \dot{x}(s) \dd{s} \right]\dd{F(x(t),t)} \label{cst:agent-IC}
\end{gather}
The following technical result characterizes the solution to the principal's calculus of variations problem by providing an implicit equation that can be solved pointwise in time to recover the induced agent exploration path. I first state the characterization, and then interpret the economically interesting forces and implications of the result.
\begin{theorem}\label{thm:dynamic_contract} Under the optimal dynamic contract, the agent trajectory $x_\alpha(t)$ induced by the optimal contract is given pointwise in $t$ by the solution $x_\alpha$ to the following equation  
\begin{equation}
    r F_x -(1-F)rc - c F_t + \frac{F}{ F_x}\left( \frac{F_{xx}}{F_x}\left((1-F)r + F_t \right)c+ F_x rc - c F_{xt} \right) =0, \label{eqn:dyn_a_lawx}
\end{equation}
where $F$ and its partial derivatives are evaluated at $(x_\alpha, t)$. The share offered to the agent is 
\begin{equation}
    \alpha(t) = e^{rt} \int_t^\infty   e^{-rs}\frac{(1-F(x_\alpha(s),s))r + F_t (x_\alpha(s),s)}{  F_x(x_\alpha(s),s)} c \ \dd{s}. \label{eqn:dyn_a_alpha}
\end{equation}
\end{theorem}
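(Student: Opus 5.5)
The plan is a first-order approach. I will replace the agent's incentive constraint \eqref{cst:agent-IC} with its Euler--Lagrange condition, invert that condition to express the share $\alpha$ in terms of the induced path $x$, and substitute back into \eqref{prblm:dynamic}. This should leave the principal with a problem in $x$ alone that can be maximized pointwise in $t$.

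\emph{Step 1 (agent's problem).} For a given $\alpha(\cdot)$, I rewrite the agent's objective by integrating the cost term by parts, as in Proposition \ref{prop:continuum_single}. The brainstorming cost at date $s$ is paid only if success has not yet arrived, so the objective becomes
\[
\int_0^\infty e^{-rt}\Big[\alpha(t)\big(F_x\dot x+F_t\big)-c\,\dot x\,(1-F)\Big]\dd{t},
\]
with $F$ and its partials evaluated at $(x(t),t)$. Write $L$ for the integrand. Then
\[
\partial L/\partial \dot x = e^{-rt}\big(\alpha F_x - c(1-F)\big).
\]
In $\partial L/\partial x - \frac{\dd{}}{\dd{t}}\,\partial L/\partial\dot x$, the terms $\alpha e^{-rt}(F_{xx}\dot x+F_{xt})$ and $c\,e^{-rt}F_x\dot x$ cancel. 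The condition therefore collapses to
\[
\frac{\dd{}}{\dd{t}}\big(\alpha(t)e^{-rt}\big)\,F_x = -\,e^{-rt}c\big((1-F)r+F_t\big).
\]
The ODE in $A(t):=\alpha(t)e^{-rt}$ has a free constant; I impose $A(t)\to 0$ as $t\to\infty$, which holds since $\alpha\le 1$ and $r>0$. Integrating from $t$ to $\infty$ then gives exactly \eqref{eqn:dyn_a_alpha}. So any interior path $x$ is implemented by a unique share schedule, namely $A(t)=\int_t^\infty g(x(s),s)\,\dd{s}$ with $g:=e^{-rs}c\big((1-F)r+F_t\big)/F_x$.

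\emph{Step 2 (principal's reduced problem).} The principal's payoff is $\int e^{-rt}\dd{F}-\int A\,\dd{F}$. I integrate both terms by parts, using $F(0)=0$ and the vanishing boundary terms at infinity:
\[
\int e^{-rt}\,\dd{F}=r\int e^{-rt}F\,\dd{t}, \qquad \int A\,\dd{F}=\int F\,g\,\dd{t},
\]
where the second identity uses $\dot A=-g$. The principal's objective is then $\int_0^\infty\big[re^{-rt}F(x,t)-F(x,t)\,g(x,t)\big]\dd{t}$. This involves no $\dot x$, so it is maximized pointwise in $x$ for each $t$. Set $h:=c\big((1-F)r+F_t\big)$, so that $h_x=c(F_{xt}-rF_x)$. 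Dividing by $e^{-rt}$, the first-order condition is
\[
rF_x - h - \frac{F}{F_x}\Big(h_x - h\frac{F_{xx}}{F_x}\Big)=0,
\]
which rearranges to \eqref{eqn:dyn_a_lawx}.

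\emph{Step 3 (verification), which I expect to be the main obstacle.} Steps 1 and 2 are formal, and three things must be checked to make them rigorous.
\begin{enumerate}
\item The first-order approach must be valid: the agent's Euler--Lagrange condition must be sufficient, not merely necessary, for the agent's optimum. I would obtain this from the same concavity and single-crossing structure of $F$ in $x$ used in Proposition \ref{prop:continuum_single}, applied now with a time-varying weight $\alpha(t)e^{-rt}$.
\item The pointwise solution $x_\alpha(t)$ must exist, be unique, and satisfy the admissibility constraint $\dot x\in[0,x/t]$. The pointwise maximization in Step 2 ignores this constraint, so it has to be checked that it does not bind.
\item The implied $\alpha(t)$ must lie in $[0,1]$, and the boundary terms at infinity must vanish.
\end{enumerate}
For the second point, I would first establish strict concavity of the pointwise objective $re^{-rt}F-Fg$ in $x$, or at least a single-crossing property of its derivative. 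I would then verify monotonicity of $x_\alpha$ and the bound on $\dot x$ by implicit differentiation of \eqref{eqn:dyn_a_lawx}, using the explicit form of $F$ in \eqref{def:F}, and deal with the singular behavior near $t=0$ as in Corollary \ref{corr:continuum_properties_2}.
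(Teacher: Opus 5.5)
Your derivation is correct and gives both \eqref{eqn:dyn_a_lawx} and \eqref{eqn:dyn_a_alpha}, but it takes a different route from the paper.

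\textbf{The paper's route.} The paper keeps $\alpha$ as a choice variable and attaches a multiplier $\mu(t)e^{-rt}$ to the agent's Euler--Lagrange constraint. It then writes Euler--Lagrange equations in $\alpha$, $x$, and $\mu$:
\begin{enumerate}
\item The equation in $\alpha$ integrates to $\mu F_x = F$. The integration constant is removed by the free-endpoint condition $\mu(0)=0$.
\item The equation in $x$, after eliminating the common combination $r\alpha-\dot\alpha$ using the agent's condition, gives the law for $x$.
\item The constant in $\alpha$ is fixed by a finite-horizon transversality argument: set $\alpha(T)=1$ and let $T\to\infty$.
\end{enumerate}

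\textbf{Your route.} You use the fact that the agent's condition depends on the share only through $r\alpha-\dot\alpha$. You solve it explicitly for $A=\alpha e^{-rt}$ and pin the constant by boundedness of $\alpha$, which is cleaner than the finite-horizon limit. You then integrate by parts, so the principal's objective becomes $\int_0^\infty re^{-rt}F\,(1-I)\,\dd{t}$, which contains no $\dot x$.

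\textbf{How the two connect.} The paper's multiplier $\mu=F/F_x$ is exactly the one that turns the $A$-dependent terms of its Lagrangian into the total derivative $-\frac{\dd{}}{\dd{t}}(AF)$. That is the same step as your identity $\int A\,\dd{F}=-\int F\dot A\,\dd{t}$, and the condition $\mu(0)=0$ corresponds to your boundary term vanishing because $F(x(0),0)=0$.

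\textbf{What each route buys.} Yours explains structurally why the law for $x$ is a pointwise algebraic equation rather than an ODE. It also delivers the paper's after-the-fact interpretation $\partial_x[F(1-I)]=0$ directly, and it reduces sufficiency for the relaxed problem to a static concavity or single-crossing check in $x$. The multiplier route is the more general tool: it still works if the share entered the agent's condition in a form that could not be inverted explicitly.

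The verification items in your Step 3 are not carried out in the paper's proof either: validity of the first-order approach, admissibility of $x_\alpha$, and $\alpha\in[0,1]$. So they are not a shortfall relative to the paper.
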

The proof is left to the appendix; the key idea is to solve the principal's problem by instead imposing the agent's Euler-Lagrange optimality condition as a law on $\alpha$. This yields a variational calculus problem which can be solved using standard methods.

Theorem \ref{thm:dynamic_contract} is a characterization result, and the conditions \eqref{eqn:dyn_a_lawx} and \eqref{eqn:dyn_a_alpha} implicitly fully specify the induced path of exploration and the contract. That is, given the problem setup, equation \eqref{eqn:dyn_a_lawx} gives an implicit equation for $x_\alpha$ which can be solved pointwise in $t$ to obtain the induced path of exploration; given this induced path of exploration, the condition \eqref{eqn:dyn_a_alpha} characterizes the contract $\alpha$ that implements this path.

To derive some economic intuition for the characterization, consider the following quantity:
\[ I(x,t) := \frac{(1 - F(x,t))r + F_t(x,t) }{rF_x(x,t)}c , \]
which can be interpreted as the present-value incentive required to get the agent to be willing to experiment at $(x,t)$. Alternatively, $I(x,t)$ is the static contract share the principal would have to give the agent in order for the agent to reach state $(x,t)$ on their own. Intuitively, at any state along the first-best trajectory characterized by equation \eqref{eqn:single_EL}, the incentive required is $1$ (i.e., to sustain the first-best trajectory, the agent must receive the entirety of the value of success). Note that $I$ is increasing in $x$ fixing $t$, and decreasing in $t$ fixing $x$. Then a reformulation of the share law \eqref{eqn:dyn_a_alpha} is then that 
\[ \alpha - \frac{1}{r}\dot{\alpha} = I. \]
Intuitively, incentives can be provided to the agent in two ways; either by a high share presently, or by manipulating the path of the share in the future. So the law \eqref{eqn:dyn_a_alpha} requires that $\alpha$ provides enough incentives for the agent to explore, via either present or future share.

In this lens, the law for $x$ in \eqref{eqn:dyn_a_lawx} then can be rewritten as 
\[ r F_x \left( 1 - I \right) - r F\frac{\partial I}{\partial x}  = 0 \implies \frac{\partial}{\partial x} \left[ F(1-I) \right]= 0. \]
Intuitively, this first-order condition corresponds to the requirement that at each point in time, the induced exploration pointwise maximizes a canonical notion of profit: the probability of success $F$, times the value of success net of the incentives provided to the agent, $(1-I)$.

In the following sections, I will highlight the two key economic features at play in the dynamic environment and their influences on the optimal contract using this characterization result. The first is the time-persistent effects of exploration, which arise due to the nature of the moral hazard friction. This means that dynamic commitment has value, and the principal uses it to exert endogenous time pressure on the agent. The second dynamic force is learning about the difficulty of the problem; I will show that when learning outcomes are extreme, the presence of learning can act as a countervailing force to the principal's desire to frontload agent incentives; here, the principal may wish to backload some incentives in order to encourage exploration despite growing pessimism from the agent.

\subsection{Dynamic Commitment}\label{sec:dynamic_commitment}
First, I focus on examining how dynamic commitment shapes the dynamic contract. To isolate this effect separately from the learning effect in this section, I will shut down the learning channel and focus on the known difficulty case: $\lambda_H =\lambda_E$. 

To understand the role that dynamic commitment plays, it will be first useful to consider the benchmark environment without dynamic commitment, where the principal can only offer spot contracts in every period. That is, the interaction takes the implicit heuristic timing at each instant $t$:
\begin{enumerate}
    \item The principal sets $\alpha$ for that period.
    \item The agent observes $\alpha$ and chooses a control $u = \dot{x}$, which depends on $\alpha$.
    \item Breakthrough realizes if it happens, otherwise play advances to the next period.
\end{enumerate}
The key difference in the no-dynamic commitment problem relative to the original problem \eqref{prblm:dynamic} is that the choice of $\alpha$ must satisfy Markov perfection. More precisely, the variational approach used in solving the dynamic contract used the agent's Euler-Lagrange equation; this is equivalent to maximizing the principal's choice of share, agent exploration recommendation, and agent \textit{costate} variable, where the costate variable captured the ability for the principal to commit to future utility promises; this commitment ability is not available once Markov perfection is imposed.

\begin{proposition}[No Dynamic Commitment]\label{prop:no_commit}
    When $\lambda_H =\lambda_E> 0$ and without dynamic commitment, there is a Markov perfect equilibrium where the principal offers the agent a time-invariant share $\alpha_\nc$ each period, and the agent on-path maintains a constant depth $d_\nc$, so the induced trajectory is $x_\nc(t) = t/d_\nc$.
\end{proposition}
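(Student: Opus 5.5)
The plan is to exploit a scale invariance of the known-difficulty breakthrough distribution, which reduces the Markov state from $(x,t)$ to the depth $d=t/x$. With $\lambda_E=\lambda_H=\lambda$, write $1-F(x,t)=\exp(-g(x,t))$ with $g(x,t)=\nu_0 x(1-e^{-\lambda t/x})$. The function $g$ is homogeneous of degree one in $(x,t)$, so both hazard terms
\[
\frac{F_x}{1-F}=g_x \quad\text{and}\quad \frac{F_t}{1-F}=g_t
\]
are homogeneous of degree zero; they depend only on $d$. Conditional on no breakthrough by $t$, the continuation flow hazard under control $u=\dot x$ is $g_t(d)+g_x(d)\,u$. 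The agent's flow cost is $cu$, and $\dot d=(1-du)/x$ with $x=t/d$. Hence, after normalizing by survival and discounting, both players' continuation payoffs from state $(x,t)$ depend only on $d$ and the future controls, up to a time reparametrization that does not affect the sign of any first-order condition. I would therefore look for an equilibrium in which the principal's Markov strategy is constant, $\alpha(d)\equiv\alpha_{\nc}$, and the agent's control is $u=1/d_{\nc}$, which keeps $d$ fixed at $d_{\nc}$.

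First, I fix the agent's side. Facing a constant share $\alpha$ forever, the agent's problem is exactly \eqref{eqn:continuum_single} with reward $\alpha$, equivalently cost $c/\alpha$. By Corollary \ref{corr:continuum_properties_1}, the agent's best response is linear, $x(t)=t/d(\alpha)$, where $d(\alpha)$ solves \eqref{eqn:constant_depth} with $c$ replaced by $c/\alpha$. By Proposition \ref{prop:continuum_single_comp}, $d(\alpha)$ is continuous and decreasing in $\alpha$. I also need the agent's best response to a \emph{one-instant} deviation in the current spot share when the future share is expected to return to $\alpha_{\nc}$. For this I would write the agent's HJB in the reduced state $d$ and solve for the optimal $u$ as a function of the current $\alpha$ and the continuation value $V_A(d)$. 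The key point is that the agent's continuation value is pinned down by the stationary future, so the current $\alpha$ enters only through the instantaneous reward term $\alpha\,(g_t+g_x u)$.

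Second, the principal's spot problem. Without commitment, the principal at each instant chooses $\alpha$ to maximize the flow $(1-\alpha)(g_t+g_x u(\alpha))$ plus the effect of $u$ on its continuation value $V_P(d)$ through $\dot d$. This problem involves no costate, and that absence is exactly the missing commitment. Imposing stationarity ($\dot d=0$ on path) gives two equations in the two unknowns $(\alpha_{\nc},d_{\nc})$: the agent's stationary condition $d_{\nc}=d(\alpha_{\nc})$ and the principal's spot first-order condition evaluated at $d_{\nc}$. I would show existence of a solution by an intermediate-value argument on $\alpha\in(c/\nu_0\text{-threshold},1]$. At the lower end the agent barely explores and the principal's marginal gain from raising $\alpha$ is positive; at $\alpha=1$ the principal's flow payoff vanishes, so the marginal gain is negative. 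Finally, I would verify that no one-shot deviation is profitable, using concavity of the principal's spot objective in $\alpha$ at the fixed point.

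The main obstacle is the one-instant deviation analysis. In continuous time a spot change of $\alpha$ on a null set is payoff-irrelevant, so the Markov perfection requirement has to be formalized via the HJB equations in $d$: the principal's $\alpha(d)$ must maximize its HJB right-hand side given the agent's feedback response $u(\alpha,d)$. I would handle this by working directly with the pair of HJB equations, $rV_A=\max_u[\ldots]$ and $rV_P=\max_\alpha[\ldots]$, conjecturing value functions that are constant in $d$ at the stationary point. Because the stationary point is interior, the HJB maximizations reduce to the first-order conditions above, and the fixed point from the intermediate-value step then delivers the equilibrium.
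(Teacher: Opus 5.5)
Your plan has two genuine gaps; the first is at the step where you pin down $\alpha_{\nc}$.

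\textbf{The stationary point is not interior.} You claim that because the stationary point is interior, the HJB maximizations reduce to first-order conditions. But holding depth fixed requires $u=1/d_{\nc}=x/t$, which is the upper endpoint of the admissible set $[0,x/t]$. In fact $\dot d=(1-du)/x\ge 0$ always, so depth can never fall. Moreover, the agent's HJB, $rW=\max_{u}\{(g_t+g_x u)(\alpha-W)-cu+W_x u+W_t\}$, is linear in $u$. So the agent's reaction to the current spot share is a threshold rule. The agent explores at the maximal rate, which freezes $d$, when $\alpha$ is at least the indifference share $\alpha(d)=\frac{c(r+\nu_0\lambda e^{-\lambda d})}{r\nu_0(1-e^{-\lambda d}-\lambda d e^{-\lambda d})}$, and does not explore at all below it.

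The principal's spot objective therefore jumps at this threshold. It is neither smooth nor concave in $\alpha$, and there is no spot first-order condition to pair with $d_{\nc}=d(\alpha_{\nc})$. The principal never pays more than the threshold, and its only real decision is at which depth to stop the upward drift of $d$. That is why the paper pins down $\alpha_{\nc}$ by a global comparison of stationary depths rather than by a local fixed point. It maximizes $(1-\alpha)\frac{\nu_0(1-e^{-\lambda d})}{rd+\nu_0(1-e^{-\lambda d})}$ subject to the agent's condition $d=d(\alpha)$. A maximizer exists because this payoff is continuous on $(c/\nu_0,1]$ and vanishes at both ends.

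\textbf{Off the stationary depth, a constant-in-$d$ principal strategy is not sequentially rational.} The proposition only requires the on-path share to be time-invariant, not the strategy itself.
At $d>d_{\nc}$, the agent's exploration condition holds strictly at $\alpha_{\nc}$, since $\alpha(d)$ is decreasing in $d$. So the principal can shade the current share down without changing exploration. At $d<d_{\nc}$, the agent does not explore under $\alpha_{\nc}$, so offering $0$ strictly raises the principal's flow payoff.

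Two further problems compound this. Your agent control $u=1/d_{\nc}$ is infeasible when $d>d_{\nc}$. And the reduction to the single state $d$ is not exact off the stationary manifold, because $\dot d$ depends on $x$ while discounting runs in calendar time. Checking one-shot deviations only at the stationary point, with value functions conjectured there, therefore does not establish Markov perfection.

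The paper handles this by making the principal's strategy depth-dependent. The share is $0$ when $t/\hat x<d_{\nc}$ and the indifference share $\alpha(t/\hat x)$ when $t/\hat x\ge d_{\nc}$. The agent explores at rate $x/t$ exactly when the static condition holds. Every depth $d\ge d_{\nc}$ is then absorbing with its stationary payoff. A downward deviation by the principal at $d_{\nc}$ only lets the state drift to a different frozen depth, which is worse by the definition of $\alpha_{\nc}$; an upward deviation only overpays.
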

Intuitively, in focusing on the role of commitment and shutting down the forces induced by learning, Proposition \ref{prop:no_commit} shows that the lack of dynamic commitment means there is stationarity in the problem; thus, the no-commitment MPE involves offering the same share every period. However, the next result shows that the dynamic contract induces a path which is not stationary; the principal will instead induce a decreasing share over time.

\begin{proposition}[Time Pressure]\label{prop:dec_share}
    When $\lambda_H =\lambda_E> 0$, the optimal dynamic contract:
    \begin{itemize}
        \item induces a path $x_C$ with less exploration than the first best.
        \item grants the agent a share that is strictly decreasing over time, and approaches $c/\nu_0$ in the limit as $t\to\infty$.
        \item explores asymptotically slower than the no-commitment equilibrium: 
    \( \lim_{t \to \infty} \frac{x_C(t)}{x_\nc(t)} = 0 .\)
    \end{itemize}  
\end{proposition}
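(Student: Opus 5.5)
The plan is to specialize Theorem \ref{thm:dynamic_contract} to $\lambda_E=\lambda_H=\lambda$ and use the homogeneity of $F$ to reduce everything to the depth $d=t/x$. With $g(d):=\nu_0(1-e^{-\lambda d})$ we have $1-F(x,t)=e^{-x g(t/x)}$, so
\[ \frac{F_x}{1-F}=g(d)-d g'(d)=:a(d), \qquad \frac{F_t}{1-F}=g'(d). \]
Hence the incentive $I(x,t)=c\,(r+g'(d))/(r\,a(d))$ depends only on $d$; write it $I(d)$. The paper records that $I$ is increasing in $x$ for fixed $t$, which translates to $I'(d)<0$. Proposition \ref{prop:continuum_single} and Corollary \ref{corr:continuum_properties_1} say the first best is the constant depth $d^*$ with $I(d^*)=1$.

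The first step is to rewrite the law \eqref{eqn:dyn_a_lawx} in its reformulation $\partial_x[F(1-I)]=0$. Using $I_x=-I'(d)\,d/x$, this becomes
\begin{equation*}
(1-F)\,a(d)\,\bigl(1-I(d)\bigr)=F\,|I'(d)|\,\frac{d}{x}.
\end{equation*}
The right side is strictly positive for $t>0$, so $I(d_C(t))<1$. Since $I$ is decreasing, this gives $d_C(t)>d^*$, that is, $x_C(t)<t/d^*=x_{FB}(t)$. This proves the first bullet.

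Next, the asymptotics. Substituting $1-F=e^{-x g(d)}$, the identity reads
\[ 1-I(d)=\frac{(e^{x g(d)}-1)\,|I'(d)|\,d}{x\,a(d)}. \]
Because $0<1-I<1$, the factor $|I'(d)|\,d\,e^{xg(d)}/x$ must stay bounded. As $t\to\infty$, at least one of $x,d$ diverges, and $g(d)\ge g(d^*)>0$. If $x\to\infty$, the exponential forces $|I'(d)|\to0$, which (computing $I'$ explicitly) can only happen as $d\to\infty$. If instead $x$ stays bounded, then $d=t/x\to\infty$ directly. Either way $d_C(t)\to\infty$. Since $d_\nc$ is constant by Proposition \ref{prop:no_commit}, this gives $x_C(t)/x_\nc(t)=d_\nc/d_C(t)\to0$, the third bullet. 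It also gives the limit $I(d)\to c\,r/(r\nu_0)=c/\nu_0$, because $g'(d)\to0$ and $a(d)\to\nu_0$.

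For the share, I will use \eqref{eqn:dyn_a_alpha} in the form $\alpha(t)=\int_t^\infty r e^{-r(s-t)} I(d_C(s))\,ds$. This expresses $\alpha(t)$ as a weighted average of future incentives $I(d_C(s))$, $s>t$. If $t\mapsto I(d_C(t))$ is strictly decreasing, then $\alpha(t)<I(d_C(t))$, and therefore $\dot\alpha=r(\alpha-I)<0$. Moreover $\alpha(t)\to c/\nu_0$ by the limit above. This gives the second bullet.

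So everything reduces to showing that $d_C(t)$ is strictly increasing in $t$, and I expect this to be the main obstacle. Writing the defining equation as $G(x,t)=0$ (equivalently, in variables $(x,d)$), I would apply the implicit function theorem to the map
\[ x\mapsto \frac{e^{xg(d)}-1}{x}\,|I'(d)|\,d\,a(d)^{-1}-\bigl(1-I(d)\bigr). \]
This map is increasing in $x$ for fixed $d$, since $(e^{xg}-1)/x$ is increasing. I would then check that along the solution curve, increasing $t$ cannot decrease $d$: if $d$ were fixed or decreasing while $t=xd$ grows, then $x$ grows and the left side increases, violating the equation unless the $d$-dependence compensates. This requires signing the $d$-derivative of $|I'(d)|\,d/a(d)$ relative to that of $1-I(d)$. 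That is explicit but messy algebra in $e^{-\lambda d}$, and it is where I expect the real work. A uniqueness argument for the root (monotonicity in $x$) will also be needed to make $x_C$ well defined pointwise.
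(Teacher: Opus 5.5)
Everything you actually prove is correct and follows the paper's architecture. $I$ depends only on depth, and the law is $\partial_x[F(1-I)]=0$. Your $I(d_C)<1$ is the paper's $\Delta<0$, since $\Delta=r(I-1)$. The share is a discounted average of future incentives, so it falls once $I(d_C(t))$ falls; this is the paper's $\alpha(t)=1+\int_0^\infty e^{-rs}\Delta(s+t)\,\dd{s}$. Your boundedness argument for $d_C(t)\to\infty$ is a valid alternative to the paper's, which first uses monotone depth to get a limit and then rules out a finite one.

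The gap is the step you defer: strict monotonicity of $d_C(t)$, which is the entire content of the second bullet. Your sketch also misplaces the difficulty. In the $(x,d)$ form of the law, $1-I(d)$ is increasing in $d$ and $|I'(d)|d/a(d)$ is decreasing in $d$, so when $d$ falls both already move in the direction your contradiction needs. There is nothing to sign against each other. The factor that breaks the argument is $(e^{xg(d)}-1)/x$: it rises with $x$ but falls with $d$ through $g(d)$. So ``$x$ grows, hence the left side grows'' is unjustified exactly when $d$ decreases. Likewise, monotonicity in $x$ at fixed $d$ does not give a unique root at fixed $t$, which is what defines $x_C(t)$.

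The missing idea, which is the paper's, is to split the law into two pieces. The first is a static term depending only on depth, $rH_x-rc-cH_t=ra(d)(1-I(d))$, which is increasing in $d$. The second is a distortion term, $F/(x(1-F))$ times a positive function of $d$ that is decreasing in $d$. If depth failed to increase, the first would weakly fall while the second strictly rises. In your notation it is cleanest to write $F/(x(1-F))=d\,(e^{tg(d)/d}-1)/t$ and absorb the factor $d$ into $r|I'(d)|d$. This removes exactly the sign ambiguity above:
\[ ra(d)\bigl(1-I(d)\bigr)=\frac{e^{t g(d)/d}-1}{t}\cdot r|I'(d)|d^2 . \]
The left side equals $ra(d)-c(r+g'(d))$; it is strictly increasing, vanishes at $d^*$, and is bounded by $r\nu_0$.

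On the right, $g(d)/d$ is decreasing and $(e^{tk}-1)/t$ is increasing in both $t$ and $k$. With $z=\lambda d$ and $h(z)=1-e^{-z}-ze^{-z}$, one has
\[ r|I'(d)|d^2=c\,\frac{z^2e^{-z}}{h}+\frac{rc}{\lambda\nu_0}\,\frac{z^3e^{-z}}{h^2}+c\,\frac{z^3e^{-2z}}{h^2}, \]
and each term is decreasing in $z$. These are the paper's footnote functions with one extra power of $z$. For the first, $z^2-(2-z)(e^z-1-z)=\sum_{n\ge3}(n-2)z^n/n!>0$; for the second, the analogous difference is $z^2/2+\sum_{n\ge3}(n-3)z^n/n!>0$.

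So the right side is strictly decreasing in $d$ and strictly increasing in $t$. For each $t$ the root $d_C(t)>d^*$ therefore exists and is unique, and it moves strictly right as $t$ grows. That is the monotonicity you need, and it also gives $d_C\to\infty$ directly.

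There is a shortcut if you assume the induced path is admissible. Then $\dot x\le x/t$ makes depth weakly increasing, so failure of strict monotonicity means depth is constant on an interval. There your own observation (left side strictly increasing in $x$ at fixed $d$) already gives a contradiction. However, admissibility of the law's solution is part of what this monotonicity should establish.
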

The proof of Proposition \ref{prop:dec_share} applies Theorem \ref{thm:dynamic_contract}, and is left to the appendix. Proposition \ref{prop:dec_share} implies that dynamic commitment has value to the principal by enabling the principal to exert time pressure on the agent by \textit{frontloading} incentives; the agent receives a larger share of a potential success if it arrives early on, and this share declines over time. Figure \ref{fig:dynamic_vs_static_contract} plots the comparison of the share of breakthrough granted to the agent under the dynamic contract and the no dynamic commitment benchmark. As Proposition \ref{prop:dec_share} suggests, the dynamic contract share is decreasing.

The asymptotic limit of the dynamic share approaches $c / \nu_0$; this is the infimum of all shares where the agent would be willing to explore, since expanding breadth costs the agent $c$ and yields a valid approach at Poisson rate $\nu_0$ in breadth. In consequence, in the large time horizon without success, the exploration induced by the optimal contract is asymptotically more narrow than that induced by the optimal static contract. Figure \ref{fig:dynamic_vs_static_breadth} plots the difference in breadth over time, illustrating how the optimal dynamic contract induces more exploration early, but results in much less exploration later.

\begin{figure}
    \centering
    \includegraphics[width=0.8\linewidth]{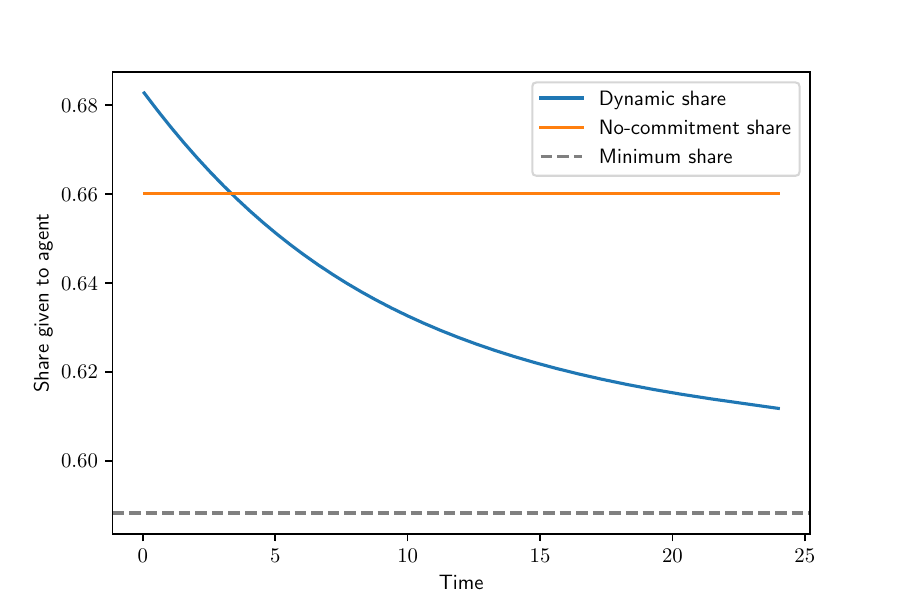}
    \caption{The optimal dynamic contract share given to the agent over time, and the share granted to the agent in the no-commitment environment. Solution plotted for $\lambda = 1$, $c = 0.5$, $\nu_0 = 0.85$, and $r=1$. }
    \label{fig:dynamic_vs_static_contract}
\end{figure}

\begin{figure}
    \centering
    \includegraphics[width=0.8\linewidth]{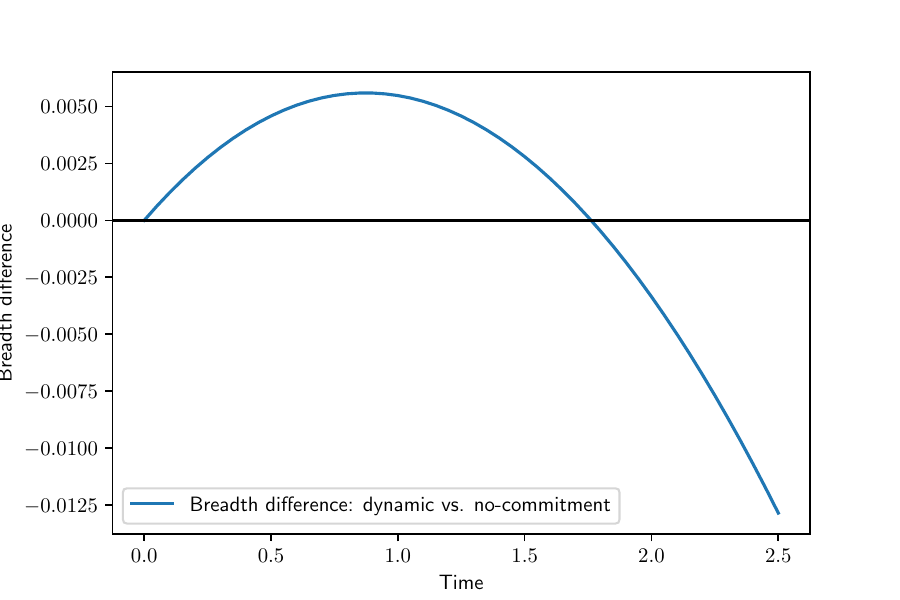}
    \caption{The path of exploration in the dynamic contract vs the no-commitment equilibrium. The difference in breadth over time is plotted.  Solution plotted for $\lambda = 1$, $c = 0.5$, $\nu_0 = 0.85$, and $r=1$. }
    \label{fig:dynamic_vs_static_breadth}
\end{figure}

\paragraph{Persistent Benefits of Exploration} In this setting, dynamic commitment has value to the principal. In the lens of \cite{fhm1990}, the key feature allowing dynamic commitment to have value is the failure of the assumption on common knowledge of future technology. In particular, because the state of exploration also has a persistent impact on the future rates of success, the agent has better information about the success arrival technology because of their private observation of breadth; this introduces a dependency in the future rate of success on past agent choices which are not revealed and not contractible to the principal. Note that this dependency is not due to learning, but the persistent benefits of exploration. Alternatively, under an interpretation of the model as a dynamic ``multitasking'' problem \citep{hm1991}, the agent's multitasking choice in any period (how much breadth versus depth to pursue) impacts the success technology and the tradeoff between breadth and depth at all future periods as well.

To illustrate this more clearly, consider another benchmark, where the agent does not face the exploration-exploitation tradeoff described previously, but instead manages an \textit{extensive} margin of effort on a single approach. That is, suppose the principal contracts with the agent to solve an ``exercise''; i.e., the agent is endowed with one approach which is commonly known to be valid. As such, the agent chooses the effort to exert on the given approach; given $K$ cumulative effort, the probability of the exercise remaining unsolved is 
\[ 1 - F(K) := e^{-\lambda K}. \]
Instead of allocating a unit of effort between breadth/depth, the agent chooses effort $k \in [0,1]$, at a marginal cost of effort $\gamma < \lambda$. The design problem is then
\begin{gather}
    \max_{\alpha(t): \mathbb{R}_+ \to [0,1]} \int_0^\infty (1-\alpha(t)) e^{-rt} \dd{F}(K^*(t)) ,\label{prblm:extensive} \\ 
    \textnormal{ subject to: } K^*  \in \arg\max_{K, \dot{K}=k \in [0,1]} \int_0^\infty \alpha(t) e^{-rt} \dd{F}(K(t))- \int_0^\infty (1 - F(K(t))) e^{-rt} \gamma k(t) \dd{t} .  \tag{IC}
\end{gather}
\begin{proposition}\label{prop:extensive}
    With dynamic commitment but moral hazard on only the extensive margin, the solution to \eqref{prblm:extensive} is a constant share.
\end{proposition}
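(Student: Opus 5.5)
The plan is to show two things. First, any contract under which the agent works gives the agent a share at least $\gamma/\lambda$ at every instant at which the agent works. Second, the constant share $\alpha \equiv \gamma/\lambda$ induces full effort $k\equiv 1$ forever. Together these mean the constant share attains a pointwise upper bound on the principal's integrand, so it solves \eqref{prblm:extensive}.

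\emph{Step 1 (the agent's problem in cumulative effort).} The state is cumulative effort $K$, and the contract depends only on calendar time $t$. The agent's flow payoff is $e^{-rt}e^{-\lambda K}k(\lambda\alpha(t)-\gamma)$. Fix a contract and a candidate agent path, and let
\[ V(t)=\int_t^\infty e^{-r(s-t)}e^{-\lambda(K(s)-K(t))}k(s)\bigl(\lambda\alpha(s)-\gamma\bigr)\,\dd{s} \]
be the agent's continuation value conditional on no success by $t$. By optimality $V(t)\ge 0$, since the agent can always set $k\equiv 0$ from $t$ on. I will compare working at rate $k$ on $[t,t+\dd{t}]$ with shirking on that interval. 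Because the contract is indexed by calendar time, shirking does not delay future rewards. Its only effect on the future is that survival, and hence the continuation, becomes more likely.

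\emph{Step 2 (pointwise necessary condition).} The comparison gives a Hamiltonian condition: $k(t)>0$ is optimal only if $\lambda\alpha(t)-\gamma-\lambda V(t)\ge 0$. Formally, I would write the Pontryagin Hamiltonian $H=e^{-rt}e^{-\lambda K}k(\lambda\alpha-\gamma)+\mu k$ and identify the costate as $\mu(t)=-\lambda e^{-rt}e^{-\lambda K(t)}V(t)$. That identification comes from integrating $\dot\mu=-\partial H/\partial K$ with transversality. The control is bang-bang, with switching function $e^{-rt}e^{-\lambda K}(\lambda\alpha(t)-\gamma-\lambda V(t))$. Since $V\ge 0$, any interval with positive effort has $\alpha(t)\ge \gamma/\lambda$ there. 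In words, promising the agent more in the future cannot buy effort today: future shares only raise the value of surviving, which raises the temptation to shirk.

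\emph{Step 3 (optimality of the constant share).} The principal's payoff is $\int_0^\infty (1-\alpha(t))e^{-rt}\lambda k(t)e^{-\lambda K(t)}\,\dd{t}$. Success arrives only when $k>0$, and by Step 2 such instants have $1-\alpha(t)\le 1-\gamma/\lambda$. Hence the payoff is at most
\[ \Bigl(1-\frac{\gamma}{\lambda}\Bigr)\int_0^\infty e^{-rt}\lambda k e^{-\lambda K}\,\dd{t}\le \Bigl(1-\frac{\gamma}{\lambda}\Bigr)\frac{\lambda}{r+\lambda}. \]
The last inequality holds because the discounted success probability is maximized by $k\equiv 1$, which front-loads effort.

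Now take $\alpha\equiv\gamma/\lambda$. The agent's flow payoff is identically zero, so $V\equiv 0$ and the agent is indifferent. Under the standard convention that indifference is broken in the principal's favour, the agent chooses $k\equiv 1$, and the upper bound is attained. If a strict incentive is required instead, I would use $\alpha\equiv\gamma/\lambda+\epsilon$ and note that the supremum is approached by constant shares. Either way the optimum is a constant share, and since the bound is pointwise, any optimum must equal $\gamma/\lambda$ almost everywhere on the effort path. The assumption $\gamma<\lambda$ guarantees this share is strictly below one, so inducing effort beats inducing none.

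\emph{Main obstacle.} The delicate step is Step 2: making the necessary condition rigorous when the agent may deviate to arbitrary measurable effort paths, so that $K$ no longer equals $t$. I would handle it with a direct needle variation, removing effort on $[t,t+h]$ and keeping the rest of the path fixed in calendar time, rather than relying on Pontryagin's theorem. A first-order expansion in $h$ should give exactly $-(\lambda\alpha(t)-\gamma-\lambda V(t))\,e^{-rt}e^{-\lambda K(t)}\,k(t)\,h$ as the change in the agent's payoff, which must be nonpositive at an optimum.
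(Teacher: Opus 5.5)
Your argument is correct, but it follows a different route from the paper's. The paper replaces the agent's incentive constraint with its Euler--Lagrange equation, imposed with equality along full effort $K(t)=t$. In effect it restricts attention to contracts that keep the agent exactly indifferent at every instant. This yields the ODE $\dot\alpha - r\alpha = -r\gamma/\lambda$, whose solutions are $\alpha(t) = \gamma/\lambda + (\alpha_0 - \gamma/\lambda)e^{rt}$. Boundedness of the share (the transversality condition) then forces $\alpha_0 = \gamma/\lambda$.

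You instead give a verification argument in three steps:
\begin{itemize}
\item a costate or needle-variation condition, $\alpha(t) \ge \gamma/\lambda + V(t) \ge \gamma/\lambda$ wherever effort is exerted;
\item a bound $(1-\gamma/\lambda)\lambda/(r+\lambda)$ on the principal's payoff, valid for every contract and every induced effort path;
\item attainment of that bound by the constant share.
\end{itemize}

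What each buys:
\begin{itemize}
\item The paper's method is mechanical, mirrors the proof of Theorem \ref{thm:dynamic_contract}, and extends directly to the learning variant in Proposition \ref{prop:isolated_backloading}. However, it never compares the indifference contract against contracts that give strict incentives or induce partial effort.
\item Your argument is global, gives uniqueness almost everywhere, and shows that the constant share extracts the full surplus.
\end{itemize}

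The step you flag as delicate is needed only for uniqueness, not for optimality. Participation alone, $U \ge 0$ (the agent can shirk forever), gives $\Pi \le \Pi + U = (\lambda-\gamma)\int_0^\infty e^{-rt}e^{-\lambda K}k\,\mathrm{d}t \le (\lambda-\gamma)/(r+\lambda)$ with no pointwise condition.

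Your tie-breaking convention is the one the paper uses implicitly, since it evaluates the principal's objective at $K(t)=t$ while the agent is indifferent. Your $\epsilon$-fallback would deliver only a supremum, not an optimum, so do not rely on it for the claim that the optimum is a constant share.
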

The result follows from straightforward variational calculus, and the details are left to the appendix. Proposition \ref{prop:extensive} shows that the intensive margin of moral hazard is also critical to the dynamic commitment result, by showing that moral hazard in a purely extensive margin does not capture the persistent benefits of exploration; as a result, the dynamic contract with commitment in the extensive margin does not feature frontloaded incentives.

\paragraph{Comparison to Literature} This contrasts with the canonical dynamic principal–agent models of \cite{r85}, \citet{ss87}, and \cite{s08}, where the principal uses dynamic commitment to backload incentives. In those settings, the agent’s effort in each period stochastically yields higher output only in that period; in that case, the principal uses rising continuation values to sustain incentives, leading to contracts in which rewards are pushed into the future. In this model, by contrast, the moral hazard arises over the agent's decision to open new approaches, taking a broader search. Because this effort not only increases the instantaneous probability of success but also permanently raises the likelihood of \textit{future} breakthrough, the principal has the strongest motive to reward the agent up front, when marginal incentives are most effective in accelerating progress. The fact that the agent manages a state over time thus results in the principal utilizing dynamic commitment in a qualitatively distinct way. 

This frontloading force is distinct from those identified by \cite{bh2005} and \cite{hs2013}, which feature a different type of moral hazard. In their models, the agent receives funds, which the agent could divert for their own personal gain rather than use towards eventual success. Since success ends the agent’s rent stream, higher continuation values create a disincentive to use funds appropriately in the short-term; as a result, the principal responds by making the future less attractive. In my setting, by contrast, the principal and agent are both aligned in wanting breakthroughs to arrive as soon as possible; the moral hazard here arises not because the agent wants to divert funds, but because of disagreement about how the agent pursues success.

\subsection{Learning About Difficulty}
Having illustrated how the nature of the moral hazard problem and the principal's dynamic commitment impel the principal to frontload incentives, I now show how learning about difficulty serves as a countervailing force. I will first focus on the most extreme learning case, where the hard problem is impossible (i.e., $\lambda_E = \lambda > \lambda_H = 0$), and then discuss what happens with $\lambda_E > \lambda_H > 0$.

When $\lambda_H = 0$, the key belief to track is the belief $\delta(x,t)$ that an agent at state $(x,t)$ has about whether the problem is hard. By Bayes' rule, this belief is given by 
\[ \delta(x,t) = \frac{\delta_0}{\delta_0 + (1-\delta_0)\exp(-\nu_0 x(1 - e^{-\lambda t/x} )) }. \]
Note that $x (1 - e^{-\lambda t/x})$ is increasing in $x$ fixing $t$. Therefore, comparing the beliefs over difficulty between two agents with different levels of exploration at the same time, the agent that has explored more (larger $x$) is more pessimistic about the difficulty of the problem. 

When the agent becomes sufficiently pessimistic about the feasibility of the problem, the agent becomes unwilling to explore new approaches. Since the principal does not bear the creativity cost that the agent incurs to explore, the principal would still like the agent to explore even if pessimistic about the difficulty of the problem; thus, to encourage the agent to continue exploring, the principal is willing to relinquish some of the share. 

With learning,
the principal eventually increases the share offered to the agent, despite it weakening the agent's momentary incentives to explore. This occurs because increasing the share offered to the agent allows the principal to induce more long-run exploration than a static contract would allow. When there is learning about impossibility, the share offered to the agent can rise over time; since better rewards in the future may induce the agent to reduce exploration, the optimal dynamic contract compensates for a rising path with a higher level of $\alpha$. This benefits the principal in the longer-term; Figure \ref{fig:learning} plots an example of the dynamic contract when learning about impossibility, comparing showing how at different priors on difficulty $\delta_0$, the contract takes different shapes. Note that when the prior on difficulty is higher (i.e., $\delta_0 = 0.3$ in the figure), the optimal contract offers a larger share of success for later successes.

\begin{figure}
    \centering
    \includegraphics[width=0.65\linewidth]{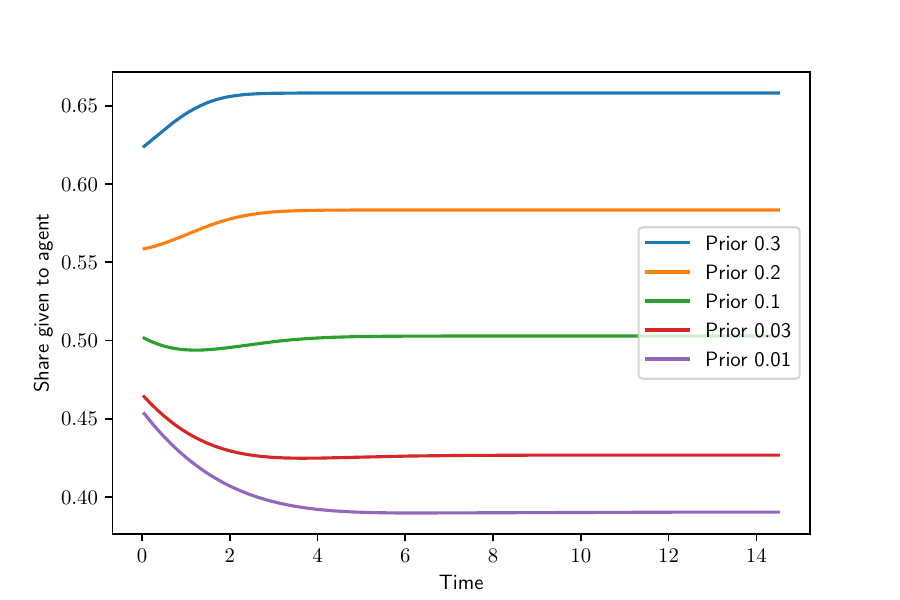}
    \caption{The optimal contract when learning about impossibility; illustrated with $\lambda_E = 3$, $\lambda_H = 0$, $\nu_0 = 0.9$, $c = 0.3$, and $r = 1$. The prior $\delta_0$ varies; each color illustrates the contract corresponding to a different prior.}
    \label{fig:learning}
\end{figure}

\paragraph{Intermediate Learning} The previous discussion focused on the two extremes of $\lambda_H \in [0,\lambda_E]$. At the upper end of the range when $\lambda_H = \lambda_E$, the principal wishes to frontload incentives as illustrated in Section \ref{sec:dynamic_commitment}; at the other end, when $\lambda_H = 0$, the contract can lead to backloading, as discussed in the previous section. In the intermediate case where $\lambda_H > 0$, the first force prevails in the long-run; the principal eventually exerts the dynamic commitment pressure.
\begin{proposition}\label{prop:asymptotic_decrease}
    When $\lambda_H > 0$, in the asymptotic limit as $t \to \infty$, $\alpha(t) \to c/\nu_0$ from above.
\end{proposition}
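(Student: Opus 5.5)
The plan is to work from the characterization in Theorem \ref{thm:dynamic_contract}, written in the reformulated form $\alpha = r e^{rt}\int_t^\infty e^{-rs} I(x_\alpha(s),s)\,\dd{s}$, where
\[
I(x,t) = \frac{(1-F(x,t))r + F_t(x,t)}{rF_x(x,t)}\,c .
\]
Thus $\alpha(t)$ is a discounted forward average of the required incentive $I$ along the induced path. The problem therefore reduces to two tasks: first, show that $I(x_\alpha(t),t)\to c/\nu_0$ along the optimal path; second, show that $I$ approaches this limit from above, at least eventually. Convergence of $\alpha$, and the direction of approach, then follow by averaging.

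\textbf{Step 1: asymptotic regime of the path.} The path $x_\alpha$ is defined pointwise by \eqref{eqn:dyn_a_lawx}, equivalently by $\partial_x[F(1-I)]=0$. I would first argue that along this path the depth $t/x_\alpha(t)\to\infty$. The known-difficulty case of Proposition \ref{prop:dec_share} already gives $x_C/x_\nc\to 0$, and I expect the same mechanism here. My approach is to substitute the ansatz $x=t/d$ with $d$ bounded and check that \eqref{eqn:dyn_a_lawx} cannot hold for large $t$. In that regime, each term $e^{-\nu_0 x(1-e^{-\lambda_\theta d})}$ vanishes at rate $\exp(-\Theta(t))$, whereas $\partial_x[F(1-I)]$ retains a sign-definite leading part. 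Once the depth diverges, every $\theta$-component is effectively fully explored: $1-e^{-\lambda_\theta t/x}\to 1$. This holds for \emph{both} states precisely because $\lambda_H>0$, which is where that hypothesis enters.

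\textbf{Step 2: expansion of $I$ in the deep regime.} Write $1-F=\sum_\theta \pi_\theta e^{-\nu_0 x g_\theta}$ with $g_\theta = 1-e^{-\lambda_\theta t/x}$, so that
\[
F_x = \sum_\theta \pi_\theta \nu_0\Bigl(g_\theta - \tfrac{\lambda_\theta t}{x} e^{-\lambda_\theta t/x}\Bigr) e^{-\nu_0 x g_\theta},
\qquad
F_t = \sum_\theta \pi_\theta \nu_0\lambda_\theta e^{-\lambda_\theta t/x} e^{-\nu_0 x g_\theta}.
\]
As $t/x\to\infty$, $F_x/(1-F)\to\nu_0$ and $F_t/(1-F)\to 0$, hence $I\to c/\nu_0$. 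For the direction, the first-order correction to $I/(c/\nu_0)$ is a positive combination of $\lambda_\theta (t/x) e^{-\lambda_\theta t/x}$ terms, coming from the numerator via $F_t/r$ and from the denominator deficit. The dominant correction comes from $\theta=H$, since $e^{-\lambda_H d}$ decays most slowly. The belief weights attached to the $H$ state converge to a positive limit, because $g_H,g_E\to 1$ and the posterior therefore returns toward the prior. Both corrections are positive, so $I(x_\alpha(t),t) > c/\nu_0$ for all large $t$.

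\textbf{Step 3: transfer to $\alpha$.} Since $\alpha(t)$ is the average $r\int_t^\infty e^{-r(s-t)} I(x_\alpha(s),s)\,\dd{s}$ of values that lie eventually strictly above $c/\nu_0$ and converge to it, we get $\alpha(t)\to c/\nu_0$ with $\alpha(t)>c/\nu_0$ for large $t$, which is the claim.

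I expect \textbf{Step 1} to be the main obstacle: pinning down enough of the growth rate of $x_\alpha$ from the implicit equation \eqref{eqn:dyn_a_lawx} to guarantee $t/x_\alpha\to\infty$. Knowing $x_\alpha$ is merely sublinear may not suffice, since the corrections in Step 2 must be controlled uniformly in the mixture over $\theta$. My first attempt would be a contradiction argument along a subsequence with bounded depth. I would use that $F/F_x$ stays bounded there and that the remaining terms of \eqref{eqn:dyn_a_lawx} converge to the known-difficulty expressions; I would then mirror the argument of Proposition \ref{prop:dec_share} to obtain a sign violation.
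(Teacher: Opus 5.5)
Your overall route is the paper's. You show that the depth $t/x_\alpha$ diverges by contradiction at bounded depth, deduce $F_x/(1-F)\to\nu_0$ and $F_t/(1-F)\to 0$ so that $I\to c/\nu_0$, and pass to $\alpha$ through $\alpha(t)=r\int_t^\infty e^{-r(s-t)}I(x_\alpha(s),s)\,\mathrm{d}s$.

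\textbf{The Step 1 plan rests on a false claim.} Along a bounded-depth path, $F/F_x$ does \emph{not} stay bounded. (By Lemma \ref{lem:increasing_depth}, depth is nondecreasing on any feasible path, so $d=t/x\to d_\infty$ and no subsequence is needed.) With $d_\infty<\infty$ you have $x\to\infty$, $1-F\to 0$ exponentially in $x$, and $F_x$ of order $1-F$, so $F/F_x$ is of order $1/(1-F)$ and explodes. That explosion is what produces the contradiction:
\begin{itemize}
\item In the bracket multiplying $F/F_x$ in \eqref{eqn:dyn_a_lawx}, the $O(1-F)$ parts cancel exactly.
\item What remains is $(1-F)$ times strictly negative terms of order $1/x$, those involving $\lambda t^2/x^3$ and $\lambda t/x^2$; compare the expression for $\Delta$ in the proof of Proposition \ref{prop:dec_share}.
\item So the distortion term is $-\Theta(1/x)$, while $rF_x-(1-F)rc-cF_t$ is only $O(1-F)$.
\item After dividing by $1-F$, the law reads ``finite constant $-\,\Theta\bigl(1/(x(1-F))\bigr)=0$'', which is impossible because $x(1-F)\to 0$.
\end{itemize}
This is the paper's argument. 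If $F/F_x$ were bounded, the distortion term would be negligible. The law would then collapse to $rF_x-(1-F)rc-cF_t=0$, which the bounded first-best depth solves, and you would get no contradiction.

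Your other phrasing of Step 1 is the correct form of the same mechanism, so use it. Write $\partial_x[F(1-I)]=F_x(1-I)-FI_x$. Here $F_x(1-I)$ is exponentially small, and $-FI_x$ is of order $-1/x$, because at leading order $I$ depends on $x$ only through $d$ and is strictly decreasing in $d$. In this form it is $F$, not $F/F_x$, that stays bounded.

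\textbf{The direction argument needs no expansion.} Neither the Step 2 expansion nor the remark that the posterior returns to the prior is needed. The latter does not follow from $g_H,g_E\to 1$ alone, since $x$ grows as well. The exact bound is immediate. With $z_\theta=\lambda_\theta t/x$,
\begin{itemize}
\item $F_x=\mathbb{E}_\theta\bigl[\nu_0(1-e^{-z_\theta}-z_\theta e^{-z_\theta})S_\theta\bigr]<\nu_0(1-F)$, and
\item $F_t>0$,
\end{itemize}
so $I(x,t)>c/\nu_0$ at every state. Your averaging formula then gives $\alpha(t)>c/\nu_0$ for all $t$, not only for large $t$. This justifies ``from above'' more cleanly than the paper, which appeals informally to the agent not exploring when $\alpha<c/\nu_0$.
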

The proof applies Theorem \ref{thm:dynamic_contract} and is left to the appendix. 

\begin{figure}
     \centering
     \begin{subfigure}[b]{0.45\textwidth}
         \centering
         \includegraphics[width=\textwidth]{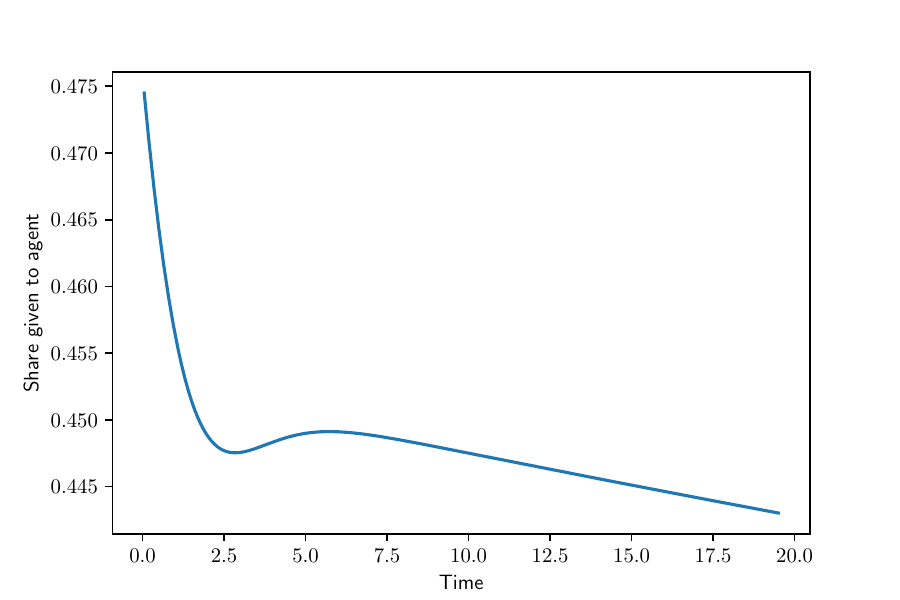}
         \caption{Share  $\alpha$ offered over time}
         \label{fig:interaction_share}
     \end{subfigure}
     \hfill
     \begin{subfigure}[b]{0.45\textwidth}
         \centering
         \includegraphics[width=\textwidth]{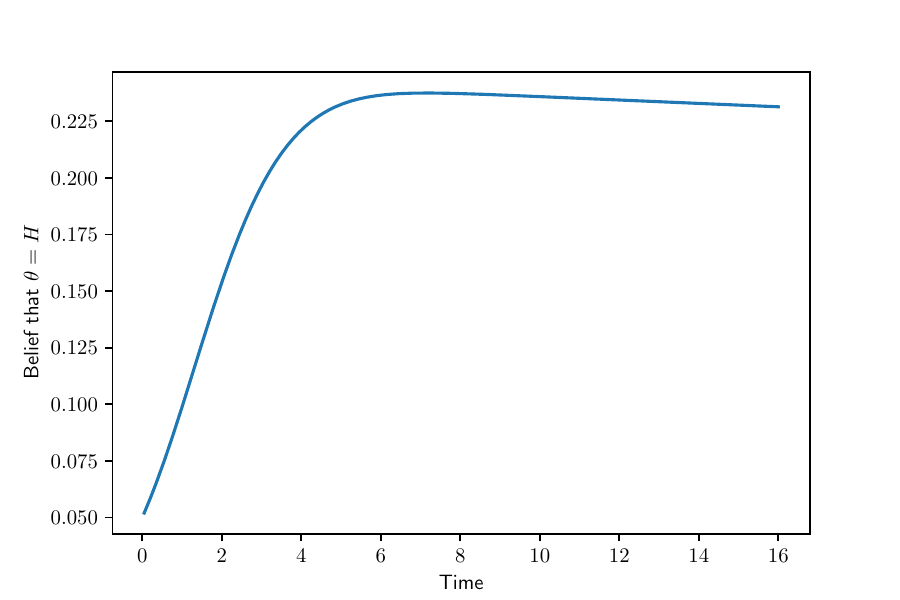}
         \caption{Belief over difficulty $\delta$}
         \label{fig:interaction_belief_d}
     \end{subfigure}
     \hfill
     \begin{subfigure}[b]{0.45\textwidth}
         \centering
         \includegraphics[width=\textwidth]{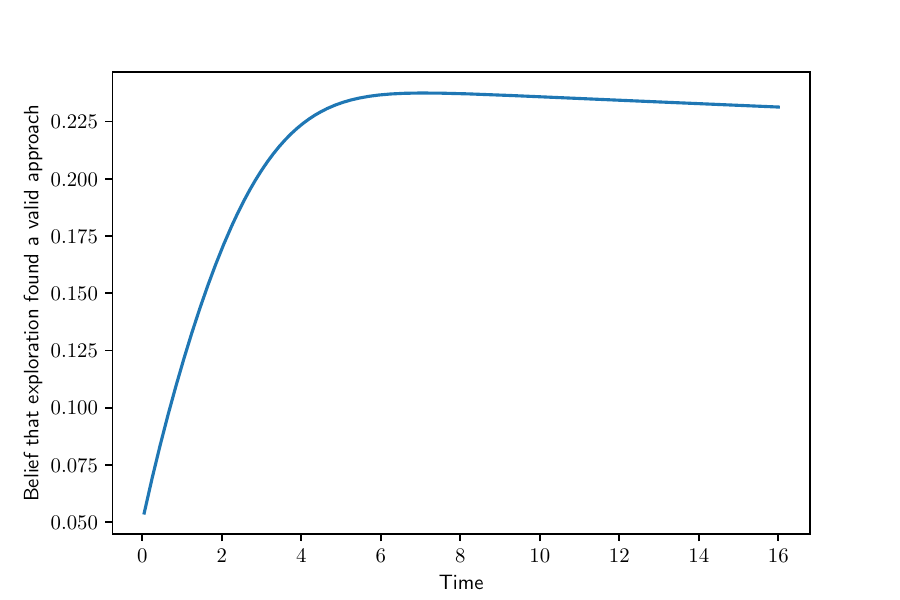}
         \caption{Belief that exploration has yielded a valid approach}
         \label{fig:interaction_belief_v}
     \end{subfigure}
     \hfill
     \begin{subfigure}[b]{0.45\textwidth}
         \centering
         \includegraphics[width=\textwidth]{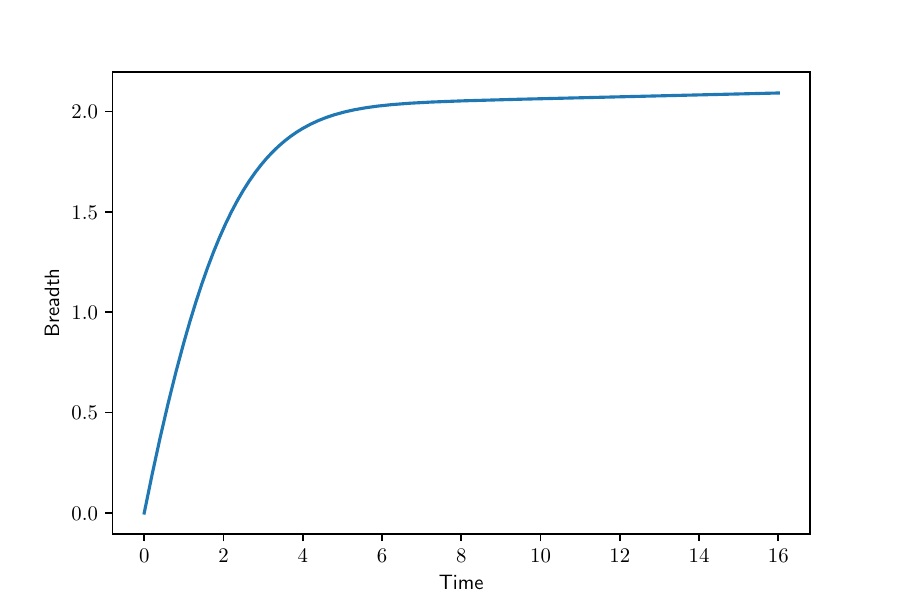}
         \caption{Exploration path induced by dynamic contract}
         \label{fig:interaction_x}
     \end{subfigure}
        \caption{The optimal contract with interaction between the principal's tradeoffs; illustrated with $\lambda_E = 3$, $\lambda_H = 0.05$, $\delta_0 = 0.05$, $\nu_0 = 0.9$, $c = 0.3$, and $r = 1$. }
        \label{fig:interaction}
\end{figure}
In general, the interaction of these two forces can lead to non-monotonicities in the optimal contract. Figure \ref{fig:interaction} illustrates this interaction, where the optimal share is neither quasiconvex nor quasiconcave. In this example, the prior is initially almost sure that the problem is easy, and the contract initially resembles the contract as if the problem were easy for sure. However, as time passes, in an intermediate region of time, the principal increases the share to encourage the agent to continue exploring despite growing pessimism; however, in the long-run, as Proposition \ref{prop:asymptotic_decrease} suggests, the frontloading incentive prevails.\footnote{To isolate the backloading effect due to learning, one can consider the analog of the extensive-margin moral hazard problem \eqref{prblm:extensive} with learning about difficulty; with a standard unidimensional moral hazard problem, the frontloading force is not present, and the optimal contract is increasing. See Proposition \ref{prop:isolated_backloading} in the appendix for a precise statement.}

\section{Robustness and Extensions} \label{sec:robustness}
In this section, I discuss how the main characterization generalizes to other environments beyond the pure exponential bandit framework in the baseline model and with heterogenous approaches.

\subsection{Broader Notions of Difficulty}\label{sec:generalization}
In the baseline model, I modeled difficulty as the perfectly correlated rate at which a valid approach yields a solution. However, the insight of Theorem 1 extends beyond this formulation to a more general notion of difficulty. Here, now suppose that instead of the Poisson rate of success arrival per unit effort on approach $i$ being given by $\lambda_\theta \omega_i$, instead suppose that it is given by an unobserved, randomly drawn rate $\lambda_i \sim_{\textnormal{i.i.d.}} G_\theta$. Here, the rate of breakthrough is drawn independently for each approach, but from an unknown distribution of rates, which the agent learns about; this learning induces the endogenous correlation between approaches. Assume $G_E$ first-order stochastically dominates $G_H$.

In order to preserve the same tradeoff effect in learning, namely the tension between pessimism about an approach's value pushing the agent to explore new approaches and the uncertainty about difficulty inducing the agent to persist with an approach rather than explore, I impose the following assumption that ensures the second part of the tradeoff holds:

\begin{customassump}{}[Difficulty Requires Patience] Let the Gittins index maximizer of the known difficulty problem when the state is $\theta$ be $\hat{K}_\theta$ (possibly infinite). Then $\hat{K}_H > \hat{K}_E$.
\end{customassump}

As before, if the cost $c$ is too large, the agent is unwilling to begin solving the problem, and the solution is degenerate. So for the sake of focusing on the interesting case, suppose that it is sufficiently small:
\begin{equation} c < \mathbb{E}_\theta\left[ \int_0^\infty e^{-rt} \lambda_\theta(t)S(t) \dd{t} \right]. \label{eqn:learning_c_assum_general}\end{equation}

Note that this generalizes the specification in the baseline model; to replicate the baseline, let $G_H$ correspond to the two-point distribution over $\{0, \lambda_H\}$ with $\nu_0$ probability of $\lambda_H$, and let $G_E$ be the two-point distribution over $\{0, \lambda_E\}$ with $\nu_0$ probability of $\lambda_E$. 

As before, it will be useful to define the survival probability and expected hazard rate:
\begin{align*}
    S_\theta(K) &:= \int e^{-\lambda K} \dd{G}_\theta(\lambda), \\
    \lambda_\theta(K) &:= \frac{\int \lambda e^{-\lambda K} \dd{G}_\theta(\lambda) }{S_\theta(K)}.
\end{align*}
Then the analogue of the Gateaux derivative \eqref{eqn:phi} is given by 
\begin{equation} \hat\phi_\theta(K) := \lambda_\theta(K)  - ( r + \lambda_\theta(K))  \left[ -c + \int_0^K e^{-rt}\lambda_\theta(t)S_\theta(t) \dd{t} \right] - e^{-rK}S_\theta(K)\lambda_\theta (K). 
    \label{eqn:hatphi}
\end{equation}
Using this, the generalization of Theorem \ref{thm:learning} becomes:

\begin{customthm}{1c}\label{thm:learning_general}
    Define the sequence of thresholds $\{ K^*_n \}_{n\in \mathbb{N}_+}$ implicitly by the equation
    \begin{equation}\label{eqn:def_tn_general}
        (1-\delta_0) S_E(K^*_n) ^n\hat\phi_E(K^*_n)  + \delta_0 S_H(K^*_n)^n \hat\phi_H(K^*_n) = 0.
    \end{equation}    
    There is at most one solution $K^*_n$ to \eqref{eqn:def_tn_general}; denote $K^*_n = \infty$ if no solution exists. The sequence $K^*_n$ is increasing, and the optimal research path of the agent generates a new approach if $\min_{i \in [N]} K_i \ge K^*_N$, else exerts effort on all approaches in $B$ with effort $1/|B|$.
\end{customthm}
The proof logic is exactly the same as in Theorem \ref{thm:learning}, and the details of the generalization are left to the online appendix.
One remark on the generalized specification: an equivalent specification of the general model would be that the success arrival time on any given approach is independently drawn from some cumulative distribution function $\hat{F}_\theta: \mathbb{R}_+ \to [0,1]$, such that $\hat{F}_\theta$ has decreasing hazard rate. The inversion to recover the distribution of rates $G_\theta$ is then given by the inverse Laplace transform: $G_\theta(\lambda) = \mathcal{L}^{-1}\{\hat{F}_\theta\}(\lambda)$.

As a final remark, with arbitrary distributions of rates (or equivalently, arbitrary arrival distributions of successes), the limit model will require different notions of convergence in order to yield a well-specified problem. 

\subsection{Heterogenous Approaches}\label{sec:heterogeneity}
The baseline model assumed that approaches are ex-ante identical, but the core insights extend even with heterogeneous arms. When approaches differ in their prior probability of being valid, the agent pursues the most probably valid approaches first. Importantly, the equalizing lemma would not equalize over historical effort, but over hazard rates. Interestingly, heterogeneity also provides a microfoundation for the Corollary \ref{corr:juggling_recall} behavior even with known problem difficulty, as increasing pessimism about future approaches also induces the agent to spend longer in between brainstorming. However, extending heterogeneous-arm models to strategic settings and applying the continuum framework present substantial technical issues; here, the equalizing lemma takes advantage of symmetry to aggregate prior approaches into a one-dimensional sufficient statistic (depth), which implies that along the optimal path, the number of approaches started is a deterministic function of the maximum effort spent on any given approach. With heterogeneity, one would have to track a measure over the arms instead of a single number; in particular, instead of equalizing effort among all approaches and splitting effort equally, the optimal policy would have to equalize the hazard rates of success arrival among approaches and split effort in a way to maintain that equality. A full treatment of heterogeneity remains an important direction for future research.

\section{Discussion and Future Work}\label{sec:conclusion}
This paper has shown that uncertainty about problem difficulty fundamentally alters the way agents pursue solutions. Because failures are ambiguous (suggesting either a poor approach or an inherently hard task) problem solvers optimally alternate between expanding their set of approaches and revisiting previously abandoned ones. This generates dynamic patterns that contrast sharply with the sequential exploration strategies predicted by standard bandit models. Developing a continuum formulation made it possible to extend the framework to strategic environments with incentive frictions, which highlight new forces in familiar economic settings. In a principal–agent environment, moral hazard arises not on the extensive margin of effort, but on the intensive margin of how broadly to search. This leads to contracts that optimally frontload incentives, in contrast to the backloaded continuation-value structures found in classical papers in the literature. 

Several avenues for future research remain open. First, examining competitive and multi-agent environments with overlapping search spaces would shed light on the interplay between competitive pressures and what approaches agents choose to pursue. Extending the continuum model to allow for richer informational spillovers across agents could shed light on the balance between duplication and diversity of effort in scientific and technological races.

Second, problems of endogenous difficulty present another possible direction. In many settings, the actions taken by problem solvers change how hard the remaining problem becomes: a bad policy implementation might affect the efficacy of future policies. Modeling problem difficulty as evolving endogenously would provide further insights into modeling real-world frictions affecting learning.

Finally, incorporating strategic feedback would allow the problem environment itself to respond to the solver’s actions. For example, strategic decisions by a receiver on whether to follow recommendations would influence the recommender's learning. It is not ex-ante obvious how strategic responses would alter a recommender's exploration-exploitation tradeoff.

By developing a model of problem solving, this paper provides a foundation for a broader research agenda. The results suggest that revisiting previously discarded approaches is a feature of problem solving under uncertainty, and that moral hazard in environments where creativity is essential introduces a reason to frontload incentives, which is counteracted by the presence of learning under uncertainty. Understanding these dynamics more fully (especially in competitive or adaptive environments) offers a promising path for advancing both theory and applications in economics of innovation, contracting, and organizational behavior.
\newpage

\bibliographystyle{agsm}
\bibliography{biblio}

\newpage

\appendix

\section{Appendix: Isolating the Backloading Force}
The analogous problem to \eqref{prblm:extensive} is as follows: given $K$ cumulative effort, the probability of the exercise remaining unsolved conditional on difficulty $\theta$ 
\[ 1 - F_\theta(K) := e^{-\lambda_\theta K}. \]
Instead of allocating a unit of effort between breadth/depth, the agent chooses effort $k \in [0,1]$, at a marginal cost of effort $\gamma < \lambda_H < \lambda_E$. The design problem is then
\begin{gather}
    \max_{\alpha(t): \mathbb{R}_+ \to [0,1]} \mathbb{E}_\theta \left[ \int_0^\infty (1-\alpha(t)) e^{-rt} \dd{F}_\theta(K^*(t)) \right] ,\label{prblm:extensive_learning} \\ 
    \textnormal{ subject to: } K^*  \in \arg\max_{K, \dot{K}=k \in [0,1]} \mathbb{E}\left[ \int_0^\infty \alpha(t) e^{-rt} \dd{F}(K(t))- \int_0^\infty (1 - F(K(t))) e^{-rt} \gamma k(t) \dd{t}\right] .  \tag{IC}
\end{gather}
\begin{proposition}
 With dynamic commitment but moral hazard on only the extensive margin, the solution to \eqref{prblm:extensive_learning} is an increasing share.
    \label{prop:isolated_backloading}
\end{proposition}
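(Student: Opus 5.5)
The plan mirrors the proof of Proposition \ref{prop:extensive}. I would first write the agent's incentive constraint as a condition on the share path, and then show that with learning the required incentive rises over time.

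Let $S(K) := (1-\delta_0)e^{-\lambda_E K} + \delta_0 e^{-\lambda_H K}$ denote the unconditional survival probability, and $h(K) := -S'(K)/S(K)$ its hazard. Because $S$ is a mixture of exponentials, $h$ is strictly decreasing, with $h(0) = (1-\delta_0)\lambda_E + \delta_0 \lambda_H$ and $h(K) \downarrow \lambda_H$. This is exactly the learning effect: the agent becomes more pessimistic about the problem as effort accumulates without success.

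\textbf{Step 1 (agent's problem).} I would write the agent's objective in (IC) as
\[ \int_0^\infty e^{-rt} S(K(t)) \left[ \alpha(t) h(K(t)) - \gamma \right] k(t) \, \dd{t}. \]
I would then treat it as an optimal control problem with state $K$ and control $k \in [0,1]$. The Hamiltonian is linear in $k$, so the agent's optimal effort is bang-bang, with the switching function given by the costate.

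I would then apply the same trick as in Theorem \ref{thm:dynamic_contract}: rather than optimizing over $\alpha$ directly, I impose the agent's first-order (costate) condition as a law on $\alpha$. Suppose the principal implements full effort, $K(t) = t$. Shirking for an instant at $t$ delays the entire future effort path by $\dd{t}$. Computing this Gateaux derivative and requiring it to be nonpositive should give, in the cheapest implementing contract,
\[ \alpha(t) - \tfrac{1}{r}\dot{\alpha}(t) = I(t) := \frac{\gamma}{h(t)}, \]
which is the direct analogue of the law $\alpha - \dot{\alpha}/r = I$ in the main text. Integrating forward with the transversality (boundedness) condition yields
\[ \alpha(t) = \int_0^\infty r e^{-ru} I(t+u) \, \dd{u}. \]
As a sanity check, when $\lambda_H = \lambda_E$ the function $I$ is constant and this recovers the constant share of Proposition \ref{prop:extensive}. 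With learning, $I$ is strictly increasing because $h$ is strictly decreasing. Hence $\alpha$, being an exponentially weighted average of future values of $I$, is strictly increasing, and it converges to $\gamma/\lambda_H < 1$, so it is feasible.

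\textbf{Step 2 (the principal implements full effort at all times).} This is the step I expect to be the main obstacle: showing that the optimal contract induces $k \equiv 1$ forever, rather than a stopping time or an interior effort path.

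My first attempt would use linearity in $k$ together with the assumption $\gamma < \lambda_H$. Conditional on each state $\theta$, the joint surplus per unit of effort, $\lambda_\theta - \gamma$ scaled by the survival probability, is positive at every belief. Inducing effort at $t$ requires promising at most $I(t) \le \gamma/\lambda_H < 1$ of the prize. So I would show that the principal's marginal value of an instant of additional effort at $t$, net of the binding incentive cost, is
\[ h(t)\left(1 - \alpha(t)\right) + (\text{continuation gains}), \]
and that this is strictly positive because $\alpha < 1$ everywhere.

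To rule out interior or paused effort paths, I would use an interchange argument in the spirit of Lemma \ref{lem:increasing_thresholds}. Take any contract that induces a pause. Moving the paused effort earlier weakly raises the principal's payoff, because of discounting, and it also lowers the incentive burden, because the required incentive $I$ is evaluated at a lower $K$ where $h$ is higher. Hence the optimum has no pauses.

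\textbf{Step 3 (optimality and conclusion).} Combining the two steps, the principal solves a variational problem over $\alpha$ with $K(t) = t$ fixed. The principal's payoff $\int_0^\infty (1-\alpha)e^{-rt}\, \dd{F}$ is decreasing in $\alpha$ pointwise, so the IC constraint binds. The cheapest $\alpha$ satisfying the constraint is therefore the one computed in Step 1, which is strictly increasing.
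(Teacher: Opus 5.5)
Your Steps 1 and 3 are the paper's argument. The paper also replaces (IC) by the agent's first-order condition, obtains $\alpha-\dot\alpha/r=\gamma/\mathbb{E}_\theta[\lambda_\theta\mid t]$, and integrates.

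Your bounded solution $\alpha(t)=\int_0^\infty re^{-ru}I(t+u)\,\mathrm{d}u$ is the correct one, and monotonicity follows from it immediately. The paper's choice $\alpha_0=\gamma/\lambda_H$ is not the bounded solution: since $h>\lambda_H$, it makes $\alpha-\gamma/\lambda_H$ grow like $e^{rt}$. So your version of that step is cleaner.

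Step 3 still needs one more argument. The pause deviation gives only the tail conditions $\int_t^\infty w(s)\left(\beta(s)-I(s)\right)\mathrm{d}s\ge 0$ for all $t$, where $w(s)=e^{-rs}S(s)h(s)$ and $\beta=\alpha-\dot\alpha/r$. So ``the payoff is decreasing in $\alpha$'' does not by itself force $\beta=I$ pointwise. What does force it is the following: with $G(t)=\int_t^\infty w\beta$, the principal's cost is $\int_0^\infty w\alpha=\int_0^\infty G(s)\rho'(s)\,\mathrm{d}s$, where $\rho=r(1-S)/(Sh)$ is increasing. Hence the cheapest feasible choice is $G(t)=\int_t^\infty wI$.

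The genuine gap is Step 2. The paper never attempts it: its relaxed problem simply sets $K(t)=t$ in the principal's objective. And the claim is false in general.

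Your marginal-value heuristic misses a rent effect. Each $\alpha(s)$ is a discounted average of future $I$, so sustaining incentives at a later date raises every earlier share. Concretely, for $T>0$ let
$\alpha_T(t)=e^{-r(T-t)}I(T)+\int_t^T re^{-r(s-t)}I(s)\,\mathrm{d}s$ for $t<T$, and $\alpha_T(t)=I(T)$ for $t\ge T$.

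This contract implements full effort up to $K=T$ and then stopping. Delaying any unit strictly lowers its value, and units beyond $T$ have nonpositive value. Since $\partial_T\alpha_T(t)=e^{-r(T-t)}I'(T)$, the principal's payoff $\Pi_T=\int_0^T w(1-\alpha_T)$ satisfies
\[
\frac{\mathrm{d}\Pi_T}{\mathrm{d}T}=e^{-rT}\left[S(T)h(T)\left(1-I(T)\right)-\left(1-S(T)\right)I'(T)\right].
\]
Inside the bracket, the first term decays like $e^{-\lambda_H T}$ and the second like $e^{-(\lambda_E-\lambda_H)T}$. So for $\lambda_H<\lambda_E<2\lambda_H$ the derivative is eventually negative. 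For example, with $\lambda_H=1$, $\lambda_E=1.5$ and $\gamma=\delta_0=0.5$, it is already negative at $T=4$.

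It follows that some finite $T$ strictly beats $\lim_{T\to\infty}\Pi_T$. By the $\rho$-argument above, that limit is the best payoff among all contracts inducing $k\equiv 1$ forever. So Step 2 cannot be established.

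You should drop Step 2 and prove the result for the cheapest share path implementing full effort, which is all the paper's argument actually covers. The unrestricted problem, where the principal may stop inducing effort, would need a separate treatment.
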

\begin{proof}
    As before, replace the constraint with the Euler-Lagrange condition for the variational problem of the agent. Note that the agent's Lagrangian has partial derivative in the state given by 
    \[\frac{\partial \mathcal{L}_A}{\partial K} =\mathbb{E}_\theta\left[ - \alpha(t)e^{-rt}\lambda_\theta^2 e^{-\lambda_\theta K} \dot{K} + \lambda_\theta e^{-\lambda_\theta K}e^{-rt}\gamma\dot{K} \right]. \]
    The agent's partial derivative in the control is 
    \[ \frac{\partial \mathcal{L}_A}{\partial \dot{K}} = \mathbb{E}_\theta \left[ \alpha(t) e^{-rt} \lambda_\theta e^{-\lambda_\theta K} - e^{-\lambda_\theta K} e^{-rt}\gamma \right].\]
    Taking the time derivative,
    \[ \frac{\dd{}}{\dd{t}} \frac{\partial \mathcal{L}_A}{\partial \dot{K}} = \mathbb{E}_\theta \left[ \dot{\alpha} e^{-rt} \lambda_\theta e^{-\lambda_\theta K} - r\alpha e^{-rt} \lambda_\theta e^{-\lambda_\theta K} - \alpha e^{-rt}\lambda_\theta^2 e^{-\lambda_\theta K} \dot{K} + r e^{-\lambda_\theta K} e^{-rt}\gamma  + \lambda_\theta \dot{K} e^{-\lambda_\theta K} e^{-rt}\gamma \right], \]
    and so the Euler-Lagrange condition gives 
    \[ \mathbb{E}_\theta \left[ \dot{\alpha} e^{-rt} \lambda_\theta e^{-\lambda_\theta K} - r\alpha e^{-rt} \lambda_\theta e^{-\lambda_\theta K}  + r e^{-\lambda_\theta K} e^{-rt}\gamma \right] = 0. \]
    The principal's relaxed design problem becomes
    \begin{gather*}
    \max_{\alpha} \mathbb{E}_\theta\left[\int_0^\infty (1-\alpha)e^{-rt} \lambda_\theta e^{-\lambda_\theta t} \dd{t}\right] \\ 
    \textnormal{ subject to: } \dot{\alpha} - r\alpha  = - \frac{r \gamma \mathbb{E}_\theta \left[ e^{-\lambda_\theta K} \right] }{\mathbb{E}_\theta\left[ \lambda_\theta e^{-\lambda_\theta K}  \right]} = - \frac{r\gamma}{\mathbb{E}_\theta[\lambda_\theta | t]}.
\end{gather*}
The IC constraint for the agent can be explicitly solved in the form
\begin{equation} \alpha(t) = \alpha_0 e^{rt} - e^{rt}\int_0^t e^{-rs} \frac{r \gamma}{\mathbb{E}_\theta[\lambda_\theta|s]}\dd{s}. \label{eqn:alpha_extensive} \end{equation}
The only choice of $\alpha_0$ such that the share satisfies the transversality condition is 
\( \alpha_0 = \frac{\gamma}{\lambda_H},\)
since in the large $t$ limit the expected rate conditional on survival converges to $\lambda_H$. 
To show therefore that the induced share is increasing, take the derivative:
\begin{align*} \dot{\alpha}(t) &= r\frac{\gamma}{\lambda_H} e^{rt} - re^{rt} \int_0^t e^{-rs} \frac{r \gamma}{\mathbb{E}_\theta[\lambda_\theta|s]}\dd{s} - \frac{r \gamma}{\mathbb{E}_\theta[\lambda_\theta|t]}\\
&= re^{rt} \int_0^t e^{-rs}\frac{r\gamma}{\lambda_H}- re^{rt} \int_0^t e^{-rs} \frac{r \gamma}{\mathbb{E}_\theta[\lambda_\theta|s]}\dd{s} + \frac{r \gamma }{\lambda_H} - \frac{r \gamma}{\mathbb{E}_\theta[\lambda_\theta|t]} \\
&= re^{rt} \int_0^t re^{-rs} \gamma \left(\frac{1}{\lambda_H}- \frac{1}{\mathbb{E}_\theta[\lambda_\theta|s]} \right) \dd{s} + r\gamma\left(\frac{1}{\lambda_H}- \frac{1}{\mathbb{E}_\theta[\lambda_\theta|t]} \right) 
\ge 0,
\end{align*}
where inequality at the end follows because $\mathbb{E}_\theta[\lambda_\theta|t] \ge \lambda_H$.
\end{proof}
\section{Appendix: Omitted Proofs}
\subsection{Omitted Proofs in Section \ref{sec:analysis}}
\subsubsection{Omitted Proofs in Section \ref{sec:benchmark}}
Before I prove Proposition \ref{prop:benchmark}, it is useful to note that an equivalent formulation of \eqref{defn:taustar} is 
\begin{equation}  \left(1+   \frac{c(r+\lambda)}{\lambda(1-\nu_0)} \right)=  \frac{\lambda}{r+\lambda}e^{-r\tau_D} +  \left(\frac{r}{r+\lambda} - \frac{cr}{\nu_0\lambda}\right)e^{\lambda\tau_D}. \label{defn:taustar_appendix} \end{equation}
Since the derivation is purely mechanical, it is left to the online appendix.
\begin{proof}[Proof of Proposition \ref{prop:benchmark}]
Consider the following equivalent multi-armed bandit representation, where there are a countably infinite set of arms. Each arm $i$ instead starts in an ``undiscovered'' state $U$, which transitions to the state ``discovered'' $D$ immediately but induces cost $c$. In state $D$, the arm then transitions into a success state $S$ at rate $\lambda \omega_i$. The arm yields a flow reward $r$ when in state $S$, and no flow rewards otherwise. (Although the problem was initially defined as success bringing a lump-sum reward of size 1, this is mathematically equivalent to a perpetual flow of $r$.) That is, let $U_t$ be 0 if the Markov state is $U$ or $D$ at time $t$, and $r$ if it is $S$. Following Corollary 2.1 of \cite{bk2007}, define the continuous-time Gittins index process for an arm at time $t$ as 
\[ g_t :=  \sup_{\tau} \frac{\mathbb{E}\left.\left[ \int_0^\tau e^{-rs } U_s \dd{s} \right| \mathcal{F}_t\right]}{\mathbb{E}\left.\left[ \int_0^\tau e^{-rs} \dd{s} \right| \mathcal{F}_t\right]}. \]

It is straightforward to see that if the Markov chain is in state $S$ at time $t$, the Gittins index process must be $g(S) = r$, and that if the chain transitions to state $S$, the stopping time that generates the highest average discounted payoff must be $\infty$, since state $S$ generates flow payoff $r$, which is larger than the average discounted flow payoff from any strategy that starts in a state that is not $S$.
Further, if the Markov chain is in state $D$ and $K$ effort has been exerted without a transition, \[ \mathbb{E}[\omega_i | \mathcal{F}_t]  = \nu_t := \frac{\nu_0 e^{-\lambda K}}{\nu_0 e^{-\lambda K} + 1 - \nu_0}. \]
Since a stopping time that is $\mathcal{F}_t$-measurable specifies after how long to stop in each state, I can compute the Gittins index process with respect to the stopping times in $D$, so
\begin{align*} g(D,t)  &= \sup_{\tau_D} \left \{ \frac{\nu_t \int_0^{\tau_D} \lambda e^{-\lambda t_D} \left(\int_0^{\infty} re^{-r(t_D + t_S)} \dd{t_S}  \right) \dd{t_D} }{\nu_t \int_0^{\tau_D} \lambda e^{-\lambda t_D} \left(\int_0^{\infty} e^{-rs} \dd{s}  \right) \dd{t_D}  + (1-\nu_t+\nu_te^{-\lambda \tau_D}) \int_0^{\tau_D} e^{-rs}\dd{s}} \right \}, \\
&= \sup_{\tau_D} \left \{ \left( \frac{r\lambda \nu_t}{\lambda + r} \right)\frac{ \left(1 - e^{-(r+\lambda)\tau_D}\right)}{\nu_t \int_0^{\tau_D} \lambda e^{-\lambda t_D} \dd{t_D} + (1-\nu_t+\nu_te^{-\lambda \tau_D})\left(1 - e^{-r\tau_D} \right)} \right \},\\
&= \sup_{\tau_D} \left \{ \left( \frac{r\lambda \nu_t}{\lambda + r} \right)\frac{ \left(1 - e^{-(r+\lambda)\tau_D}\right)}{1- e^{-r\tau_D}(1-\nu_t+\nu_te^{-\lambda \tau_D}) } \right \}.
\end{align*}
Note that $\int_0^\infty e^{-rs} \dd{s} = 1/r$, so the first stpe just multiplies the numerator and denominator by $r$ and simplifies.
It is not hard to check that the expression is strictly decreasing in $\tau_D$, and so attains the sup in the limit as $\tau_D \to 0$, so 
\[ g(D,t) = \frac{r\lambda \nu_t}{r + \lambda \nu_t}. \]
Now, one can back out the Gittins index process when the chain is in state $U$. Note that regardless of the prior history of effort in state $U$, the payoff-relevant expectation $\mathbb{E}[\omega_i | \mathcal{F}_t] = \nu_0$ when in state $U$. A $\mathcal{F}_t$-measurable stopping time consists of three stopping thresholds, one for each state: $\tau_U, \tau_D, \tau_S$. However, by the memoryless property of the exponential distribution, any stopping time maximizing the time-discounted average payoff must not stop in state $U$, so $\tau_U = \infty$, and as argued previously the average-payoff-maximizing stopping time must also set $\tau_S = \infty$. Hence, it is sufficient to take the sup over $\tau_D$. Given this, the discounted payoff of this stopping rule is given by straightforward but tedious calculation: 
\begin{align*} 
-c + \mathbb{E}\left[\int_0^\tau e^{-rs}U_s \dd{s} \right] =& -c + \left( \nu_0 \int_0^{\tau_D} \lambda e^{-\lambda t_D} \left(\int_0^{\infty} e^{-r(t_D + t_S)} r \, \dd{t_S}\right) \dd{t_D} \right)  \\
=& -c + \left( \nu_0 \int_0^{\tau_D} \lambda e^{-\lambda t_D}e^{-rt_D} \dd{t_D} \right)  \\
=&-c +  \nu_0 \frac{\lambda}{r+\lambda}\left(1- e^{-(r+\lambda) \tau_D}\right)
\end{align*}
The expected discounted time spent is also straightforward to compute: 
\begin{align*} 
\mathbb{E}\left[\int_0^\tau e^{-rs} \dd{s} \right] 
=&\frac{1}{r}\mathbb{E}\left.\left[\int_0^\tau re^{-rs} \dd{s} \right| D\right] \\
=& \frac{1}{r}\left( \nu_0 (1 - e^{-\lambda \tau_D}) \int_0^\infty re^{-rs} \dd{s} + (1 - \nu_0 + \nu_0e^{-\lambda\tau_D})\int_0^{\tau_D} re^{-rs} \dd{s}\right) \\
=& \frac{1}{r}\left( \nu_0 (1 - e^{-\lambda \tau_D}) + (1 - \nu_0 + \nu_0e^{-\lambda\tau_D})\left(1-e^{-r\tau_D} \right)\right) \\
=& \frac{1}{r}\left( 1 - e^{-r\tau_D}(1 - \nu_0 + \nu_0e^{-\lambda\tau_D}) \right)
\end{align*}
And so the Gittins index is:
\begin{align*} g(U)  &= \sup_{\tau_D} \left\{ \frac{-c +  \nu_0 \frac{\lambda}{r+\lambda}\left(1- e^{-(r+\lambda) \tau_D}\right)}{\frac{1}{r}\left( 1 - e^{-r\tau_D}(1 - \nu_0 + \nu_0e^{-\lambda\tau_D}) \right)} \right\}, \\
&= \sup_{\tau_D} \left\{ \frac{-cr + \nu_0 \left(\frac{r\lambda}{r+\lambda}\right)\left(1- e^{-(r+\lambda) \tau_D}\right)}{1 - e^{-r\tau_D}(1 - \nu_0 + \nu_0e^{-\lambda\tau_D})} \right\}.
\end{align*}
Some straightforward algebra yields the following lemma:
\begin{lemma}\label{lem:gittins}
    The Gittins index $g(U)$ is attained when $\tau_D$ is the solution to \eqref{defn:taustar}. Further, the objective 
    \[ \frac{-cr + \nu_0 \left(\frac{r\lambda}{r+\lambda}\right)\left(1- e^{-(r+\lambda) \tau_D}\right)}{1 - e^{-r\tau_D}(1 - \nu_0 + \nu_0e^{-\lambda\tau_D})} \]
    is strictly increasing for $\tau_D < K^*$, and strictly decreasing for $\tau_D > K$.
\end{lemma}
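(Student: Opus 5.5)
The plan is to study the objective
\[ h(\tau) := \frac{N(\tau)}{D(\tau)}, \quad N(\tau) = -cr + \nu_0 \frac{r\lambda}{r+\lambda}\left(1-e^{-(r+\lambda)\tau}\right), \quad D(\tau) = 1 - e^{-r\tau}(1-\nu_0+\nu_0 e^{-\lambda\tau}), \]
through the sign of its derivative. Since $D > 0$ for $\tau>0$, we have $\operatorname{sign} h'(\tau) = \operatorname{sign} \psi(\tau)$, where $\psi := N' D - N D'$.

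\textbf{Step 1: compute the two derivatives.} A direct computation gives
\[ N'(\tau) = \nu_0 r\lambda e^{-(r+\lambda)\tau} \quad\text{and}\quad D'(\tau) = e^{-r\tau}\left(r(1-\nu_0+\nu_0e^{-\lambda\tau}) + \lambda\nu_0 e^{-\lambda\tau}\right). \]
Both share the factor $e^{-r\tau}S(\tau)$, where $S(\tau) = 1-\nu_0+\nu_0e^{-\lambda\tau}$. Indeed,
\[ N' = e^{-r\tau}S\,r\lambda\nu(\tau) \quad\text{and}\quad D' = e^{-r\tau}S\,(r+\lambda\nu(\tau)), \]
with $\nu(\tau)$ the posterior from \eqref{eqn:interim_belief}. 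Dividing out $e^{-r\tau}S(r+\lambda\nu)>0$, the sign of $h'$ equals the sign of
\[ \chi(\tau) := \frac{r\lambda\nu(\tau)}{r+\lambda\nu(\tau)}\,D(\tau) - N(\tau). \]
In words, $h'$ is positive exactly when the "flow index" $g(D,\tau)=r\lambda\nu/(r+\lambda\nu)$ computed above exceeds the running average $h(\tau)=N/D$. This is the standard structure of a Gittins-type ratio.

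\textbf{Step 2: show $\chi$ has a single sign change.} Differentiate $\chi$. The terms $g\cdot D' - N'$ cancel identically, because $gD' = e^{-r\tau}S\,r\lambda\nu = N'$. Hence
\[ \chi'(\tau) = g'(\tau) D(\tau) < 0 \quad \text{for } \tau>0, \]
since $\nu$ is strictly decreasing, so $g$ is strictly decreasing, and $D>0$. Thus $\chi$ is strictly decreasing. At $\tau=0$ we have $\chi(0) = cr > 0$. As $\tau\to\infty$, $\nu\to 0$, $D\to 1$, and $N\to -cr+\nu_0 r\lambda/(r+\lambda)$, which is strictly positive by the standing assumption $c<\nu_0\lambda/(r+\lambda)$; hence $\chi(\infty) <0$. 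So $\chi$ has a unique root $K^*$, with $h$ strictly increasing before $K^*$ and strictly decreasing after. This gives both the attainment of the supremum at $K^*$ and the claimed monotonicity.

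\textbf{Step 3: identify the root with \eqref{defn:taustar}.} The root satisfies $h(K^*) = g(K^*) = \lambda\nu(K^*)\cdot r/(r+\lambda\nu(K^*))$. To match the left side of \eqref{defn:taustar}, note that it is a geometric series in $e^{-rK^*}S(K^*)$. Summing it gives
\[ \frac{N(K^*)/r}{1-e^{-rK^*}S(K^*)} = \frac{h(K^*)}{r}, \]
since $\nu_0\int_0^{K}\lambda e^{-(\lambda+r)t}\,dt = N/r + c$. The equation $h/r = g/r$ is then exactly \eqref{defn:taustar}, whose right side is $\lambda\nu/(\lambda\nu+r)=g/r$. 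Uniqueness of the solution of \eqref{defn:taustar} follows from Step 2.

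The main obstacle is spotting the cancellation in $\chi'$, which makes the single-crossing argument immediate. Without it, one would have to grind through the explicit exponential form \eqref{defn:taustar_appendix}, which is messier.
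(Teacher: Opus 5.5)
Your proof is correct, but it takes a different route from the paper's. The paper differentiates the ratio directly. After several lines of expansion it factors $dG/d\tau_D$ as a positive term times the explicit bracket $\bigl(1+\frac{c(r+\lambda)}{\lambda(1-\nu_0)}\bigr)-\frac{\lambda}{r+\lambda}e^{-r\tau_D}-\bigl(\frac{r}{r+\lambda}-\frac{cr}{\nu_0\lambda}\bigr)e^{\lambda\tau_D}$. Single crossing then comes from a second-order argument: the bracket's derivative is $e^{\lambda\tau_D}$ times a decreasing function, so the bracket is quasiconcave, positive at $0$, and tends to $-\infty$. The paper matches the root to \eqref{defn:taustar} only through a separate algebraic equivalence with \eqref{defn:taustar_appendix}, carried out in the supplemental material.

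Your normalization by $e^{-r\tau}S(r+\lambda\nu)$, together with the identity $gD'=N'$, replaces all of this. The relation $\chi'=g'D$ reduces single crossing to monotonicity of the posterior, and the root condition $h=g$ is \eqref{defn:taustar} in its original geometric-series form. Your route buys several things:
\begin{itemize}
\item It gives Lemma \ref{lem:gittins_2} at no cost, since at the root $g(U)=h(K^*)=g(D,K^*)$.
\item One checks that $\chi=\frac{r}{r+\lambda\nu}\phi$ with $\phi$ as in \eqref{eqn:phi}, so your cancellation is the same one the paper later uses to prove Lemma \ref{lem:phi_decrease}. Your argument therefore ties the Gittins threshold directly to the root of the Gateaux derivative.
\item It uses only $S'=-\lambda(\cdot)S$ and a decreasing posterior hazard, so the same lines give the single-crossing property in the general-rate model of Section \ref{sec:generalization}, where the paper merely asserts it.
\end{itemize}
What the paper's route buys is the explicit function $\varphi(\lambda,c,K)$, which it reuses to sign the implicit-function derivatives in Proposition \ref{prop:baseline_comp}.
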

\begin{proof}[Proof of Lemma \ref{lem:gittins}]
    First, for obtaining the sup over $\tau_D$, omit the constant $r$. 
    To maximize 
    \[ G(\tau_D) =  \frac{-c + \nu_0 \left(\frac{\lambda}{r+\lambda}\right)\left(1- e^{-(r+\lambda) \tau_D}\right)}{1 - e^{-r\tau_D}(1 - \nu_0 + \nu_0e^{-\lambda\tau_D})} , \]
    take the derivative 
    \begin{eqnarray*} \frac{dG}{d\tau_D} &=& \frac{\left( \nu_0\lambda \right)e^{-(r+\lambda)\tau_D}\left(1 - e^{-r\tau_D}(1 - \nu_0 + \nu_0e^{-\lambda\tau_D})  \right)}{\left(1 - e^{-r\tau_D}(1 - \nu_0 + \nu_0e^{-\lambda\tau_D}) \right)^2}, \\
    &&- \, \frac{\left(-c + \nu_0 \left(\frac{\lambda}{r+\lambda}\right)\left(1- e^{-(r+\lambda) \tau_D}\right)\right)\left(re^{-r\tau_D}(1 - \nu_0 + \nu_0e^{-\lambda\tau_D}) + \nu_0\lambda e^{-(\lambda+r)\tau_D} \right)}{\left(1 - e^{-r\tau_D}(1 - \nu_0 + \nu_0e^{-\lambda\tau_D}) \right)^2}. \end{eqnarray*}
    Expanding each of the numerators,
    \begin{eqnarray*}
        \frac{dG}{d\tau_D} &=& \frac{\left( \nu_0\lambda \right)e^{-(r+\lambda)\tau_D}\left(1 - (1-\nu_0)e^{-r\tau_D} - \nu_0e^{-(r+\lambda)\tau_D}) \right) }{\left(1 - e^{-r\tau_D}(1 - \nu_0 + \nu_0e^{-\lambda\tau_D}) \right)^2}, \\
        && - \, \frac{\left(-c + \nu_0 \left(\frac{\lambda}{r+\lambda}\right)\left(1- e^{-(r+\lambda) \tau_D}\right)\right)\left( re^{-r\tau_D} (1-\nu_0) + \nu_0(r+\lambda)e^{-(r+\lambda)\tau_D}  \right)}{\left(1 - e^{-r\tau_D}(1 - \nu_0 + \nu_0e^{-\lambda\tau_D}) \right)^2}.
    \end{eqnarray*}   
Pulling out common factor $e^{-(r+\lambda)\tau_D}$, the expression becomes
    \begin{eqnarray*}
        \frac{dG}{d\tau_D} &=&e^{-(r+\lambda)\tau_D}\frac{\left(\nu_0\lambda \right)\left( 1 - (1-\nu_0)e^{-r\tau_D} -  \nu_0e^{-(r+\lambda)\tau_D}) \right) }{\left(1 - e^{-r\tau_D}(1 - \nu_0 + \nu_0e^{-\lambda\tau_D})\right)^2}, \\
        && - \, e^{-(r+\lambda)\tau_D}\frac{\left(-c + \nu_0 \left(\frac{\lambda}{r+\lambda}\right)\left(1- e^{-(r+\lambda) \tau_D}\right)\right)\left( (1-\nu_0)re^{\lambda\tau_D} + \nu_0(r+\lambda) \right)}{\left(1 - e^{-r\tau_D}(1 - \nu_0 + \nu_0e^{-\lambda\tau_D}) \right)^2}.
    \end{eqnarray*}   
    Distributing the numerators,
    \begin{eqnarray*}
        \frac{dG}{d\tau_D} &=& e^{-(r+\lambda)\tau_D}\frac{  \nu_0\lambda - (\nu_0\lambda)(1-\nu_0)e^{-r\tau_D} -  \nu_0(\nu_0\lambda)e^{-(r+\lambda)\tau_D})  }{\left(1 - e^{-r\tau_D}(1 - \nu_0 + \nu_0e^{-\lambda\tau_D}) \right)^2} ,\\
        && - \, e^{-(r+\lambda)\tau_D}\frac{\left(\frac{\nu_0(1-\nu_0)r \lambda}{r+\lambda}\right)\left(e^{\lambda\tau_D}- e^{-r \tau_D}\right)  + \nu_0(\nu_0\lambda)\left(1- e^{-(r+\lambda) \tau_D}\right)}{\left(1 - e^{-r\tau_D}(1 - \nu_0 + \nu_0e^{-\lambda\tau_D}) \right)^2}, \\
        && - \, e^{-(r+\lambda)\tau_D}\frac{-c\left( (1-\nu_0)re^{\lambda\tau_D} + \nu_0(r+\lambda) \right)}{\left(1 - e^{-r\tau_D}(1 - \nu_0 + \nu_0e^{-\lambda\tau_D}) \right)^2}.
    \end{eqnarray*}  
    Simplifying the numerators,
    \begin{eqnarray*}
        \frac{dG}{d\tau_D} &=& e^{-(r+\lambda)\tau_D}\frac{  \nu_0\lambda (1-\nu_0)\left( 1 - e^{-r\tau_D} - \frac{r}{r+\lambda}\left(e^{\lambda\tau_D}- e^{-r \tau_D}\right) \right) +c\left( (1-\nu_0)re^{\lambda\tau_D} + \nu_0(r+\lambda) \right)}{\left(1 - e^{-r\tau_D}(1 - \nu_0 + \nu_0e^{-\lambda\tau_D}) \right)^2} .
    \end{eqnarray*}  
    Factoring out common terms, 
    \begin{eqnarray*}
        \frac{dG}{d\tau_D} &=&\frac{ \nu_0\lambda (1-\nu_0) e^{-(r+\lambda)\tau_D} }{\left(1 - e^{-r\tau_D}(1 - \nu_0 + \nu_0e^{-\lambda\tau_D}) \right)^2}\left[ \left(1+   \frac{c(r+\lambda)}{\lambda(1-\nu_0)} \right) - \frac{\lambda}{r+\lambda}e^{-r\tau_D} - \left(\frac{r}{r+\lambda} - \frac{cr}{\nu_0\lambda}\right)e^{\lambda\tau_D} \right].
    \end{eqnarray*}  
    It is straightforward to verify that the bracketed part single-crosses 0 from above, so the maximization objective is quasiconcave and the first-order condition is necessary and sufficient. To see this, note that the derivative of the bracketed term is
    \[\frac{\lambda r}{r+\lambda} (e^{-r\tau_D} - e^{\lambda \tau_D}) + \frac{cr}{\nu_0} e^{\lambda \tau_D}  = e^{\lambda \tau_D}\left[\frac{\lambda r}{\lambda+ r}(e^{-(r+\lambda)\tau_D} - 1) + \frac{cr}{\nu_0} \right]. \]
    So if the bracketed term is decreasing for $\tau_D = \tau_1$, it must also be decreasing at any $\tau_D = \tau_2 > \tau_1$. When $\tau_D = 0$, the bracketed term equal to 
    \[ 1 + \frac{c(r+\lambda)}{\lambda(1-\nu_0)} - \frac{\lambda}{r+\lambda} - \frac{r}{\lambda+r} + \frac{cr}{\nu_0\lambda} = \frac{c(r+\lambda)}{\lambda(1-\nu_0)} + \frac{cr}{\nu_0\lambda}, \]
    which is positive, and as $\tau_D$ increases, the expression diverges to $-\infty$. Therefore, $dG/d\tau_D$ crosses zero from above once, exactly when the condition \eqref{defn:taustar_appendix} holds, 
    which is when $G$ attains its maximum.
\end{proof}

Note that the index in state $D$, $g(D,t)$ is strictly decreasing in $t$ (since $\nu_t$ decreases in $t$) and so to find the stopping time $t$ when it becomes optimal to switch to a new arm, equate 
\( g(U) = g(D,t) \).
\begin{lemma}\label{lem:gittins_2}
    The time $t$ such that $g(U) = g(D,t)$ is precisely $K^*$, the solution to \eqref{defn:taustar}.
\end{lemma}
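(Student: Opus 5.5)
The plan is to compute both indices in closed form at the maximizer and use the first-order condition from Lemma \ref{lem:gittins}. By Lemma \ref{lem:gittins}, $g(U) = r\,G(K^*)$, where
\[ G(\tau) = \frac{N(\tau)}{D(\tau)}, \quad N(\tau) := -c + \nu_0\frac{\lambda}{r+\lambda}\left(1 - e^{-(r+\lambda)\tau}\right), \quad D(\tau) := 1 - e^{-r\tau}\left(1-\nu_0+\nu_0 e^{-\lambda\tau}\right), \]
and $K^*$ is the unique interior critical point of $G$. That lemma shows $G$ is quasiconcave with a single crossing of $G'$. At an interior maximum, $G'(K^*)=0$ is equivalent to $N'(K^*)D(K^*) = N(K^*)D'(K^*)$. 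Since $D'(K^*)>0$, this gives the envelope-type identity
\[ G(K^*) = \frac{N'(K^*)}{D'(K^*)}. \]
In words, the average discounted payoff ratio equals the marginal ratio at the optimum.

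Next I will compute the two derivatives:
\[ N'(\tau) = \nu_0\lambda e^{-(r+\lambda)\tau}, \qquad D'(\tau) = e^{-r\tau}\left( r(1-\nu_0+\nu_0e^{-\lambda\tau}) + \nu_0\lambda e^{-\lambda\tau}\right). \]
Cancelling $e^{-r\tau}$ and dividing numerator and denominator by $S(\tau) = 1-\nu_0+\nu_0e^{-\lambda\tau}$ gives
\[ \frac{N'(\tau)}{D'(\tau)} = \frac{\lambda\nu(\tau)}{r + \lambda\nu(\tau)}, \]
with $\nu$ as in \eqref{eqn:interim_belief}. Hence $g(U) = rG(K^*) = \frac{r\lambda\nu(K^*)}{r+\lambda\nu(K^*)} = g(D,K^*)$. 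This identity is also exactly the rearrangement of \eqref{defn:taustar}, whose right-hand side is $\lambda\nu(K^*)/(\lambda\nu(K^*)+r)$. So the computation simultaneously confirms that the left-hand side of \eqref{defn:taustar} is $G(K^*)$, the value of the problem.

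Finally, I will argue uniqueness. The map $t\mapsto g(D,t) = r\lambda\nu(t)/(r+\lambda\nu(t))$ is strictly increasing in $\nu$, and $\nu(t)$ is strictly decreasing in $t$. Therefore $g(D,\cdot)$ is strictly decreasing and crosses the constant $g(U)$ at most once, so $K^*$ is the unique solution. The only delicate point is making sure the maximizer is interior, so that the first-order condition applies. This follows from Lemma \ref{lem:gittins}: the bracketed term there is positive at $\tau_D=0$ and diverges to $-\infty$, so the maximizer lies in $(0,\infty)$ and $D'(K^*)>0$.
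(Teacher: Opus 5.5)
Your argument is correct and is essentially the paper's proof. Both rest on the first-order condition $N'(K^*)D(K^*) = N(K^*)D'(K^*)$ (the paper's identity \eqref{eqn:useful_foc}) to identify $g(U)=rG(K^*)$ with $r\lambda\nu(K^*)/(r+\lambda\nu(K^*))$. The paper does this by taking reciprocals in $g(D,t)=g(U)$ and solving for $\nu_t$; you evaluate $N'/D'$ directly and then use strict monotonicity of $\nu$ to get uniqueness.

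Your side remark is a nice shortcut: by the geometric series, the left-hand side of \eqref{defn:taustar} is literally $G(K^*)$. So the lemma is immediate once Lemma \ref{lem:gittins} guarantees that $K^*$ attains the supremum.
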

\begin{proof}[Proof of Lemma \ref{lem:gittins_2}]
First, as a useful fact, the condition for $K^*$ can be rewritten from \eqref{defn:taustar_appendix} as
\begin{equation}
         \frac{ \nu_0\lambda e^{-(r+\lambda)K^*}\left(1 - (1-\nu_0)e^{-rK^*} - \nu_0e^{-(r+\lambda)K^*}) \right)}{re^{-rK^*} (1-\nu_0) + \nu_0(r+\lambda)e^{-(r+\lambda)K^*} } =   -c + \nu_0 \left(\frac{\lambda}{r+\lambda}\right)\left(1- e^{-(r+\lambda) K^*}\right). \label{eqn:useful_foc}
    \end{equation}   
Setting $g(D,t) = g(U)$,  
   \[ \frac{r\lambda \nu_t}{r+\lambda \nu_t} =  \frac{-rc + r\nu_0 \left(\frac{\lambda}{r+\lambda}\right)\left(1- e^{-(r+\lambda) K^*}\right)}{1 - e^{-rK^*}(1 - \nu_0 + \nu_0e^{-\lambda K^*})} \]
   To solve for $\nu_t$, divide out $r$, take the reciprocal, and use the identity \eqref{eqn:useful_foc}
   \[ \frac{r}{\lambda \nu_t} + 1 =  \frac{\left(1 - e^{-rt ^*}(1 - \nu_0 + \nu_0e^{-\lambda K^*})\right)(re^{-rK^*} (1-\nu_0) + \nu_0(r+\lambda)e^{-(r+\lambda)K^*} )}{ \nu_0\lambda e^{-(r+\lambda)K^*}\left(1 - (1-\nu_0)e^{-rK^*} - \nu_0e^{-(r+\lambda)K^*}) \right)} \]
   \[ \frac{r}{\lambda \nu_t} + 1 =  \frac{re^{-rK^*} (1-\nu_0) + \nu_0(r+\lambda)e^{-(r+\lambda)K^*} }{ \nu_0\lambda e^{-(r+\lambda)K^*}} = \frac{r (1-\nu_0) + \nu_0(r+\lambda)e^{-\lambda K^*} }{ \nu_0\lambda e^{-\lambda K^*}}  \]
  Subtracting $1$ from both sides,
   \[ \frac{r}{\lambda \nu_t}  = \frac{r (1-\nu_0) + \nu_0(r+\lambda)e^{-\lambda K^*} }{ \nu_0 \lambda e^{-\lambda K^*}} - 1 \]
   \[ \frac{r}{\lambda \nu_t}  = \frac{r (1-\nu_0) + \nu_0(r+\lambda )e^{-\lambda K^*} }{ \nu_0  \lambda e^{-\lambda K^*}} - \frac{ \nu_0 \lambda e^{-\lambda K^*}}{ \nu_0 \lambda e^{-\lambda K^*} } =  \frac{r (1-\nu_0) + \nu_0 re^{-\lambda K^*} }{ \nu_0  \lambda e^{-\lambda K^*}}  \]
   Simplifying, 
   \[ \nu_t = \frac{\nu_0 e^{-\lambda K^*} }{(1-\nu_0) + \nu_0e^{-\lambda K^*}}, \]
   and so $t = K^*$.
\end{proof}

Finally, note that if $\min_{i \in [N]} K_i > K^*$, then the Gittins index for each arm is less than $g(U)$, so the optimal strategy must pull arm 0. In the case of equality $\min_{i \in [N]} K_i = K^*$, note that due to right-continuity, if the strategy pulls an arm $i$ that is not arm 0 at time $t$, then there exists an $\epsilon > 0$ such that the strategy assigns positive effort to arm $i$ on $[t, t+\epsilon)$; however, at time $t+\epsilon/2$, it must be the case that $p_{t+\epsilon/2} < \nu_t$, and so $g(D, t+\epsilon/2) < g(D,t) \le g(U)$, and so I have a contradiction of the index theorem. To finish the proof, it remains to show that effort must be equally split between all arms in $B$ when $\min_{i\in [N]} K_i$. By the index theorem, if any arm $j$ has $K_j > \min_{i \in [N]} K_i$, it must be that $g(D,K_j) < g(D, K_i)$, and hence the optimal strategy cannot pull arm $j$. To show that the effort must be split equally, suppose for sake of contradiction that the arm assigns effort larger than $1/|B|$ to arm $i \in B$ and effort less than $1/|B|$ to some arm $j \in B$. By right-continuity, there exists some $\epsilon$ such that the optimal strategy assigns positive effort to both $i$ and $j$ in $[t, t+\epsilon)$. Further, within this interval, since the effort assigned to $i$ was larger than the effort assigned to $j$, there must be some $\delta < \epsilon$ such that the cumulative effort $K_j(t+\delta) < K_i(t+\delta)$. But this implies that at time $t+\delta$, $g(D,K_j(t+\delta)) < g(D,K_i(t+\delta))$ but the optimal strategy exerted positive effort on arms $j$ and $i$, a violation of the index theorem. Hence, the effort must be equalized over arms in $B$.
\end{proof}

\begin{proof}[Proof of Proposition \ref{prop:baseline_comp}]
    Rearranging \eqref{defn:taustar} to apply the implicit function theorem,
    \[ 0 = 1+   \frac{c(r+\lambda)}{\lambda(1-\nu_0)} - \frac{\lambda}{r+\lambda}e^{-rK} -  \left(\frac{r}{r+\lambda} - \frac{cr}{\nu_0\lambda}\right)e^{\lambda K}  := \varphi(\lambda, c, K). \]
    Note that
    \begin{align*}
        \frac{\partial \varphi}{\partial K} &= \frac{\lambda r}{r+\lambda} e^{-rK} - \left(\frac{\lambda r}{r+\lambda} - \frac{cr}{\nu_0} \right) e^{\lambda K} = e^{\lambda K} \left[ \frac{\lambda r}{r+\lambda} (e^{-(r+\lambda)K} - 1) + \frac{cr}{\nu_0}  \right].
    \end{align*}
    Since the bracketed term is decreasing and $e^{\lambda K} > 0$, the derivative $\partial \varphi/\partial K$ single-crosses zero, and so $\varphi$ is quasiconcave in $K$. Hence, since $\varphi$ is zero at $K^*$, this implies that $\varphi$ is decreasing in $K$ at $K^*$. 
    By the implicit function theorem,
    \begin{align*}
         \frac{\dd K^*}{\dd c} &= - \frac{\partial \varphi / \partial c|_{K^*}}{\partial \varphi/\partial K |_{K^*}}, & \frac{\dd K^*}{\dd \lambda} &= - \frac{\partial \varphi / \partial \lambda|_{K^*}}{\partial \varphi/\partial K |_{K^*}}.
    \end{align*}
    Note that since $\varphi$ decreases in $K$ at $K^*$, the sign of each of the derivatives is exactly the same as the sign of the respective partial derivative on $\varphi$ evaluated at $K^*$. 
    For $c$,
    \begin{align*}
        \frac{\partial \varphi}{\partial c}\Bigr|_{K^*} &=\frac{r+\lambda}{\lambda(1-\nu_0)} + \frac{r}{\nu_0\lambda}e^{\lambda K^*} > 0.
    \end{align*}
    so $K^*$ increases in $c$. For $\lambda$, 
    \begin{align}
        \frac{\partial \varphi}{\partial \lambda}\Bigr|_{K^*} &= - \frac{cr}{\lambda^2(1-\nu_0)} - \frac{r}{(r+\lambda)^2}e^{-rK^*} + \left( \frac{r}{(r+\lambda)^2} - \frac{cr}{\lambda^2\nu_0} \right)e^{\lambda K^*} - \left( \frac{ r}{r+\lambda} - \frac{cr}{\nu_0 \lambda } \right)K^*e^{\lambda K^*} \notag\\
        &= - \frac{cr}{\lambda^2(1-\nu_0)} - \frac{r}{(r+\lambda)^2}e^{-rK^*} + \left( \frac{r}{(r+\lambda)^2} - \frac{cr}{\lambda^2\nu_0} \right)e^{\lambda K^*} - \left( \frac{ r}{\lambda(r+\lambda)} - \frac{cr}{\nu_0 \lambda^2 } \right)\lambda K^*e^{\lambda K^*}\notag\\
        &= - \frac{cr}{\lambda^2(1-\nu_0)} - \frac{r}{(r+\lambda)^2}e^{-rK^*} - \frac{r^2}{\lambda (r+\lambda)^2}e^{\lambda K^*} +\left( \frac{ r}{\lambda(r+\lambda)} - \frac{cr}{\nu_0 \lambda^2 } \right)(1 - \lambda K^*)e^{\lambda K^*} \notag\\
        &< - \frac{cr}{\lambda^2(1-\nu_0)} - \frac{r}{(r+\lambda)^2}e^{-rK^*} - \frac{r^2}{\lambda (r+\lambda)^2}e^{\lambda K^*} +\left( \frac{ r}{\lambda(r+\lambda)} - \frac{cr}{\nu_0 \lambda^2 } \right) \notag \\
        &= - \frac{cr}{\lambda^2(1-\nu_0)} - \frac{cr}{\nu_0 \lambda^2 } +\frac{ r}{\lambda(r+\lambda)} \left( 1 - \frac{\lambda}{\lambda + r}e^{-rK^*}  - \frac{r}{r+\lambda }e^{\lambda K^*} \right)  \label{eqn:partial_deriV_Hambda}
    \end{align}
    where the inequality follows from the fact that $(1-z)e^z < 1$ for $z > 0$. Further, since $\varphi(c,\lambda,\nu_0,K^*) = 0$, it follows that 
    \[ 1  - \frac{\lambda}{r+\lambda}e^{-rK} - \frac{r}{r+\lambda}e^{\lambda K}  = - \frac{cr}{\nu_0 \lambda }e^{\lambda K} - \frac{c(r+\lambda)}{\lambda(1-\nu_0)} < 0. \]
    Combining this observation with \eqref{eqn:partial_deriV_Hambda} implies that $\partial \varphi/\partial\lambda|_{K^*} < 0$, and so $K^*$ must be decreasing in $\lambda$.
\end{proof}

\subsubsection{Proof of Theorem \ref{thm:learning}}\label{sec:learning_proof_sketch}
I first introduce some preliminary objects and notation to handle the complexity of the state space, and to switch to working with the state and action paths through time. Observe that any strategy $\sigma$ induces a deterministic state path. More precisely, let the continuation \textit{state path} of $\sigma$ be the function $\xi^\sigma: \mathbb{R}_+ \to \Sigma$ constructed as 
\[ \xi^\sigma(0) = s, \quad \dot{\xi}^\sigma_k(t) = \sigma_k(\xi^\sigma(t)) \textnormal{ if }\sigma(\xi^\sigma(t)) \neq \textnormal{brainstorm},  \]
\[ \xi^\sigma(t) = \left(\lim_{t' \to t^-}\xi^\sigma(t') , \ 0 \right) \textnormal{ if } \sigma\left(\lim_{t' \to t^-}\xi^\sigma(t')\right) = \textnormal{brainstorm}.  \]
Intuitively, the first two expressions imply that when the policy $\sigma$ does not brainstorm new approaches, $\xi^\sigma(\cdot)$ is the state trajectory that would occur following strategy $\sigma$ starting from state $s$. The second part implies that when the path approaches a state where the strategy $\sigma$ chooses to brainstorm, the path jumps to the state with a new approach. 

Starting from state $s$, any strategy $\sigma$ induces a (possibly finite or empty) sequence of times where the strategy brainstorms new approaches $\{t^\sigma_i\}$, defined as the sequence of times $t$ such that $\sigma\left(\lim_{t' \to t^-}\xi^\sigma(t')\right) = \textnormal{brainstorm}.$ In the constructions that follow, I will fix the brainstorming times.\footnote{This does \textit{not} imply that strategies $\sigma$ and $\sigma'$ brainstorm at the same states; this is because the state paths under $\sigma$ and $\sigma'$ differ.} Fixing the brainstorming times also fixes the discounted cost of new approaches, and only alters the agent's objective through the discounted breakthrough time.  
Let $N^{\sigma,s}_t$ denote the number of approaches available at time $t$ given strategy $\sigma$ starting from state $s$. 

Similarly, define the \textit{action path} induced by $\sigma$ as $k^\sigma_n(t) := \sigma_n(\xi^\sigma(t))$. Conversely, given any c\`adl\`ag $k:\mathbb{R}^\infty \to [0,1]^\infty$ that satisfies (i) $k_n(t) = 0$ whenever $n > N^{\sigma,s}_t$ (i.e. never exerts effort on an approach before it is brainstormed by $\sigma$) and (ii) $\sum_{n=1}^\infty k_n(t) = 1$ for all $t$, it is straightforward to construct the state path
\[ \xi^k_n(t) = s_n + \int_0^t k_n(t) \dd t, \]
and strategy $\sigma^k$ where 
\[ \sigma^k(\xi^k(t)) = k(t), \quad \sigma\left(\lim_{t' \to t^\sigma_i} \xi^k(t) \right) = \textnormal{brainstorm} \quad \forall t^\sigma_i.\]
which induces $k$ as the action path and brainstorms new approaches at exactly the same times as $\sigma$.\footnote{ Technically, the strategy $\sigma^k$ is underspecified, since a strategy should specify actions at all states; the action path only specifies actions at states that are on-path following $k$ starting from state $s$. In particular, it is possible for two different strategies to have the same action path $k$ and arrival times of new approaches, because they can differ off the state path. However, this underspecification is payoff-irrelevant, since it only changes behavior off of any realized state path.} Because of this, I work directly with action paths (functions of one-dimensional time) instead of strategies (functions of high-dimensional state).

I will therefore construct interchange strategies by specifying the action paths and fixing the brainstorming times; since the brainstorming times are fixed, it suffices to ignore the costs of brainstorming, and to focus on the expected breakthrough time. Let \[ V_\theta(k | s) := \mathbb{E}^k[e^{-r \tau} \mid  s, \lambda_\theta]\] denote the conditional value of the action path $k$ starting from $s$, given $\theta$.\footnote{ The conditional expectation notation for $s$ is conditioning on the event that $s$ realizes, which conveys information about the $\omega_i$'s.}

To simplify notation, I will use 
\begin{align*}
    \nu_\theta(K) &:= \frac{\nu_0 e^{-\lambda_\theta K}}{1 - \nu_0 + \nu_0 e^{-\lambda_\theta K}}
\end{align*}
to denote the probability an arm is good given historical effort $K$ conditional on $\theta$, and the survival probability of an arm to historical effort $K$ given $\theta$.

Note that conditional on $\theta$, all the approaches become conditionally independent; hence, one can show the following Lemma, which characterizes the expected breakthrough time. 

\begin{lemma}
    \label{lem:exp_bkth_cdf}
    Given fixed brainstorming times and conditioning on $s, \lambda_\theta$, the action path $k$ induces breakthrough time with the following cumulative distribution function: 
    \begin{equation} F(t;s,\theta,k) = 1 - \prod_{n =1}^\infty \frac{S_\theta( \xi_n^k(t))}{S_\theta(s_n)}. \label{eqn:bk_cdf}\end{equation}
\end{lemma}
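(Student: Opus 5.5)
The plan is to compute the survival function $1-F(t;s,\theta,k)$ directly from conditional independence across arms given $\theta$, and then divide by the probability of the conditioning event.

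\emph{Setup.} Condition on $\lambda_\theta$. Then the validity indicators $\omega_n$ are i.i.d.\ Bernoulli$(\nu_0)$, independent of $\theta$, and the breakthrough processes on distinct arms are independent. Since brainstorming times are fixed and the action path $k$ is deterministic, the cumulative effort on arm $n$ by time $t$ is the deterministic quantity $\xi^k_n(t)$. Arms not yet brainstormed have $\xi^k_n(t)=0=s_n$, with the convention $S_\theta(0)=1$. Such arms contribute a factor of one, so the infinite product is effectively finite.

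\emph{Unconditional survival of one arm.} Fix an arm with cumulative effort $K$ and no conditioning on the past. Given $\omega_n=1$, breakthrough on that arm is Poisson with rate $\lambda_\theta k_n(u)$, so it has not arrived with probability $\exp(-\lambda_\theta\int k_n)=e^{-\lambda_\theta K}$. Given $\omega_n=0$, it never arrives. Averaging over $\omega_n$ gives survival $1-\nu_0+\nu_0e^{-\lambda_\theta K}=S_\theta(K)$. By conditional independence across arms given $\theta$, the joint probability that no arm has succeeded when the effort vector is $\xi^k(t)$ is $\prod_n S_\theta(\xi^k_n(t))$.

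\emph{Conditioning on $s$.} The event ``state $s$ realized'' is exactly the event that no arm succeeded while accumulating efforts $s_n$. It has conditional probability $\prod_n S_\theta(s_n)$ given $\theta$. Because $\xi^k_n(t)\ge s_n$ componentwise, the no-breakthrough event up to $\xi^k(t)$ is contained in the event $s$. Therefore
\[
\Pr[\tau>t\mid s,\theta]=\frac{\prod_n S_\theta(\xi^k_n(t))}{\prod_n S_\theta(s_n)}=\prod_n\frac{S_\theta(\xi^k_n(t))}{S_\theta(s_n)},
\]
and subtracting from one gives \eqref{eqn:bk_cdf}.

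\emph{Main obstacle.} The only delicate point is justifying that the survival probability depends on the effort path only through cumulative effort. For an inhomogeneous Poisson process with deterministic intensity $\lambda_\theta\omega_n k_n(u)$, the survival probability is $\exp(-\int_0^t \lambda_\theta\omega_n k_n)$. This holds because $k$ is càdlàg and deterministic given fixed brainstorming times: there is no feedback from realized outcomes before breakthrough, since the game ends at breakthrough. Interchanging the product and the limit for infinitely many arms is harmless, because only finitely many factors differ from one at any finite $t$.
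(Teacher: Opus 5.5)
Your proof is correct and is essentially the paper's argument. Both factor the survival probability over arms using conditional independence given $\theta$, and your division of the joint survival to $\xi^k(t)$ by the joint survival to $s$ is the same Bayes step the paper carries out arm by arm when it computes $\mathbb{E}\bigl[e^{-\lambda_\theta\omega_n(\xi^k_n(t)-s_n)}\mid s,\lambda_\theta\bigr]=S_\theta(\xi^k_n(t))/S_\theta(s_n)$; your ratio form also makes explicit why the $\omega_n$ stay conditionally independent after conditioning on the product event $s$, which the paper uses implicitly when it pulls the product out of the expectation.
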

\begin{proof}
Let $\tau^k_\theta$ denote the stochastic breakthrough arrival time given path $k$, conditional on $s$ and $\theta$. 
The CDF of the $\tau^k_\theta$ for given realization of $\{ \omega_n \}$ is 
\[ F(t\mid s,\theta,k,\{\omega_n\}) = 1 - \exp\left( - \lambda_\theta \sum_{m=1}^\infty \omega_m (\xi^k_m(t) - s_m)  \right) . \]
Taking the expectation over $\{\omega_n\}$, 
\begin{align*} F(t\mid s,\theta,k) &= 1 - \mathbb{E}\left[\left.\exp\left( - \lambda_\theta \sum_{m=1}^\infty \omega_m (\xi^k_m(t) - s_m)  \right) \right| s, \lambda_\theta \right], \\
&= 1 - \prod_{m=1}^\infty \mathbb{E}\left[\left.\exp\left( - \lambda_\theta \omega_m (\xi^k_m(t) - s_m)  \right) \right| s, \lambda_\theta \right].
\end{align*}
where the last line follows because $\omega_m$ and $\omega_n$ are conditionally independent given $\lambda_\theta$, since $\theta$ is the only source of correlation across arms. The expectation
\begin{align*}
    \mathbb{E}\left[\left. e^{-\lambda_\theta \omega_n (\xi^k_n(t) - s_n)}\right|s,\lambda_\theta \right]  &= \frac{(1-\nu_0)1 + \nu_0e^{-\lambda_\theta s_n} e^{-\lambda_\theta (\xi_n^k(t)- s_n)}}{\nu_0 e^{-\lambda_\theta s_n} + 1-\nu_0} = \frac{1-\nu_0 + \nu_0e^{-\lambda_\theta \xi_n^k(t)}}{\nu_0 e^{-\lambda_\theta s_n} + 1-\nu_0}.
\end{align*}
Plugging this into the CDF expression yields the result.
\end{proof}

The value of an action path $k$ is then given by the law of total expectation: 
\[ V(k|s) = \sum_{\theta=H,L} \mathbb{P}[\theta| s] V_\theta(k|s). \]
Now, I establish the first main Lemma.

\begin{proof}[Proof of Lemma \ref{lem:lowest_history_of_effort}] 
Consider the continuation problem at state $s$, where $N$ approaches are available, and renormalize time to zero at $s$. Consider a strategy $\sigma$, where $\sigma_i(s) > 0$ for some $i \not\in B(s)$. I will argue that $\sigma$ cannot be optimal via an interchange argument; that is, I will construct a strategy which improves the agent's objective by briefly swapping effort between $i$ and some $j \in B(s)$.  Fix a $j \in B(s)$. Let $t'$ be the next time (possibly also $t$) that the action path $k^\sigma$ induced by $\sigma$ exerts effort on $j$:
\[ t' := \inf \left \{ \hat{t} \mid \hat{t} \ge t, \, k^\sigma_j(\hat{t}) > 0 \right\}, \]
where $t' = \infty$ if for all $t' \ge t$, $k^\sigma_j(t') = 0$. Consider two cases: $t' = \infty$ and $ t' < \infty$.

\textbf{Case 1:} Suppose $t' = \infty$; this implies that $k^\sigma_j(t) = 0$ for all $t$. I claim that the agent can improve their objective by swapping effort from $i$ to $j$ instead. Formally, consider the alternative $k$, constructed as follows:
\begin{align}
    k_i(t) &= k^\sigma_j(t) = 0,\notag \\
    k_j(t) &= k^\sigma_i(t), \notag  \\
    k_n(t) &= k^\sigma_n(t) \quad \forall n \not\in \{i, j\}. \label{eqn:lem1_interchange_1}
\end{align}
That is, $k$ relative to $k^\sigma$ shifts effort from $i$ to $j$ instead. Let $K^\sigma(t) := \xi^\sigma_i(t) - s_i$ be the total effort exerted by strategy $\sigma$ on $i$ at time $t$ after state $s$; note by path right-continuity and the assumption that $\sigma_i(s) > 0$, $K_i^\sigma(t) > 0$ for any $t > 0$. Then by Lemma \ref{lem:exp_bkth_cdf},
\begin{align*} \frac{1 - F(t;s,\theta,k)}{1-F(t;s,\theta,k^\sigma)} &= \prod_{n=1}^\infty \frac{ S_\theta(\xi_n^k(t))}{S_\theta(\xi^\sigma_n(t))} = \left(\frac{S_\theta(\xi^k_i(t))}{S_\theta(\xi^\sigma_i(t))} \right)\left(\frac{S_\theta(\xi^k_j(t))}{S_\theta(\xi^\sigma_j(t))} \right), \\
&= \left(\frac{S_\theta( s_i) }{S_\theta (K^\sigma(t) + s_i)} \right)\left(\frac{S_\theta(K^\sigma(t) + s_j)}{S_\theta (s_j) } \right), \\
&= \frac{(1-\nu_0)^2 + \nu_0(1-\nu_0)(e^{-\lambda_\theta s_i} + e^{-\lambda_\theta(K^\sigma(t) + s_j)}) + \nu_0^2e^{-\lambda_\theta(K^\sigma(t) + s_j + s_i)}}{(1-\nu_0)^2 + \nu_0(1-\nu_0)(e^{-\lambda_\theta s_j} + e^{-\lambda_\theta(K^\sigma(t) + s_i) } )+ \nu_0^2e^{-\lambda_\theta(K^\sigma(t) + s_j + s_i)}}.
\end{align*}
where the second line used the fact that $\xi_i^k(t) = s_i$ and $\xi_j^\sigma(t) = s_j$. Note that since $s_j < s_i$ by supposition $j \in B(s)$ and $i \not\in B(s)$, it is straightforward to verify that for any $t > 0$,
\[ e^{-\lambda_\theta s_i} + e^{-\lambda_\theta(K^\sigma(t) + s_j)} < e^{-\lambda_\theta s_j} + e^{-\lambda_\theta(K^\sigma(t) + s_i) }. \] 
This is easy to see because $e^{-\lambda x}$ is a convex function, and $s_i + K^\sigma(t) > s_i > s_j$.

Therefore, the distribution of the breakthrough time under $k$ is first-order stochastic dominated by the distribution of the breakthrough time under $k^\sigma$, which implies that $\mathbb{E}[e^{-r\tau}\mid s, \lambda_\theta]$ is larger under $k$ than under $k^\sigma$, and so $V_\theta(k) > V_\theta(k^\sigma)$. Since this holds for both $\theta = H, L$ and $\mathbb{P}[\theta | s]$ is unaffected by $k$, the strategy $\sigma$ cannot be optimal.

\textbf{Case 2:} Suppose $t'$ is finite; then there exists a time $\tilde{t}$ such that $k^\sigma_j(\tilde{t}) > 0$. Take any $\tilde{t}$ where $k^\sigma_j(\tilde{t}) > 0$ and  $\xi^\sigma_j(\tilde{t}) < s_i$ (which must exist since $\xi^\sigma_j(0) = s_j < s_i$). I will construct an interchanged action path by shifting some effort from $i$ to $j$ at time $0$, and then shifting the same amount of effort from $j$ back to $i$ at time $\tilde{t}$. Consider an incremental amount of effort $\tilde{K} > 0$, and note that for sufficiently small $\tilde{K}$, there exists $\epsilon, \tilde{\epsilon}$ such that
\[ \tilde{K} = \xi^\sigma_j(\tilde{t} + \tilde{\epsilon}) - \xi^\sigma_j(\tilde{t}) = \xi^\sigma_i(\epsilon) - \xi^\sigma_i(0) . \]
Intuitively, $\epsilon'$ is chosen such that $k^\sigma$ exerts the same total (positive) effort on $j$ in the window $[\tilde{t}, \tilde{t} + \tilde{\epsilon})$ as it does on $i$ in the window $[0,\epsilon)$. 
Choose a total interchange effort $\tilde{K}$ small enough such that $\tilde{K} < s_i - \xi_j^\sigma(\tilde{t}) $
and 
\[ \xi^\sigma_i(\epsilon)  + \xi^\sigma_j(\epsilon) < 2 s_i, \]
which can be done because when $t \to 0$, $\xi_j^\sigma(t) \to s_j < s_i$ and $\xi_i^\sigma(t) \to s_i$. 

Now, consider the following interchanged action path $k$, swapping  this effort:
\begin{align}
    k_i(t) &= \begin{cases} 0  & t < \epsilon \\
    k_i^\sigma(t) +k_j^\sigma(t) & t \in [\tilde{t}, \tilde{t} + \tilde{\epsilon}) \\
    k_i^\sigma(t) & t \not\in [t, t + \epsilon) \cup [\tilde{t}, \tilde{t} + \tilde{\epsilon}) \\
    \end{cases} , \notag \\
    k_j(t) &= \begin{cases} k_j^\sigma(t) + k_i^\sigma(t)  & t < \epsilon \\
    0 & t \in [\tilde{t}, \tilde{t} + \tilde{\epsilon}) \\
    k_j^\sigma(t) & t \not\in [t, t + \epsilon) \cup [\tilde{t}, \tilde{t} + \tilde{\epsilon}) \\
    \end{cases} \notag \\
    k_n(t) &= k^\sigma_n(t) \quad \forall n \not\in \{i, j\}. \label{eqn:lem1_interchange_2}
\end{align}
Intuitively, $k$ exerts effort on $j$ instead of $i$ in the window $[t, t+\epsilon)$, exerts effort on $i$ instead of $j$ in the window $[\tilde{t}, \tilde{t} + \epsilon)$, and otherwise behaves identically to $k^\sigma$. Then note that by construction, the state paths $\xi^k$ are identical to $\xi^\sigma$ on $n \neq i, j$, and for all times above $\tilde{t} + \tilde{\epsilon}$. Hence, $F(t;s,\theta,k) = F(t;s,\theta,k^\sigma)$ for $t \ge \tilde{t} + \tilde{\epsilon}$. 

I claim that the interchanged strategy $k$ generates a breakthrough time that is first-order stochastically dominated by the breakthrough time generated by $k^\sigma$; equivalently, $F(t;s,\theta,k) \ge F(t;s,\theta,k^\sigma)$ for all $t$. By the previous paragraph, this clearly holds for $t \ge \tilde{t} + \tilde{\epsilon}$.

For times $t < \epsilon$, 
\begin{align*} \frac{1 - F(t;s,\theta,k)}{1-F(t;s,\theta,k^\sigma)} 
&= \left(\frac{S_\theta( s_i)}{S_\theta(\xi^\sigma_i(t))} \right)\left(\frac{S_\theta(\xi^\sigma_j(t) + \xi^\sigma_i(t) - s_i)}{S_\theta(\xi^\sigma_j(t))} \right), \\
&= \frac{(1-\nu_0)^2 + \nu_0(1-\nu_0)(e^{-\lambda_\theta s_i} + e^{-\lambda_\theta(\xi^\sigma_j(t) + \xi^\sigma_i(t) - s_i)}) + \nu_0^2e^{-\lambda_\theta(\xi_j^\sigma(t) + \xi_i^\sigma(t))}}{(1-\nu_0)^2 + \nu_0(1-\nu_0)(e^{-\lambda_\theta \xi_i^\sigma(t) } + e^{-\lambda_\theta\xi_j^\sigma(t) } )+ \nu_0^2e^{-\lambda_\theta(\xi_j^\sigma(t) + \xi_i^\sigma(t))}}.
\end{align*}
Note that once again, because by construction $\xi_j^\sigma(t) + \xi_i^\sigma(t) - s_i < s_i$, 
\[ e^{-\lambda_\theta s_i} + e^{-\lambda_\theta(\xi^\sigma_j(t) + \xi^\sigma_i(t) - s_i)} < e^{-\lambda_\theta \xi_i^\sigma(t) } + e^{-\lambda_\theta\xi_j^\sigma(t) } . \]
Therefore, for $t < \epsilon$, $F(t;s,\theta,k) > F(t;s,\theta,k^\sigma)$. 

Now, for times $t \in [t+\epsilon, \tilde{t})$, 
\begin{align*} \frac{1 - F(t;s,\theta,k)}{1-F(t;s,\theta,k^\sigma)} 
&= \left(\frac{S_\theta (\xi_i^\sigma(t) - \tilde{K})}{S_\theta(\xi^\sigma_i(t))} \right)\left(\frac{S_\theta(\xi^\sigma_j(t) + \tilde{K})}{S_\theta(\xi^\sigma_j(t))} \right), \\
&= \frac{(1-\nu_0)^2 + \nu_0(1-\nu_0)(e^{-\lambda_\theta (\xi_i^\sigma(t) -\tilde{K})} + e^{-\lambda_\theta(\xi^\sigma_j(t) + \tilde{K})}) + \nu_0^2e^{-\lambda_\theta(\xi_j^\sigma(t) + \xi_i^\sigma(t))}}{(1-\nu_0)^2 + \nu_0(1-\nu_0)(e^{-\lambda_\theta \xi_i^\sigma(t) } + e^{-\lambda_\theta\xi_j^\sigma(t) } )+ \nu_0^2e^{-\lambda_\theta(\xi_j^\sigma(t) + \xi_i^\sigma(t))}}.
\end{align*}
Since by construction, $\xi^\sigma_j(t) \le \xi^\sigma_j(\tilde{t}) < \xi^\sigma_j(\tilde{t}) + \tilde{K} < s_i \le \xi^\sigma_i(t)$, it follows that $\xi_j^\sigma(t) + \tilde{K}$ and $\xi_i^\sigma(t) - \tilde{K}$ are inside the interval $(\xi_j^\sigma(t), \xi_i^\sigma(t))$, so 
\[ e^{-\lambda_\theta (\xi_i^\sigma(t) - \tilde{K}) } + e^{-\lambda_\theta (\xi_j^\sigma(t) + \tilde{K})} < e^{-\lambda_\theta \xi_i^\sigma(t) } + e^{-\lambda_\theta\xi_j^\sigma(t) },\]
which implies that for $t \in [t+\epsilon, \tilde{t})$, $F(t;s,\theta,k) > F(t;s,\theta,k^\sigma)$. 

Finally, when $t \in [\tilde{t}, \tilde{t} + \tilde{\epsilon})$, 
\begin{align*} \frac{1 - F(t;s,\theta,k)}{1-F(t;s,\theta,k^\sigma)} &= \left(\frac{S_\theta (\xi_i^\sigma(t) - \tilde{K}+ \xi^\sigma_j(t) - \xi^\sigma_j(\tilde{t}) )}{S_\theta(\xi^\sigma_i(t))} \right)\left(\frac{S_\theta(\xi^\sigma_j(\tilde{t}) + \tilde{K})}{S_\theta(\xi^\sigma_j(t))} \right), \\
&= \frac{(1-\nu_0)^2 + \nu_0(1-\nu_0)(e^{-\lambda_\theta (\xi_i^\sigma(t) - \tilde{K}+ \xi^\sigma_j(t) - \xi^\sigma_j(\tilde{t}) )}+ e^{-\lambda_\theta(\xi^\sigma_j(\tilde{t}) + \tilde{K})}) + \nu_0^2e^{-\lambda_\theta(\xi_j^\sigma(t) + \xi_i^\sigma(t))}}{(1-\nu_0)^2 + \nu_0(1-\nu_0)(e^{-\lambda_\theta \xi_i^\sigma(t) } + e^{-\lambda_\theta\xi_j^\sigma(t) } )+ \nu_0^2e^{-\lambda_\theta(\xi_j^\sigma(t) + \xi_i^\sigma(t))}}.
\end{align*}
By construction, $\xi_j^\sigma(t) < s_i < \xi_i^\sigma(t) $ and $\xi_j^\sigma(t) < \xi_j^\sigma(\tilde{t} + \tilde{\epsilon}) = \xi^\sigma_j(\tilde{t}) + \tilde{K} < s_i$. So by convexity, 
\[ e^{-\lambda_\theta (\xi_i^\sigma(t) - \tilde{K}+ \xi^\sigma_j(t) - \xi^\sigma_j(\tilde{t}) )}+ e^{-\lambda_\theta(\xi^\sigma_j(\tilde{t}) + \tilde{K})} < e^{-\lambda_\theta \xi_i^\sigma(t) } + e^{-\lambda_\theta\xi_j^\sigma(t) },\]
And so for any $t$, $F(t;s,\theta,k) \ge F(t;s,\theta,k^\sigma)$, and therefore $\mathbb{E}[e^{-r\tau}\mid s, \lambda_\theta]$ is larger under $k$ than under $k^\sigma$, and so $V_\theta(k|s) > V_\theta(k^\sigma|s)$. Since this holds for both $\theta = H, L$, it follows that $V(k|s) > V(k^\sigma|s)$, and so $\sigma$ cannot be optimal.
\end{proof}

As a consequence of Lemma \ref{lem:lowest_history_of_effort} and right-continuity of action paths, at any state $s$ where the set of best approaches has multiple elements ($|B(s)| > 1$), the optimal policy must split effort evenly between all best approaches; i.e., exert effort $1/|B(s)|$ on each approach.

Now, since Lemma \ref{lem:lowest_history_of_effort} dictates what approaches the agent exerts effort on, it suffices to optimize when new approaches are obtained. Further, Lemma \ref{lem:lowest_history_of_effort} also implies that any sequence of brainstorming times maps to a unique sequence of effort thresholds, where $K^*_i$ denotes the effort exerted on $i$ when the next approach $i+1$ is brainstormed.

Lemma \ref{lem:increasing_thresholds} is more difficult to establish. I will do so in two smaller steps. I will first establish that if the sequence $\{ K^*_i \}$ starts decreasing, it cannot increase again. 
\begin{lemma}\label{lem:threshold_decrease_after}
    Suppose the optimal threshold sequence satisfies $K^*_i \ge K^*_{i+1}$ for some $i\ge 1$. Then $K^*_{i+1} \ge K^*_{i+2}$.
\end{lemma}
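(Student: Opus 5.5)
The plan is to argue by contradiction. We combine the path structure forced by Lemma \ref{lem:lowest_history_of_effort} with a one-dimensional first-order (Gateaux) comparison of the kind that produces \eqref{eqn:phi}.

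\textbf{Step 1: the path under the hypothesis.} Suppose $K^*_i \ge K^*_{i+1}$ but $K^*_{i+2} > K^*_{i+1}$. By Lemma \ref{lem:lowest_history_of_effort}, once approach $i+1$ is brainstormed it is pursued alone, since it is the unique member of $B$. It reaches effort $K^*_{i+1}$ while every older approach sits at effort at least $K^*_i \ge K^*_{i+1}$, and at that point approach $i+2$ is brainstormed. Call this time $T$ and the state $s_T$. Approach $i+2$ is then pursued alone until it also reaches $K^*_{i+1}$. After that, Lemma \ref{lem:lowest_history_of_effort} (with the equal-splitting consequence) forces effort to be split between $i+1$ and $i+2$, and possibly older approaches if $K^*_i = K^*_{i+1}$, until the common level $K^*_{i+2}$ is reached. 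So under the hypothesis the agent returns to approach $i+1$ at exactly the effort level $K^*_{i+1}$ at which it dropped it at time $T$. The only difference is that the information set now contains $K^*_{i+1}$ additional failed effort on approach $i+2$.

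\textbf{Step 2: two perturbations.} We compare two one-parameter perturbations that fix all other brainstorming times.

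\begin{itemize}
\item At $T$: brainstorm $i+2$ an instant $\eta$ later and use the extra time on $i+1$. The optimality of brainstorming exactly at $T$ gives the sign condition that this perturbation's derivative is $\le 0$ (nonpositive value of delay).
\item At the later time: the moment where joint work on $\{i+1,i+2\}$ begins. Shifting the brainstorm of $i+3$ earlier or later in the standard way gives the first-order condition defining $K^*_{i+2}$.
\end{itemize}

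Following the calculation behind \eqref{eqn:phi}, each derivative takes the form
\[ \sum_{\theta} \mathbb{P}[\theta\mid\cdot]\,\Pi_\theta \,\phi_\theta(K), \]
where $\Pi_\theta$ is the survival product of all approaches' $S_\theta$ at the relevant state.

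The key claim is that the state-$\theta$ weights at the second point equal those at $T$ multiplied by the factor $S_\theta(K^*_{i+1})$. This factor tilts the posterior toward $H$, since $S_H/S_E>1$ for finite positive effort. Both conditions are evaluated at the same per-arm effort level $K^*_{i+1}$ on the arm in question. Hence, if the bracket evaluated at $K^*_{i+1}$ is nonpositive under the $T$-weights, it should remain nonpositive under the $H$-tilted weights.

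This rests on the sign structure $K^*_E<K^*_H$ (Proposition \ref{prop:baseline_comp}), which gives $\phi_E(K)<0<\phi_H(K)$ on $(K^*_E,K^*_H)$. Continuing there is more valuable when $H$ is likelier. Along the joint-work segment the arms strictly exceed $K^*_{i+1}$, so the fact that the path continues instead of brainstorming then contradicts the corresponding single-crossing of $(1-\delta_0)S_E^n\phi_E+\delta_0 S_H^n\phi_H$ in $K$. The direction of that contradiction must be fixed carefully; I expect it to come out as "the marginal value of continuing is already negative at $K^*_{i+1}$ and only falls with further effort".

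\textbf{Main obstacle.} The hardest step is making the weight comparison rigorous. Specifically, one must show that the $\theta$-weighted Gateaux derivative of delaying a brainstorm is monotone in the likelihood ratio $\prod S_H/\prod S_E$, and single-crossing in $K$, even though the older approaches sit at heterogeneous effort levels $\ge K^*_i$. My first attempt would be to write both derivatives explicitly via Lemma \ref{lem:exp_bkth_cdf}. I would factor out the common product over older approaches, which is identical in the two comparisons, so that the remaining dependence is only through $S_\theta(K^*_{i+1})$ and $\phi_\theta$. The argument then reduces to the two-state sign pattern of $\phi_E,\phi_H$ on $(K^*_E,K^*_H)$, plus establishing that the thresholds lie in that interval. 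The latter I would verify first using the $n=1$ case of \eqref{eqn:def_tn}.
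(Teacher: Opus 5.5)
Your proposal has a genuine gap in Step 2, and the problem is the direction of the belief comparison. Between $T=t_{i+2}$ and the start of joint work, the posterior weights are multiplied by $S_\theta(K^*_{i+1})$, which tilts them toward $H$. Since $\phi_E<0<\phi_H$ on $(K^*_E,K^*_H)$, an $H$-tilt \emph{raises} $\sum_\theta(\mbox{weight})\,\phi_\theta(K^*_{i+1})$. So nonpositivity at the $T$-weights does not carry over. Indeed, at the equality that actually holds at $t_{i+2}$ (see below), the tilted sum is strictly positive.

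This tilt is exactly the mechanism that makes returning to approach $i+1$ attractive, so it cannot be what rules the return out. For the same reason, your hoped-for ``marginal value of continuing only falls with further effort'' is unsupported: the weights keep drifting toward the state in which continuing pays.

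A quick diagnostic confirms the problem: your Step 2 never uses $K^*_i\ge K^*_{i+1}$. The local structure around $t_{i+2}$ (brainstorm $i+2$, work on it alone to $K^*_{i+1}$, then joint work) is the same when $K^*_i<K^*_{i+1}$. A forward contradiction of this kind would therefore forbid $K^*_{i+2}>K^*_{i+1}$ altogether, contradicting the strictly increasing thresholds of Theorem \ref{thm:learning}. Also, you cannot locate thresholds via \eqref{eqn:def_tn}, since that equation is derived only after this lemma and Lemma \ref{lem:increasing_thresholds} are established.

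The missing idea is to compare \emph{backward}, at $t_{i+1}$, which is where the hypothesis enters.
\begin{itemize}
\item Your delay perturbation at $t_{i+2}$, paired with the reverse one (brainstorm $i+2$ an instant earlier and make up the lost effort on $i+1$ in the joint phase), gives $\mathbb{E}[\phi_\theta(K^*_{i+1})\mid t_{i+2}]=0$. Hence $K^*_{i+1}\in(K^*_E,K^*_H)$ with $\phi_E(K^*_{i+1})<0<\phi_H(K^*_{i+1})$.
\item Under the hypothesis, only approach $i+1$ is worked on during $[t_{i+1},t_{i+2})$. So the posterior at $t_{i+1}$ lacks the $K^*_{i+1}$ units of failure on approach $i+1$. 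It is therefore tilted toward $E$, and $\mathbb{E}[\phi_\theta(K^*_{i+1})\mid t_{i+1}]<0$.
\item Now $K^*_i\ge K^*_{i+1}$ gives $\nu_\theta(K^*_i)\le\nu_\theta(K^*_{i+1})$. Write $\phi_\theta(K)$ as $\lambda_\theta\nu_\theta(K)$ times the positive bracket $1+c-\frac{\nu_0\lambda_\theta}{\lambda_\theta+r}(1-e^{-(r+\lambda_\theta)K})-e^{-rK}S_\theta(K)$, minus a term not involving the hazard. Replace the hazard $\lambda_\theta\nu_\theta(K^*_{i+1})$ by $\lambda_\theta\nu_\theta(K^*_i)$, keeping everything else evaluated at $K^*_{i+1}$. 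The expectation stays strictly negative.
\item The resulting expression is the first-order advantage of the original path over a third perturbation. That perturbation brainstorms $i+1$ an instant earlier and defers the last sliver of effort on approach $i$ (hazard $\lambda_\theta\nu_\theta(K^*_i)$) to just before $t_{i+2}$. Optimality of $t_{i+1}$ requires this advantage to be nonnegative, which gives the contradiction.
\end{itemize}
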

\begin{proof}
I will first sketch the proof. Proceed by contradiction; suppose $K^*_i \ge K^*_{i+1} < K^*_{i+2}$, and let $t_{i}$ denote the time the $i$th approach is brainstormed (similarly for $i+1$, $i+2$). Consider the following interchanges:
    \begin{enumerate}
        \item Generate approach $i+2$ an instant sooner, but work on both $i+1$ and $i+2$ until the same threshold $K^*_{i+2}$ before brainstorming approach $i+3$.
        \item Generate approach $i+2$ an instant later, moving ahead effort on $i+1$ from after $i+2$ reached $K^*_{i+1}$ to before $i+2$ was brainstormed.
    \end{enumerate}
See Figure \ref{fig:int} for a depiction of these interchanges. Optimality of the policy with respect to these two interchanges determines $K^*_{i+1}$ as a function of the time $t_{i+2}$ belief over $\theta$. Having pinned down $K^*_{i+1}$, consider a third interchange:
\begin{enumerate}
    \item[3.] Generate $i+1$ an instant sooner, and delaying effort on $i$ until right before $i+2$ is brainstormed.
\end{enumerate}
The third interchange is depicted in Figure \ref{fig:int_contr}. Using the value of $K^*_{i+1}$ determined by the first two interchanges, I establish that the effort on approach $i+1$ is sufficiently valuable, so the third interchange (moving effort on $i+1$ up) violates optimality.

    More precisely, suppose, for sake of contradiction, that optimally $K^*_i \ge K^*_{i+1} < K^*_{i+2}$. Let $k$ be the original (supposedly) optimal action path, and define $k'$ the action path under the first interchange as follows:
    \begin{align*}
        k'_{i+1}(t) &:= \begin{cases}
         0 & t \in [t_{i+2} - \epsilon, t_{i+2} - 2\epsilon + K^*_{i+1}), \\
         1/2 & t \in [t_{i+2} - 2\epsilon + K_{i+1}^*, t_{i+2} + K_{i+1}^*), \\
         k_{i+1}(t) & t < t_{i+2} - \epsilon \textnormal{ or } t\ge t_{i+2} + K^*_{i+1},
        \end{cases}, \\
        k'_{i+2}(t) &:= \begin{cases}
         1 & t \in [t_{i+2} - \epsilon, t_{i+2} - 2\epsilon + K^*_{i+1}), \\
         1/2 & t \in [t_{i+2} - 2\epsilon + K_{i+1}^*, t_{i+2} + K_{i+1}^*), \\
         k_{i+2}(t) & t < t_{i+2} - \epsilon \textnormal{ or } t\ge t_{i+2} + K^*_{i+1},
        \end{cases}, \\
        k'_j(t) &:= k_j(t) & j \not\in \{i+1,i+2\}.
    \end{align*}
    and similarly define $k''$ for interchange 2: 
    \begin{align*}
        k''_{i+1}(t) &:= \begin{cases}
         1 & t \in [t_{i+2}, t_{i+2} + \epsilon) , \\
         0 & t \in [t_{i+2} + \epsilon, t_{i+2} + 2\epsilon + K_{i+1}^*), \\
         k_{i+1}(t) & t < t_{i+2}  \textnormal{ or } t\ge t_{i+2} + K^*_{i+1} + 2\epsilon,
        \end{cases}, \\
        k''_{i+2}(t) &:= \begin{cases}
         0 & t \in [t_{i+2}, t_{i+2} + \epsilon) , \\
         1 & t \in [t_{i+2} + \epsilon, t_{i+2} + 2\epsilon + K_{i+1}^*), \\
         k_{i+2}(t) & t < t_{i+2}  \textnormal{ or } t\ge t_{i+2} + K^*_{i+1} + 2\epsilon,
        \end{cases}, \\
        k''_j(t) &:= k_j(t) & j \not\in \{i+1,i+2\}.
    \end{align*}
   
    Since the original action path was optimal with respect to interchange 1, the change in value of the interchange relative to the original path must have been nonpositive. Further, note that after $t_{i+2}+K^*_{i+1}$, both the original and interchanged action path reach the same state and are identical, so the continuation values cancel. So focus on the time interval from time $t_{i+2} - \epsilon$ until $t_{i+2} + K^*_{i+1}$).
    
    The probability that a breakthrough arrives in a small interval $\epsilon$ on an arm with historical effort $K$ is $\lambda_\theta \nu_\theta(K)\epsilon + O(\epsilon^2)$ where $O(\epsilon^2)$ denotes higher-order terms. Then the payoff from the original action path in the time interval from time $t_{i+2} - \epsilon$ until $t_{i+2} + K^*_{i+1}$) is
    \[ \mathbb{E}\left[ \lambda_\theta \nu_\theta(K^*_{i+1})\epsilon+e^{-r\epsilon}(1 - \lambda_\theta \nu_\theta(K^*_{i+1})\epsilon)\left(-c + \int_0^{K^*_{i+1}} e^{-rt}(\nu_0 \lambda_\theta e^{-\lambda_\theta t})  \dd{t} \right)   \dd{t} \mid t_{i+2}-\epsilon \right] + O(\epsilon^2). \]
    The first interchanged strategy $k'$, on the same time interval, generates payoff 
    \[ \mathbb{E}\left[ -c + \int_0^{K^*_{i+1}} e^{-rt}(\nu_0 \lambda_\theta e^{-\lambda_\theta t})  \dd{t} + e^{-rK^*_{i+1}}(1-\nu_0 + \nu_0e^{-\lambda_\theta K^*_{i+1}})\lambda_\theta \nu_\theta(K^*_{i+1})\epsilon \mid t_{i+2}-\epsilon \right] + O(\epsilon^2). \]
    Note that higher-order terms in $\epsilon$ are absorbed into the $O(\epsilon^2)$ part. By optimality of the original path, the difference must be nonnegative:
    \begin{align*} \mathbb{E} \left[ \begin{pmatrix}
       \lambda_\theta \nu_\theta(K^*_{i+1})\epsilon (1 - e^{-rK^*_{i+1}}(1-\nu_0 + \nu_0e^{-\lambda_\theta K^*_{i+1}}))  \\
       +(e^{-r\epsilon}(1 - \lambda_\theta \nu_\theta(K^*_{i+1})\epsilon) - 1)\left(-c + \int_0^{K^*_{i+1}} e^{-rt}(\nu_0 \lambda_\theta e^{-\lambda_\theta t})  \dd{t} \right) 
    \end{pmatrix} \mid t_{i+2}-\epsilon \right] + O(\epsilon^2)\ge 0. \end{align*}
    Using that $e^{-r\epsilon} = 1 - r\epsilon + O(\epsilon^2)$ for small $\epsilon$, dividing through by $\epsilon$ and taking out the higher order terms, this implies that  
    \begin{align*} \mathbb{E} \left[ \begin{pmatrix}
       \lambda_\theta \nu_\theta(K^*_{i+1}) (1 - e^{-rK^*_{i+1}}(1-\nu_0 + \nu_0e^{-\lambda_\theta K^*_{i+1}}))  \\
       -( r + \lambda_\theta \nu_\theta(K^*_{i+1}))\left(-c + \int_0^{K^*_{i+1}} e^{-rt}(\nu_0 \lambda_\theta e^{-\lambda_\theta t})  \dd{t} \right)   \dd{t} 
    \end{pmatrix} \mid t_{i+2}-\epsilon \right] + \Omega(\epsilon)\ge 0. \end{align*}
    In the limit then as $\epsilon \to 0$, this implies that 
    \begin{align} \mathbb{E} \left[ \begin{pmatrix}
       \lambda_\theta \nu_\theta(K^*_{i+1}) (1 - e^{-rK^*_{i+1}}(1-\nu_0 + \nu_0e^{-\lambda_\theta K^*_{i+1}}))  \\
       -( r + \lambda_\theta \nu_\theta(K^*_{i+1}))\left(-c + \int_0^{K^*_{i+1}} e^{-rt}(\nu_0 \lambda_\theta e^{-\lambda_\theta t})  \dd{t} \right)   \dd{t} 
    \end{pmatrix} \mid t_{i+2} \right]\ge 0. 
    \label{eqn:int1}
    \end{align}
    By the similar argument, for the second interchange, the relevant time interval is $[t_{i+2} + \epsilon, t_{i+2} + K^*_{i+1} + 2\epsilon]$. The original strategy yields payoff (to first order in $\epsilon$)
    \[ \mathbb{E}\left[ -c + \int_0^{K^*_{i+1}} e^{-rt}(\nu_0 \lambda_\theta e^{-\lambda_\theta t})  \dd{t} + e^{-rK^*_{i+1}}(1-\nu_0 + \nu_0e^{-\lambda_\theta K^*_{i+1}})\lambda_\theta \nu_\theta(K^*_{i+1})\epsilon \mid t_{i+2} \right] + O(\epsilon^2). \]
    and interchange $k''$ yields
    \[ \mathbb{E}\left[ \lambda_\theta \nu_\theta(K^*_{i+1})\epsilon+e^{-r\epsilon}(1 - \lambda_\theta \nu_\theta(K^*_{i+1})\epsilon)\left(-c + \int_0^{K^*_{i+1}} e^{-rt}(\nu_0 \lambda_\theta e^{-\lambda_\theta t})  \dd{t} \right)   \dd{t} \mid t_{i+2}-\epsilon \right] + O(\epsilon^2). \]
    Taking the difference and $\epsilon \to 0$, this implies that  
    \begin{align} \mathbb{E} \left[ \begin{pmatrix}
       \lambda_\theta \nu_\theta(K^*_{i+1}) (1 - e^{-rK^*_{i+1}}(1-\nu_0 + \nu_0e^{-\lambda_\theta K^*_{i+1}}))  \\
       -( r + \lambda_\theta \nu_\theta(K^*_{i+1}))\left(-c + \int_0^{K^*_{i+1}} e^{-rt}(\nu_0 \lambda_\theta e^{-\lambda_\theta t})  \dd{t} \right)   \dd{t} 
    \end{pmatrix} \mid t_{i+2} \right]\le 0. 
    \label{eqn:int2}
    \end{align}
    The two interchanges \eqref{eqn:int1} and \eqref{eqn:int2}, thus jointly imply that 
    \[ \mathbb{E} \left[ \begin{pmatrix}
       \lambda_\theta \nu_\theta(K^*_{i+1}) (1 - e^{-rK^*_{i+1}}S_\theta(K_{i+1}^*))  \\
       -( r + \lambda_\theta \nu_\theta(K^*_{i+1}))\left(-c + \int_0^{K^*_{i+1}} e^{-rt}(\nu_0 \lambda_\theta e^{-\lambda_\theta t})  \dd{t} \right)   \dd{t} 
    \end{pmatrix} \mid t_{i+2} \right] = 0.  \]
    Using \eqref{eqn:phi}, the optimality condition implies 
    \[ \mathbb{E}[\phi_\theta(K^*_{i+1}) \mid t_{i+2}] = 0. \]
    \begin{lemma}\label{lem:phi_decrease}
        The function $\phi$ defined in \eqref{eqn:phi} is decreasing.
    \end{lemma}
    \begin{proof}
    Taking the derivative,
    \begin{align} \frac{\dd}{\dd K }\phi_\theta(K ) =& \lambda_\theta \nu_\theta'(K)\left[ 1 + c - \frac{\nu_0\lambda_\theta}{\lambda_\theta + r}(1 - e^{-(r+\lambda_\theta) K}) - e^{-rK}S_\theta(K) \right]  \notag \\ 
    &- (r + \lambda_\theta \nu_\theta(K)) \nu_0\lambda_\theta e^{-(r+\lambda_\theta)K} + \left[ re^{-rK}(1-\nu_0 + \nu_0e^{-\lambda_\theta K}) +  e^{-rK}\lambda_\theta \nu_0e^{-\lambda_\theta K} \right]\lambda_\theta \nu_\theta(K)  \notag \\
    =& \lambda_\theta \nu_\theta'(K)\left[ 1 + c - \frac{\nu_0\lambda_\theta}{\lambda_\theta + r}(1 - e^{-(r+\lambda_\theta) K}) - e^{-rK}S_\theta(K) \right]  \notag \\
    &- (r + \lambda_\theta \nu_\theta(K)) \nu_0\lambda_\theta e^{-(r+\lambda_\theta)K} +  re^{-rK} \lambda_\theta \nu_0 e^{-\lambda_\theta K} +  e^{-rK}\lambda_\theta \nu_0e^{-\lambda_\theta K} \lambda_\theta \nu_\theta(K) \notag \\
    =& \lambda_\theta \nu_\theta'(K)\left[ 1 + c - \frac{\nu_0\lambda_\theta}{\lambda_\theta + r}(1 - e^{-(r+\lambda_\theta) K}) - e^{-rK}S_\theta(K) \right] < 0.\notag
    \end{align}
    Note that the second equality used the fact that $\nu_\theta(K) S_\theta(K) = \nu_0e^{-\lambda_\theta K}$, and the inequality follows because the belief $\nu_\theta(K)$ is decreasing in $K$ and the bracketed term is positive since 
    \begin{align*} 1 + c - \frac{\nu_0\lambda_\theta}{\lambda_\theta + r}(1 - e^{-(r+\lambda_\theta) K}) - e^{-rK}S_\theta(K) =& 1 + c - \int_0^\infty \nu_0 e^{-r \min(t,K)} \lambda_\theta e^{-\lambda_\theta t} \dd{t} - (1-\nu_0) e^{-rK} \\
    =& c + \int_0^\infty \nu_0 (1 - e^{-r \min(t,K)}) \lambda_\theta e^{-\lambda_\theta t} \dd{t} + (1-\nu_0)(1 - e^{-rK}), \end{align*} which is positive.
    \end{proof}

    Let $K_E$ denote the Proposition \ref{prop:benchmark} solution threshold when $\lambda = \lambda_E$ for sure (i.e., solves \eqref{defn:taustar}), and respectively $K_H$ for $\lambda_H$. Recall that $\phi_E(K_E) = 0$ and $\phi_H(K_H) = 0$, and Proposition \ref{prop:baseline_comp} implies that $K_E < K_H$. Since $\phi_\theta$ is strictly decreasing from Lemma \ref{lem:phi_decrease}, this immediately implies that $\phi_H(K_E) > 0$, and so $\phi_H(K) > 0 > \phi_E(K)$ for any $K \in (K_E, K_H)$. In consequence, since $\phi_\theta$ is strictly decreasing, it follows that $K^*_{i+1} \in (K_E, K_H)$, and $\phi_E(K^*_{i+1}) < 0 < \phi_H(K^*_{i+1})$. Further, note that by Bayes' rule, if $q_i$ denotes the belief when approach $i$ is brainstormed,
    \[ q_{i+1}(E) = \frac{q_i(E) S_E(K^*_{i})}{q_i(E) S_E(K^*_{i}) + q_i(H) S_H(K^*_{i})}\]
    Since $S_E(K) < S_H(K)$ for any $K> 0$, the subjective belief $q_{i}(E) > q_{i+1}(E)$ and $q_{i}(H) < q_{i+1}(H)$. Therefore, since $\phi_E(K^*_{i+1}) < 0$ and $\phi_H(K^*_{i+1})) > 0$, it follows that
    \[ \mathbb{E}[\phi_\theta(K^*_{i+1}) \mid t_{i+1}] < \mathbb{E}[\phi_\theta(K^*_{i+1}) \mid t_{i+2}] = 0.\]
    Rewriting out $\phi_\theta$,
    \begin{align*} &\mathbb{E}\left[ \begin{pmatrix} \lambda_\theta \nu_\theta(K^*_{i+1})  \left( 1 + c - \frac{\nu_0\lambda_\theta}{\lambda_\theta + r}(1 - e^{-(r+\lambda_\theta) K^*_{i+1}}) - e^{-rK^*_{i+1}}S_\theta(K^*_{i+1}) \right)  \\
     - r  \left( -c + \frac{\nu_0\lambda_\theta}{\lambda_\theta + r}(1 - e^{-(r+\lambda_\theta) K^*_{i+1}}) \right) \end{pmatrix} \mid t_{i+1} \right] < 0. \end{align*}
    Since $\nu_\theta$ is decreasing and the contradiction supposition assumed $K_i \ge K^*_{i+1}$, $\nu_\theta(K^*_i) < \nu_\theta(K^*_{i+1})$ so it follows that 
    \begin{align*} &\mathbb{E}\left[ \begin{pmatrix} \lambda_\theta \nu_\theta(K^*_{i})  \left( 1 + c - \frac{\nu_0\lambda_\theta}{\lambda_\theta + r}(1 - e^{-(r+\lambda_\theta) K^*_{i+1}}) - e^{-rK^*_{i+1}}S_\theta(K^*_{i+1}) \right)  \\
     - r  \left( -c + \frac{\nu_0\lambda_\theta}{\lambda_\theta + r}(1 - e^{-(r+\lambda_\theta) K^*_{i+1}}) \right) \end{pmatrix} \mid t_{i+1} \right] < 0, \end{align*}
    or 
    \begin{align} &\mathbb{E}\left[ \begin{pmatrix} \lambda_\theta \nu_\theta(K^*_{i})  \left( 1 - e^{-rK^*_{i+1}}S_\theta(K^*_{i+1}) \right)  \\
     - (r+\lambda_\theta \nu_\theta(K^*_{i}))  \left( -c + \frac{\nu_0\lambda_\theta}{\lambda_\theta + r}(1 - e^{-(r+\lambda_\theta) K^*_{i+1}}) \right) \end{pmatrix} \mid t_{i+1} \right] < 0. 
     \label{eqn:int_contradiction}
     \end{align}
    To finish the contradiction, consider a third and final interchange $\hat{k}$:
    \begin{align}
        \hat{k}_{i}(t) &:= \begin{cases}
         0 & t \in [t_{i+1}-\epsilon, t_{i+1} - \epsilon +  K_{i+1}^*) , \\
         1 & t \in [t_{i+1} - \epsilon + K_{i+1}^*, t_{i+1} + K_{i+1}^*), \\
         k_{i}(t) & t < t_{i+1} - \epsilon  \textnormal{ or } t\ge t_{i+1} + K^*_{i+1},
        \end{cases}, \notag  \\
        \hat{k}_{i+1}(t) &:= \begin{cases}
         1 & t \in [t_{i+1}-\epsilon, t_{i+1} - \epsilon +  K_{i+1}^*) , \\
         0 & t \in [t_{i+1} - \epsilon + K_{i+1}^*, t_{i+1} + K_{i+1}^*), \\
         k_{i+1}(t) & t < t_{i+1} - \epsilon  \textnormal{ or } t\ge t_{i+1} + K^*_{i+1},
        \end{cases}, \label{eqn:hat_inter} \\
        \hat{k}_j(t) &:= k_j(t) & j \not\in \{i,i+1\}. \notag 
    \end{align}
    See Figure \ref{fig:int_contr} for a depiction of the interchanged policy. The relevant interval now is from $t_{i+1}- \epsilon$ to $t_{i+1} + K^*_{i+1} = t_{i+2}$. To first order, the original path on the interval yielded payoff
    \[ \mathbb{E}\left[ \lambda_\theta \nu_\theta(K^*_{i})\epsilon+e^{-r\epsilon}(1 - \lambda_\theta \nu_\theta(K^*_{i})\epsilon)\left(-c + \int_0^{K^*_{i+1}} e^{-rt}(\nu_0 \lambda_\theta e^{-\lambda_\theta t})  \dd{t} \right)   \dd{t} \mid t_{i+1}-\epsilon \right] + O(\epsilon^2). \]
    The interchanged strategy $\hat{k}$ yields 
    \[ \mathbb{E}\left[ -c + \int_0^{K^*_{i+1}} e^{-rt}(\nu_0 \lambda_\theta e^{-\lambda_\theta t})  \dd{t} + e^{-rK^*_{i+1}}(1-\nu_0 + \nu_0e^{-\lambda_\theta K^*_{i+1}})\lambda_\theta \nu_\theta(K^*_{i})\epsilon \mid t_{i+1}-\epsilon \right] + O(\epsilon^2). \]
    Optimality implies that the payoff difference is nonnegative; taking $\epsilon \to 0$, this implies that
    \begin{align*} \mathbb{E} \left[ \begin{pmatrix}
       \lambda_\theta \nu_\theta(K^*_{i}) (1 - e^{-rK^*_{i+1}}(1-\nu_0 + \nu_0e^{-\lambda_\theta K^*_{i+1}}))  \\
       -( r + \lambda_\theta \nu_\theta(K^*_{i}))\left(-c + \int_0^{K^*_{i+1}} e^{-rt}(\nu_0 \lambda_\theta e^{-\lambda_\theta t})  \dd{t} \right)   \dd{t} 
    \end{pmatrix} \mid t_{i+1} \right]\ge 0,
    \end{align*}
    which contradicts \eqref{eqn:int_contradiction}, completing the proof. 
\end{proof}

 \begin{figure}
     \centering
     \begin{subfigure}[b]{0.3\textwidth}
         \centering
         \begin{tikzpicture}
             \draw[->] (0,0) -- (5,0);
             \filldraw (1,0) circle[radius=2pt] node[anchor=north]{$t_{i+2}$};
             \filldraw (4,0) circle[radius=2pt] node[anchor=north]{$t_{i+2} + K^*_{i+1}$};
             \fill [orange,opacity=0.5] (0,0) rectangle (1,2);
             \fill [blue,opacity=0.5] (1,0) rectangle (4,2);
             \fill [blue,opacity=0.5] (4,0) rectangle (5,1);
             \fill [orange,opacity=0.5] (4,1) rectangle (5,2);
             \draw (1,0) -- (1,2);
             \draw (4,0) -- (4,2);
         \end{tikzpicture}
         \caption{Original action path $k$}
         \label{fig:int_0}
     \end{subfigure}
     \hfill
     \begin{subfigure}[b]{0.3\textwidth}
         \centering         
         \begin{tikzpicture}
             \draw[->] (0,0) -- (5,0);
             \filldraw (1,0) circle[radius=2pt] node[anchor=north]{$t_{i+2}$};
             \filldraw (4,0) circle[radius=2pt] node[anchor=north]{$t_{i+2} + K^*_{i+1}$};
             \fill [orange,opacity=0.5] (0,0) rectangle (0.75,2);
             \fill [blue,opacity=0.5] (0.75,0) rectangle (3.5,2);
             \fill [blue,opacity=0.5] (3.5,0) rectangle (5,1);
             \fill [orange,opacity=0.5] (3.5,1) rectangle (5,2);
             \draw[thick] (0.75,0) rectangle (1,2);
             \draw[thick] (3.5,1) rectangle (4,2);
             \draw[<->, very thick] (1.1,1) -- (3.4,1.5);
         \end{tikzpicture}
         \caption{Interchange 1}
         \label{fig:int_1}
     \end{subfigure}
     \hfill
     \begin{subfigure}[b]{0.3\textwidth}
         \centering         
         \begin{tikzpicture}
             \draw[->] (0,0) -- (5,0);
             \filldraw (1,0) circle[radius=2pt] node[anchor=north]{$t_{i+2}$};
             \filldraw (4,0) circle[radius=2pt] node[anchor=north]{$t_{i+2} + K^*_{i+1}$};
             \fill [orange,opacity=0.5] (0,0) rectangle (1.25,2);
             \fill [blue,opacity=0.5] (1.25,0) rectangle (4.5,2);
             \fill [blue,opacity=0.5] (4.5,0) rectangle (5,1);
             \fill [orange,opacity=0.5] (4.5,1) rectangle (5,2);
             \draw[thick] (1,0) rectangle (1.25,2);
             \draw[thick] (4,1) rectangle (4.5,2);
             \draw[<->, very thick] (1.35,1) -- (3.9,1.5);
         \end{tikzpicture}
         \caption{Interchange 2}
         \label{fig:int_2}
     \end{subfigure}
        \caption{The original action path, and the interchanges. Orange denotes effort on approach $i+1$, blue on effort $i+2$.}
        \label{fig:int}
\end{figure}
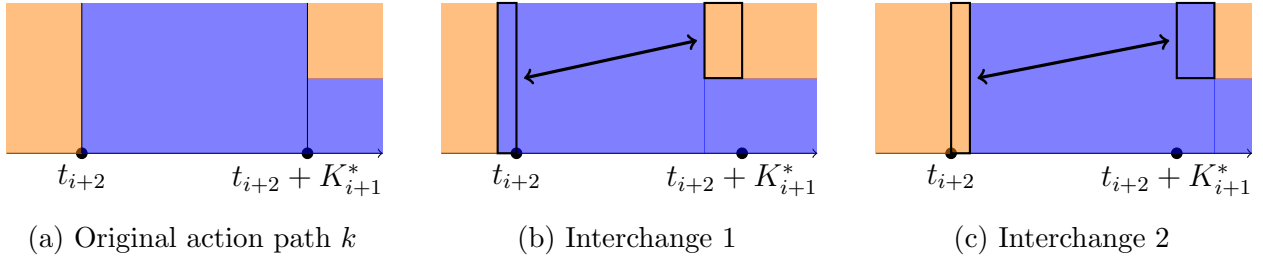
\begin{figure}
     \centering
     \begin{subfigure}[b]{0.45\textwidth}
         \centering
         \begin{tikzpicture}
             \draw[->] (0,0) -- (5,0);
             \filldraw (1,0) circle[radius=2pt] node[anchor=north]{$t_{i+1}$};
             \filldraw (4,0) circle[radius=2pt] node[anchor=north]{$t_{i+1} + K^*_{i+1}$};
             \fill [green,opacity=0.5] (0,0) rectangle (1,2);
             \fill [orange,opacity=0.5] (1,0) rectangle (4,2);
             \fill [blue,opacity=0.5] (4,0) rectangle (5,2);
             \draw (1,0) -- (1,2);
             \draw (4,0) -- (4,2);
         \end{tikzpicture}
         \caption{Original action path $k$}
         \label{fig:intc_0}
     \end{subfigure}
     \hfill
     \begin{subfigure}[b]{0.45\textwidth}
         \centering         
         \begin{tikzpicture}
             \draw[->] (0,0) -- (5,0);
             \filldraw (1,0) circle[radius=2pt] node[anchor=north]{$t_{i+1}$};
             \filldraw (4,0) circle[radius=2pt] node[anchor=north]{$t_{i+1} + K^*_{i+1}$};
             \fill [green,opacity=0.5] (0,0) rectangle (0.75,2);
             \fill [orange,opacity=0.5] (0.75,0) rectangle (3.75,2);
             \fill [green,opacity=0.5] (3.75,0) rectangle (4,2);
             \fill [blue,opacity=0.5] (4,0) rectangle (5,2);
             \draw[thick] (1,0) rectangle (0.75,2);
             \draw[thick] (3.75,0) rectangle (4,2);
             \draw[<->, very thick] (1.1,1) -- (3.65,1);
         \end{tikzpicture}
         \caption{Interchange 3}
         \label{fig:intc_1}
     \end{subfigure}
        \caption{The original action path, and the contradiction interchange. Green denotes effort on $i$, orange on $i+1$, and blue on $i+2$.}
        \label{fig:int_contr}
\end{figure}
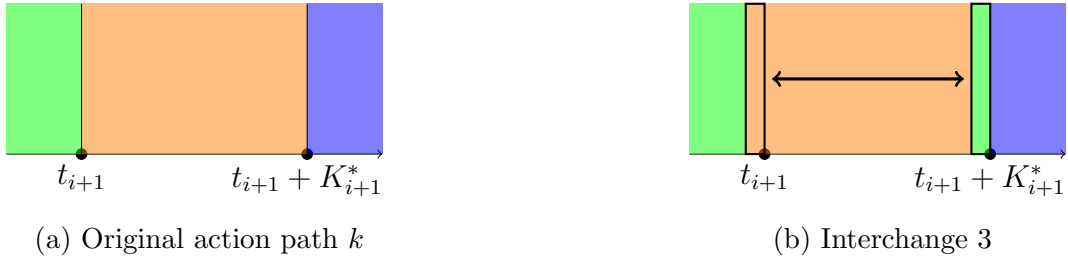

Lemma \ref{lem:threshold_decrease_after} implies that if the optimal sequence of thresholds starts to decrease, it must do so forever. However, if the the thresholds decrease forever, Lemma \ref{lem:lowest_history_of_effort} implies that approaches cannot be revisited, so the thresholds must solve a simpler optimal sequence of stopping times (since it is without loss for the agent to have no recall).

\begin{proof}[Proof of Lemma \ref{lem:increasing_thresholds}] I will prove the stronger result, that the sequence of thresholds is \textit{strictly} increasing.
    Suppose, for sake of contradiction, that optimally $K^*_i \ge K^*_{i+1}$ for some $i$ (and let $i$ denote the first such $i$). By Lemma \ref{lem:threshold_decrease_after}, this therefore implies that $K^*_{i+1} \ge K^*_{i+2}$; induction then immediately implies that $\{K^*_j\}_{j \ge i}$ must be weakly decreasing. Because $\{K^*_j\}_{j \ge i}$ must be weakly decreasing, Lemma \ref{lem:lowest_history_of_effort} then implies that after the $j+1$ approach is brainstormed, the policy never works on approach $j$ again after $j+1$ is brainstormed for all $j \ge i$. 
    So in the continuation problem after $j+1$ is brainstormed, it is \textit{without loss} to restrict the DM to never return to approaches. Therefore, the brainstorming times $\{K^*_j \}_{j \ge i}$ must solve the restricted series of stopping problems where the DM \textit{cannot} revisit approaches. 
    The continuation value of a sequence of thresholds $\{K_j\}_{j \ge i}$ immediately before approach $i+1$ is brainstormed is
    \[ 
    V_{i+1}(\{K_j\}_{j > i}) = \sum_\theta \delta_{i+1}(\theta)\begin{bmatrix} -c + \int_0^{K_{i+1}} \nu_0 \lambda_\theta e^{-(r+ \lambda_\theta) t} \dd t \\
    + \sum_{j = 2}^\infty \prod_{n=1}^{j-1} e^{-rK_{i+n}}S_\theta(K_{i+n})   \left[ -c + \int_0^{K_{i+j}} \nu_0 \lambda_\theta e^{-(r+ \lambda_\theta) t} \dd t  \right] \end{bmatrix},
    \]
    where $\delta_{i+1}(\theta)$ denotes the belief at time $t_{i+1}$ that the difficulty is $\theta$.
    The product term accounts for the accumulated discount factor and probability of survival, and the bracketed term is the contribution to the continuation value of approach $i+j$ for $j \ge 1$.
    It will be useful to relate the continuation value to the general stopping problem given by the agent's problem without recall. More precisely, consider the restricted optimization problem:
    \begin{equation}
    V(\delta) = \max_{\{K_j\}_{j > i}} \sum_\theta \delta(\theta) \begin{bmatrix} -c + \int_0^{K_{i+1}} \nu_0 \lambda_\theta e^{-(r+ \lambda_\theta) t} \dd t \\
    + \sum_{j = 2}^\infty \prod_{n=1}^{j-1} e^{-rK_{i+n}}S_\theta(K_{i+n})   \left[ -c + \int_0^{K_{i+j}} \nu_0 \lambda_\theta e^{-(r+ \lambda_\theta) t} \dd t  \right] \end{bmatrix}.
    \label{eqn:no_recall}
    \end{equation}
    where the implicit notation $\delta(\theta)$ is used to mean $\delta(H) = \delta$ and $\delta(E) = 1-\delta$. Intuitively, the problem optimizes a sequence of stopping thresholds to irreversibly move on to new approaches.
    Separate the terms by $\theta$, so
    \[ V_{\theta}(\{K_j\}_{j > i}) = -c + \int_0^{K_{i+1}} \nu_0 \lambda_\theta e^{-(r+ \lambda_\theta) t} \dd t + \sum_{j = 2}^\infty \prod_{n=1}^{j-1} e^{-rK_{i+n}}S_\theta(K_{i+n})   \left[ -c + \int_0^{K_{i+j}} \nu_0 \lambda_\theta e^{-(r+ \lambda_\theta) t} \dd t  \right]  \]
    Note that given $\theta$, only the sequence $\{K_j\}_{j > i}$ matters, not the belief over $\theta$.
    Recursion simplifies the expressions:
    \[ V_{\theta}(\{K_j\}_{j > i}) = -c + \int_0^{K_{i+1}} \nu_0 \lambda_\theta e^{-(r+ \lambda_\theta) t} \dd t + e^{-rK_{i+1}} S_\theta(K_{i+1}) V_{\theta}(\{K_j\}_{j > i+1}). \]
    Using this, the original maximization \eqref{eqn:no_recall} can be rewritten as
    \begin{align*}
        V(\delta) &= \max_{\{K_j\}_{j > i}} \sum_\theta \delta(\theta) V_{\theta}(\{K_j\}_{j > i}), \\
        &= \max_{\{K_j\}_{j > i}} \sum_\theta \delta(\theta) \left[ -c + \int_0^{K_{i+1}} \nu_0 \lambda_\theta e^{-(r+ \lambda_\theta) t} \dd t + e^{-rK_{i+1}} S_\theta(K_{i+1}) V_{\theta}(\{K_j\}_{j > i+1}) \right].
    \end{align*} 
    For convenience, relabel the sequence $\{K_j\}_{j > i}$ so the first term is $K_1$:  
    \begin{align}
        V(\delta)
        &= \max_{\{K_j\}_{j \ge 1}} \sum_\theta \delta(\theta) \left[ -c + \int_0^{K_{1}} \nu_0 \lambda_\theta e^{-(r+ \lambda_\theta) t} \dd t + e^{-rK_{1}} S_\theta(K_{1}) V_{\theta}(\{K_j\}_{j \ge 2}) \right]. \label{eqn:alt_problem_no_recall}
    \end{align}  
    Let $f(\delta, K_1, \{K_j\}_{j \ge 2})$ denote the objective of the optimization in \eqref{eqn:alt_problem_no_recall}.
    \begin{lemma}\label{lem:supermodular}
        The objective $f$ is strictly supermodular in $(\delta, K_1)$.
    \end{lemma}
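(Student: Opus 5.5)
Since $f$ is affine in $\delta$, strict supermodularity in $(\delta,K_1)$ reduces to showing that the cross partial $\partial^2 f/\partial\delta\,\partial K_1$ is strictly positive, with the tail $\{K_j\}_{j\ge 2}$ held fixed. Write $\delta(H)=\delta$, $\delta(E)=1-\delta$ and
\[ A_\theta(K_1) := -c + \int_0^{K_1}\nu_0\lambda_\theta e^{-(r+\lambda_\theta)t}\,\dd t + e^{-rK_1}S_\theta(K_1)\,V_\theta, \qquad V_\theta := V_\theta(\{K_j\}_{j\ge 2}). \]
Then $f=\delta A_H+(1-\delta)A_E$, so $\partial_\delta f = A_H(K_1)-A_E(K_1)$. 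The claim is therefore equivalent to $A_H'(K_1) > A_E'(K_1)$ for every $K_1>0$. Differentiating, and using $S_\theta' = -\nu_0\lambda_\theta e^{-\lambda_\theta K}$ and $\nu_\theta S_\theta = \nu_0 e^{-\lambda_\theta K}$, the plan is to put each derivative in the hazard form
\[ A_\theta'(K) = e^{-rK}S_\theta(K)\Bigl[\lambda_\theta\nu_\theta(K)\bigl(1-V_\theta\bigr) - rV_\theta\Bigr]. \]
This has the same structure as the Gateaux derivative $\phi_\theta$ in \eqref{eqn:phi}: the first term is the marginal flow of success from the current arm, and the second is the cost of delaying the continuation value $V_\theta$.

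The comparison would then proceed in two steps.

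First, I would order the continuation values. For any fixed threshold tail, $V_E \ge V_H$. To see this, couple the per-arm exponential clocks, which makes the breakthrough time under $\lambda_E$ stochastically earlier than under $\lambda_H$ (by Lemma \ref{lem:exp_bkth_cdf} applied arm by arm). Brainstorming costs are paid only on survival, so they are weakly smaller under $E$ as well. In addition, $V_\theta \le \nu_0\lambda_\theta/(r+\lambda_\theta) < 1$, so the factor $1-V_\theta$ is positive.

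Second, I would compare the bracketed terms after multiplying through by $S_\theta$:
\[ S_\theta(K)\Bigl[\lambda_\theta\nu_\theta(1-V_\theta) - rV_\theta\Bigr] = \nu_0\lambda_\theta e^{-\lambda_\theta K}\bigl(1-V_\theta\bigr) - rS_\theta(K)V_\theta. \]
The second piece favors $H$ whenever $V_E\ge 0$: we have $S_H > S_E$ and $V_H \le V_E$, so $-rS_HV_H$ exceeds $-rS_EV_E$ once signs are handled. Where $V_\theta$ may be negative, the costs dominate and the same ordering needs a separate small check.

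The main obstacle is the first piece. The quantity $\lambda e^{-\lambda K}$ is not monotone in $\lambda$: for small $K$ it favors $E$, and only for $K > 1/\lambda$-type values does it favor $H$. So a purely pointwise comparison can fail. My first attempt would be to fold the tail back in through the recursion
\[ V_\theta = -c + \int_0^{K_2}\nu_0\lambda_\theta e^{-(r+\lambda_\theta)t}\,\dd t + e^{-rK_2}S_\theta(K_2)V_\theta(\{K_j\}_{j\ge3}), \]
and rewrite $A_\theta'(K_1)$ as the derivative with respect to $K_1$ of the discounted success probability over the remaining horizon. That is,
\[ A_\theta'(K_1) = \frac{\partial}{\partial K_1}\Bigl(\mathbb{E}_\theta[e^{-r\tau}] - \text{discounted costs}\Bigr), \]
where lengthening $K_1$ shifts all later arms' effort later in time. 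The goal is then to show that this time-shifting loss is larger under $E$, because $E$ has more mass of early successes on the later arms to lose from delay. Formally, I would express the difference as an integral of $e^{-rt}$ against the difference of hazard densities, and sign it using $S_E/S_H$ decreasing together with the likelihood-ratio ordering of $\lambda_E$ over $\lambda_H$.

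If the fully general pointwise claim resists this argument, the fallback is to prove the cross partial positive on the range of $K_1$ actually used by the argument. There Lemma \ref{lem:phi_decrease} and the localization $K\in(K^*_E,K^*_H)$ give $\phi_E<0<\phi_H$, which provides exactly the sign needed.
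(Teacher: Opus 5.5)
Your reduction is correct and matches the paper's \eqref{eqn:cross_deriv}: the cross partial is $e^{-rK_1}\bigl(S_H b_H - S_E b_E\bigr)$ with $b_\theta := \lambda_\theta\nu_\theta(K_1)-(r+\lambda_\theta\nu_\theta(K_1))V_\theta$. The gap is that your main target, $A_H'(K_1)>A_E'(K_1)$ for every $K_1>0$ and every fixed tail, is false. No time-shifting or likelihood-ratio argument can prove it.

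Here is a counterexample. Take the tail that never brainstorms again ($K_2=\infty$, or $K_2$ large by continuity), so $V_\theta=-c+\nu_0\lambda_\theta/(r+\lambda_\theta)$. Let $K_1\to 0$, where $A_\theta'(0)=\nu_0\lambda_\theta(1-V_\theta)-rV_\theta$. With $\nu_0=1/2$, $r=1$, $c=1/10$, $\lambda_E=2$, $\lambda_H=1$ you get $V_E=7/30$ and $V_H=3/20$. Then $A_E'(0)=8/15$ and $A_H'(0)=11/40$, so the cross partial is negative for small $K_1$. The same reversal occurs for these parameters with the benchmark tail $(K^*_E,K^*_E,\dots)$.

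Two auxiliary claims are also wrong:
\begin{itemize}
\item The bound $V_\theta\le\nu_0\lambda_\theta/(r+\lambda_\theta)$ fails once several arms are tried. With small $c$ and short stationary thresholds the value tends to $\nu_0\lambda_\theta/(r+\nu_0\lambda_\theta)$, which is larger. Only the trivial bound $V_\theta<1$ is needed.
\item In Step 2 the sign is reversed. For $V_\theta>0$, the inequality $S_H>S_E$ makes $-rS_HV_H$ \emph{more} negative, so that piece can favor $E$.
\end{itemize}

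Your fallback points the right way: the paper's proof also signs the cross partial only at the optimum. But the tool you name does not give the sign. By construction, $\phi_\theta(K)$ equals $(1-e^{-rK}S_\theta(K))$ times $b_\theta$ with $V_\theta$ replaced by $V_\theta(\{K,K,\dots\})$, the value of the \emph{stationary} tail. That is not the value of the actual tail $\{K_j\}_{j\ge2}$. Moreover, where the lemma is used (the no-recall problem in the proof of Lemma \ref{lem:increasing_thresholds}, at arbitrary posteriors), nothing places $K_1$ in $(K^*_E,K^*_H)$. This is why the paper treats $K_1\ge K_H$ as a separate case.

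The missing idea is the first-order condition for the optimal $K_1$, \eqref{eqn:cross_deriv_foc}: $\delta S_H b_H+(1-\delta)S_E b_E=0$. It means you only need to sign \emph{one} bracket; the other then has the opposite sign, so $S_Hb_H-S_Eb_E>0$. The paper does this in two cases.
\begin{itemize}
\item If $K_1<K_H$: $V_H$ is at most the known-$H$ benchmark value $\lambda_H\nu_H(K_H)/(\lambda_H\nu_H(K_H)+r)$. This is strictly less than $\lambda_H\nu_H(K_1)/(\lambda_H\nu_H(K_1)+r)$, so $b_H>0$.
\item If $K_1\ge K_H$: optimality of the tail gives $V_E(\{K_j\}_{j\ge2})\ge V_E(\{K_H,K_H,\dots\})$. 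Next, $V_E(\{K_H,K_H,\dots\})\ge V_E(\{K_1,K_1,\dots\})$ by the quasiconcavity in Lemma \ref{lem:gittins}, since $K_1\ge K_H>K_E$. Finally, $\phi_E(K_1)<0$ (Lemma \ref{lem:phi_decrease}) says $V_E(\{K_1,K_1,\dots\})$ exceeds $\lambda_E\nu_E(K_1)/(\lambda_E\nu_E(K_1)+r)$. Hence $b_E<0$.
\end{itemize}
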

    \begin{proof}[Proof of Lemma \ref{lem:supermodular}]
    The cross derivative of the objective is 
    \begin{align}
        \frac{\partial^2 f}{\partial q \, \partial K_1} 
        =& - e^{-rK_1} S_E(K_1) \left[ \lambda_E \nu_E(K_1)  - (r+\lambda_E \nu_E(K_1))V_{E}(\{K_j\}_{j \ge 2}) \right] \notag \\
        & + e^{-rK_1}S_H(K_1) \left[ \lambda_H \nu_H(K_1) - (r+\lambda_H \nu_H(K_1))V_{H}(\{K_j\}_{j \ge 2}) \right].\label{eqn:cross_deriv}
    \end{align}
    I will show that this quantity is positive. Optimality of $K_1$ implies that 
    \begin{align*}
        0 = \frac{\partial f}{\partial K_1} =& \sum_{\theta} \delta(\theta) \left[ \nu_0 \lambda_\theta e^{-(r+\lambda_\theta)K_1} - re^{-rK_1}S_\theta(K_1) V_\theta(\{K_j\}_{j\ge 2}) - \nu_0\lambda_\theta e^{-(\lambda_\theta + r)K_1}V_\theta(\{K_j\}_{j\ge 2}) \right] \\
        =&\sum_{\theta} \delta(\theta) e^{-rK_1} S_\theta(K_1) \left[ \lambda_\theta \nu_\theta(K_1) - (r + \lambda_\theta \nu_\theta(K_1))V_\theta(\{K_j\}_{j\ge 2})  \right] .
    \end{align*}
    Note that $e^{-rK_1}$ has no $\theta$-dependence, so the first-order condition simplifies to:
    \begin{equation} 0 = \sum_{\theta} \delta(\theta) S_\theta(K_1) \left[ \lambda_\theta \nu_\theta(K_1) - (r + \lambda_\theta \nu_\theta(K_1))V_\theta(\{K_j\}_{j\ge 2})  \right].  \label{eqn:cross_deriv_foc} \end{equation}
    I consider two cases, depending on whether $K_1 < K_H$ or $K_1 \ge K_H$.

    \textbf{Case 1:} Suppose $K_1 < K_H$. The maximum that $V_H(\{K_j\}_{j \ge 2})$ can attain is given by the optimal policy when $H$ is known for sure, which is exactly the benchmark solution. Hence, 
    \[ V_H(\{K_j\}_{j \ge 2}) \le \frac{\lambda_H \nu_H(K_H)}{\lambda_H\nu_H(K_H) + r} < \frac{\lambda_H\nu_H(K_1)}{\lambda_H\nu_H(K_1) + r}  \]
    where the second inequality follows since $\nu_H$ is strictly decreasing and $K_1 < K_H$ by supposition. Thus, the inequality implies that $\lambda_H \nu_H(K_1) - (r + \lambda_H \nu_H(K_1))V_H(\{K_j\}_{j\ge 2}) > 0$; the first-order condition \eqref{eqn:cross_deriv_foc} summing to zero immediately implies that $\lambda_E \nu_E(K_1) - (r + \lambda_E \nu_E(K_1))V_E(\{K_j\}_{j\ge 2}) < 0$, and so \eqref{eqn:cross_deriv} must be positive.

    \textbf{Case 2:} Suppose $K_1 \ge K_H$. Recall $\phi_E$ as defined from \eqref{eqn:phi}. By definition, $\phi_E(K_E) = 0$ so Lemma \ref{lem:phi_decrease} implies that $\phi_E(K_1) < 0$. 
    \begin{equation} \frac{\lambda_\theta \nu_\theta(K_1) }{\lambda_\theta \nu_\theta(K_1) + r} <  \frac{-c + \int_0^{K_1} \nu_0 \lambda_\theta e^{-(r+ \lambda_\theta )t} \dd{t} }{1 - e^{-rK_1}S_\theta(K_1)} = V_E(\{K_1, K_1,K_1,...\}). \label{eqn:vh_bound1}
    \end{equation}
    Further, recall that by Lemma \ref{lem:gittins} from the benchmark result, it was shown that the Gittins index maximand
    \[\frac{-c + \int_0^{K} \nu_0 \lambda_E e^{-(r+ \lambda_E )t} \dd{t} }{1 - e^{-rK}S_E(K)} \]
    has positive derivative at $K < K_E$ and negative derivative above $K > K_E$. Since $K_1 \ge K_H > K_E$, it follows that 
    \begin{align*} V_E(\{K_H, K_H, K_H,...\}) &= \frac{-c + \int_0^{K_H} \nu_0 \lambda_E e^{-(r+ \lambda_E )t} \dd{t} }{1 - e^{-rK_H}S_E(K_H)} \\
    &\ge  \frac{-c + \int_0^{K_1} \nu_0 \lambda_E e^{-(r+ \lambda_E )t} \dd{t} }{1 - e^{-rK_1}S_E(K_1)} = V_E(\{K_1, K_1,K_1,...\}). \end{align*}
    Note that since 
    \[ V_H(\{K_H,K_H,K_H,...\}) \ge V_H(\{K_j\}_{j\ge 2}) \]
    and $\{K_j\}_{j \ge 2}$ optimizes \eqref{eqn:alt_problem_no_recall}, 
    \begin{equation}
        V_E(\{K_j\}_{j\ge 2}) \ge V_E(\{K_H,K_H,K_H,...\})  \label{eqn:vh_bound2}
    \end{equation} 
    else the maximization could be improved by switching the continuation strategy to $K_H, K_H, K_H, ...$. Combining \eqref{eqn:vh_bound1} and \eqref{eqn:vh_bound2}, this implies that 
    \[ V_E(\{K_j\}_{j\ge 2}) > \frac{\lambda_E \nu_E(K_1) }{\lambda_E \nu_E(K_1) + r}. \]
    Hence $\lambda_E \nu_E(K_1) - (r + \lambda_E \nu_E(K_1))V_E(\{K_j\}_{j\ge 2}) < 0$; the first-order condition \eqref{eqn:cross_deriv_foc} summing to zero immediately implies that $\lambda_H \nu_H(K_1) - (r + \lambda_H \nu_H(K_1))V_H(\{K_j\}_{j\ge 2}) > 0$, and so \eqref{eqn:cross_deriv} must be positive.
    \end{proof}
    
    By construction, $V(\delta_{j}(H)) = V^*_{j}$ for all $j > i$. Further, if $K^*_1$ denotes the maximizer correspondence in $K_1$ of the problem \eqref{eqn:alt_problem_no_recall}, then $K^*_{j} \in K^*_1(\delta_j(H))$ for all $j > i$.
    However, since $f$ is strictly supermodular, the monotone comparative statics result of \cite{es98} implies that because $\{ \delta_j(H) \}_{j \ge i}$ is strictly increasing, any selection from $\{K^*(\delta_j(H))\}_{j \ge i} $ must be strictly increasing, so $\{K^*_j\}_{j\ge i}$ must be strictly increasing, contradicting the original supposition.
\end{proof}

Given Lemmas \ref{lem:lowest_history_of_effort} and \ref{lem:increasing_thresholds}, it suffices to optimize among increasing sequence of thresholds $\{ K_j \}_{j \ge 1}$ corresponding to the effort on the previous approach when a new approach is brainstormed. To pin down the optimal $K_n^*$, note that the choice of $K_n^*$ does not affect the payoff if breakthrough arrives before $K_{n-1}^*$ effort has been exerted on the first $n$ approaches, or after the $n+2$nd approach is brainstormed (since at that point, all approaches will have $K_{n+1}^*$ historical effort). 
    
Similar to the proof of Lemma \ref{lem:threshold_decrease_after}, let $k$ denote the optimal action path and $t_n$ denote the time that approach $n$ is brainstormed, and consider the following pair of interchanges:
        \begin{align*}
        k'_{i}(t) &:= \begin{cases}
         0 & t \in \left[t_{n+1} - \epsilon, t_{n+1} - \frac{n+1}{n}\epsilon + K^*_{n} \right), \\
         \frac{1}{n+1} & t \in \left[t_{n+1} - \frac{n+1}{n}\epsilon + K_{n}^*, t_{n+1} + K_{n+1}^*\right), \\
         k_{i}(t) & t < t_{n+1} - \epsilon \textnormal{ or } t\ge t_{n+1} + K^*_{n+1},
        \end{cases} & i \le n, \\
        k'_{n+1}(t) &:= \begin{cases}
         1 & t \in [t_{n+1} - \epsilon, t_{n+1} - \frac{n+1}{n}\epsilon + K^*_{n}), \\
         \frac{1}{n+1} & t \in [t_{n+1} - \frac{n+1}{n}\epsilon + K_{n}^*, t_{n+1} + K_{n+1}^*), \\
         k_{n+1}(t) & t < t_{n+1} - \epsilon \textnormal{ or } t\ge t_{n+1} + K^*_{n+1},
        \end{cases}, \\
        k'_j(t) &:= k_j(t) & j > n+1.
    \end{align*}
    and similarly: 
    \begin{align*}
        k''_{i}(t) &:= \begin{cases}
         \frac{1}{n} & t \in [t_{n+1}, t_{n+1} + \epsilon) , \\
         0 & t \in [t_{n+1} + \epsilon, t_{n+1} + \frac{n+1}{n}\epsilon + K_{n}^*), \\
         k_{i}(t) & t < t_{n+1}  \textnormal{ or } t\ge t_{n+1} + K^*_n + \frac{n+1}{n}\epsilon,
        \end{cases} & i \le n, \\
        k''_{n+1}(t) &:= \begin{cases}
         0 & t \in [t_{n+1}, t_{n+1} + \epsilon) , \\
         1 & t \in [t_{n+1} + \epsilon, t_{n+1} + \frac{n+1}{n}\epsilon + K_{n}^*), \\
         k_{n+1}(t) & t < t_{n+1}  \textnormal{ or } t\ge t_{n+1} + K^*_n + \frac{n+1}{n}\epsilon,
        \end{cases}, \\
        k''_j(t) &:= k_j(t) & j > n+1.
    \end{align*}
    See Figure \ref{fig:int_appendix} for a depiction of the interchanges. Intuitively, these two interchanges pin down the optimality of the \textit{time} $t_{n+1}$ that the $i+1$st approach is brainstormed, assuming the other brainstorming times are fixed and the policy adjusts accordingly to satisfy Lemma \ref{lem:lowest_history_of_effort}. By the same calculation as in Lemma \ref{lem:threshold_decrease_after}, the first-order condition for optimality is 
    \begin{equation}
        \mathbb{E} \left[ \begin{pmatrix}
       \lambda_\theta \nu_\theta(K^*_{n}) (1 - e^{-rK^*_{n}}S_\theta(K_{n}^*))  \\
       -( r + \lambda_\theta \nu_\theta(K^*_{n}))\left(-c + \int_0^{K^*_{n}} e^{-rt}(\nu_0 \lambda_\theta e^{-\lambda_\theta t})  \dd{t} \right)   \dd{t} 
    \end{pmatrix} \mid t_{n+1} \right] = 0. \label{eqn:appendix_int_condition}
    \end{equation} 
    Note that the belief at time $t_{n+1}$ depends on $K^*_n$. In particular, the belief that $\theta = H$ at $t_{n+1}$ is 
    \begin{equation}
        \delta_{n+1}(H) = \frac{\delta_0 S_H(K^*_n)^n}{\delta_0 S_H(K^*_n)^n + (1-\delta_0) S_E(K^*_n)^n}. \label{eqn:appendix_int_belief}
    \end{equation}  
    Combining \eqref{eqn:appendix_int_belief} with \eqref{eqn:appendix_int_condition},
    \[\sum_\theta  \delta_{n+1}(\theta) \begin{pmatrix}
       \lambda_\theta \nu_\theta(K^*_{n}) (1 - e^{-rK^*_{n}}S_\theta(K_{n}^*))  \\
       -( r + \lambda_\theta \nu_\theta(K^*_{n}))\left(-c + \int_0^{K^*_{n}} e^{-rt}(\nu_0 \lambda_\theta e^{-\lambda_\theta t})  \dd{t} \right)   \dd{t} 
    \end{pmatrix}  = 0. \]
    Multiplying through by the denominator of \eqref{eqn:appendix_int_belief}, and using $\phi_\theta$ as defined in \eqref{eqn:phi}, 
    \[ \delta_0 S_H(K^*_n)^n \phi_H(K^*_n) + (1-\delta_0)S_E(K^*_n)^n \phi_E(K^*_n) = 0. \]
    Lemma \ref{lem:phi_decrease} implies that $\phi_E, \phi_H$ are both decreasing, which implies that the LHS of the above expression is decreasing. Further, by rewriting the equation as 
    \[ \delta_0  \phi_H(K^*_n) + (1-\delta_0) \frac{S_E(K^*_n)^n}{S_H(K^*_n)^n}\phi_E(K^*_n) = 0, \]
    the fact that $\lim_{K\to\infty} S_E(K)^n/S_H(K)^n = 1$, together with the observation that \eqref{eqn:learning_c_assum} guarantees that $\lim_{K \to \infty} \delta_0 \phi_H(K) + (1-\delta_0)\phi_E(K) < 0$ implies that the LHS is strictly negative in the limit as $K^*_n \to \infty$. 
    Note further that $\delta_0 \phi_H(0) + (1-\delta_0) \phi_E(0) = \delta_0(r+\lambda_H \nu_0)c + (1-\delta_0)(r + \lambda_E\nu_0)c > 0$, so there is a unique solution $K^*_n$. This concludes the proof of Theorem \ref{thm:learning}.   

    \subsubsection{Proof of Theorem \ref{thm:learning_impossible}}
    Note that the logic of Theorem \ref{thm:learning} through the proof of Lemma \ref{lem:increasing_thresholds} still holds even when $\lambda_H = 0$, up to the last part of the argument. That is, it still holds that if it is optimal to brainstorm the $n+1$st approach, then the optimality of the time it is generated implies that 
    \[ \delta_0 S_H(K^*_n)^n \phi_H(K^*_n) + (1-\delta_0)S_E(K^*_n)^n \phi_E(K^*_n) = 0. \]
    However, now note that when $\lambda_H = 0$, $\phi_H$ is constant and equal to $rc$, and $S_H(\cdot) = 1$. So the condition becomes 
    \[ \delta_0 rc  + (1-\delta_0)S_E(K^*_n)^n \phi_E(K^*_n) = 0. \]
    Since $S_E(\cdot) \le 1$ and $\phi_E$ is initially positive (by assumption \eqref{eqn:learning_c_assum}) and decreasing, for sufficiently large $n$, there does not exist a solution $K^*_n$, so there is a maximum $\bar{N}$ number of approaches that the agent generates. To see that there exists a solution for all $n < \bar{N}$, it suffices to inductively show that if a solution exists for $n$, it also exists for $n-1$. That is, by the inductive hypothesis, for $K^*_n$, 
    \[ \delta_0 rc  + (1-\delta_0)S_E(K^*_n)^n \phi_E(K^*_n) = 0. \]
    Then $\phi_E(K^*_n) < 0$. Since $S_E(K^*_n) < 1$, this implies that 
    \[\delta_0 rc  + (1-\delta_0)S_E(K^*_n)^{n-1} \phi_E(K^*_n) < 0.\]
    So the intermediate value theorem implies that there exists a solution for $K^*_{n-1} \in (0, K^*_n)$, which completes the proof.
    
\subsection{Omitted Proofs in Section \ref{sec:continuum}}
As before, define the survival function: 
\[ S_\theta(x,t) = \exp \left( - \nu_0 x (1 - e^{-\lambda_\theta t/x}) \right).\]
I will provide the following  convenience Lemma for calculations later, which shows that by dividing through by the expectation of survival, we get conditional expectations:
\begin{lemma}\label{lem:conditional_expectation}
    Let $\varphi(\lambda_\theta,x,t)$ be an arbitrary function of $\lambda_\theta,x,t$. Then \[ \frac{\mathbb{E}_\theta[\varphi(\lambda_\theta,x,t)S_\theta(x,t)] }{\mathbb{E}_\theta[S_\theta(x,t)]} = \mathbb{E}_\theta [\varphi(\lambda_\theta,x,t) \mid x,t ]. \]
\end{lemma}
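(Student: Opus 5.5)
This is Bayes' rule for the posterior over $\theta$ at state $(x,t)$, so I would prove it directly from the definition of conditional expectation in the continuum model. First, identify the event on which we condition. Being at state $(x,t)$ without a breakthrough has conditional probability $S_\theta(x,t)=\exp(-\nu_0 x(1-e^{-\lambda_\theta t/x}))$ given $\theta$. This follows from the Poisson construction described after \eqref{def:F}: the number of valid approaches is $N\sim$ Pois$(\nu_0x)$, each has been pursued for depth $t/x$, and so
\[
\mathbb{E}\bigl[(e^{-\lambda_\theta t/x})^N\bigr]=\exp\bigl(-\nu_0x(1-e^{-\lambda_\theta t/x})\bigr).
\]
Consequently $1-F(x,t)=\mathbb{E}_\theta[S_\theta(x,t)]$, where the expectation uses the prior $\delta_0$ on $H$ and $1-\delta_0$ on $E$.

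Second, apply Bayes' rule to get the posterior weight on state $\theta$ given survival to $(x,t)$:
\[
\mathbb{P}[\theta\mid x,t]=\frac{\mathbb{P}[\theta]\,S_\theta(x,t)}{\sum_{\theta'}\mathbb{P}[\theta']S_{\theta'}(x,t)}.
\]
This is the same computation as for $\delta(x,t)$ in the $\lambda_H=0$ case.

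Third, multiply by $\varphi(\lambda_\theta,x,t)$ and sum over $\theta\in\{E,H\}$. The numerator becomes $\mathbb{E}_\theta[\varphi S_\theta]$ and the denominator is $\mathbb{E}_\theta[S_\theta]$, which gives the claimed identity. Since $\varphi$ depends on the unknowns only through $\lambda_\theta$, no conditioning on the individual approach validities is needed. The whole step is just linearity of the finite sum over two states.

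The only point requiring care is the interpretation of ``conditioning on $x,t$'': this should be read as conditioning on no breakthrough along the rectangle path of breadth $x$ and depth $t/x$. I would state that convention explicitly, and note that the lemma is used only in this sense later. I would also note that the denominator is strictly positive, because $S_\theta>0$ for finite $x,t$. Beyond that, I expect no real obstacle.
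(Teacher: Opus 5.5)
Your proposal is correct and takes essentially the same route as the paper. The paper notes that $\mathbb{E}_\theta[S_\theta(x,t)]$ is the probability of reaching $(x,t)$, writes the ratio as a prior-weighted sum over $\theta$, recognizes $p(\theta)S_\theta(x,t)/\sum_{\theta'}p(\theta')S_{\theta'}(x,t)$ as the posterior $p(\theta\mid x,t)$, and concludes. Your Poisson justification of $S_\theta$ as the conditional survival probability and your remark that the denominator is strictly positive are harmless additions to that same argument.
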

\begin{proof}
    Note that $\mathbb{E}_\theta[S_\theta(x,t)]$ is the probability state $(x,t)$ is reached. Then 
    \begin{align*}
        \frac{\mathbb{E}_\theta[\varphi(\lambda_\theta,x,t)S_\theta(x,t)] }{\mathbb{E}_\theta[S_\theta(x,t)] } &= \frac{\sum_\theta p(\theta) \varphi(\lambda_\theta, x,t) S_\theta(x,t)}{\mathbb{E}_\theta[S_\theta(x,t)]}  \\
        &= \sum_\theta \frac{p(\theta)S_\theta(x,t) }{\sum_{\theta'}p(\theta')S_{\theta'}(x,t)}\varphi(\lambda_\theta,x,t) \\
        &= \sum_\theta p(\theta \mid x,t) \varphi(\lambda_\theta,x,t) = \mathbb{E}[\varphi(\lambda_\theta,x,t) | x,t].
    \end{align*}
\end{proof}

The proof of Proposition \ref{prop:continuum_single_comp} will be useful in the proof of Proposition \ref{prop:continuum_single}, so I will provide it first.

\begin{proof}[Proof of Proposition \ref{prop:continuum_single_comp}]
    Define the analogue of of $\phi$ as in the proof of Theorem \ref{thm:learning}, and the left-hand side of \eqref{eqn:constant_depth}:
    \begin{equation}
        \tilde{\phi}_\theta(d) := r \nu_0 \left(1 - e^{-\lambda_\theta d} - \lambda_\theta d e^{-\lambda_\theta d} \right) - rc - c \nu_0\lambda_\theta e^{-\lambda_\theta d}. \label{eqn:def_continuum_phi}
    \end{equation}
    It is not hard to verify that $\tilde{\phi}_\theta(d)$ is strictly increasing in $d$. Further, 
    \begin{align*}
        \tilde{\phi}_\theta(0) &= - rc - c\nu_0 \lambda_\theta < 0, \\
        \lim_{d\to\infty} \tilde{\phi}_\theta(d) &= r \nu_0 - rc > 0.
    \end{align*}
    Denote $d_\theta$ as the unique zero of $\tilde{\phi}_\theta$ on $[0, \infty)$. I now show that $d_\theta$ is decreasing in $\lambda_\theta$; note that by the implicit function theorem, it suffices to show that $\partial \tilde{\phi}_\theta / \partial \lambda_\theta$ is increasing at $d_\theta$. To see this, note that 
    \[ \frac{\partial \tilde{\phi}_\theta }{\partial \lambda_\theta} = \left( r \nu_0 \lambda_\theta d_\theta^2 + c \nu_0 \lambda_\theta d_\theta - c \nu_0 \right)e^{-\lambda_\theta d_\theta}. \]
    Applying the fact that $e^{-\lambda_\theta d} \ge 1 - \lambda_\theta d$ for any $d$,
    \begin{align*}
        0 = \tilde{\phi}_\theta(d_\theta) &= r\nu_0 \left(1 - e^{-\lambda_\theta d_\theta} - \lambda_\theta d e^{-\lambda_\theta d_\theta} \right) - rc - c \nu_0\lambda_\theta e^{-\lambda_\theta d_\theta} \\
        &\le r \nu_0 \left( \lambda_\theta^2 d_\theta^2  \right) - rc - c\nu_0 \lambda_\theta (1 - \lambda_\theta d_\theta) \\
        &= \lambda_\theta \left( r \nu_0 \lambda_\theta d_\theta^2 + c \nu_0 \lambda_\theta d_\theta - c \nu_0 - \frac{rc}{\lambda_\theta} \right).\\
        \implies 0 < \frac{rc}{\lambda_\theta} &\le r \nu_0 \lambda_\theta d_\theta^2 + c \nu_0 \lambda_\theta d_\theta - c \nu_0 
    \end{align*}
    Therefore $\frac{\partial \tilde{\phi}_\theta }{\partial \lambda_\theta}  \ge 0$, so $d_\theta$ is decreasing in $\lambda_\theta$ by the implicit function theorem. 

    Lastly, it is straightforward to check that 
    \[ \frac{\partial \tilde{\phi}(d) }{\partial c} = -r - \nu_0 \lambda_\theta e^{-\lambda_\theta d} < 0, \]
    so by the implicit function theorem, $d_\theta$ is increasing in $c.$
\end{proof}

\begin{proof}[Proof of Proposition \ref{prop:continuum_single}]
    The problem \eqref{eqn:continuum_single} can be rewritten as 
    \begin{gather*}
        \max_{x} \int_0^\infty \left( e^{-rt}- \int_0^t e^{-rs}c \dot{x} \dd{s} \right) \dd{F(x,t)}.
    \end{gather*}
    For now, drop the nonnegativity constraints. Integrating by parts, this becomes
    \begin{gather*}
        \max_{x} \int_0^\infty e^{-rt} \left( r F(x,t) - (1-F(x,t)) c\dot{x} \right) \dd{t}.
    \end{gather*}
    The rest of the solution is straightforward application of the Euler-Lagrange condition. Note the Lagrangian objective is concave in $x$ because $F$ is concave in $x$. The partial derivative of the integrand in $x$ is 
    \[ e^{-rt} \left(r\frac{\partial F(x,t) }{\partial x }+ \frac{\partial F(x,t)}{\partial x} c \dot{x} \right).  \]
    The partial derivative of $\dot{x}$ is 
    \[ e^{-rt} \left( - (1-F(x,t))c  \right). \]
    The total derivative of this partial derivative in $t$ is 
    \begin{align*}  -r e^{-rt}\left(  - (1-F(x,t))c  \right) + e^{-rt}\left( c \left(\frac{\partial F(x,t)}{\partial t} + \frac{\partial F(x,t)}{\partial x} \dot{x} \right) \right).
    \end{align*}
    Setting this equal to the partial derivative in $x$, 
    \[ r  \frac{\partial F(x,t)}{\partial x}   =r (1- F(x,t) )c + c \frac{\partial F(x,t)}{\partial t}.  \]
    Since the Lagrangian objective is concave in $x$, the Euler-Lagrange condition is necessary and sufficient for an optimal solution in the unconstrained problem, so it suffices to check that the Euler-Lagrange equation's solution is admissible. Fix a $t>0$, and rewrite the equation as 
    \[r \left( \frac{\partial F(x,t)}{\partial x} - (1- F(x,t) )c \right) - c \frac{\partial F(x,t)}{\partial t} = 0. \]
    Invoking the forms of the derivatives and linearity of expectation, this becomes 
    \[ \mathbb{E}_\theta \left[ \left(r \nu_0 \left(1 - e^{-\lambda_\theta t/x} - \frac{\lambda_\theta t}{x} e^{-\lambda_\theta t/x} \right) -r c - c \nu_0\lambda_\theta e^{-\lambda_\theta t/x}\right)S_\theta(x,t) \right] = 0.\]
    Using the definition \eqref{eqn:def_continuum_phi}, the Euler-Lagrange equation can be rewritten:
    \begin{equation}
        \delta_0 S_H(x,t) \tilde{\phi}_H(t/x) + (1-\delta_0)S_E(x,t) \tilde{\phi}_E(t/x) = 0, \label{eqn:cont_single_EL_rewrite}
    \end{equation} 
    which is increasing in $t/x$ fixing $t$, since $S_\theta(x,t) = \exp\left( - \nu_0 t \frac{1 - e^{-\lambda_\theta t/x}}{t/x}\right)$, and $(1 - e^{-\lambda_\theta d})/d$ is strictly decreasing in $d$. Note that the Euler-Lagrange equation also can be rewritten as
    \[ \left(\frac{\delta_0}{1-\delta_0} \frac{S_H(x,t)}{S_E(x,t)} \right)\tilde{\phi}_H(t/x) +  \tilde{\phi}_E(t/x) = 0. \]
    Since the left-hand side is negative for $t/x = 0$ (fixing $t$) and strictly positive in the limit as $t/x\to\infty$ fixing $t$, there exists a unique solution $t/x^*$ to the Euler-Lagrange equation for any fixed $t$, which also thus implies a unique solution $x^*$. 
    Further, since the parenthesized part is positive and $\tilde{\phi}_\theta$ is monotone, $t/x$ must always lie in $[d_E, d_H]$.

    To verify the feasibility constraints on $x^*$, it suffices to check that $\dot{x}^* \ge 0$ and 
    \[ \frac{\dd{}}{\dd{t}}\left(\frac{t}{x^*}\right) = \frac{x^*- t\dot{x}^*}{(x^*)^2} \ge 0.  \]
    To do this, I will invoke the implicit function theorem. First, I show that $t/x^*$ is weakly increasing in time $t$. To see this, rewrite the Euler-Lagrange equation condition in $d = t/x$ and $t$: 
    \[ \delta_0 \hat{S}_H(d,t) \tilde{\phi}_H(d) + (1-\delta_0)\hat{S}_E(d,t) \tilde{\phi}_E(d) = 0, \]
    where $\hat{S}_\theta(d,t) = \exp\left(- \nu_0 t \frac{1 - e^{-\lambda_\theta d}}{d} \right)$. Since I already showed that the left-hand side expression is increasing in $d$, it suffices to show that it is decreasing in $t$ at the optimal $d^*$. The partial derivative of the LHS with respect to $t$ is 
    \[ -\delta_0 \nu_0 \left(\frac{1 - e^{-\lambda_H d}}{d} \right) \hat{S}_H(d,t)\tilde{\phi}_H(d)  -(1-\delta_0)\nu_0 \left( \frac{1 - e^{-\lambda_E d}}{d}\right) \hat{S}_E(d,t)\tilde{\phi}_E(d). \]
    Evaluating this at the $d$ that solves the Euler-Lagrange condition, the $1/d$ parts of both terms cancel, so this simplifies to 
    \[ \frac{\nu_0}{d}e^{-\lambda_E d}\left(\delta_0 e^{(\lambda_E-\lambda_H) d}\hat{S}_H(d,t)\tilde{\phi}_H(d) +(1-\delta_0)\hat{S}_E(d,t)\tilde{\phi}_E(d) \right) . \]
    Applying the Euler-Lagrange condition again, 
    \[ \frac{\nu_0}{d}e^{-\lambda_E d}\left(\delta_0 (e^{(\lambda_E-\lambda_H) d} - 1)\hat{S}_H(d,t)\tilde{\phi}_H(d) \right) < 0, \]
    where inequality follows because $\tilde{\phi}_H(d) < 0$ since $d \in [d_E, d_H]$ and $\tilde{\phi}_H$ is increasing and zero at $d_H$.
    Hence, the implicit function theorem implies that $t/x^*$ is increasing, which immediately implies that $\dot{x}^* \le x/t$. 
    A similar argument shows that $\dot{x}^* \ge 0$, so the solution to the Euler-Lagrange equation is admissible.
\end{proof}

\subsubsection{Proof of Theorem \ref{thm:microfound}}
Before proving the theorem, I show one intermediate lemma:
\begin{lemma}\label{lem:increasing_depth}
    For any feasible $x$, $x/t$ must be weakly decreasing. 
\end{lemma}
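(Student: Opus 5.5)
Since $x(0)=0$ and $x$ is admissible (continuous, piecewise differentiable), I plan to show that $g(t):=x(t)/t$ is weakly decreasing on $(0,\infty)$ by differentiating it wherever $x$ is differentiable and then stitching the pieces together using continuity. On any open interval where $x$ is differentiable and $t>0$,
\[ \frac{\dd{}}{\dd{t}}\left(\frac{x(t)}{t}\right) = \frac{t\dot{x}(t) - x(t)}{t^2}. \]
The feasibility constraint $\dot{x} \in [0, x/t]$ a.e.\ gives $t\dot{x} \le x$, so this derivative is nonpositive almost everywhere on each such interval.

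The second step turns the a.e.\ sign into monotonicity. Because $x$ is continuous and piecewise differentiable, it is absolutely continuous on compact subintervals of $(0,\infty)$, and so is $x/t$ there, since $1/t$ is smooth away from zero. For any $0<t_1<t_2$, I will then write
\[ \frac{x(t_2)}{t_2} - \frac{x(t_1)}{t_1} = \int_{t_1}^{t_2} \frac{s\dot{x}(s) - x(s)}{s^2}\,\dd{s} \le 0. \]
If one prefers not to invoke absolute continuity, the same conclusion follows by applying the mean value theorem on each differentiability piece and using continuity of $x/t$ at the finitely many breakpoints in $[t_1,t_2]$.

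The only delicate point is what happens near $t=0$, where $x/t$ is of the form $0/0$. The constraint $\dot{x}(0)<\infty$ together with $x(0)=0$ gives $x(t)/t \to \dot{x}(0)$ as $t\to 0^+$, so defining $g(0):=\dot{x}(0)$ extends $g$ continuously to $[0,\infty)$. Monotonicity on $(0,\infty)$ then carries over to $[0,\infty)$ by continuity. I do not expect a real obstacle here: the lemma is essentially a restatement of the constraint $\dot{x}\le x/t$, equivalently that depth $t/x$ is weakly increasing. The main care needed is regularity, namely applying the fundamental theorem of calculus across kinks and handling the endpoint $t=0$.
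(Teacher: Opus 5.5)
Your proposal is correct and follows the paper's proof, which consists only of the computation $\frac{d}{dt}(x/t) = (t\dot{x}-x)/t^2 \le 0$ using $\dot{x}\le x/t$; your integration across kinks and your handling of $t\to 0^+$ make explicit the regularity the paper leaves implicit. One minor point: the absolute continuity you invoke is best justified by the feasibility bound $0\le \dot{x}\le x/t$ (which keeps $\dot{x}$ bounded on compact subsets of $(0,\infty)$), not by continuity and piecewise differentiability alone, and the integral route is preferable to the mean-value-theorem route because the constraint holds only almost everywhere and need not hold at the intermediate point the mean value theorem produces.
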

\begin{proof}
    Taking the derivative, 
    \[ \frac{\dd{}}{\dd{t}}\left( \frac{x}{t} \right)  = \frac{\dot{x} t  - x}{t^2} \le 0,\]
    since $\dot{x} \le x/t$.
\end{proof}

\begin{proof}[Proof of Theorem \ref{thm:microfound}]
    Proceed point by point. 
    
    \textbf{Point 1:} Suppose $\{\sigma^n \}_n$ converges in strategy to $x$. First, I show the breakthrough time converges in distribution to that induced by $x$. To show this,
        I will instead construct a bracket $\ubar{F}^n(t;\sigma^n) \le F^n(t;\sigma^n) \le \bar{F}^n(t;\sigma^n)$, and show that both $\ubar{F}^n(t;\sigma^n) \to F(x(t),t)$ and $\bar{F}^n(t;\sigma^n) \to F(x(t),t)$ (and so the squeeze theorem implies the convergence in distribution). Note that for each $n$, the number of arms brainstormed by time $t$ is some $j(n)+1$, where $j(n)$ is such that 
        \[ j(n) K^n_{j(n)} \le t \le (j(n)+1)K^n_{j(n)+1}. \]
        Then 
        \[ \hat{N}^n(t) = \frac{j(n)+1}{n}, \]
        so by convergence in strategy, $\frac{j(n) + 1}{n} \to x(t)$ (which also implies that $j(n)/ n \to x(t)$). 
        
        Consider 
        \begin{align*} \ubar{F}^n(t;\sigma^n) &= 1 -  \sum_\theta \delta_\theta \left(1 - \frac{\nu^*_0}{n} + \frac{\nu^*_0}{n} e^{-\lambda_\theta^*n t /j(n)} \right)^{j(n)} , \\
        \bar{F}^n(t;\sigma^n) &= 1 -  \sum_\theta \delta_\theta \left(1 - \frac{\nu^*_0}{n} + \frac{\nu^*_0}{n} e^{-\lambda_\theta^*n t/(j(n) + 1)} \right)^{j(n)+1},
        \end{align*} 
        where $\delta_\theta$ is understood to mean $\delta_H = \delta_0^*$, $\delta_E = 1 - \delta_0^*$. To verify that these values bracket $F^n(t;\sigma^n)$, note that Lemma \ref{lem:lowest_history_of_effort} implies that in the interval $[j(n)K^n_{j(n)}, (j(n)+1)K^n_{j(n)})$, the strategy $\sigma^n$ only works on arm $j(n) + 1$, so if $t$ lies in this interval, 
        \[ F^n(t;\sigma^n) = 1 - \sum_\theta \delta_\theta \left(1 - \frac{\nu^*_0}{n} + \frac{\nu^*_0}{n} e^{-\lambda_\theta^*n K^n_{j(n)}} \right)^{j(n)}\left( 1 - \frac{\nu^*_0}{n} + \frac{\nu^*_0}{n} e^{-\lambda_\theta^*n \left(t - j(n)K^n_{j(n)}\right)}\right) . \]
        To see that this lies between $\ubar{F}^n(t;\sigma^n)$ and $\bar{F}^n(t;\sigma^n)$, it suffices to show that the expression
        \[ \varphi(Z) =  \left(1 - \frac{\nu^*_0}{n} + \frac{\nu^*_0}{n} e^{-\lambda_\theta^*n Z } \right)^{j(n)}\left( 1 - \frac{\nu^*_0}{n} + \frac{\nu^*_0}{n} e^{-\lambda_\theta^*n \left(t - j(n)Z\right)}\right)\]
        is decreasing on $Z \in [ t/(j(n)+1), t/j(n)]$; since $Z = t/j(n)$ gives $\ubar{F}^n(t;\sigma^n)$, setting $Z = t/(j(n)+1)$ gives $\bar{F}^n(t;\sigma^n)$, and $F^n(t;\sigma^n)$ corresponds to $Z = K^n_{j(n)} \in [t/(j(n)+1),t/j(n)]$ where the inclusion follows because we supposed $t \in [j(n)K^n_{j(n)}, (j(n)+1)K^n_{j(n)}) $. To show $\varphi(Z)$ is decreasing, it suffices to show that $\log \varphi(Z)$ is decreasing:
        \[ \log \varphi(Z) = j(n) \log  \left(1 - \frac{\nu^*_0}{n} + \frac{\nu^*_0}{n} e^{-\lambda_\theta^*n Z } \right) +  \left(1 - \frac{\nu^*_0}{n} + \frac{\nu^*_0}{n} e^{-\lambda_\theta^*n \left(t - j(n)Z\right)} \right), \]
        so 
        \begin{align*} \frac{\dd{}}{\dd{Z}} \log \varphi (Z) &=  j(n) \frac{- \lambda_\theta^*n e^{-\lambda_\theta^*n Z }}{1 - \frac{\nu^*_0}{n} + \frac{\nu^*_0}{n} e^{-\lambda_\theta^*n Z } } + \frac{ \lambda_\theta^* n j(n) e^{-\lambda_\theta^*n (t - j(n)Z) }}{1 - \frac{\nu^*_0}{n} + \frac{\nu^*_0}{n} e^{-\lambda_\theta^*n \left(t - j(n)Z\right)}}\\
        &= j(n)\lambda_\theta^*n\left( - \frac{ e^{-\lambda_\theta^*n Z }}{1 - \frac{\nu^*_0}{n} + \frac{\nu^*_0}{n} e^{-\lambda_\theta^*n Z } }+ \frac{ e^{-\lambda_\theta^*n (t - j(n)Z) }}{1 - \frac{\nu^*_0}{n} + \frac{\nu^*_0}{n} e^{-\lambda_\theta^*n \left(t - j(n)Z\right)}} \right) < 0.
        \end{align*}
        To see the inequality, note that the function $z \mapsto e^{-\lambda_\theta^* n z} / (1 - \frac{\nu^*_0}{n} + \frac{\nu^*_0}{n} e^{-\lambda_\theta^*n z} )$ is decreasing in $z$, and by assumption, $t - j(n)Z \le t - j(n) t / (j(n)+1) = t/(j(n)+1) \le Z$. Hence $\varphi(Z)$ is decreasing on $Z \in [t/(j(n)+1), t/j(n)]$, and so $F^n(t;\sigma^n)$ is bracketed by $\ubar{F}^n(t;\sigma^n)$ and $\bar{F}^n(t;\sigma^n)$ when  $t \in [j(n)K^n_{j(n)}, (j(n)+1)K^n_{j(n)}) $.
        
        In the other case, when $t \in [ (j(n)+1)K^n_{j(n)}), (j(n)+1)K^n_{j(n)+1})$, strategy $\sigma^n$ splits effort evenly on $j(n) + 1$ arms, so the state has exactly $t / (j(n)+1)$ effort on all $j(n)+1$ arms; so $F^n(t;\sigma^n) = \bar{F}^n(t;\sigma^n)$ in this case, so the bracket holds (since $F^n$ coincides with $\bar{F}^n$).

        It remains to verify that the bracket converges to $F(x(t), t)$ as $n\to \infty$.  A useful identity is that if $A_n \to A$, and $B_n \to B$ and $A, B \in (0,\infty)$, then 
        \begin{equation} \lim_{n \to \infty} \left(1 - \frac{A_n}{n} \right)^{n B_n} = e^{-AB}. \label{eqn:limiting_identity}\end{equation}
        Then straightforward limit-taking gives
        \begin{align*}
            \lim_{n \to \infty}\ubar{F}^n(t;\sigma^n)  &= 1 - \sum_\theta \delta_\theta \lim_{n \to \infty} \left(1 - \frac{\nu^*_0}{n} + \frac{\nu^*_0}{n} e^{-\lambda_\theta^*n t /j(n)} \right)^{j(n)} \\
            &= 1 - \sum_\theta \delta_\theta \lim_{n \to \infty} \left(1 - \frac{\nu^*_0}{n}\left( 1 - e^{-\lambda_\theta^* \frac{t}{[j(n)/n]}}  \right) \right)^{n \cdot [j(n)/n]} \\
            &= 1 - \sum_\theta \delta_\theta \exp\left(- \nu_0 \left(1 - e^{-\lambda_\theta t/x(t)} \right) x(t) \right) = F(x(t),t),
        \end{align*}
        since $j(n)/n \to x(t)$ by convergence in strategy, and the last line uses identity \eqref{eqn:limiting_identity}. Similarly, 
        \begin{align*}
            \lim_{n \to \infty}\bar{F}^n(t;\sigma^n)  &= 1 - \sum_\theta \delta_\theta \lim_{n \to \infty} \left(1 - \frac{\nu^*_0}{n} + \frac{\nu^*_0}{n} e^{-\lambda_\theta^*n t /[j(n) + 1]} \right)^{j(n)+1} \\
            &= 1 - \sum_\theta \delta_\theta \lim_{n \to \infty} \left(1 - \frac{\nu^*_0}{n}\left( 1 - e^{-\lambda_\theta^* \frac{t}{[(j(n) + 1)/n]}}  \right) \right)^{n \cdot [(j(n)+1)/n]} \\
            &= 1 - \sum_\theta \delta_\theta \exp\left(- \nu_0 \left(1 - e^{-\lambda_\theta t/x(t)} \right) x(t) \right) = F(x(t),t),
        \end{align*}
        which follows exactly the same steps as before, since $(j(n) + 1) / n \to x(t)$. Hence the squeeze theorem implies that $F^n(t;\sigma^n) \to F(x(t),t)$, and so the breakthrough time converges in distribution to that induced by $x$; since $t$ was chosen arbitrarily, uniform convergence of $\hat{N}^n$ to $x$ implies that the breakthrough distribution function also uniformly converges.
        
        Finally, having shown that the breakthrough time converges in distribution, consider the payoffs:
        \[ \Pi^n(\hat{\sigma}^n) = \mathbb{E}_\tau\left[\left.e^{-r\tau} - \sum_{j=1}^\infty e^{-rjK_j^n} \frac{c^*}{n} \mathbb{I}[jK^n_j \le \tau] \right| \hat{\sigma}^n\right]. \]
        Since the breakthrough time $\tau$ already converges in distribution, the $e^{-r\tau}$ part converges similarly. It suffices to show that the cost term converges appropriately. To see this, note that for any $\tau$, 
        \[ \lim_{n \to \infty} \sum_{j=1}^\infty e^{-rjK_j^n} \frac{c^*}{n} \mathbb{I}[jK^n_j \le \tau] = \lim_{n \to \infty} \int_0^\tau  e^{-r t} c^*  \dd{\hat{N}^n(t)}  =  \int_0^\tau e^{-rt} c^* \dot{x}(t) \dd{t}. \]
        The first equality follows by definition of the weighted counting measure $\hat{N}(t)$ induces, and the second follows because convergence in strategy implies that the counting measure induced by $\hat{N}^n$ setwise converges to the measure induced by $x$ on $\mathbb{R}$, so the integration of any bounded measurable function (which $e^{-rt}$ clearly is) under $\dd\hat{N}^n$ converges to its integral under $\dd{x}$. 
    
    \textbf{Point 2:} Take a feasible $x$. I construct the strategies $\{\sigma^n\}_{n \ge 1}$ such that $\sigma^n$ converges in strategy to $x$. Since the constructed strategies satisfy Lemmas \ref{lem:lowest_history_of_effort} and \ref{lem:increasing_thresholds}, it suffices to construct for each $n$, the effort threshold sequence $\{K^n_j\}$.  I will need the notion of the generalized inverse of $x$; since  $x(0) = 0$ and $x$ is increasing, the generalized inverse of $x$ is defined as:
        \[ x^{-1}(x') := \inf_t \{ x(t) \ge x'\},\]
        where $x^{-1}(x') = \infty$ if $x(t) < x'$ for all $t$. Further, since $x$ is continuous $x(x^{-1}(x')) = x'$, and since $x$ is weakly increasing, $x^{-1}(x') \le t$ implies that $x' \le x(t)$.
        Construct $K^n_j$ as follows:
        \begin{equation} K^n_j := \frac{1}{j}x^{-1}\left( \frac{j}{n}\right). \label{eqn:thm3_Kconstruct}\end{equation}
        To confirm that the induced strategy satisfies Lemma \ref{lem:increasing_thresholds}, it suffices to check that consecutive $K^n_j$ are weakly increasing (and induction implies the rest). Suppose, for sake of contradiction, that 
        \[ K^n_{j+1} = \frac{x^{-1}((j+1)/n)}{j+1} < \frac{x^{-1}(j/n)}{j} = K^n_j.\]
        Since $x$ is weakly increasing, let $\bar{t} = x^{-1}((j+1)/n) > x^{-1}(j/n) = \ubar{t}$. Then the contradiction supposition implies that 
        \[ \bar{t} < \frac{j+1}{j} \ubar{t}.\]
        Then 
        \[ \frac{x(\bar{t})}{\bar{t}} > \frac{x(\bar{t})}{\ubar{t}} \left(\frac{j}{j+1}\right) = \frac{(j+1)/n}{\ubar{t}} \left(\frac{j}{j+1}\right) = \frac{x(\ubar{t})}{\ubar{t}}. \]
        But this violates feasibility: since $\dot{x} \le x/t$, Lemma \ref{lem:increasing_depth} implies that $x/t$ must be decreasing for any feasible $x$. Therefore the construction \eqref{eqn:thm3_Kconstruct} has weakly increasing thresholds.
        
        Let $\hat{\sigma}^n$ denote the corresponding strategy induced by the thresholds in \eqref{eqn:thm3_Kconstruct}, and $\hat{N}^n$ the corresponding normalized number of arms as in \eqref{def:nna}. 
        I show that this construction converges in strategy to $x$. Pointwise convergence at $t = 0$ trivially holds, since $\hat{N}^n(0) = 1/n \to 0 = x(0)$ as $n \to \infty$. So fix a $t > 0$ and an $\epsilon > 0$. Since $x$ is continuous, there exists a $\Delta$ such that $|x(t + t') - x(t)| < \epsilon$ for any $t' \in [-\Delta, \Delta]$. Take some $n > 1/\epsilon$. Since $\epsilon > 1/n$, the mesh induced by $n$ is sufficiently fine that there is some $j$ such that 
        \[ x(t) - \epsilon <  \frac{j}{n} < x(t) \le \frac{j+1}{n} < x(t) + \epsilon. \]
        Let $\ubar{t} = x^{-1}(j/n)$, and $\bar{t} = x^{-1}((j+1)/n)$. By construction, $\hat{N}$ is such that 
        \begin{align*} \hat{N}^n(\ubar{t}) =& \frac{1}{n}\left(1 + \sum_{m=1}^\infty \mathbb{I}[mK^n_m < x^{-1}(j/n)] \right),\\
        &=  \frac{1}{n}\left(1 + \sum_{m=1}^\infty \mathbb{I}[x^{-1}(m/n) < x^{-1}(j/n)] \right) = \frac{j}{n} = x(\ubar{t}).
        \end{align*}
        since by construction $x(x^{-1}(j/n)) = j/n$. Similarly,  
        \begin{align*} \hat{N}^n(\bar{t}) =& \frac{1}{n}\left(1 + \sum_{m=1}^\infty \mathbb{I}[mK^n_m < x^{-1}((j+1)/n)] \right),\\
        &=  \frac{1}{n}\left(1 + \sum_{m=1}^\infty \mathbb{I}[x^{-1}(m/n) < x^{-1}((j+1)/n)] \right) = \frac{j+1}{n} = x(\bar{t}).
        \end{align*}
        Since $x(t) \in [j/n, (j+1)/n]$ by construction, and the fact that $\hat{N}^n(t)$ is increasing implies that $\hat{N}^n(t) \in [\hat{N}^n(\ubar{t}), \hat{N}^n(\bar{t})] = [j/n, (j+1)/n]$, it follows that $| x(t) - \hat{N}^n(t)| \le 1/n < \epsilon$. Since the choice of $n$ was independent of $t$, $\hat{N}^n$ uniformly converges to $x$ and so $\hat{\sigma}^n$ converges in strategy to $x$.
        
    \textbf{Point 3:} Fix a $t$.  From the discrete-arm characterization, the optimal strategy for $D^n$ is characterized by the thresholds $\{K^n_{\OPT,j} \}_j$, where $K^n_{\OPT,j}$ is the $K$ that solves
        \begin{equation}
        (1-\delta^*_0) S^n_E(K) ^j\phi^n_E(K)  + \delta^*_0 S^n_H(K)^j \phi^n_H(K) = 0, \label{eqn:discrete_threshold_condition}
    \end{equation}
    where
    \begin{align*} 
    S_\theta^n(K) &:=  1 - \frac{\nu^*_0}{n} + \frac{\nu^*_0}{n}e^{-\lambda^*_\theta n K}, \\
    \nu_\theta^n(K) &:= \frac{ (\nu_0^*/n)e^{-\lambda_\theta^*n K} }{1 - \nu_0^*/n + (\nu_0^*/n)e^{-\lambda_\theta^*n K} },\\
    \phi_\theta^n(K) &:= \lambda_\theta^*n \nu^n_\theta(K)  - ( r + \lambda_\theta^*n \nu^n_\theta(K))  \left[ -\frac{c^*}{n} + \frac{\nu^*_0\lambda^*_\theta}{n\lambda^*_\theta + r}(1 - e^{-(r+\lambda_\theta^*n) K}) \right] - e^{-rK}S_\theta^n(K)\lambda_\theta^*n \nu_\theta^n(K). 
    \end{align*}
    Note that as $n \to \infty$, $\phi_\theta^n(K) \to \lambda_\theta^*\nu_0^* (1 - e^{-rK})$. Since for all $K^n_{\OPT, j}$, the condition \eqref{eqn:discrete_threshold_condition} must hold, it follows that $\lim_{n \to \infty} K^n_{\OPT, j} \to 0$, else the LHS of \eqref{eqn:discrete_threshold_condition} would converge to something nonzero, and the condition would fail for some sufficiently large $n$.
    
    It will be useful to scale \eqref{eqn:discrete_threshold_condition} by $n$, so that 
    \[ (1-\delta^*_0) S^n_E(K) ^j\hat\phi^n_E(K)  + \delta^*_0 S^n_H(K)^j \hat\phi^n_H(K) = 0, \]
    where $S^n_\theta$ is as before, but 
    \[ \hat\phi_\theta^n(K) := n \phi_\theta^n(K). \]
    Note that there exists some $j^n$ such that 
    \[ j^n K^n_{\OPT,j^n} \le t < (j^n+1)K^n_{\OPT,j^n +1}. \]
    Since $K^n_{\OPT,j^n} \to 0$, by construction, $j^n K^n_{\OPT, j^n} \to t$ and $(j^n + 1)K^n_{\OPT, j^n+1} \to t$.
    
    Consider the nonnegative sequence  $\{ j^n/n \}_n$. I first show that this sequence must be bounded. Suppose, for sake of contradiction, that it is not; take any subsequence that diverges to infinity then. Note that since $j^n K^n_{\OPT,j^n} \to t$, this implies that on this unbounded sequence, $n K^n_{\OPT,j^n} = (j^n K^n_{\OPT,j^n})/(j^n/n)\to 0$. Then along this unbounded subsequence, 
    \begin{align*} \lim _{n \to \infty} \left[S^n_\theta(K_{\OPT,j^n}^n) \right]^{j^n} &= \lim_{n \to \infty} \left[ 1 -  \frac{\nu_0}{n}(1 - e^{-\lambda_\theta n K^n_{\OPT,j^n}}) \right]^{j^n} \\
    &= \lim_{n \to \infty} \left[ 1 -  \frac{\nu_0}{j^n}( \lambda_\theta j^n K^n_{\OPT,j^n} + j^nK^n_{\OPT,j^n}o\left(nK^n_{\OPT,j^n}\right) \right]^{j^n}  = e^{-\nu_0 \lambda_\theta t} .\end{align*}
    So $S_\theta^n(K^n_{\OPT,j^n})^{j^n}$ converges to a finite and positive limit along this subsequence. To finish the contradiction, I show that $\hat{\phi}^n_\theta(K^n_{\OPT,j^n})$ diverges along this subsequence; together, this implies that for some sufficiently large $n$ \eqref{eqn:discrete_threshold_condition} must be violated. To show that $\hat{\phi}^n_\theta(K^n_{\OPT,j^n})$ diverges, I rewrite it as 
    \begin{align*} \hat{\phi}^n_\theta(K^n_{\OPT, j^n}) =& \frac{\lambda_\theta^* \nu_0^* e^{-\lambda_\theta^*nK^n_{\OPT,j^n}}\left[ n(1 - e^{-rK^n_{\OPT,j^n}}S_\theta^n(K^n_{\OPT,j^n})) + \nu_0^* (1 - e^{-\lambda_\theta n K^n_{\OPT,j^n}}) \right] }{1 - (\nu_0^*/n) + (\nu_0^*/n)e^{-\lambda_\theta^*nK^n_{\OPT,j^n}}}\\
    &- \left( r + \frac{\lambda_\theta^* \nu_0^* e^{-\lambda_\theta^*nK^n_{\OPT,j^n}}}{1 - (\nu_0^*/n) + (\nu_0^*/n)e^{-\lambda_\theta^*nK^n_{\OPT,j^n}}}\right)\left( -c^* + \frac{\nu_0^*\lambda_\theta^* n}{\lambda_\theta^*n + r}(1 - e^{-(r + \lambda_\theta^*n)K^n_{\OPT,j^n}} \right)\end{align*}
    Taking the limit as $n \to \infty$, note that 
    $S_\theta^n(K^n_{\OPT,j^n}) \to 1$ and 
    \[ \frac{\lambda_\theta^* \nu_0^* e^{-\lambda_\theta^*nK^n_{\OPT,j^n}}}{1 - (\nu_0^*/n) + (\nu_0^*/n)e^{-\lambda_\theta^*nK^n_{\OPT,j^n}}} \to \lambda_\theta^* \nu_0^*\]
    so
    \[ \lim_{n\to\infty}\hat\phi^n_\theta(K^n_{\OPT,j^n}) = \lambda_\theta^*\nu_0^* \lim_{n\to\infty} n(1 - e^{-rK^n_{\OPT,j^n}}) - (r + \lambda_\theta^* \nu_0^*)(-c^*). \]
    A Taylor series expansion of $n (1 - e^{-rK^n_{\OPT,j^n}})$ gives 
    \[ n (1 - e^{-rK^n_{\OPT,j^n}}) = r n K^n_{\OPT,j^n} + o(n(K^n_{\OPT,j^n})^2).\]
    By the contradiction supposition, $nK^n_{\OPT,j^n} \to 0$, so 
    \[ \lim_{n\to\infty}\hat\phi^n_\theta(K^n_{\OPT,j^n}) =  - (r + \lambda_\theta^* \nu_0^*)(-c^*). \]
   But this implies that the LHS of \eqref{eqn:discrete_threshold_condition} in the $n \to \infty$ limit approaches something strictly positive, which implies that \eqref{eqn:discrete_threshold_condition} must be violated for sufficiently large $n$. Hence, the sequence $\{ j^n/n \}_n$ must be bounded.
    
    Since this sequence is bounded, Bolzano-Weierstrass immediately implies that there exists a convergent subsequence, in which case let $L$ be its limit. Note that the limit of $nK^n_{\OPT,j^n} \to t/L$. Then as $n \to \infty$, 
    \begin{align*} \lim _{n \to \infty} \left[S^n_\theta(K_{\OPT,j^n}^n) \right]^{j^n} &= \lim_{n \to \infty} \left[ 1 -  \frac{\nu_0}{n}(1 - e^{-\lambda_\theta n K^n_{\OPT,j^n}}) \right]^{n (j^n/n)} \\ 
    &= \exp\left( - \nu_0 L \left(1 - \exp\left(- \frac{\lambda_\theta t}{L} \right) \right) \right) \\
    & = S_\theta\left( L, t\right).\end{align*}
    Further, the limit of $\hat{\phi}_\theta$ gives 
    \begin{align*} \lim_{n\to\infty} \hat\phi_\theta(K^n_{\OPT,j^n}) =& \lambda_\theta^*\nu_0^* e^{-\lambda t/L}\left[ \lim_{n \to \infty}\left( r n K^n_{OPT,j^n} + rnK^n_\OPT o(K^n_{\OPT,j^n})\right) + \nu_0 (1 - e^{-\lambda_\theta^* t/L}) \right] \\&- (r + \lambda_\theta^* \nu_0^* e^{-\lambda t/L} )(-c^* + \nu_0^*(1 - e^{-\lambda^*_\theta t/L})) \\
    =& \lambda_\theta^*\nu_0^* e^{-\lambda t/L}\left( r \frac{t}{L}+ \nu_0 (1 - e^{-\lambda_\theta^* t/L})\right) \\&- (r + \lambda_\theta^* \nu_0^* e^{-\lambda_\theta^* t/L} )(-c^* + \nu_0^*(1 - e^{-\lambda^*_\theta t/L})) \\
    =& - r \nu_0 \left( 1 - e^{-\lambda_\theta^* t/L}  - \frac{\lambda_\theta^* t }{L}e^{-\lambda_\theta t/L}\right) + rc + c\lambda_\theta^*\nu_0^* e^{-\lambda_\theta^*t/L}.
    \end{align*}
    But this form is exactly the form given by the Euler-Lagrange equation! That is, the product
    \[ \lim_{n\to\infty} \left[S^n_\theta(K_{\OPT,j^n}^n) \right]^{j^n}\hat\phi_\theta(K^n_{\OPT,j^n}) =  \left[r \frac{\partial F_\theta(L,t)}{\partial x} - r (1 - F_\theta(L,t)) - c\frac{\partial F_\theta (L,t)}{\partial t} \right], \]
    where $F_\theta(x,t) = 1 - \exp(-\nu_0^* x (1 - e^{-\lambda_\theta t/x}))$. 
    This implies that the limit of the LHS of \eqref{eqn:discrete_threshold_condition} precisely becomes the Euler-Lagrange expression for the continuum solution at $t$, which implies that $L = x^*(t)$, so \textit{any} convergent subsequence of $j^n/n \to x^*(t)$; note that $\hat{N}^n(t) = (j^n + 1) / n $, so since $1/n \to 0$, it immediately follows that along any convergent subsequence $j^n/n$, $\hat{N}^n(t) \to x^*(t)$. Since this holds along \textit{any} convergent subsequence, the $\lim\inf$ and $\lim\sup$ of $\hat{N}^n(t)$ both converge to $x^*(t)$, so the sequence $\hat{N}^n(t) \to x^*(t)$. Note that the convergence is uniform in $t$, because the convergence depended only on the fineness of the mesh induced by $j K^n_{\OPT, j}$ and not the choice of $t$, so $\sigma^n_\OPT$ converges in strategy to $x^*$.
\end{proof}

\subsection{Omitted Proofs in Section \ref{sec:agency}}
In the proofs of results, a few derivatives will be useful to characterize. To make notation cleaner, define the recurring terms: 
\begin{align}
    H_{t,\theta}(x,t) &= \nu_0\lambda_\theta e^{-\lambda_\theta t/x} \label{eqn:ht}, \\
    H_{x,\theta}(x,t) &= \nu_0 \left(1 - e^{-\lambda_\theta t/x} - \frac{\lambda_\theta t}{x} e^{-\lambda_\theta t/x} \right) \label{eqn:hx}
\end{align}
The relevant derivatives are:
\begin{align}
    F_t(x,t) &=\mathbb{E}_\theta\left[ H_{t,\theta}(x,t)S_\theta(x,t)\right], \label{eqn:dFdt} \\
    F_x(x,t) &= \mathbb{E}_\theta\left[ H_{x,\theta}(x,t) S_\theta(x,t)\right] .\label{eqn:dFdx} \\
    F_{xx}(x,t) &= \mathbb{E}_\theta\left[  \left(- H_{t,\theta}(x,t)\frac{\lambda_\theta t^2}{x^3} - H_{x,\theta}(x,t)^2 \right)S_\theta(x,t)  \right] \le 0\label{eqn:d2Fdx2} \\
    F_{tt}(x,t) &= \mathbb{E}_\theta \left[ - \left( \frac{\lambda_\theta}{x}  H_{t,\theta}(x,t)+ H_{t,\theta}(x,t)^2\right) S_\theta(x,t) \right] \le 0 \label{eqn:d2Ft2} \\
    F_{xt}(x,t) &= \mathbb{E}_\theta\left[ H_{t,\theta}(x,t)\left( \lambda_\theta \frac{t}{x^2} -  H_{x,\theta}(x,t)\right) S_\theta(x,t) \right]\label{eqn:d2Fdxdt}
\end{align}
Note that $H_{t,\theta}$ and $H_{x,\theta}$ are the hazard rates in $t, x$ respectively.

\begin{proof}[Proof of Proposition \ref{prop:static_share}]
By the same Euler-Lagrange calculation as before, define the condition
\[ H_t(x,\alpha) := r \left(\frac{\partial F(x,t)}{\partial x}\alpha - (1-F(x,t))c\right) - c \frac{\partial F(x,t)}{\partial t} . \]
Then the Euler-Lagrange calculation from before implies that the best-response $x(t;\alpha)$ satisfies:
\[ H_t(x(t;\alpha),\alpha) = 0. \]
To derive the comparative static, note that since $H$ is affine in $\alpha$, 
\begin{align*} H_t(x(t;\alpha),\alpha) &= \alpha \frac{\partial H_t(x(t;\alpha),\alpha)}{\partial \alpha} - rc(1-F(x(t;\alpha),t)) - c \frac{\partial F(x(t;\alpha),t)}{\partial t} = 0 \\
\implies \frac{\partial H_t(x(t;\alpha),\alpha)}{\partial \alpha} &=\frac{1}{\alpha}\left(rc(1-F(x(t;\alpha),t)) + c\frac{\partial F(x(t;\alpha),t)}{\partial t} \right) >0,\end{align*}
as it is straightforward from \eqref{eqn:dFdt} that $\partial F(x,t)/\partial t > 0$ for any $x,t$. Note that this also implies that 
\[ \frac{\partial F(x(t;\alpha),t)}{\partial x} = \frac{1}{r}\frac{\partial H(x(t;\alpha),\alpha)}{\partial \alpha} > 0 \]
Now, by straightforward computation, the partial derivative $\partial H/\partial x$ is
\begin{align*}
&\mathbb{E}_\theta\left[ r \alpha \nu_0 \left( - \frac{\lambda_\theta ^2 t^2}{x^3}\right)e^{-\lambda_\theta t/x} S_\theta(x,t)- c \nu_0 \frac{\lambda_\theta^2 t}{x^2}e^{-\lambda_\theta t/x}S_\theta(x,t)   \right.\\
    &\left.- \nu_0 \left(1 - e^{-\lambda_\theta t/x} - \frac{\lambda_\theta t}{x} e^{-\lambda_\theta t/x} \right)\left[r\alpha \nu_0\left(1 - e^{-\lambda_\theta t/x} - \frac{\lambda_\theta t}{x}e^{-\lambda_\theta t/x} \right) - rc - c\lambda_\theta \nu_0 e^{-\lambda_\theta t/x} \right] S_\theta(x,t) \right]
\end{align*}
It is clear that the  terms in the first line are all negative. The second line is slightly more ambiguous, since the middle bracketed expression may be positive. Note that since $H(x,\alpha) = 0$, 
\begin{align*}
    &\mathbb{E}_\theta\left[ \left(r\alpha \nu_0\left(1 - e^{-\lambda_\theta t/x} - \frac{\lambda_\theta t}{x}e^{-\lambda_\theta t/x} \right) - rc - c\lambda_\theta \nu_0 e^{-\lambda_\theta t/x} \right) S_\theta(x,t) \right] = 0.
\end{align*}
It is straightforward to check that the expression
\[ \phi_\theta(x) = r\alpha \nu_0\left(1 - e^{-\lambda_\theta t/x} - \frac{\lambda_\theta t}{x}e^{-\lambda_\theta t/x} \right) - rc - c\lambda_\theta \nu_0 e^{-\lambda_\theta t/x}\]
is decreasing in $x$, and that for $x_E$ such that $\phi_E(x_E) = 0$, $\phi_H(x_E) < 0$; thus, at the best response $x(t;\alpha)$, it must be that $\phi_E(x(t;\alpha))S_E(x(t;\alpha)) > 0 > \phi_H(x(t;\alpha))S_H(x(t;\alpha))$. Further, since 
\[ 1 - e^{-\lambda_\theta t/x} - \frac{\lambda_\theta t}{x}e^{-\lambda_\theta t/x} \]
is increasing in $\lambda_\theta$, it follows that at $x(t;\alpha)$,
\[ \mathbb{E}_\theta\left[\left(1 - e^{-\lambda_\theta t/x} - \frac{\lambda_\theta t}{x} e^{-\lambda_\theta t/x} \right) \phi_\theta (x) S_\theta(x,t)\right] > 0, \]
which implies that 
\[ \frac{\partial H(x(t;\alpha),\alpha)}{\partial x} = \mathbb{E}_\theta\left[- \nu_0 \left(1 - e^{-\lambda_\theta t/x(t;\alpha)} - \frac{\lambda_\theta t}{x(t;\alpha)} e^{-\lambda_\theta t/x(t;\alpha)} \right)\phi_\theta(x(t;\alpha)) S_\theta(x(t;\alpha),t) \right] < 0. \]
So $\partial x(t;\alpha)/\partial\alpha > 0$ by the implicit function theorem. The principal's objective then is 
\[ \max_\alpha (1-\alpha)\int_0^\infty e^{-rt} \dd F(x(t;\alpha),t) = \max_\alpha (1-\alpha)r \int_0^\infty e^{-rt} F(x(t;\alpha),t)\dd{t} \]
The principal gets zero profit from setting $\alpha = 1$, 
so the principal always chooses some $\alpha < 1$; since the social planner solution $x_{FB} = x(\cdot;1)$, the comparative static immediately follows.
\end{proof}
\subsubsection{Proof of Theorem \ref{thm:dynamic_contract}}
\begin{proof} 
Consider the relaxed problem where the agent's optimization condition is replaced by their Euler-Lagrange equation. The Euler-Lagrange derivation proceeds identically to that in Proposition \ref{prop:continuum_single}, except a time-varying share $\alpha$ introduces a $r F_x\alpha - F_x \dot{\alpha}$ term, so the contracting problem becomes
\begin{gather}
    \max \int_0^\infty (1-\alpha(t)) e^{-rt} \left(F_x(x(t),t) u(t) + F_t(x(t),t) \right) \dd{t}, \\
    \textnormal{subject to: }\quad \dot{x}(t) =u(t),\\
    \quad rF_x(x(t),t)\alpha(t) -F_x(x(t),t)\dot{\alpha}(t) - (1 - F(x(t),t))rc - c F_t(x(t),t) = 0 \quad \forall t. \label{eqn:dyn_a_EL}
\end{gather}
Attach the time-dependent Lagrange multiplier $\mu(t) e^{-rt}$ to the agent's Euler-Lagrange condition \eqref{eqn:dyn_a_EL} (the $e$ term will make the calculations later cleaner and is without loss of generality to add). Then the generalized Lagrangian of the calculus-of-variations problem is 
\begin{align*}
     \mathcal{L}(t,\dot{x},x,\dot{\alpha}, \alpha,\mu) =& (1-\alpha)e^{-rt}\left(F_x \dot{x} + F_t \right) + \mu e^{-rt} \left( r F_x \alpha - \dot{\alpha}F_x- (1 - F)rc - c F_t\right),
\end{align*}
where $F$ and its partial derivatives are understood to be evaluated at $(x,t)$. 
For notational convenience, I drop the $(t)$ notation on $x, \alpha$, and $\mu$ and their derivatives. I will derive a solution to the \textit{principal}'s Euler-Lagrange conditions and then show that the principal's Euler-Lagrange conditions are sufficient for optimality.

The first condition that needs to hold is the agent's Euler-Lagrange equation \eqref{eqn:dyn_a_EL}; equivalently, this is the Euler-Lagrange equation of the principal's problem in the multiplier variable $\mu$. This relates $x, \alpha$ and $\dot{\alpha}$ at each time.

The second condition comes from the principal's Euler-Lagrange condition in $\alpha$:
\begin{equation*}
   r \mu e^{-rt} F_x  -\dot{\mu} e^{-rt}  F_x - \mu e^{-rt}  F_{xx} \dot{x} - \mu e^{-rt}  F_{xt} =  - e^{-rt}  \left(F_x\dot{x} + F_t\right) + \mu e^{-rt} r F_x .
\end{equation*}
Simplifying this by dividing out $e^{-rt}$ and canceling the common term on both sides gives
\begin{equation*}
      \dot{\mu}   F_x + \mu  F_{xx}\dot{x} + \mu  F_{xt}  -   \left(F_x\dot{x} + F_t \right) = 0  .
\end{equation*}
This is a differential equation which is quite easy to solve: 
\begin{align}  &\frac{\dd{} }{\dd{t}}\left[ \mu F_x \right] = \frac{\dd{F}}{\dd{t}} \implies \mu = \frac{K + F}{F_x}, \notag
\end{align}
for some constant $K$. To pin down the constant, the free boundary in $\alpha$ at zero implies that the Euler-Lagrange condition has the boundary such that the derivative of the Lagrangian in $\dot\alpha$ at zero must be zero; so $\mu(0) = 0$, which implies that $K=0$.

The third condition comes from the principal's Euler-Lagrange condition in $x$: 
\begin{align*} &-\dot\alpha e^{-rt} F_x - re^{-rt}(1-\alpha)F_x + (1-\alpha)e^{-rt}\left( F_{xx}\dot x + F_{xt} \right) \\   &= (1-\alpha)e^{-rt}\left(F_{xx}\dot x + F_{xt} \right) + \mu e^{-rt}\left( rF_{xx}\alpha - F_{xx} \dot\alpha+ F_x rc - c F_{xt} \right). \end{align*}
Simplifying and using the form of $\mu$ solved for above gives 
\begin{equation}
     (1-\alpha) r F_x + \dot{\alpha} F_x + \frac{F}{F_x}\left( r F_{xx}\alpha - F_{xx} \dot\alpha+ F_x rc - c F_{xt} \right) =0.
    \label{eqn:dyn_a_ELx}
\end{equation}
Recall that \eqref{eqn:dyn_a_EL} is given by 
\[ rF_x \alpha - F_x \dot\alpha - (1 - F)rc - c F_t = 0. \]
which is also a condition on $(x, \alpha, \dot{\alpha})$. Note that the $\alpha, \dot\alpha$ dependence in both \eqref{eqn:dyn_a_EL} and \eqref{eqn:dyn_a_ELx} both are in terms of $r\alpha - \dot\alpha$. So rewriting \eqref{eqn:dyn_a_EL} gives
\begin{equation} r\alpha - \dot\alpha = \frac{(1-F)r + F_t}{F_x}c, \label{eqn:dyn_a_EL_diffeq}
\end{equation}
and substituting away the $\alpha$ terms in \eqref{eqn:dyn_a_ELx}, 
\begin{equation*}
    r  F_x -(1-F)rc - c F_t + \frac{ F}{F_x}\left( \frac{F_{xx}}{F_x}\left((1-F)r + F_t \right)c+ F_x rc - c F_{xt} \right) =0. 
\end{equation*}
This immediately gives the law for $x$ \eqref{eqn:dyn_a_lawx}. Then, using the integrating factor $e^{-rt}$ to solve \eqref{eqn:dyn_a_EL_diffeq}, the solution $\alpha$ takes the form
\[ \alpha(\tau) e^{-r\tau} = \alpha_0 - \int_0^\tau e^{-rt}\frac{(1-F)r + F_t}{F_x} c \dd{t}. \]
To determine $\alpha_0$, we need the transversality condition at the $t \to \infty$ limit. By Theorem 16 in Chapter 3 of \cite{ss1986}, optimality for the infinite horizon problem is attained from the limit of large-but-finite horizons $T$, in which the transversality condition implies that $\alpha(T) = 1$; along this sequence, the initial condition solutions $\alpha_0^T$ satisfy
\[ \alpha_0^T  = e^{-rT} + \int_0^T e^{-rs}\frac{(1-F(x(s),s))r + F_t(x(s),s)}{F_x(x(s),s)} c \dd{s},\]
which as $T \to \infty$ determines the initial condition: 
\[ \alpha_0 = \int_0^\infty   e^{-rs}\frac{(1-F(x(s),s))r + F_t(x(s),s)}{F_x(x(s),s)} c \dd{s}. \]
Therefore, the share $\alpha$ as a function of time is
\[ \alpha(\tau) = e^{r\tau} \int_\tau^\infty   e^{-rs}\frac{(1-F(x(s),s))r + F_t(x(s),s)}{F_x(x(s),s)} c \dd{s}. \]
This is precisely the share \eqref{eqn:dyn_a_alpha}.
\end{proof}

\begin{proof}[Proof of Proposition \ref{prop:no_commit}]
I will construct and verify the candidate MPE. The principal can condition their actions on their history of shares offered, but it is payoff-relevant only through the principal's conjecture on the exploration state $\hat{x}$, which is correct in equilibrium. So strategies for the principal map conjectured states $(\hat{x}, t) \in \mathbb{R}_+^2 $ to shares $\alpha \in [0,1]$. The agent, on the other hand, observes the state and the share offered by the principal at that state, and thus their strategy maps triples $(\alpha, x, t) \in [0,1]\times\mathbb{R}_+^2$ to an effort allocation choice of $\dot{x} \in [0,x/t]$. 

To construct the MPE, I specify the share offered on-path and the depth chosen by the agent on-path. Define $\alpha_\nc$ as the maximizer:
\begin{gather*} \max_{\alpha} (1-\alpha)\frac{\nu_0(1-e^{-\lambda d})}{rd + \nu_0(1-e^{-\lambda d})}   \\
\textrm{where }d\textrm{ solves: }  r \alpha \nu_0 \left(1 - e^{-\lambda d} - \lambda d e^{-\lambda d} \right) - rc - c \nu_0\lambda e^{-\lambda d} = 0.
\end{gather*}
Let $d_\nc$ be defined as the solution to 
\begin{equation} r \alpha_\nc \nu_0 \left(1 - e^{-\lambda d} - \lambda d e^{-\lambda d} \right) - rc - c \nu_0\lambda e^{-\lambda d} = 0. \label{eqn:static_a_EL} \end{equation}
 Consider the following candidate equilibrium. The principal's strategy is:
\[ \sigma_P(\hat{x},t) = \begin{cases} 
0 & t/\hat{x} < d_\nc, \\
\frac{rc + c\nu_0 \lambda e^{-\lambda t/\hat{x}}}{r \nu_0( 1 - e^{-\lambda t/\hat{x}} - (\lambda t/\hat{x}) e^{-\lambda t/\hat{x}} )} & t/\hat{x} \ge d_\nc.
\end{cases}\]
Note that when $t/\hat{x} = d_\nc$, the definition of $d_\nc$ implies that the principal offers $\alpha_\nc$. 
The principal's conjectured agent responses (changes in the state) in these three cases are then $\dot{x} = 0$ in the first case and $\dot{x} = \hat{x}/t$ in the second.
The agent's strategy is 
\[ \sigma_A(\alpha, x, t) = \begin{cases}
    x/t & r\alpha\nu_0(1-e^{-\lambda t/x} - (\lambda t/x) e^{-\lambda t/x}) \ge rc  + c\nu_0\lambda e^{-\lambda t/x}, \\
    0 & \textnormal{else.}
\end{cases}\]

To show that this is an MPE, first observe that the principal's conjectured agent response is correct. To verify that the agent is best-responding; on-path, the principal offers $\alpha_\nc$ forever, so it is easy to see that the Euler-Lagrange equation for the agent's problem gives \eqref{eqn:static_a_EL} for $d = t/x$. If the principal deviated to some offer off-path, there are two cases. If $t/\hat{x} > d_\nc$, then the principal offers a new share for the rest of time; by construction of the new share,
\[\frac{rc + c\nu_0 \lambda e^{-\lambda t/\hat{x}}}{r \nu_0( 1 - e^{-\lambda t/\hat{x}} - (\lambda t/\hat{x}) e^{-\lambda t/\hat{x}} )} = \alpha.\]
It will be useful to denote the LHS of the above as $\alpha(t/x)$ or $\alpha(d)$, the share offered at the depth state. Rearranging this yields the agent's Euler-Lagrange condition again, and so the agent's response is once again optimal in this off-path case. In the other case, if $t / \hat{x} < d_\nc$, then the principal offers $0$, so the agent's best response is also to not explore. 

Finally, it remains to verify that the principal is best-responding to the agent's strategy. Observe that by construction, the share $\alpha_{NC}$ maximizes the principal's expected time-discounted payoff arrival rate $\nu_0(1-e^{-\lambda d})/(rd + \nu_0(1-e^{-\lambda d}))$. Therefore, all of the continuation equilibria where $d \neq d_\nc$ are strictly worse than with $d_\nc$, so when the principal is at $d_\nc$, any deviation induces a continuation equilibria at a different depth, which yields a worse payoff for the principal.
\end{proof}

\begin{proof}[Proof of Proposition \ref{prop:dec_share}]
To see that the share is decreasing over time, rewrite \eqref{eqn:dyn_a_lawx}, 
\[ \frac{(1-F)rc + c F_t}{F_x}= r + \Delta, \] where $\Delta$ is defined as the distortion term to the path: 
\[\Delta(x,t) := \frac{ F}{F_x^2}\left( \frac{F_{xx}}{F_x}\left((1-F)r + F_t \right)c+ F_x rc - c F_{xt} \right).\]
I claim that $\Delta$ is negative and decreasing in time. Rewriting $\Delta$ using the forms \eqref{eqn:dFdt}-\eqref{eqn:d2Fdxdt}, 
\begin{align*}
    \Delta(x,t) &= \frac{F}{F_x^2}\left( \frac{F_{xx}(1-F) + F_x^2}{F_x}rc + c\frac{F_{xx}F_t - F_{xt}F_x}{F_x}  \right) \\
    &= \frac{F}{F_x H_x^2}\left( \left( - H_t \frac{\lambda t^2}{x^3} \right)rc + c\left(-H_t^2 \frac{\lambda t^2}{x^3} -H_tH_x \frac{\lambda t}{x^2} \right)   \right)
\end{align*}
so $\Delta$ is clearly negative. Since $\Delta$ is nonpositive, this immediately implies that $x$ induced by the contract must be lower than the first-best $x_{FB}$. 

To show that it is strictly decreasing in time, I show the auxiliary result that depth $(t/x)$ must be strictly increasing in time. To show this, suppose for sake of contradiction that it is not. Then by dividing the law \eqref{eqn:dyn_a_lawx} through by $(1-F)$,
\begin{equation} rH_x - rc - cH_t + \frac{F}{x(1-F)}\left[  - \frac{H_t}{H_x^2} \frac{\lambda t^2}{x^2}rc - \left(\frac{H_t^2}{H_x^2} \frac{\lambda t^2}{x^2}  - \frac{H_t}{ H_x} \frac{\lambda t}{x} \right)c \right] = 0.  \label{eqn:rewritten_law_x}\end{equation}
If depth is not strictly increasing in time, then the bracketed term is nonpositive, and must be weakly decreasing in time
By examining the expressions, 
\begin{align*}
    \frac{H_t}{H_x^2}\frac{\lambda t^2}{x^2} &= \frac{(\lambda t/x)^2 e^{-\lambda t/x}}{\nu_0(1-e^{-\lambda t/x}-(\lambda t/x)e^{-\lambda t/x})^2}, \\
    \frac{H_t^2}{H_x^2}\frac{\lambda t^2}{x^2} &= H_t  \frac{H_t}{H_x^2}\frac{\lambda t^2}{x^2},\\
    \frac{H_t}{H_x}\frac{\lambda t}{x} &= \frac{\lambda (\lambda t/x) e^{-\lambda t/x}}{1 - e^{-\lambda t/x} - (\lambda t/x) e^{-\lambda t/x}}.
\end{align*}
It is straightforward to check that each of these three expressions is strictly decreasing in $t/x$;\footnote{The functions $z^2 e^{-z}/(1-e^{-z}-ze^{-z})^2$, $z^2 e^{-2z}/(1-e^{-z}-ze^{-z})^2$, and $z e^{-z}/(1-e^{-z}-ze^{-z})$, are all decreasing.} therefore, if $t/x$ is not strictly increasing in time, the bracketed part of \eqref{eqn:rewritten_law_x} is negative (as shown before) and weakly decreasing in time; since $F/(x(1-F))$ is strictly increasing in time, the last term in \eqref{eqn:rewritten_law_x} is strictly decreasing in time. But $rH_x$ and $-cH_t$ are increasing in $t/x$, and thus the contradiction supposition implies that the entire LHS of \eqref{eqn:rewritten_law_x} strictly decreases in time, a contradiction since it must equal zero at all times. Therefore, $t/x$ must be strictly increasing in time. 

To see that this implies that $\Delta$ is strictly decreasing, using the fact that by rewriting \eqref{eqn:dyn_a_lawx} using \eqref{eqn:ht} and \eqref{eqn:hx} (and suppressing the $\theta$ dependence),
\[ \frac{rc + c H_t}{H_x} = r + \Delta, \]
and observing that $H_t$ decreases and $H_x$ increases in time since $t/x$ increases, it follows that the LHS strictly decreases in time; hence, since $r$ is constant, $\Delta$ also strictly  decreases in time.
Then using this in \eqref{eqn:dyn_a_alpha}, 
\[ \alpha(t) = e^{rt} \int_t^\infty \left[re^{-rs} + e^{-rs}\Delta(x_\alpha(s),s) \right] \dd{s} = 1 + \int_0^\infty  e^{-rs}\Delta(x_\alpha(s+t),s+t)\dd{s}, \]
which implies that $\alpha$ is strictly decreasing since $\Delta$ is nonpositive and strictly decreasing. 

To recover the limiting asymptotics when $t \to \infty$, note that in the no-commitment game, the constant share means that $\dot{x}$ is constant, and so in the large $t$ limit, $t/x_\nc \to d_\nc \in (0, \infty)$ for some $d_\nc$. So to show that $x_C/x_\nc \to 0$, it suffices to show that $t / x_C \to \infty$. Suppose, for sake of contradiction, that $t / x_C \to d \le \infty$. This implies that $1-F \to 0$, so $F_x, F_t \to 0$, and also by L'Hopital's rule,
\begin{align*}
    \frac{F_t}{1 - F} &\to \nu_0 \lambda e^{-\lambda d}, \\
     \frac{F_x}{1- F} &\to \nu_0(1 - e^{-\lambda d} - \lambda de^{-\lambda d}), \\
     \frac{F_{xx}}{1-F} & \to - \nu_0^2(1 - e^{-\lambda d} - \lambda de^{-\lambda d})^2, \\
     \frac{F_{xt}}{1-F} & \to - \nu_0 ^2 \lambda e^{-\lambda d} (1 - e^{-\lambda d} - \lambda de^{-\lambda d}), \\
      x (1 - F) &= x e^{-\nu_0 x (1 - e^{-\lambda t/x})} \to 0.
\end{align*} 
Using the forms given by \eqref{eqn:dFdt}-\eqref{eqn:d2Fdxdt}, and multiplying the law \eqref{eqn:dyn_a_lawx} through by $x$ implies that 
\[ r x(1-F)\frac{F_x}{1-F} - rc - c x(1-F)\frac{F_t}{1-F} + \frac{F}{H_x^2}\left[ \left( - H_t \frac{\lambda t^2}{x^2} \right)rc + c\left(-H_t^2 \frac{\lambda t^2}{x^2} -H_tH_x \frac{\lambda t}{x} \right)   \right] = 0. \]
Since the entire bracketed term converges to a strictly negative quantity, the remaining terms converge to constants, and $x(1-F)$ goes to 0, the left-hand side of the law approaches $-\infty$, and thus the law cannot hold in the limit. Thus, by contradiction, this implies that $t / x$ must diverge. 
\end{proof}

\begin{proof}[Proof of Proposition \ref{prop:extensive}]
    As before, it will be useful to relax the problem by replacing the constraint with the Euler-Lagrange condition for the variational problem of the agent. Note that the agent's Lagrangian has partial derivative in the state given by 
    \[\frac{\partial \mathcal{L}_A}{\partial K} = - \alpha e^{-rt}\lambda^2 e^{-\lambda K} \dot{K} + \lambda e^{-\lambda K}e^{-rt}\gamma\dot{K} . \]
    The agent's partial derivative in the control is 
    \[ \frac{\partial \mathcal{L}_A}{\partial \dot{K}} =  \alpha e^{-rt} \lambda e^{-\lambda K} - e^{-\lambda K} e^{-rt}\gamma .\]
    Taking the time derivative,
    \[ \frac{\dd{}}{\dd{t}} \frac{\partial \mathcal{L}_A}{\partial \dot{K}} = \dot{\alpha} e^{-rt} \lambda e^{-\lambda K} - r\alpha e^{-rt} \lambda e^{-\lambda K} - \alpha e^{-rt}\lambda^2 e^{-\lambda K} \dot{K} + r e^{-\lambda K} e^{-rt}\gamma  + \lambda \dot{K} e^{-\lambda K} e^{-rt}\gamma, \]
    and so the Euler-Lagrange condition gives 
    \[\dot{\alpha} e^{-rt} \lambda e^{-\lambda K} - r\alpha e^{-rt} \lambda e^{-\lambda K}  + r e^{-\lambda K} e^{-rt}\gamma = 0. \]
    The principal's relaxed design problem becomes
    \begin{gather*}
    \max_{\alpha} \mathbb{E}_\theta\left[\int_0^\infty (1-\alpha)e^{-rt} \lambda_\theta e^{-\lambda_\theta t} \dd{t}\right] \\ 
    \textnormal{ subject to: } \dot{\alpha} - r\alpha  = - \frac{r\gamma}{\lambda}.
\end{gather*}
The IC constraint for the agent can be explicitly solved in the form
\[ \alpha(t) = \alpha_0 e^{rt} - e^{rt} \frac{\gamma}{\lambda}(1 - e^{-rt}). \]
The only choice of $\alpha_0$ that satisfies transversality condition is $\alpha_0 = \gamma / \lambda$. This implies that $\alpha(t) = \gamma/\lambda$ for all $t$.
\end{proof}
\begin{proof}[Proof of Proposition \ref{prop:asymptotic_decrease}]
    First, I show that $t / x \to \infty$. Suppose, for sake of contradiction, that $t / x \to d < \infty$. This implies that $1-F \to 0$, so $F_x, F_t \to 0$, and also by L'Hopital's rule,
\begin{align*}
    \frac{F_t}{1 - F} &\to \mathbb{E}_\theta[\nu_0 \lambda_\theta e^{-\lambda_\theta d}], \\
     \frac{F_x}{1- F} &\to \mathbb{E}_\theta[\nu_0(1 - e^{-\lambda_\theta d} - \lambda_\theta de^{-\lambda_\theta d})], \\
     \frac{F_{xx}}{1-F} & \to - \mathbb{E}_\theta[\nu_0^2(1 - e^{-\lambda_\theta d} - \lambda_\theta de^{-\lambda_\theta d})^2], \\
     \frac{F_{xt}}{1-F} & \to - \mathbb{E}_\theta[\nu_0 ^2 \lambda_\theta e^{-\lambda_\theta d} (1 - e^{-\lambda_\theta d} - \lambda_\theta de^{-\lambda_\theta d})], \\
      x (1 - F) &= x \mathbb{E}_\theta[e^{-\nu_0 x (1 - e^{-\lambda_\theta t/x})} ] \to 0.
\end{align*} 
Using the forms given by \eqref{eqn:dFdt}-\eqref{eqn:d2Fdxdt}, and multiplying the law \eqref{eqn:dyn_a_lawx} through by $x$ implies that 
\begin{align*}0= &r x(1-F)\frac{F_x}{1-F} - rc - c x(1-F)\frac{F_t}{1-F} \\
&+ \frac{F}{(F_x/(1-F))^2}\left[ \left(\frac{F_{xx}}{1-F} + \left(\frac{F_x}{1-F}\right)^2\right)rc + c\left(\frac{F_{xx}F_t}{(1-F)^2} - \frac{F_{xt}F_x}{(1-F)^2}\right)   \right]. \end{align*}
Since the entire bracketed term converges to a strictly negative quantity, the remaining terms converge to constants, and $x(1-F)$ goes to 0, the right-hand side of the law approaches $-\infty$, and thus the law cannot hold in the limit. Thus, by contradiction, this implies that $t / x$ must diverge. 

Since $t/x$ must diverge, this implies that $F_t/(1-F) \to 0$, $F_x/(1-F) \to \nu_0$. This implies that the incentive term
\[ I(t) = \frac{rc + c F_t/(1-F)}{r F_x/(1-F)} \to c/\nu_0, \]
and therefore
\[ \lim_{t \to \infty} \alpha(t) = \lim_{t\to\infty} \int_0^\infty r e^{-rs} I(s+t) \dd{s} = c/\nu_0. \]
Since there is no feasible solution with $\alpha < c/\nu_0$ because the agent never explores in this region, it follows that the share converges to $c/\nu_0$ from above.
\end{proof}
\newpage 
\section{Supplemental Material}
\subsection{Equivalence of \eqref{defn:taustar} and \eqref{defn:taustar_appendix}}

\begin{proof}
    Start from \eqref{defn:taustar_appendix}:
    \[ 1+   \frac{c(r+\lambda)}{\lambda(1-\nu_0)} =  \frac{\lambda}{r+\lambda}e^{-rK^*} +  \left(\frac{r}{r+\lambda} - \frac{cr}{\nu_0\lambda}\right)e^{\lambda K^*}. \] 
    Isolating the cost, 
    \[ c \left( \frac{r+\lambda}{\lambda (1-\nu_0) } + \frac{r}{\nu_0\lambda e^{-\lambda K^*}} \right) = \frac{\lambda}{\lambda+ r} e^{-rK^*} + \frac{r}{\lambda + r}e^{\lambda K^*}- 1 \]
    Simplifying the term multiplying $c$, 
    \[ c \left( \frac{(r+\lambda)\nu_0 e^{-\lambda K^*}}{\lambda (1-\nu_0) \nu_0 e^{-\lambda K^*}} + \frac{r(1-\nu_0)}{(1-\nu_0)\nu_0\lambda e^{-\lambda K^*}} \right) = \frac{\lambda}{\lambda+ r} e^{-rK^*} + \frac{r}{\lambda + r}e^{\lambda K^*}- 1, \]
    which becomes 
    \[ c \left( \frac{r (1- \nu_0 + \nu_0 e^{-\lambda K^*}) + \lambda \nu_0 e^{-\lambda K^*} }{\lambda (1-\nu_0) \nu_0 e^{-\lambda K^*}}  \right) = \frac{\lambda}{\lambda+ r} e^{-rK^*} + \frac{r}{\lambda + r}e^{\lambda K^*}- 1 . \]
    Using $\nu(K) = \nu_0 e^{-\lambda K}/ (1 - \nu_0 + \nu_0 e^{-\lambda K})$,
    \[ c \left( \frac{r + \lambda \nu(K^*) }{\lambda (1-\nu_0) \nu(K^*)} \right) = \frac{\lambda}{\lambda+ r} e^{-rK^*} + \frac{r}{\lambda + r}e^{\lambda K^*}- 1 . \]
    Multiplying through by $1-\nu_0$, 
    \[ c \left( \frac{r + \lambda \nu(K^*) }{\lambda \nu(K^*)} \right) = (1-\nu_0)\frac{\lambda}{\lambda+ r} e^{-rK^*} + (1-\nu_0)\frac{r}{\lambda + r}e^{\lambda K^*}- 1 + \nu_0 . \]
    Unifying the right-hand side,
    \[ c \left( \frac{r + \lambda \nu(K^*) }{\lambda \nu(K^*)} \right) = \frac{(1-\nu_0) \lambda e^{-rK^*} + (1-\nu_0) r e^{\lambda K^*} - (\lambda + r) + \nu_0(\lambda + r) }{\lambda + r}. \]
    Since $\nu(K) = \nu_0 e^{-\lambda K}/ (1 - \nu_0 + \nu_0 e^{-\lambda K})$ equivalently implies that $\nu_0 \frac{1-\nu(K)}{\nu(K)} = (1-\nu_0)e^{\lambda K}$,
    \[ c \left( \frac{r + \lambda \nu(K^*) }{\lambda \nu(K^*)} \right) = \frac{(1-\nu_0) \lambda e^{-rK^*} + r\nu_0 \frac{1-\nu(K^*)}{\nu(K^*)}- (\lambda + r) + \nu_0(\lambda + r) }{\lambda + r}. \]
    Simplifying,
    \[ c \left( \frac{r + \lambda \nu(K^*) }{\lambda \nu(K^*)} \right) = \frac{(1-\nu_0) \lambda e^{-rK^*} + r\nu_0 \frac{1}{\nu(K^*)}- (\lambda + r) + \nu_0 \lambda  }{\lambda + r}. \]
    Multiplying through by $\nu(K^*)$,
    \begin{align*} c \left( \frac{r + \lambda \nu(K^*) }{\lambda} \right) &= \frac{(1-\nu_0) \lambda e^{-rK^*}\nu(K^*) + r \nu_0 -(\lambda + r) \nu(K^*) + \nu_0 \lambda \nu(K^*) }{\lambda + r}, \\
     &=  \frac{(1-\nu_0) \lambda e^{-rK^*}\nu(K^*) + \nu_0(r + \lambda \nu(K^*)) }{\lambda + r} - \nu(K^*).\end{align*}
    Multiplying through by $\lambda / (r + \lambda \nu(K^*))$,
    \[ c =  \frac{(1-\nu_0) \lambda e^{-rK^*}\lambda p (K^*) }{(\lambda + r)(\lambda \nu(K^*) + r)} + \nu_0 \frac{\lambda}{\lambda + r} - \frac{\lambda \nu(K^*)}{\lambda \nu(K^*) + r}.\]
    Rearranging,
    \[ \frac{\lambda \nu(K^*)}{\lambda \nu(K^*) + r}\left(1 - (1-\nu_0) \frac{\lambda}{\lambda + r} e^{-rK^*} \right) =   \nu_0 \frac{\lambda}{\lambda + r} - c .\]
    Subtracting $\nu_0e^{-(r+\lambda) K^*} \lambda /(\lambda + r)$ from both sides, and pulling out the common factor $\frac{\lambda \nu(K^*)}{\lambda \nu(K^*) + r}$ on the left-hand side, 
    \[ \frac{\lambda \nu(K^*)}{\lambda \nu(K^*) + r}\left(1 - (1-\nu_0) \frac{\lambda}{\lambda+r} e^{-rK^*}  - \frac{\nu_0 \lambda e^{-(r+\lambda)K^*} (\lambda p^*(K) + r)}{\lambda p^*(K)(\lambda + r)}\right) =   \nu_0 \frac{\lambda}{\lambda + r}\left( 1 - e^{-(r+\lambda)K^*}\right) - c .\]
    Simplifying by using $ \nu_0  e^{-\lambda K^*} / \nu(K^*) = (1-\nu_0 + \nu_0e^{-\lambda K^*})$, the left-hand-side becomes
    \[\frac{\lambda \nu(K^*)}{\lambda \nu(K^*) + r}\left(1 - (1-\nu_0) \frac{\lambda}{\lambda + r} e^{-rK^*}  - \frac{ e^{-rK^*} (1 - \nu_0 + \nu_0 e^{-\lambda K^*}) (\lambda p^*(K) + r)}{\lambda + r}\right). \]
    Rearranging the left-hand side yields
    \[\frac{\lambda \nu(K^*)}{\lambda \nu(K^*) + r}\left(1 - \frac{e^{-rK^*} (1 - \nu_0 + \nu_0 e^{-\lambda K^*})}{\lambda + r}\left[ \frac{(1-\nu_0) \lambda }{1 - \nu_0 + \nu_0 e^{-\lambda K^*}} +  \lambda p^*(K) + r \right] \right). \]
    Simplifying the bracketed expression, 
    \[\frac{\lambda \nu(K^*)}{\lambda \nu(K^*) + r}\left(1 - \frac{e^{-rK^*} (1 - \nu_0 + \nu_0 e^{-\lambda K^*})}{\lambda + r}\left[ \frac{(1-\nu_0) \lambda  + \lambda \nu_0 e^{-\lambda K^*}}{1 - \nu_0 + \nu_0 e^{-\lambda K^*}} + r \right] \right), \]
    and the bracketed term becomes $\lambda + r$, so 
    \[ \frac{\lambda \nu(K^*)}{\lambda \nu(K^*) + r}\left(1 - e^{-rK^*} (1 - \nu_0 + \nu_0 e^{-\lambda K^*})\right) =   \nu_0 \frac{\lambda}{\lambda + r}\left( 1 - e^{-(r+\lambda)K^*}\right) - c .\]
    Dividing, 
    \[ \frac{\nu_0 \frac{\lambda}{\lambda + r}\left( 1 - e^{-(r+\lambda)K^*}\right) - c}{1 - e^{-rK^*} (1 - \nu_0 + \nu_0 e^{-\lambda K^*})} =  \frac{\lambda \nu(K^*)}{\lambda \nu(K^*) + r}  .\]
    Finally, noting that the LHS is the sum of an infinite geometric series gives \eqref{defn:taustar}.
\end{proof}
\subsection{Robustness Proofs}
\subsubsection{General Model of Difficulty}
To prove Theorem \ref{thm:learning_general}, I will replicate the same proof structure as in the baseline model (see Section \ref{sec:learning_proof_sketch}). 

\begin{proof}[Proof of Theorem \ref{thm:learning_general}]
I first start with an analogue of Lemma \ref{lem:exp_bkth_cdf}. Let $S_\theta(K) := \mathbb{E}_\theta[ e^{-\lambda_n K} ] $ be the survival probability of an arm with $K$ effort given $\theta$ (and also the Laplace transform of $G_\theta$). 
    \begin{lemma}
    \label{lem:exp_bkth_cdf_general}
    Given fixed brainstorming times and conditioning on $s, \lambda_\theta$, the action path $k$ induces breakthrough time cumulative distribution function \eqref{eqn:bk_cdf}.
\end{lemma}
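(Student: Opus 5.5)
The plan mirrors Lemma \ref{lem:exp_bkth_cdf}, replacing the Bernoulli validity variable by the arm-specific rate $\lambda_n$. Fix $\theta$, the brainstorming times, the action path $k$, and the starting state $s$. Conditional on $\theta$, the rates $\{\lambda_n\}$ are i.i.d.\ draws from $G_\theta$, so the arms are conditionally independent.

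\textbf{Step 1: fixed rates.} Condition first on a realization of $\{\lambda_n\}$. Effort $k_n(t)$ on arm $n$ generates success at Poisson rate $\lambda_n k_n(t)$, and distinct arms act as independent exponential clocks. Hence the breakthrough time has survival function
\[ \exp\Big(-\sum_n \lambda_n(\xi^k_n(t)-s_n)\Big). \]

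\textbf{Step 2: posterior at $s$.} Reaching state $s$ with no success means arm $n$ survived $s_n$ units of effort. Given $\theta$, this event is a product over arms, so the posterior on $\lambda_n$ stays independent across arms. Its density relative to $G_\theta$ is $e^{-\lambda s_n}/S_\theta(s_n)$, with $S_\theta(K)=\int e^{-\lambda K}\,\dd G_\theta(\lambda)$.

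\textbf{Step 3: integrate out the rates.} Averaging the survival function over this product posterior factors it arm by arm:
\[ \mathbb{E}\big[e^{-\lambda_n(\xi^k_n(t)-s_n)}\mid s,\theta\big]=\frac{\int e^{-\lambda s_n}e^{-\lambda(\xi^k_n(t)-s_n)}\,\dd G_\theta(\lambda)}{S_\theta(s_n)}=\frac{S_\theta(\xi^k_n(t))}{S_\theta(s_n)}. \]
Taking the product over $n$ gives \eqref{eqn:bk_cdf}.

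\textbf{Main obstacle: exchanging the infinite product and the expectation.} Only finitely many arms exist at any finite $t$ on the path. Every factor with $\xi^k_n(t)=s_n$, including arms not yet brainstormed, equals one. So the product is effectively finite, and Fubini (or plain independence of finitely many bounded factors) justifies the exchange.

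Note that Step 3 never uses the two-point structure of $G_\theta$. This is exactly why the same Lemma \ref{lem:exp_bkth_cdf} formula carries over verbatim to the general model.
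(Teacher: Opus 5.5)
Your proposal is correct and follows the paper's route: condition on the realized rates to get the exponential survival function, factor the expectation using conditional independence across arms given $\theta$, and evaluate each factor by reweighting $G_\theta$ by $e^{-\lambda s_n}/S_\theta(s_n)$, which yields $S_\theta(\xi^k_n(t))/S_\theta(s_n)$. You also make explicit two points the paper leaves implicit: that the posterior at $s$ keeps product form, and that the product is effectively finite, so product and expectation can be exchanged.
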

\begin{proof}
Let $\tau^k_\theta$ denote the stochastic breakthrough arrival time given path $k$, conditional on $s$ and $\theta$. 
The CDF of the $\tau^k_\theta$ for given realization of $\{ \lambda_n \}$ is 
\[ F(t\mid s,k,\{\lambda_n\}) = 1 - \exp\left( - \sum_{m=1}^\infty \lambda_m (\xi^k_m(t) - s_m)  \right) . \]
Taking the expectation over $\{\omega_n\}$, 
\begin{align*} F(t\mid s,\theta,k) &= 1 - \mathbb{E}\left[\left.\prod_{m=1}^\infty \exp\left( - \lambda_m (\xi^k_m(t) - s_m)  \right) \right| s, \theta \right], \\
&= 1 - \prod_{m=1}^\infty \mathbb{E}\left[\left.\exp\left( - \lambda_m (\xi^k_m(t) - s_m)  \right) \right| s, \theta \right].
\end{align*}
where the last line follows because $\lambda_m$ and $\lambda_n$ are conditionally independent given $\theta$, since $\theta$ is the only source of correlation across arms. The expectation
\begin{align*}
    \mathbb{E}\left[\left. e^{-\lambda_n (\xi^k_n(t) - s_n)}\right|s,\theta \right]  &= \int \frac{e^{-\lambda_n s_n} e^{-\lambda_n (\xi^k_n(t) - s_n)}}{\mathbb{E}[e^{-\lambda_m s_n} \mid \theta]} dG_\theta(\lambda_n)  = \frac{\mathbb{E}_\theta[e^{-\lambda_n \xi^k_n(t) }]}{\mathbb{E}_\theta[e^{-\lambda_n s_n }]}.
\end{align*}
Plugging this into the CDF expression yields the result.
\end{proof}

The next step is to establish the equalizing lemma. In the interest of not copying the entirety of the proof of Lemma \ref{lem:lowest_history_of_effort}, observe that the key step in Lemma \ref{lem:lowest_history_of_effort} where the functional form of $S_\theta$ was used was in establishing the direction of the interchanges on the survival functions; in particular, because $S_\theta$ is still convex, the exact same argument for Lemma \ref{lem:lowest_history_of_effort} still holds. 

Now, I establish Lemma \eqref{lem:increasing_thresholds} for the general model. As before, I will first prove the intermediate lemma:
\begin{lemma}\label{lem:threshold_decrease_after_general}
    Suppose the optimal threshold sequence satisfies $K^*_i \ge K^*_{i+1}$ for some $i\ge 1$. Then $K^*_{i+1} \ge K^*_{i+2}$.
\end{lemma}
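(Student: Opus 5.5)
We mirror the three-interchange argument of Lemma \ref{lem:threshold_decrease_after}, replacing the two-point objects by their general counterparts: $S_\theta$ is the Laplace transform of $G_\theta$, $\lambda_\theta(K)$ is the conditional hazard, and $\hat\phi_\theta$ is given by \eqref{eqn:hatphi}. We argue by contradiction and suppose $K^*_i \ge K^*_{i+1} < K^*_{i+2}$. By the general equalizing lemma (which holds because $S_\theta$ is convex), on the interval $[t_{i+2}, t_{i+2}+K^*_{i+1})$ the optimal path works only on approach $i+2$. It then splits effort between $i+1$ and $i+2$, and both reach $K^*_{i+2}$.

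\textbf{Step 1: local first-order condition at $t_{i+2}$.} We apply interchange 1 (brainstorm $i+2$ an instant earlier and pay back the effort on $i+1$ later) and interchange 2 (the reverse). The conditional probability of success in $[K,K+\epsilon)$ on an arm with history $K$ is now $\lambda_\theta(K)\epsilon+O(\epsilon^2)$. The discounted value of a fresh arm run for $K$ is $-c+\int_0^K e^{-rt}\lambda_\theta(t)S_\theta(t)\,dt$, and its survival factor is $e^{-rK}S_\theta(K)$. Repeating the first-order expansion verbatim gives the two opposite inequalities, and hence
\[ \mathbb{E}\bigl[\hat\phi_\theta(K^*_{i+1}) \mid t_{i+2}\bigr]=0. \]

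\textbf{Step 2: monotonicity and sign pattern.} We need the analogue of Lemma \ref{lem:phi_decrease}, namely that $\hat\phi_\theta$ is decreasing. Differentiating \eqref{eqn:hatphi} and using $\frac{d}{dK}\bigl[\lambda_\theta(K)S_\theta(K)\bigr]$ together with $S_\theta'=-\lambda_\theta S_\theta$, the non-derivative terms should cancel exactly as before. This leaves
\[ \lambda_\theta'(K)\Bigl[1+c-\int_0^K e^{-rt}\lambda_\theta S_\theta\,dt-e^{-rK}S_\theta(K)\Bigr]. \]
Here $\lambda_\theta'<0$ because a mixture of exponentials has decreasing hazard (the Cauchy--Schwarz/variance identity $\lambda_\theta'=-\mathrm{Var}$). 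The bracket is positive by the same probabilistic rewriting as in Lemma \ref{lem:phi_decrease}.

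Next, the Gateaux condition $\hat\phi_\theta(K)=0$ characterizes the known-difficulty Gittins threshold $\hat K_\theta$. So the Assumption $\hat K_H>\hat K_E$ gives the sign pattern $\hat\phi_E(K^*_{i+1})<0<\hat\phi_H(K^*_{i+1})$. This is the step where the Assumption replaces Proposition \ref{prop:baseline_comp}. One subtlety: if $\hat K_H=\infty$, we need $\hat\phi_H>0$ everywhere, which follows from monotonicity and the absence of a root.

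\textbf{Step 3: belief shift and the third interchange.} Since $G_E$ first-order dominates $G_H$ and $e^{-\lambda K}$ is decreasing in $\lambda$, we have $S_E(K)\le S_H(K)$. The belief on $H$ therefore rises from $t_{i+1}$ to $t_{i+2}$, which gives $\mathbb{E}[\hat\phi_\theta(K^*_{i+1})\mid t_{i+1}]<0$. We then substitute $\lambda_\theta(K^*_i)\le\lambda_\theta(K^*_{i+1})$, using the decreasing hazard and $K^*_i\ge K^*_{i+1}$, into the positive-bracket form. This yields the analogue of \eqref{eqn:int_contradiction}, with $\lambda_\theta(K^*_i)$ in place of $\lambda_\theta\nu_\theta(K^*_i)$. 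Finally, interchange 3 (brainstorm $i+1$ an instant earlier and defer the $\epsilon$ effort on $i$ to just before $t_{i+2}$) gives the reverse weak inequality by the same expansion, which is the contradiction.

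\textbf{Main obstacle.} We expect the main obstacle to be Step 2. Verifying the cancellation in $\hat\phi_\theta'$ for general $G_\theta$, and relating the roots of $\hat\phi_\theta$ to the Gittins thresholds $\hat K_\theta$ so that the Assumption delivers the sign pattern (including the infinite cases), require the most care.
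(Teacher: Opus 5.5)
Your proposal is correct and follows the paper's proof essentially step for step. The two interchanges at $t_{i+2}$ give $\mathbb{E}[\hat\phi_\theta(K^*_{i+1})\mid t_{i+2}]=0$, and $\hat\phi_\theta$ is decreasing via $\lambda_\theta'\le 0$. Together with $\hat K_E<\hat K_H$ this yields the sign pattern, and the belief shift toward $H$ plus the third interchange produce the contradiction. Your explicit identification of the roots of $\hat\phi_\theta$ with the Gittins maximizers is a step the paper leaves implicit.

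One small fix: you should assert $S_E(K^*_{i+1})<S_H(K^*_{i+1})$ strictly rather than $\le$. This holds for $K>0$ when $G_E\neq G_H$, and it matters because when $K^*_i=K^*_{i+1}$ the strict belief shift is the only source of the strict inequality the contradiction needs.
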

\begin{proof}
The proof will proceed identically to Lemma \ref{lem:threshold_decrease_after}. Suppose, for sake of contradiction, that optimally $K^*_i \ge K^*_{i+1} < K^*_{i+2}$. Let $k$ be the original (supposedly) optimal action path, and define $k'$ and $k''$ exactly as in Lemma \ref{lem:threshold_decrease_after}. By the same logic, this pair of interchanges implies that 
    \[ \mathbb{E} \left[ \begin{pmatrix}
       \lambda_{i+1} (1 - e^{-rK^*_{i+1}}e^{-\lambda_{i+2}K_{i+1}^*})  \\
       -( r + \lambda_{i+1})\left(-c + \int_0^{K^*_{i+1}} e^{-rt}\lambda_{i+2} e^{-\lambda_{i+2} t}  \dd{t} \right)   \dd{t} 
    \end{pmatrix} \mid t_{i+2} \right] = 0.  \]
    Denote the expected arrival rate given $\theta$ and historical effort $K$ as 
    \[ \lambda_\theta(K) = \mathbb{E}_\theta[\lambda_{n} \mid K] = \frac{\int \lambda e^{-\lambda K} dG_\theta(\lambda)}{S_\theta(K)} .\]
    Using the law of iterated expectations and the fact that $\lambda_{i+1}$, $\lambda_{i+2}$ are conditionally independent given $\theta$, this becomes 
    \[ \mathbb{E}\left[\left. \lambda_{\theta}(K^*_{i+1}) (1 - e^{-rK^*_{i+1}}S_\theta(K^*_{i+1})) - (r+\lambda_\theta(K^*_{i+1})) \left( -c + \int_0^{K^*_{i+1}} e^{-rt}\lambda_{\theta}(t) S_\theta(t)  \dd{t} \right)  \right|  t_{i+2} \right] = 0\]
    Recalling the definition of $\hat{\phi}_\theta$ from \eqref{eqn:hatphi}, this is equivalent to 
    \[ \mathbb{E}\left[\left. \hat\phi_\theta(K^*_{i+1}) \right|  t_{i+2} \right] = 0 \]
    \begin{lemma}\label{lem:phi_decrease_general}
        The function $\hat\phi$ defined in \eqref{eqn:hatphi} is decreasing.
    \end{lemma}
    \begin{proof}
    Taking the derivative,
    \begin{align} \frac{\dd}{\dd K }\hat\phi_\theta(K ) =& \lambda_\theta'(K)\left[ 1 + c - \int_0^{K} e^{-rt}\lambda_{\theta}(t) S_\theta(t)  \dd{t}  - e^{-rK}S_\theta(K) \right]  \notag \\ 
    &- (r + \lambda_\theta(K)) e^{-rK}\lambda_\theta(K)S_\theta(K) + \left[ re^{-rK}S_\theta(K) +  e^{-rK}\lambda_\theta(K)S_\theta(K) \right]\lambda_\theta(K) , \notag \\
    =& \lambda_\theta'(K)\left[ 1 + c - \int_0^{K} e^{-rt}\lambda_{\theta}(t) S_\theta(t)  \dd{t}  - e^{-rK}S_\theta(K) \right],  \notag \\
    =& \lambda_\theta'(K)\left[ (1 - e^{-rK} S_\theta(\infty)) + c - \int_0^{K} e^{-rt}\lambda_{\theta}(t) S_\theta(t)  \dd{t}  - \int_K^\infty e^{-rK} \lambda_\theta(t) S_\theta(t) \dd{t} \right],  \notag\\
    =& \lambda_\theta'(K)\left[ (1 - e^{-rK} S_\theta(\infty)) + c - \int_0^{\infty} e^{-r \min(t,K)}\lambda_{\theta}(t) S_\theta(t)  \dd{t}  \right] , \notag\\
    =& \lambda_\theta'(K)\left[ (1 - e^{-rK}) S_\theta(\infty) + c + \int_0^{\infty} (1 - e^{-r \min(t,K)})\lambda_{\theta}(t) S_\theta(t)  \dd{t}  \right].  \notag
    \end{align}
    Note that the bracketed term is positive. Further, by Cauchy-Schwarz,
    \[ \int \left(\sqrt{e^{-\lambda K}}\right)^2 \dd{G_\theta(\lambda)} \int \left( \lambda \sqrt{e^{-\lambda K}} \right)^2 \dd{G_\theta(\lambda)} \ge \left(\int \lambda e^{-\lambda K} \dd{G_\theta(\lambda)} \right)^2. \]    
    Hence $\lambda_\theta$ is decreasing:
    \begin{align*}
        \lambda'_\theta(K) =&  \frac{- \int e^{-\lambda K} \dd{G_\theta(\lambda)} \int \lambda^2 e^{-\lambda K} \dd{G_\theta(\lambda)} + \int \lambda e^{-\lambda K}\dd G_\theta{(\lambda)} \int \lambda e^{-\lambda K}\dd G_\theta(\lambda) }{S_\theta(K)^2} \le 0,
    \end{align*}
    and the lemma follows.
    \end{proof}
    
    Let $K_E$ denote the root of $\hat\phi_E$ and respectively $K_H$ for $\hat\phi_H$. Since $\hat\phi_\theta$ is strictly decreasing from Lemma \ref{lem:phi_decrease_general}, this immediately implies that $\hat\phi_H(K_E) > 0$, and so $\hat\phi_H(K) > 0 > \hat\phi_E(K)$ for any $K \in (K_E, K_H)$. In consequence, since $\hat\phi_\theta$ is strictly decreasing, it follows that $K^*_{i+1} \in (K_E, K_H)$, and $\hat\phi_E(K^*_{i+1}) < 0 < \hat\phi_H(K^*_{i+1})$. Further, note that by Bayes' rule, if $q_i$ denotes the belief when approach $i$ is brainstormed,
    \[ q_{i+1}(H) = \frac{q_i(H) S_E(K^*_{i+1})}{q_i(H) S_E(K^*_{i+1}) + q_i(L) S_H(K^*_{i+1})}.\]
    Since $S_E(K) < S_H(K)$ for any $K> 0$ by FOSD, the subjective belief $q_{i}(H) > q_{i+1}(H)$ and $q_{i}(L) < q_{i+1}(L)$. Therefore, since $\hat\phi_E(K^*_{i+1}) < 0$ and $\hat\phi_H(K^*_{i+1})) > 0$, it follows that
    \[ \mathbb{E}[\hat\phi_\theta(K^*_{i+1}) \mid t_{i+1}] < \mathbb{E}[\hat\phi_\theta(K^*_{i+1}) \mid t_{i+2}] = 0.\]
    Rewriting out $\phi_\theta$,
    \begin{align*} &\mathbb{E}\left[ \begin{pmatrix} \lambda_\theta (K^*_{i+1})  \left( 1 + c - \int_0^{K^*_{i+1}}e^{-rt}\lambda_\theta(t)S_\theta(t) \dd{t} - e^{-rK^*_{i+1}}S_\theta(K^*_{i+1}) \right)  \\
     - r  \left( -c + \int_0^{K^*_{i+1}}e^{-rt}\lambda_\theta(t)S_\theta(t) \dd{t} \right) \end{pmatrix} \mid t_{i+1} \right] < 0. \end{align*}
    Since $\lambda_\theta$ is decreasing and the contradiction supposition assumed $K_i \ge K^*_{i+1}$, $\lambda_\theta(K^*_i) < \lambda_\theta(K^*_{i+1})$ so it follows that 
    \begin{align*} &\mathbb{E}\left[ \begin{pmatrix} \lambda_\theta (K^*_{i})  \left( 1 + c - \int_0^{K^*_{i+1}}e^{-rt}\lambda_\theta(t)S_\theta(t) \dd{t} - e^{-rK^*_{i+1}}S_\theta(K^*_{i+1}) \right)  \\
     - r  \left( -c + \int_0^{K^*_{i+1}}e^{-rt}\lambda_\theta(t)S_\theta(t) \dd{t} \right) \end{pmatrix} \mid t_{i+1} \right] < 0, \end{align*}
    or 
    \begin{align} &\mathbb{E}\left[ \begin{pmatrix} \lambda_\theta (K^*_{i})  \left( 1 - e^{-rK^*_{i+1}}S_\theta(K^*_{i+1}) \right)  \\
     - (r+\lambda_\theta (K^*_{i}))  \left( -c + \int_0^{K^*_{i+1}}e^{-rt}\lambda_\theta(t)S_\theta(t) \dd{t} \right) \end{pmatrix} \mid t_{i+1} \right] < 0. 
     \label{eqn:int_contradiction_general}
     \end{align}
    To finish the contradiction, consider a third and final interchange $\hat{k}$ \eqref{eqn:hat_inter}.The relevant interval now is from $t_{i+1}- \epsilon$ to $t_{i+1} + K^*_{i+1} = t_{i+2}$. To first order, the original path on the interval yielded payoff
    \[ \mathbb{E}\left[ \lambda_\theta (K^*_{i})\epsilon+e^{-r\epsilon}(1 - \lambda_\theta (K^*_{i+1})\epsilon)\left(-c + \int_0^{K^*_{i+1}} e^{-rt} \lambda_\theta(t)S_\theta(t)  \dd{t} \right)   \dd{t} \mid t_{i+1}-\epsilon \right] + O(\epsilon^2). \]
    The interchanged strategy $\hat{k}$ yields 
    \[ \mathbb{E}\left[ -c + \int_0^{K^*_{i+1}} e^{-rt}\lambda_\theta(t)S_\theta(t)   \dd{t} + e^{-rK^*_{i+1}}S_\theta( K^*_{i+1})\lambda_\theta (K^*_{i})\epsilon \mid t_{i+1}-\epsilon \right] + O(\epsilon^2). \]
    Optimality implies that the payoff difference is nonnegative; taking $\epsilon \to 0$, this implies that
    \begin{align*} \mathbb{E} \left[ \begin{pmatrix}
       \lambda_\theta(K^*_{i}) (1 - e^{-rK^*_{i+1}}S_\theta(K^*_{i+1}))  \\
       -( r + \lambda_\theta(K^*_{i}))\left(-c + \int_0^{K^*_{i+1}} e^{-rt}\lambda_\theta(t)S_\theta(t)  \dd{t} \right)   \dd{t} 
    \end{pmatrix} \mid t_{i+1} \right]\ge 0,
    \end{align*}
    which contradicts \eqref{eqn:int_contradiction_general}, completing the proof. 
\end{proof}

Now, I prove Lemma \ref{lem:increasing_thresholds} in the generalized model.

\begin{proof}[Proof of Lemma \ref{lem:increasing_thresholds} in Generalized Model] 
The proof follows the same structure as before. By induction, Lemma \ref{lem:threshold_decrease_after_general} implies that if the sequence of thresholds ever starts decreasing, it must decrease thereafter; as a result, Lemma \ref{lem:lowest_history_of_effort} implies that no previous approaches are revisited, and hence the solution must coincide with the solution where the agent is mechanically restricted from revisiting approaches from that point on. This alternative problem is easier to characterize; the subsequent sequence of stopping times must therefore solve 
    \begin{align}
        V(\delta)
        &= \max_{\{K_j\}_{j \ge 1}} \sum_\theta \delta(\theta) \left[ -c + \int_0^{K_{1}} e^{-rt}\lambda_\theta (t) S_\theta(t) \dd t + e^{-rK_{1}} S_\theta(K_{1}) V_{\theta}(\{K_j\}_{j \ge 2}) \right]. \label{eqn:alt_problem_no_recall_general}
    \end{align}  
    Let $f(\delta, K_1, \{K_j\}_{j \ge 2})$ denote the objective of the optimization in \eqref{eqn:alt_problem_no_recall_general}.
    \begin{lemma}\label{lem:supermodular_general}
        The objective $f$ is strictly supermodular in $(\delta, K_1)$.
    \end{lemma}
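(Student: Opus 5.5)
The plan is to mirror the proof of Lemma \ref{lem:supermodular} essentially line by line, replacing $\lambda_\theta\nu_\theta(K)$ by the expected hazard $\lambda_\theta(K)$ and $\phi_\theta$ by $\hat\phi_\theta$.

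First, compute the cross partial of $f$ in $(\delta, K_1)$. Since $f$ is affine in $\delta$ (with $\delta(H)=\delta$, $\delta(E)=1-\delta$), we have $\partial f/\partial\delta = f_H - f_E$. Differentiating $f_\theta$ in $K_1$, and using $\frac{\dd}{\dd K}S_\theta(K) = -\lambda_\theta(K)S_\theta(K)$, gives
\[ \frac{\partial f_\theta}{\partial K_1} = e^{-rK_1}S_\theta(K_1)\left[\lambda_\theta(K_1) - (r+\lambda_\theta(K_1))V_\theta(\{K_j\}_{j\ge2})\right]. \]
The cross derivative is therefore the $H$-bracket minus the $E$-bracket, each weighted by $e^{-rK_1}S_\theta(K_1)>0$. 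The first-order condition in $K_1$ states that the $\delta$-weighted sum of these two terms vanishes. So it suffices to show that the $H$-bracket is positive, or that the $E$-bracket is negative; the first-order condition then forces the other bracket to have the opposite sign, which makes the cross derivative strictly positive. As in the baseline, I would treat the first-order condition as holding at the relevant optimizer (interior $K_1$), because the monotone comparative statics step only needs supermodularity along optimal selections.

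Let $\hat K_\theta$ denote the known-difficulty Gittins thresholds, with $\hat K_H > \hat K_E$ by the Difficulty Requires Patience assumption. The argument splits into two cases.

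\textbf{Case 1: $K_1 < \hat K_H$.} Because $V_H$ of any continuation sequence is at most the known-$H$ optimum, the Gittins characterization gives $V_H \le \lambda_H(\hat K_H)/(r+\lambda_H(\hat K_H))$. The map $x \mapsto x/(r+x)$ is increasing, and $\lambda_H(\cdot)$ is decreasing (shown via Cauchy--Schwarz in Lemma \ref{lem:phi_decrease_general}). Hence $V_H < \lambda_H(K_1)/(r+\lambda_H(K_1))$, so the $H$-bracket is positive. One caveat: strictness requires $\lambda_H$ to be strictly decreasing, which holds unless $G_H$ is degenerate. In the degenerate case I would instead fall back on Case 2's argument or use $\hat K_H=\infty$.

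\textbf{Case 2: $K_1 \ge \hat K_H > \hat K_E$.} Here Lemma \ref{lem:phi_decrease_general} gives $\hat\phi_E(K_1)<0$. Rearranged, this says
\[ \frac{\lambda_E(K_1)}{r+\lambda_E(K_1)} < \frac{-c+\int_0^{K_1}e^{-rt}\lambda_E(t)S_E(t)\,\dd t}{1-e^{-rK_1}S_E(K_1)} = V_E(\{K_1,K_1,\dots\}). \]
Next I need the general analogue of Lemma \ref{lem:gittins}: the stationary value $K\mapsto V_E(\{K,K,\dots\})$ is quasiconcave and peaks at $\hat K_E$. Its derivative has the sign of $\hat\phi_E$, which is decreasing, so this follows by the same derivative-sign computation. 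It gives $V_E(\{\hat K_H,\dots\}) \ge V_E(\{K_1,\dots\})$. Optimality of the continuation $\{K_j\}_{j\ge2}$, together with $V_H(\{\hat K_H,\dots\}) \ge V_H(\{K_j\})$, then forces $V_E(\{K_j\}_{j\ge2}) \ge V_E(\{\hat K_H,\dots\})$. Chaining these inequalities shows the $E$-bracket is negative.

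The main obstacle is the Case 2 justification that the derivative of the stationary value has the sign of $\hat\phi_E$ in the general model. The baseline relied on explicit algebra for this step, so here I would verify directly that
\[ \frac{\dd}{\dd K}\left[\frac{N(K)}{D(K)}\right] \propto \lambda_E(K)D(K) - (r+\lambda_E(K))N(K), \]
where $N$ and $D$ are the numerator and denominator. This expression equals $\hat\phi_E(K)$ exactly, and its single crossing then follows from Lemma \ref{lem:phi_decrease_general}.
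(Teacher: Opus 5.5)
Your proposal is correct and follows the paper's proof essentially step for step. It uses the same reduction of the cross-partial to the signs of the two brackets via the first-order condition in $K_1$, the same split at $K_1<\hat K_H$ versus $K_1\ge \hat K_H$, and the same Case 2 chain $V_E(\{K_j\}_{j\ge2})\ge V_E(\{\hat K_H,\hat K_H,\dots\})\ge V_E(\{K_1,K_1,\dots\})>\lambda_E(K_1)/(r+\lambda_E(K_1))$. Your identity $\frac{\mathrm{d}}{\mathrm{d}K}(N/D)=e^{-rK}S_E(K)\hat\phi_E(K)/D(K)^2$ cleanly supplies the quasiconcavity step that the paper only asserts.

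As you acknowledge, both arguments sign the cross-partial only where the first-order condition and optimality of the continuation hold. What is actually shown is positivity at the joint optimum, not global supermodularity. Your remark that this suffices for the monotone comparative statics step is the same leap the paper makes. It is not automatic: a sign condition holding only at optimizers does not by itself rule out the optimal $K_1$ jumping between local maxima as $\delta$ rises.
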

    \begin{proof}[Proof of Lemma \ref{lem:supermodular_general}]
    The cross derivative of the objective is 
    \begin{align}
        \frac{\partial^2 f}{\partial q \, \partial K_1} 
        =& - e^{-rK_1} S_E(K_1) \left[ \lambda_E (K_1)  - (r+\lambda_E (K_1))V_{H}(\{K_j\}_{j \ge 2}) \right] \notag \\
        & + e^{-rK_1}S_H(K_1) \left[ \lambda_H (K_1) - (r+\lambda_H (K_1))V_{L}(\{K_j\}_{j \ge 2}) \right].\label{eqn:cross_deriv_general}
    \end{align}
    I will show that this quantity is positive. Optimality of $K_1$ implies that 
    \begin{align*}
        0 = \frac{\partial f}{\partial K_1} =& \sum_{\theta} \delta(\theta) \left[ \lambda_\theta(K_1)S_\theta(K_1) - re^{-rK_1}S_\theta(K_1) V_\theta(\{K_j\}_{j\ge 2}) - \lambda_\theta(K_1)S_\theta(K_1)V_\theta(\{K_j\}_{j\ge 2}) \right] \\
        =&\sum_{\theta} \delta(\theta) e^{-rK_1} S_\theta(K_1) \left[ \lambda_\theta (K_1) - (r + \lambda_\theta (K_1))V_\theta(\{K_j\}_{j\ge 2})  \right] .
    \end{align*}
    Note that $e^{-rK_1}$ has no $\theta$-dependence, so the first-order condition simplifies to:
    \begin{equation} 0 = \sum_{\theta} \delta(\theta) S_\theta(K_1) \left[ \lambda_\theta (K_1) - (r + \lambda_\theta (K_1))V_\theta(\{K_j\}_{j\ge 2})  \right].  \label{eqn:cross_deriv_foc_general} \end{equation}
    I consider two cases, depending on whether $K_1 < K_H$ or $K_1 \ge K_H$.

    \textbf{Case 1:} Suppose $K_1 < K_H$. The maximum that $V_H(\{K_j\}_{j \ge 2})$ can attain is given by the optimal policy when $H$ is known for sure, which is exactly the benchmark solution. Hence, 
    \[ V_H(\{K_j\}_{j \ge 2}) \le \frac{\lambda_H (K_H)}{\lambda_H(K_H) + r} < \frac{\lambda_H(K_1)}{\lambda_H(K_1) + r}  \]
    where the second inequality follows since $\lambda_\theta$ is strictly decreasing and $K_1 < K_H$ by supposition. Thus, the inequality implies that $\lambda_H (K_1) - (r + \lambda_H (K_1))V_H(\{K_j\}_{j\ge 2}) > 0$; the first-order condition \eqref{eqn:cross_deriv_foc_general} summing to zero immediately implies that $\lambda_E (K_1) - (r + \lambda_E (K_1))V_E(\{K_j\}_{j\ge 2}) < 0$, and so \eqref{eqn:cross_deriv_general} must be positive.

    \textbf{Case 2:} Suppose $K_1 \ge K_H$. Recall $\hat\phi_E$ as defined from \eqref{eqn:hatphi}. By definition, $\hat\phi_E(K_E) = 0$ so Lemma \ref{lem:phi_decrease} implies that $\hat\phi_E(K_1) < 0$. 
    \begin{equation} \frac{\lambda_\theta (K_1) }{\lambda_\theta (K_1) + r} <  \frac{-c + \int_0^{K_1} \nu_0 \lambda_\theta(t) S_\theta(t) \dd{t} }{1 - e^{-rK_1}S_\theta(K_1)} = V_E(\{K_1, K_1,K_1,...\}). \label{eqn:vh_bound1_general}
    \end{equation}
    Further, since the Gittins index for $E$ is maximized at $K_E$, the maximand
    \[\frac{-c + \int_0^{K}\lambda_\theta(t) S_\theta(t) \dd{t} }{1 - e^{-rK}S_E(K)} \]
    has positive derivative at $K < K_E$ and negative derivative above $K > K_E$. Since $K_1 \ge K_H > K_E$ by the DRP assumption, it follows that 
    \begin{align*} V_E(\{K_H, K_H, K_H,...\}) &= \frac{-c + \int_0^{K_H}\lambda_\theta(t) S_\theta(t) \dd{t} }{1 - e^{-rK_H}S_E(K_H)} \\
    &\ge  \frac{-c + \int_0^{K_1}\lambda_\theta(t) S_\theta(t) \dd{t} }{1 - e^{-rK_1}S_E(K_1)} = V_E(\{K_1, K_1,K_1,...\}). \end{align*}
    Note that since 
    \[ V_H(\{K_H,K_H,K_H,...\}) \ge V_H(\{K_j\}_{j\ge 2}) \]
    and $\{K_j\}_{j \ge 2}$ optimizes \eqref{eqn:alt_problem_no_recall_general}, 
    \begin{equation}
        V_E(\{K_j\}_{j\ge 2}) \ge V_E(\{K_H,K_H,K_H,...\})  \label{eqn:vh_bound2_general}
    \end{equation} 
    else the maximization could be improved by switching the continuation strategy to $K_H, K_H, K_H, ...$. Combining \eqref{eqn:vh_bound1_general} and \eqref{eqn:vh_bound2_general}, this implies that 
    \[ V_E(\{K_j\}_{j\ge 2}) > \frac{\lambda_E (K_1) }{\lambda_E (K_1) + r}. \]
    Hence $\lambda_E (K_1) - (r + \lambda_E (K_1))V_E(\{K_j\}_{j\ge 2}) < 0$; the first-order condition \eqref{eqn:cross_deriv_foc} summing to zero immediately implies that $\lambda_H (K_1) - (r + \lambda_H (K_1))V_H(\{K_j\}_{j\ge 2}) > 0$, and so \eqref{eqn:cross_deriv} must be positive.
\end{proof}

By the same argument as the baseline model proof of Lemma \ref{lem:increasing_thresholds}, $V(\delta_{j}(H)) = V^*_{j}$ for all $j > i$, and if $K^*_1$ denotes the maximizer correspondence in $K_1$ of the problem \eqref{eqn:alt_problem_no_recall}, then $K^*_{j} \in K^*_1(\delta_j(H))$ for all $j > i$.
    Since $f$ is strictly supermodular, the monotone comparative statics result of \cite{es98} implies that because $\{ \delta_j(H) \}_{j \ge i}$ is strictly increasing, any selection from $\{K^*(\delta_j(H))\}_{j \ge i} $ must be strictly increasing, so $\{K^*_j\}_{j\ge i}$ must be strictly increasing, contradicting the original supposition.
\end{proof}

Given Lemmas \ref{lem:lowest_history_of_effort} and \ref{lem:increasing_thresholds}, it suffices to optimize among increasing sequence of thresholds $\{ K_j \}_{j \ge 1}$ corresponding to the effort on the previous approach when a new approach is brainstormed. To pin down the optimal $K_n^*$, note that the choice of $K_n^*$ does not affect the payoff if breakthrough arrives before $K_{n-1}^*$ effort has been exerted on the first $n$ approaches, or after the $n+2$nd approach is brainstormed (since at that point, all approaches will have $K_{n+1}^*$ historical effort). 
    
Similar to the proof of Lemma \ref{lem:threshold_decrease_after}, let $k$ denote the optimal action path and $t_n$ denote the time that approach $n$ is brainstormed, and consider the interchanges $k', k''$ as in the proof of Theorem \ref{thm:learning}.  By the same calculation as in Lemma \ref{lem:threshold_decrease_after_general}, the first-order condition is 
    \begin{equation}
        \mathbb{E} \left[ \begin{pmatrix}
       \lambda_\theta (K^*_{n}) (1 - e^{-rK^*_{n}}S_\theta(K_{n}^*))  \\
       -( r + \lambda_\theta (K^*_{n}))\left(-c + \int_0^{K^*_{n}} e^{-rt}\lambda_\theta(t)S_\theta(t)  \dd{t} \right)   \dd{t} 
    \end{pmatrix} \mid t_{n+1} \right] = 0. \label{eqn:appendix_int_condition_general}
    \end{equation} 
    Note that the belief at time $t_{n+1}$ depends on $K^*_n$: 
    \begin{equation}
        \delta_{n+1}(\theta) = \frac{ S_\theta(K^*_n)^n \delta_0(\theta) }{ \sum_\theta \delta_0(\theta) S_\theta(K^*_n)^n }. \label{eqn:appendix_int_belief_general}
    \end{equation}  
    Combining \eqref{eqn:appendix_int_belief_general} with \eqref{eqn:appendix_int_condition_general},
    \[\sum_\theta  \delta_{n+1}(\theta) \begin{pmatrix}
       \lambda_\theta (K^*_{n}) (1 - e^{-rK^*_{n}}S_\theta(K_{n}^*))  \\
       -( r + \lambda_\theta (K^*_{n}))\left(-c + \int_0^{K^*_{n}} e^{-rt}\lambda_\theta(t)S_\theta(t)  \dd{t} \right)   \dd{t} 
    \end{pmatrix}  = 0. \]
    Multiplying through by the denominator of \eqref{eqn:appendix_int_belief_general}, and using $\hat\phi_\theta$ as defined in \eqref{eqn:hatphi}, 
    \[ \sum_\theta \delta_0(\theta)  S_\theta(K^*_n)^n \phi_\theta(K^*_n)  = 0. \]
    Lemma \ref{lem:phi_decrease_general} implies that $\hat\phi_\theta$ is decreasing, which implies that the LHS of the above expression is decreasing. Further, by rewriting the equation as 
    \[ \delta_0(H) \phi_{H}(K^*_n) +  \delta_0(E) \frac{S_E (K^*_n)^n}{S_{H}(K^*_n)^n}\phi_\theta(K^*_n) = 0, \]
    the fact that $\lim_{K\to\infty} S_E(K)^n/S_{H}(K)^n = 1$, together with the observation that \eqref{eqn:learning_c_assum_general} guarantees that $\lim_{K \to \infty} \sum_\theta \delta_0(\theta) \phi_{\theta}(K) < 0$ implies that the LHS is strictly negative in the limit as $K^*_n \to \infty$. 
    Note further that $\sum_\theta \delta_0(\theta) \phi_\theta(0)  = \sum_\theta \delta_0(\theta) (r+\lambda_\theta(0))c > 0$, so there is a unique solution $K^*_n$. This concludes the proof of Theorem \ref{thm:learning_general}.
\end{proof}
\end{document}